\setlist[itemize,1]{label=$\bullet$}
\setlist[itemize,2]{label=$\bullet$}
\setlist[itemize,3]{label=$\bullet$}
\setlist[itemize,4]{label=$\bullet$}
\setlist[itemize,5]{label=$\bullet$}
\setlist[itemize,6]{label=$\bullet$}
\setlist[itemize,7]{label=$\bullet$}
\setlist[itemize,8]{label=$\bullet$}
\setlist[itemize,9]{label=$\bullet$}
\definecolor{linkcolor}{rgb}{0,0,0.5}
\renewcommand*\l@author[2]{}
\renewcommand*\l@title[2]{}
\DeclareSymbolFont{greek}     {OML}{cmm}{m}{it}
\DeclareMathSymbol{\alpha}{\mathord}{greek}{"0B}
\DeclareMathSymbol{\beta}{\mathord}{greek}{"0C}
\DeclareMathSymbol{\gamma}{\mathord}{greek}{"0D}
\DeclareMathSymbol{\delta}{\mathord}{greek}{"0E}
\DeclareMathSymbol{\epsilon}{\mathord}{greek}{"22}
\DeclareMathSymbol{\varepsilon}{\mathord}{greek}{"22}
\DeclareMathSymbol{\zeta}{\mathord}{greek}{"10}
\DeclareMathSymbol{\eta}{\mathord}{greek}{"11}
\DeclareMathSymbol{\theta}{\mathord}{greek}{"12}
\DeclareMathSymbol{\iota}{\mathord}{greek}{"13}
\DeclareMathSymbol{\kappa}{\mathord}{greek}{"14}
\DeclareMathSymbol{\lambda}{\mathord}{greek}{"15}
\DeclareMathSymbol{\mu}{\mathord}{greek}{"16}
\DeclareMathSymbol{\nu}{\mathord}{greek}{"17}
\DeclareMathSymbol{\xi}{\mathord}{greek}{"18}
\DeclareMathSymbol{\pi}{\mathord}{greek}{"19}
\DeclareMathSymbol{\rho}{\mathord}{greek}{"1A}
\DeclareMathSymbol{\sigma}{\mathord}{greek}{"1B}
\DeclareMathSymbol{\tau}{\mathord}{greek}{"1C}
\DeclareMathSymbol{\upsilon}{\mathord}{greek}{"1D}
\DeclareMathSymbol{\phi}{\mathord}{greek}{"27}
\DeclareMathSymbol{\chi}{\mathord}{greek}{"1F}
\DeclareMathSymbol{\psi}{\mathord}{greek}{"20}
\DeclareMathSymbol{\omega}{\mathord}{greek}{"21}
\DeclareMathSymbol{\vartheta}{\mathord}{greek}{"23}
\DeclareMathSymbol{\varpi}{\mathord}{greek}{"24}
\DeclareMathSymbol{\varrho}{\mathord}{greek}{"25}
\DeclareMathSymbol{\varsigma}{\mathord}{greek}{"26}
\DeclareMathSymbol{\varphi}{\mathord}{greek}{"27}
\begin{document}
\hyphenation{Brow-serID}
\hyphenation{in-fra-struc-ture}
\hyphenation{brow-ser}
\hyphenation{doc-u-ment}
\hyphenation{Chro-mi-um}
\hyphenation{meth-od}
\hyphenation{sec-ond-ary}
\hyphenation{Java-Script}
\hyphenation{Mo-zil-la}
\hyphenation{post-Mes-sage}
\hyphenation{RP-doc}
\hyphenation{IdP-doc}
\hyphenation{in-dis-tin-guish-abil-i-ty}

\title{\spresso: A Secure, Privacy-Respecting\\ Single Sign-On System for the Web}
\author{Daniel Fett \and Ralf Küsters \and Guido Schmitz}
\institute{University of Trier, Germany\\\email{\{fett,kuesters,schmitzg\}@uni-trier.de}}

\maketitle

\begin{abstract}
  Single sign-on (SSO) systems, such as OpenID and OAuth, allow web sites, so-called relying parties (RPs), to delegate user authentication to identity providers (IdPs), such as Facebook or Google. These systems are very popular, as they provide a convenient means for users to log in at RPs and move much of the burden of user authentication from RPs to IdPs.

  There is, however, a downside to current systems, as they do not respect users' privacy: IdPs learn at which RP a user logs in. With one exception, namely Mozilla's BrowserID system (a.k.a. Mozilla Persona), current SSO systems were not even designed with user privacy in mind. Unfortunately, recently discovered attacks, which exploit design flaws of BrowserID, show that BrowserID does not provide user privacy either.
  
  In this paper, we therefore propose the first privacy-respecting SSO system for the web, called \spresso (for Secure Privacy-REspecting Single Sign-On). The system is easy to use, decentralized, and platform independent. It is based solely on standard HTML5 and web features and uses no browser extensions, plug-ins, or other executables.

  Existing SSO systems and the numerous attacks on such systems illustrate that the design of secure SSO systems is highly non-trivial. We therefore also carry out a formal analysis of \spresso based on an expressive model of the web in order to formally prove that \spresso enjoys strong authentication and privacy properties.
\end{abstract}

\ifdraft{
\listoftodos
}{ }

\tableofcontents

\newpage

\section{Introduction}
\label{sec:introduction}

\noindent Web-based Single Sign-On (SSO) systems allow a
user to identify herself to a so-called relying party (RP),
which provides some service, using an identity that is
managed by an identity provider (IdP), such as Facebook or
Google. If an RP uses an SSO system, a user does not need a
password to log in at the RP. Instead, she is authenticated
by the IdP, which exchanges some data with the RP so that
the RP is convinced of the user's identity. When logged in
at the IdP already, a user can even log in at the RP by one
click without providing any password. This makes SSO
systems very attractive for users. These systems are also
very convenient for RPs as much of the burden of user
authentication, including, for example, the handling of
user passwords and lost passwords, is shifted to the IdPs.
This is why SSO systems are very popular and widely used on
the web. Over the last years, many different SSO systems
have been developed, with OpenID~\cite{openid-2-spec} (used by Google, Yahoo,
AOL, and Wordpress, for example) and OAuth~\cite{rfc6749-oauth2} (used by
Twitter, Facebook, PayPal, Microsoft, GitHub, and LinkedIn, for
example) being the most prominent of
such systems; other SSO systems include SAML/Shibboleth,
CAS, and WebAuth.

There is, however, a downside to these systems: with one
exception, none of the existing SSO systems have been
designed to respect users’ privacy. That is, the IdP always
knows at which RP the user logs in, and hence, which
services the user uses. In fact, exchanging user data
between IdPs and RPs directly in every login process is a
key part of the protocols in OpenID and OAuth, for example,
and thus, IdPs can easily track users.

The first system so far which was designed with the intent
to respect users' privacy was the BrowserID system
\cite{mozilla/persona,mozilla/persona/beta2-announce}, which is a
relatively new system developed by Mozilla and is also
known by its marketing name \emph{Persona}.

Unfortunately, in
\cite{FettKuestersSchmitz-ESORICS-BrowserID-Primary-2015}
severe attacks against BrowserID were discovered, which
show that the privacy of BrowserID is completely broken:
these attacks allow malicious IdPs and in some versions of
the attacks even arbitrary parties to check the login
status of users at any RP with little effort (see
Section~\ref{sec:main-features} for some more details on
these attacks). Even worse, these attacks exploit design
flaws of BrowserID that, as discussed in
\cite{FettKuestersSchmitz-ESORICS-BrowserID-Primary-2015},
cannot be fixed without a major redesign of the system, and
essentially require building a new system. As further
discussed in Section~\ref{sec:discussion}, besides the lack
of privacy there are also other issues that motivate the
design of a new system.

The goal of this work is therefore to design the (first)
SSO system which respects users' privacy in the sense
described above, i.e., IdPs (even completely malicious
ones) should not be able to track at which RPs
users log in. Moreover, the history of SSO systems shows
that it is highly non-trivial to design secure SSO systems,
not only w.r.t.~privacy requirements, but even
w.r.t.~authentication requirements. Attacks easily go
unnoticed and in fact numerous attacks on SSO systems,
including attacks on OAuth, OpenID, Google ID, Facebook
Connect, SAML, and BrowserID have been uncovered which
compromise the security of many services and users at once
\cite{BansalBhargavanetal-POST-2013-WebSpi,BansalBhargavanMaffeis-CSF-2012,baiLeietal-NDSS-2013-authscan,WangChenWang-SP-2012-SSO,SomorovskyMayerSchwenkKampmannJensen-USENIX-2012,SovisKohlarSchwenk-Sicherheit-2010-OpenID-signatures,SantsaiHaekyBeznosov-CS-2012,ZhouEvans-USENIX-SSOScan-2014,Wangetal-USENIX-Explicating-SDKs-2013}. Besides
designing and implementing a privacy-respecting SSO system,
we therefore also carry out a formal security analysis of
the system based on an expressive model of the web
infrastructure in order to provide formal security
guarantees. More specifically, the contributions of our
work are as follows.

\subsubsection{Contributions of this Paper.} In this work,
we propose the system \spresso (for Secure
Privacy-REspecting Single Sign-On). This is the first SSO
system which respects user's privacy. The system allows
users to log in to RPs with their email addresses. A user
is authenticated to an RP by the IdP hosting the user's
email address. This is done in such a way that the IdP does
not learn at which RP the user wants to log in.

Besides strong authentication and privacy guarantees (see
also below), \spresso is designed in such a way that it can be
used across browsers, platforms, and devices. For this
purpose, \spresso is based solely on standard HTML5 and web
features and uses no browser extensions, plug-ins, or
browser-independent executables. 

Moreover, as further discussed in
Section~\ref{sec:main-features}, \spresso is designed as an
open and decentralized system. For example, in contrast to
OAuth, \spresso does not require any prior coordination or
setup between RPs and IdPs: users can log in at any RP with
any email address with \spresso support.

We formally prove that \spresso enjoys strong
authentication and privacy properties. Our analysis is
based on an expressive Dolev-Yao style model of the web
infrastructure~\cite{FettKuestersSchmitz-SP-2014}. This web model is
designed independently of a specific web application and
closely mimics published (de-facto) standards and
specifications for the web, for instance, the HTTP/1.1 and
HTML5 standards and associated (proposed) standards. It is
the most comprehensive web model to date. Among others,
HTTP(S) requests and responses, including several headers,
such as cookie, location, strict transport security (STS),
and origin headers, are modeled. The model of web browsers
captures the concepts of windows, documents, and iframes,
including the complex navigation rules, as well as new
technologies, such as web storage and cross-document
messaging (postMessages). JavaScript is modeled in an
abstract way by so-called scripting processes which can be
sent around and, among others, can create iframes and
initiate XMLHTTPRequests (XHRs). Browsers may be corrupted
dynamically by the adversary.

So far, this web model has been employed to analyze
trace-based properties only, namely, authentication
properties. In this work, we formulate, for the first time,
strong indistinguishability/privacy properties for web
applications. Our general definition is not tailored to a
specific web application, and hence, should be useful
beyond our analysis of \spresso. These properties require
that an adversary should not be able to distinguish two
given systems. In order to formulate these properties we
slightly modify and extend the web model.

Finally, we formalize \spresso in the web model and
formally state and prove strong authentication and privacy
properties for \spresso.  The authentication properties we
prove are central to any SSO system, where our formulation
of these properties follows the one in
\cite{FettKuestersSchmitz-SP-2014}. As for the privacy
property, we prove that a malicious IdP cannot distinguish
whether an honest user logs in at one RP or another. The
analysis we carry out in this work is also interesting by
itself, as web applications have rarely been analyzed based
on an expressive web model (see
Section~\ref{sec:relatedwork}).

\subsubsection{Structure of this Paper.} 
In Section~\ref{sec:description-spresso}, we describe our
system and discuss and motivate design choices. We then, in
Section~\ref{sec:web-model}, briefly recall the general web
model from \cite{FettKuestersSchmitz-SP-2014} and explain
the modifications and extensions we made. The mentioned
strong but general definition of
indistinguishability/privacy for web applications is
presented in Section~\ref{sec:form-defin-priv}. In
Section~\ref{sec:formal-model-spresso}, we provide the
formal model of \spresso, based on which we state and
analyze privacy and authentication of \spresso in
Sections~\ref{sec:proving-privacy} and \ref{sec:prov-auth},
respectively. Further related work is discussed in
Section~\ref{sec:relatedwork}. We conclude in
Section~\ref{sec:conclusion}. All details and proofs are
available in the appendix. An online demo and the source code of
\spresso are available at
\cite{spresso-sourcecode-2015}.

%
\begin{figure*}[hp!]
  \resizebox{0.9\linewidth}{!}{
  \trimbox{2.5cm 0cm 1cm 0cm}{ 
    \centering
    \input{from-paper/final-flow}
  }}\\
  \centering
  \raisebox{0.5ex}{\tikz{\draw [->] (0,0) -- (0.4,0);}}
  HTTPS messages, \raisebox{0.5ex}{\tikz{\draw
      [->,color=blue,>=latex] (0,0) -- (0.4,0);}} \xhrs
  (over HTTPS), \raisebox{0.5ex}{\tikz{\draw
      [->,color=red,dashed] (0,0) -- (0.4,0);}} \pms,
  \raisebox{0.5ex}{\tikz{\draw [->,snake=snake,segment
      length=2ex,segment amplitude=0.2ex] (0,0) --
      (0.4,0);}} browser commands\\[-1ex]

\caption{\label{fig:protocol-flow} \spresso Login Flow.}
\end{figure*}

\section{Description of \spresso}
\label{sec:description-spresso}

In this section, we first briefly describe the main
features of \spresso.  We then provide a detailed
description of the system in Section~\ref{sec:login-flow},
with further implementation details given in
Section~\ref{sec:implementationdetails}. To provide
additional intuition and motivation for the design of
\spresso, in Section~\ref{sec:discussion} we discuss
potential attacks against \spresso and why they are
prevented.

\subsection{Main Features}
\label{sec:main-features}

\spresso enjoys the following key features:

\subsubsection{Strong Authentication and Privacy.}
\spresso is designed to satisfy strong authentication
and privacy properties. 

Authentication is the most fundamental security property of
an SSO system.  That is, i) an adversary should not be able
to log in to an RP, and hence, use the service of the RP,
as an honest user, and ii) an adversary should not be able
to log in the browser of an honest user under an
adversary's identity (identity injection). Depending on the
service provided by the RP, a violation of ii) could allow
the adversary to track an honest user or to obtain user
secrets. We note that in the past, attacks on
authentication have been found in almost all deployed SSO
systems (e.g., OAuth, OpenID, and BrowserID
\cite{SantsaiHaekyBeznosov-CS-2012,SovisKohlarSchwenk-Sicherheit-2010-OpenID-signatures,SantsaiBeznosov-CCS-2012-OAuth,FettKuestersSchmitz-SP-2014,FettKuestersSchmitz-ESORICS-BrowserID-Primary-2015,ZhouEvans-USENIX-SSOScan-2014,Wangetal-USENIX-Explicating-SDKs-2013}).

While authentication assumes the involved RP and IdP to be
honest, privacy is concerned with malicious IdPs. This
property requires that (malicious) IdPs should not be able
to track at which RPs specific users log in. As already
mentioned in the introduction, so far, except for
BrowserID, no other SSO system was designed to provide
privacy. (In fact, exchanging user data between IdPs and
RPs directly is a key part of the protocols in OpenID and
OAuth, for example, and hence, in such protocols, IdPs can
easily track at which RP a user logs in.) However,
BrowserID failed to provide privacy: As shown in
\cite{FettKuestersSchmitz-ESORICS-BrowserID-Primary-2015}, a
subtle attack allowed IdPs (and in some versions of the
attack even arbitrary parties) to check the login status of
users at any RP. More specifically, by running a malicious
JavaScript within the user's browser, an IdP can, for any
RP, check whether the user is logged in at that RP by
triggering the (automatic) login process and testing
whether a certain iframe is created during this process or
not. The (non-)existence of this iframe immediately reveals
the user's login status. Hence, a malicious IdP can track
at which RP a user is logged in. As we discuss in
\cite{FettKuestersSchmitz-ESORICS-BrowserID-Primary-2015}, this
could not be fixed without a major redesign of BrowserID.
Our work could be considered such a major redesign. While
\spresso shares some basic concepts with BrowserID,
\spresso is, however, not based on BrowserID, but a new
system built from scratch (see the discussion in
Section~\ref{sec:discussion}).

The above shows that the design of a secure SSO system is
non-trivial and that attacks are very easy to overlook. As
already mentioned in the introduction, we therefore not
only designed and implemented \spresso to meet strong
authentication and privacy properties, but also perform a
formal analysis of \spresso in an expressive model of the
web infrastructure in order to show that \spresso in fact
meets these properties.

\subsubsection{An Open and Decentralized System.}
We created \spresso as a decentralized, open system. In
\spresso, users are identified by their email addresses,
and email providers certify the users'
authenticity. Compared to OpenID, users do not need to
learn a new, complicated identifier --- an approach similar
to that of BrowserID. But unlike in BrowserID, there is no
central authority in \spresso (see also the discussion in
Section~\ref{sec:discussion}). In contrast to OAuth,
\spresso does not require any prior coordination or setup
between RPs and IdPs: Users can log in at any RP with any
email address with \spresso support. For email addresses
lacking \spresso support, a seamless fallback can be
provided, as discussed later.

\subsubsection{Adherence to Web Standards.} \spresso is
based solely on standard HTML5 and web features and uses no
browser extensions, plug-ins, or other client-side
executables. This guarantees that \spresso can be used across
browsers, platforms and devices, including both desktop
computers and mobile platforms, without installing any
software (besides a browser). Note that on smartphones, for
example, browsers usually do no support extensions or
plug-ins.

\subsection{Login Flow}
\label{sec:login-flow}

We now explain \spresso by a typical login flow in the
system. \spresso knows three distinct types of parties:
relying parties (RPs), i.e., web sites where a user wishes
to log in, identity providers (IdPs), providing to RPs a
proof that the user owns an email address (identity), and
forwarders (FWDs), who forward messages from IdPs to RPs
within the browser. We start with a brief overview
of the login flow and then present the flow in detail.

\subsubsection{Overview.} 
On a high level, the login flow consists of the following
steps: First, on the RP web site, the user enters her email
address. RP then creates what we call a \emph{tag} by
encrypting its own domain name and a nonce with a freshly
generated symmetric key. This tag along with the user's
email address is then forwarded to the IdP. Due to the
privacy requirement, this is done via the user's browser
in such a way that the IdP does not learn from which RP
this data was received. Note also that the tag contains
RP's domain in encrypted form only. The IdP then signs the
tag and the user's email address (provided that the user is
logged in at the IdP, otherwise the user first has to log
in). This signature is called the \emph{identity assertion
  (IA)}. The IA is then transferred to the RP (again via
the user's browser), which checks the signature and
consistency of the data signed and then considers the user
with the given email address to be logged in. We note that
passing the IA to the RP is done using an FWD (the RP
determines which one is used) as it is important that the
IA is delivered to the correct RP (RP document). The IdP
cannot ensure this, because, again due to the privacy
requirements, IdP is not supposed to know the intended RP.

\subsubsection{Detailed Flow.} 
We now take a detailed look at the \spresso login flow. We
refer to the steps of the protocol as depicted in
Figure~\ref{fig:protocol-flow}.  We use the names RP, IdP,
and FWD for the servers of the respective parties. We use
RPdoc, RPRedirDoc, IdPdoc, and FWDdoc as names for HTML
documents delivered by the respective parties. The login
flow involves the servers RP, IdP, and FWD as well as the
user's browser (gray background), in which different
windows/iframes are created: first, the window containing
RPdoc (which is present from the beginning), second, the
login dialog created by RPdoc (initially containing
RPRedirDoc and later IdPdoc), and third, an iframe inside
the login dialog where the document FWDdoc from FWD is
loaded.

As the first step in the protocol, the user opens the login
page at RP~\refprotostep{get-RPdoc}. The actual login then
starts when the user enters her email
address~\refprotostep{input-email}. RPdoc sends this
address in a POST request to
RP~\refprotostep{post-start-login}. RP identifies the IdP
(from the domain in the email address) and retrieves a
\emph{support document} from
IdP~\refprotostep{get-spresso-info-1}. This document is
retrieved from a fixed URL
\nolinkurl{https://IdPdomain/.well-known/spresso-info} and
contains a public (signature verification) key of the IdP.
RP now selects new nonces/symmetric keys $\mi{rpNonce}$,
$\mi{iaKey}$, $\mi{tagKey}$, and
$\mi{loginSessionToken}$~\refprotostep{choose-rp-nonces}
and creates the tag $\mi{tag}$ by encrypting RP's domain
$\str{RPDomain}$ and the nonce $\mi{rpNonce}$ under
$\mi{tagKey}$~\refprotostep{create-tag}. Using standard
Dolev-Yao notation (see also Section~\ref{sec:web-model}),
we denote this term by
\[\mi{tag}:= \encs{\an{\str{RPDomain},
    \mi{rpNonce}}}{\mi{tagKey}}\,.\] RP
further selects an FWD (e.g., a fixed one from its
settings). Now, RP stores $\mi{tag}$, $\mi{iaKey}$, the FWD
domain, and the email address in its session data store
under the session key $\mi{loginSessionToken}$ and sends
$\mi{tagKey}$, $\mi{FWDDomain}$, and
$\mi{loginSessionToken}$ as response to the POST request by
RPdoc~\refprotostep{send-start-login}.

RPdoc now opens the login dialog. Ultimately, this window
contains the login dialog from IdP (IdPdoc) so that the
user can log in to IdP (if not logged in already). However,
to preserve the user's privacy (see the discussion in
Section~\ref{sec:discussion}), RPdoc does not launch
the dialog with the URL of IdPdoc immediately. Instead,
RPdoc opens the login dialog with the URL of RPRedirDoc and
attaches the
$\mi{loginSessionToken}$~\refprotostep{open-IdPdoc}.
RPRedirDoc is loaded from RP (\refprotostep{get-redir-ld}
and~\refprotostep{send-redir-ld}) and redirects the login
dialog to IdPdoc (\refprotostep{redir-to-IdPdoc} and
\refprotostep{get-IdPdoc}), passing the user's email
address, the tag, the FWD domain, and the iaKey from RP, as
stored under the session key $\mi{loginSessionToken}$, to
IdPdoc.\footnote{This data is passed to IdPdoc in the
  fragment identifier of the URL (a.k.a.~\emph{hash}), and
  therefore, it is not necessarily sent to IdP.}

After the browser loaded IdPdoc from IdP, the user enters
her password\footnote{In fact, the
  IdP can as well offer any other form of authentication,
  e.g., TLS client authentication or two-factor
  authentication.} matching her email
address~\refprotostep{enter-secret}. The password, the email address, the
tag, and the FWD domain are now sent to
IdP~\refprotostep{post-loginxhr}. After IdP verified the
user credentials~\refprotostep{check-credentials}, it
creates the identity assertion as the signature \[\mi{ia}
:= \sig{\an{\mi{tag}, \mi{email},
    \mi{FWDDomain}}}{k_\text{IdP}}\] using its private
signing key $k_\text{IdP}$~\refprotostep{create-ia} and
then returns $\mi{ia}$ to
IdPdoc~\refprotostep{send-loginxhr}. We note that $\mi{ia}$
contains the signature only, not the data that was signed.

To avoid that the FWD learns the IA (we discuss this
further in Section~\ref{sec:discussion}), IdPdoc now
encrypts the IA using the iaKey
\refprotostep{create-eia}:
\[\mi{eia} := \encs{\mi{ia}}{\mi{iaKey}}\ .\]
Then, IdPdoc opens an iframe with the URL of FWDdoc,
passing the tag and the encrypted IA to FWDdoc. After the
iframe is loaded~\refprotostep{get-FWDdoc}, FWDdoc sends a
postMessage\footnote{postMessages are messages that are
  sent between different windows in one browser.} to its
parent's opener window, which is
RPdoc~\refprotostep{pm-ready}. This postMessage with the
sole content ``ready'' triggers RPdoc to send the
$\mi{tagKey}$ to FWDdoc, where in the postMessage the
origin\footnote{An origin is defined by a domain name plus
  the information whether the connection to this domain is
  via HTTP or HTTPS.} of FWD with HTTPS is declared to be
the only allowed receiver of this
message~\refprotostep{pm-tag-key}. FWDdoc uses the key to
decrypt the tag and thereby learns the intended receiver
(RP) of the IA~\refprotostep{decrypt-tag}. As its last
action, FWD forwards the encrypted IA $\mi{eia}$ via postMessage to RPdoc
(using RP's HTTPS origin as the only allowed
receiver)~\refprotostep{pm-eia}.

RPdoc receives $\mi{eia}$ and sends it along with the
$\mi{loginSessionToken}$ to
RP~\refprotostep{post-login}. RP then decrypts $\mi{eia}$,
retrieves $\mi{ia'}$ and checks whether $\mi{ia'}$ is a
valid signature for $\an{\mi{tag}, \mi{email},
  \mi{FWDDomain}}$ under the verification key
$\pub(k_{IdP})$ of the IdP, where $\mi{tag}$, $\mi{email}$,
and $\mi{FWDDomain}$ are taken from the session data
identified by
$\mi{loginSessionToken}$~\refprotostep{decrypt-ia}.

Now, the user identified by the email address is logged in.
The mechanism that is used to persist this logged-in state
(if any) at this point is out of the scope of \spresso. In
our analysis, as a model for a standard
session-based login, we assume that RP creates a
session for the user's browser, identified by some freshly
chosen token (the \emph{RP service
  token})~\refprotostep{choose-service-token} and sends
this token to the browser~\refprotostep{send-login}.

\subsection{Implementation Details}\label{sec:implementationdetails}
We developed a proof-of-concept implementation of \spresso
in about 700 lines of JavaScript and HTML code. It contains
all presented features of \spresso itself and a typical
IdP. The implementation (source code and online demo) is available at
\cite{spresso-sourcecode-2015}. Our model presented in
Section~\ref{sec:formal-model-spresso} closely follows this
implementation.

The three servers (RP, IdP and FWD) are written in
JavaScript and are based on node.js and its built-in
\emph{crypto} API. On the client-side we use the Web
Cryptography API. For encryption we employ AES-256 in 
GCM mode to provide authenticity. Signatures are
created/verified using RSA-SHA256.

\subsection{Discussion}
\label{sec:discussion}

\noindent In order to provide more intuition and motivation
for the design of \spresso, and in particular its security and
privacy properties, we first informally discuss some
potential attacks on our system and what measures we took
when designing and implementing \spresso to prevent these
attacks. These attacks also illustrate the complexity and
difficulty of designing a secure and privacy-respecting
web-based SSO system. In Sections~\ref{sec:proving-privacy}
and~\ref{sec:prov-auth}, we formally prove that \spresso
provides strong authentication and privacy properties in a
detailed model of the web infrastructure. We also discuss
other aspects of \spresso, including usability and
performance. We conclude this section with a comparison of
\spresso and BrowserID.

\subsubsection{Malicious RP: Impersonation Attack.}
An attacker could try to launch a man in the middle attack
against \spresso by playing the role of an RP (RP server
and RPdoc) to the user. Such an attacker would run a
malicious server at his RP domain, say, $\str{RPa}$, and
also deliver a malicious script (instead of the honest
RPdoc script) to the user's browser. Now assume that the
user wants to log in with her email address at $\str{RPa}$
and is logged in at the IdP corresponding to the email
address already.  Then, the attacker (outside of the user's
browser) could first initiate the login process at
$\str{RPb}$ using the user's email address. The attacker's
RP could then create a tag of the form
$\encs{\an{\str{RPb}, \mi{rpNonce}}}{\mi{tagKey}}$ using
the domain of an honest RP $\str{RPb}$, instead of
$\str{RPa}$.  The IdP would hence create an IA for this tag
and the user's email address and deliver this IA to the
user's browser. If this IA were now indeed be delivered to
the attacker's RP window (which is running a malicious
RPdoc script), the attacker could use the IA to finish the
log in process at $\str{RPb}$ (and obtain the service token
from $\str{RPb}$), and thus, log in at $\str{RPb}$ as the
honest user.

However, assuming that FWD is honest (see below for a
discussion of malicious FWDs), FWD prevents this kind of
attack: FWD forwards the (encrypted) IA via a postMessage
only to the domain listed in the tag (so, in this case,
$\str{RPb}$), which in the attack above is not the domain
of the document loaded in the attacker's RP window
($\str{RPa}$). The IA is therefore not transmitted to the
attacker. The same applies when the attacker tries to
navigate the RP window to its own domain, i.e., to
$\str{RPa}$, before Step~\refprotostep{pm-eia}. Our formal
analysis presented in the following sections indeed proves
that such attacks are excluded in \spresso. We note that in
order to make sure that the postMessage is delivered to the
correct RP window (technically, a window with the expected
origin), FWD uses a standard feature of the postMessage
mechanism which allows to specify the origin of the
intended recipient of a postMessage.

\subsubsection{Malicious IdP.} A malicious IdP could try to
log the user in under an identity that is not her own. An
attack of this kind on BrowserID was shown in~\cite{FettKuestersSchmitz-SP-2014}. However, in \spresso,
the IdP cannot select or alter the identity with which the
user is logged in. Instead, the identity is fixed by RP
after Step~\refprotostep{create-tag} and checked in
Step~\refprotostep{decrypt-ia}. Again, our formal analysis
shows that such attacks are indeed not possible in \spresso.

The IdP could try to undermine the user's privacy by trying
to find out which RP requests the IA. However, in \spresso,
the IdP cannot gather such information: From the
information available to it ($\mi{email}$, $\mi{tag}$,
$\mi{FWDDomain}$ plus any information it can gather from
the browser's state), it cannot infer the RP.\footnote{If
  only a few RPs use a specific FWD, $\mi{FWDDomain}$ would
  reveal some information. However, this is easy to avoid
  in practice: the set of FWDs all (or many) RPs trust
  should be big enough and RPs could randomly choose one of
  these FWDs for every login process.} It could further try
to corellate the sources and times of HTTPS requests for
the support document with user logins. To minimize this
side channel, we suggest caching the support
document at each RP and automatic refreshing of this cache
(e.g., an RP could cache the document for 48 hours and
after that period automatically refresh the cache).
Additionally, RPs should use the Tor network (or similar
means) when retrieving the support document in order to
hide their IP addresses. Assuming that support documents
have been obtained from IdPs independently of specific
login requests by users, our formal analysis shows that
\spresso in fact enjoys a very strong privacy property (see
Sections~\ref{sec:form-defin-priv} and
\ref{sec:proving-privacy}).

In BrowserID, malicious IdPs (in fact, any party who can
run malicious scripts in the user's browser) can check the
presence or absence of certain iframes in the login
process, leading to the privacy break mentioned earlier.
Again, our formal analysis implies that this is not
possible for \spresso.

\subsubsection{Malicious FWD.} A malicious FWD could
cooperate with or act as a malicious RP and thereby enable
the man in the middle attack discussed above, undermining
the authentication guarantees of the system. Also, a
malicious FWD could collaborate with a malicious IdP and
send information about the RP to the IdP, and hence,
undermine privacy. 

Therefore, for our system to provide authentication and
privacy, we require that FWDs behave honestly. Below we
discuss ways to force FWD to behave honestly. We suspect
that there is no way to avoid the use of FWDs or other
honest components in a practical SSO system which is
supposed to provide not only authentication but also
privacy: In our system, after
Step~\refprotostep{send-loginxhr} of the flow, IdPdoc must
return the IA to the RP. There are two constraints: First,
the IA should only be forwarded to a document that in fact
is RP's document. Otherwise, it could be misused to log in
at RP under the user's identity by any other party, which
would break authentication. Second, RP's identity should
not be revealed to IdP, which is necessary for
privacy. Currently, there is no browser mechanism to
securely forward the IA to RP without disclosing RP's
identity to IdP (but see below).

\subsubsection{Enforcing Honest FWDs.} 
Before we discuss existing and upcoming technologies to
enforce honest behavior of FWDs, we first note that in
\spresso, an FWD is chosen by the RP to which a user wants
to log in. So the RP can choose the FWD it trusts. The RP
certainly has a great interest in the trustworthiness of
the FWD: As mentioned, a malicious FWD could allow an
attacker to log in as an honest user (and hence, misuse
RP's service and undermine confidentiality and integrity of
the user's data stored at RP), something an RP would
definitely want to prevent. Second, we also note that FWD
does not learn a user's email address: the IA, which is
given to FWD and which contains the user's email address,
is encrypted with a symmetric key unknown to
FWD.\footnote{We note that IA is a signature anyway, so
  typically a signed hash of a message. Hence, for common
  signature schemes, already from the IA itself FWD is not
  able to extract the user's email address. In addition,
  \spresso even encrypts the IA to make sure that this is
  the case no matter which signature scheme is used.}
Therefore, \spresso does not provide FWD with information
to track at which RP a specific user logs
in.\footnote{A malicious FWD could try to set cookies and
  do browser fingerprinting to the track the behavior of
  specific browsers. Still it does not obtain the user's
  email address.}

Now, as for \emph{enforcing} honest FWDs, first note that
an honest FWD server is supposed to always deliver the same
fixed JavaScript to a user's browsers. This JavaScript code
is very short (about 50 lines of code). If this code is
used, it is not only ensured that FWD preserves
authentication and privacy, but also that no tracking data
is sent back to the FWD server.

Using current technology, a user could use a browser
extension which again would be very simple and which would
make sure that in fact only this specific JavaScript is
delivered by FWD (upon the respective request). As a
result, FWD would be forced to behave honestly, without the
user having to trust FWD. Another approach would be an
extension that replaces FWD completely, which could also
lead to a simplified protocol. In both cases, \spresso
would provide authentication and privacy without having to
trust any FWD. Both solutions have the common problem that
they do not work on all platforms, because not on all
platforms browsers support extensions. The first solution
(i.e., the extension checks only that correct JavaScript is
loaded) would at least still work for users on such
platforms, albeit with reduced security and privacy
guarantees.

A native web technology called \emph{subresource integrity} (SRI)\footnote{\url{http://www.w3.org/TR/SRI/}} is
currently under development at the W3C\@. SRI allows a
document to create an iframe with an attribute
\emph{integrity} that takes a hash value. The browser now
would guarantee that the document loaded into the iframe
hashes to exactly the given value. So, essentially the
creator of the iframe can enforce the iframe to be loaded
with a specific document. This would enable \spresso to
automatically check the integrity of FWDdoc without any
extensions.

\subsubsection{Referer Header and Privacy.}
The \emph{Referer} [sic!] header is set by browsers to show
which page caused a navigation to another page. It is set
by all common browsers. To preserve privacy, when the
loading of IdPdoc is initiated by RPdoc, it is important
that the Referer header is \emph{not} set, because it would
contain RP's domain, and consequently, IdP would be able to
read off from the Referer header to which RP the user wants
to log in, and hence, privacy would be broken.  With HTML5,
a special attribute for links in HTML was introduced, which
causes the Referer header to be suppressed
(\texttt{rel="noreferrer"}). However, when such a link is
used to open a new window, the new window does not have a
handle on the opening window (opener) anymore. But having a
handle is essential for \spresso, as the postMessage in
Step~\refprotostep{pm-ready} is sent to the opener window
of IdPdoc. To preserve the opener handle while at the same
time hiding the referer, we first open the new window with
a redirector document loaded from RP
(Step~\refprotostep{open-IdPdoc}) and then navigate this
window to IdPdoc (using a link with the noreferrer
attribute set and triggered by JavaScript in
Step~\refprotostep{redir-to-IdPdoc}). This causes the
Referer header to be cleared, while the opener handle is
preserved.\footnote{Another option would have been to use a
  data URI instead of loading the redirector document from
  RPdoc and to use a Refresh header contained in a meta tag
  for getting rid of the Referer header. This however
  showed worse cross-browser compatibility, and the Refresh
  header lacks standardization.}  Our formal analysis
implies that with this solution indeed privacy is
preserved.

\subsubsection{Cross-Site Request Forgery.}
Cross-site request forgery is particularly critical at RP,
where it could be used to log a user in under an identity
that is not her own. For RP, \spresso therefore employs a
session token that is not stored in a cookie, but only in
the state of the JavaScript, avoiding cross-origin and
cross-domain cookie attacks. Additionally, RP checks the
Origin header of the login request to make sure that no
login can be triggered by a third party (attacker) web
page. Our formal analysis implies that cross-site request
forgery and related attacks are not possible in \spresso.

\subsubsection{Phishing.}
It is important to notice that in \spresso the user can
verify the location and TLS certificate of IdPdoc's window
by checking the location bar of her browser. The user can
therefore check where she enters her password, which would
not be possible if IdPdoc was loaded in an iframe. Setting
strict transport security headers can further help in
avoiding phishing attacks.

\subsubsection{Tag Length Side Channel.} The length of the
tag created in Step~\refprotostep{create-tag} depends on
the length of $\str{RPDomain}$. Since the tag is given to
IdP, IdP might try to infer $\str{RPDomain}$ from the
length of the tag. However, according to RFC 1035, domain
names may at most be 253 characters long. Therefore, by
appropriate padding (e.g., encrypting always nine 256 Bit
plaintext blocks)\footnote{Eight 256 bit blocks are
  sufficient for all domain names. We need an additional
  block for $\mi{rpNonce}$.}  the length of the tag will
not reveal any information about $\str{RPDomain}$.

\subsubsection{Performance.}
\spresso uses only standard browser features, employs only
symmetric encryption/decryption and signatures, and
requires (in a minimal implementation) eight HTTPS
re\-quests/re\-sponses --- all of which pose no significant
performance overhead to any modern web application, neither
for the browser nor for any of the servers. In our
prototypical and unoptimized implementation, a login
process takes less than 400\,ms plus the time for entering
email address and password.

\subsubsection{Usability.}
In \spresso, users are identified by their email addresses
(an identifier many users easily memorize) and email
pro\-vid\-ers serve as identity pro\-vid\-ers. Many web
applications today already use the email address as the
primary identifier along with a password for the specific
web site: When a user signs up, a URL with a secret token
is sent to the user's email address. The user has to check
her emails and click on the URL to confirm that she has
control over the email address. She also has to create a
password for this web site. \spresso could seamlessly be
integrated into this sign up scheme and greatly simplify
it: If the email provider (IdP) of the user supports
\spresso, an \spresso login flow can be launched directly
once the user entered her email address and clicked on the
login button, avoiding the need for a new user password and
the email confirmation; and if the user is logged in at the
IdP already, the user does not even have to enter a
password. Otherwise, or if a user has JavaScript disabled,
an automatic and seamless fallback to the classical
token-based approach is possible (as RP can detect whether
the IdP supports \spresso in
Step~\refprotostep{get-spresso-info-1} of the protocol). In
contrast to other login systems, such as Google ID, the user would not
even have to decide whether to log in with \spresso or not
due to the described seamless integration of \spresso. Due
to the privacy guarantees (which other SSO systems do not
have), using \spresso would not be disadvantageous for the
user as her IdPs cannot track to which RPs the user logs
in.

The above illustrates that, using \spresso, signing up to a
web site is very convenient: The user just enters her email
address at the RP's web site and presses the login button
(if already logged in at the respective IdP, no password is
necessary). Also, with \spresso the user is free to use any
of her email addresses.

\subsubsection{Extendability.} \spresso could be extended
to have the IdP sign (in addition to the email address)
further user attributes in the IA, which then might be
used by the RP.

\subsubsection{Operating FWD.} Operating an FWD is very
cheap, as the only task is to serve one static file.  Any
party can act as an FWD. Users and RPs might feel most
confident if an FWD is operated by widely trusted
non-profit organizations, such as Mozilla or the EFF.

\subsubsection{Comparison with BrowserID.}
BrowserID was the first and so far only SSO system designed
to provide privacy (IdPs should not be able to tell at
which RPs user's log in). Nonetheless, as already mentioned
(see Section~\ref{sec:main-features}), severe attacks were
discovered in
\cite{FettKuestersSchmitz-ESORICS-BrowserID-Primary-2015} which
show that the privacy promise of BrowserID is broken: not
only IdPs but even other parties can track the login
behavior of users. Regaining privacy would have required a
major redesign of the system, resulting in essentially a
completely new system, as pointed out in
\cite{FettKuestersSchmitz-ESORICS-BrowserID-Primary-2015}. Also,
BrowserID has the disadvantage that it relies on a single
trusted server (\nolinkurl{login.persona.org}) which is
quite complex, with several server interactions necessary
in every login process, and most importantly, by design,
gets full information about the login behavior of users
(the user's email address and the RP at which the user
wants to log in).\footnote{In \spresso, we require that FWD
  behaves honestly. In a login process, however, the FWD
  server needs to provide only a fixed single and very
  simple JavaScript, no further server interaction is
  necessary.  Also, FWD does not get full information and
  RP in every login process may choose any FWD it
  trusts. Moreover, as discussed above, there are means to
  force FWD to provide the expected JavaScript.} Finally,
BrowserID is a rather complex SSO system (with at least 64
network and inter-frame messages in a typical login
flow\footnote{Counting HTTP request and responses as well
  as postMessages, leaving out any user requests for GUI
  elements or other non-necessary resources.} compared to
only 19 in \spresso). This complexity implies that security
vulnerability go unnoticed more easily. In fact, several
attacks on BrowserID breaking authentication and privacy
claims were discovered (see
\cite{FettKuestersSchmitz-SP-2014,FettKuestersSchmitz-ESORICS-BrowserID-Primary-2015}).

This is why we designed and built \spresso from scratch,
rather than trying to redesign BrowserID. The design of
\spresso is in fact very different to BrowserID. For example,
except for HTTPS and signatures of IdPs, \spresso uses only
symmetric encryption, whereas in BrowserID, users (user's
browers) have to create public/private key pairs and IdPs
sign the user's public keys. The entities in \spresso are
different to those in BrowserID as well, e.g., \spresso does
not rely on the mentioned single, rather complex, and
essentially omniscient trusted party, resulting in a
completely different protocol flow. The design of \spresso is
much slimmer than the one of BrowserID.

%
\section{Web Model}
\label{sec:web-model}

\begin{sloppypar}
Our formal security analysis of \spresso (presented in the
next sections) is based on the general Dolev-Yao style web
model in~\cite{FettKuestersSchmitz-SP-2014}. As mentioned
in the introduction, we changed some details in this model
to facilitate the definition of
in\-dis\-tin\-guish\-abil\-i\-ty/pri\-va\-cy properties (see
Section~\ref{sec:form-defin-priv}). In particular, we
simplified the handling of nonces and removed
non-deterministic choices wherever possible. Also, we added
the HTTP \emph{Referer} header and the HTML5 \emph{noreferrer}
attribute for links.
\end{sloppypar}

Here, we only present a very brief version of the web
model. The full model, including our changes, is provided
in
Appendices~\ref{app:web-model}--\ref{app:deta-descr-brows}.

\subsection{Communication
  Model}\label{sec:communicationmodel}

The main entities in the communication model are
\emph{atomic processes}, which are used to model web
browsers, web servers, DNS servers as well as web and
network attackers. Each \ap listens to one or more (IP)
addresses. A set of \aps forms what is called a
\emph{system}. Atomic processes can communicate via events,
which consist of a message as well as a receiver and a
sender address. In every step of a run, one event is chosen
non-deterministically from the current ``pool'' of events
and is delivered to one of the \aps that listens to the
receiver address of that event. The atomic process can then
process the event and output new events, which are added to
the pool of events, and so on. More specifically, messages,
processes, etc.~are defined as follows.

\subsubsection{Terms, Messages and Events.} 
As usual in Dolev-Yao models (see, e.g.,
\cite{AbadiFournet-POPL-2001}), messages are expressed as
formal terms over a signature. The signature $\Sigma$ for
the terms and messages considered in the web model
contains, among others, constants (such as (IP) addresses,
ASCII strings, and nonces), sequence and projection
symbols, and further function symbols, including those for
(a)symmetric encryption/decryption and digital signatures.
Messages are defined to be ground terms (terms without
variables). For example (see also
Section~\ref{sec:login-flow} where we already use the term
notation to describe messages), $\pub(k)$ denotes the
public key which belongs to the private key $k$. To provide
another example of a message, in the web model, an HTTP
request is represented as a ground term containing a nonce,
a method (e.g., $\mGet$ or $\mPost$), a domain name, a
path, URL parameters, request headers (such as
$\str{Cookie}$), and a message body. For instance, an HTTP
$\mGet$ request for the URL \url{http://example.com/show?p=1} is
modeled as the term
\[\mi{r} := \langle \cHttpReq, n_1, \mGet, \str{example.com},
\str{/show}, \an{\an{\str{p},1}}, \an{}, \an{} \rangle\,,\]
where headers and body are empty. An HTTPS request for $r$
is of the form
$\ehreqWithVariable{r}{k'}{\pub(k_\text{example.com})}$, where
$k'$ is a fresh symmetric key (a nonce) generated by the
sender of the request (typically a browser); the responder
is supposed to use this key to encrypt the response.

\emph{Events} are terms of the form $\an{a,f,m}$ where $a$
and $f$ are receiver/sender (IP) addresses, and $m$ is a
message, for example, an HTTP(S) message as above or a DNS
request/response.

The \emph{equational theory} associated with the signature
$\Sigma$ is defined as usual in Dolev-Yao models. The
theory induces a congruence relation $\equiv$ on terms. It
captures the meaning of the function symbols in
$\Sigma$. For instance, the equation in the equational
theory which captures asymmetric decryption is $\dec{\enc
  x{\pub(y)}}{y}=x$. With this, we have that, for example,
\[\dec{\ehreqWithVariable{r}{k'}{\pub(k_\text{example.com})}}{k_\text{example.com}}\equiv
\an{r,k'}\,,\] i.e., these two terms are
equivalent w.r.t.~the equational theory.

\subsubsection{Atomic Processes, Systems and Runs.} Atomic
Dolev-Yao processes, systems, and runs of systems are
defined as follows.

An \emph{atomic Dolev-Yao (DY) process} is a tuple \[p =
(I^p, Z^p, R^p, s^p_0)\] where $I^p$ is the set of addresses
the process listens to, $Z^p$ is a set of states (formally,
terms), $s^p_0\in Z^p$ is an initial state, and $R^p$ is a
relation that takes an event and a state as input and
(non-deterministically) returns a new state and a sequence
of events. This relation models a computation step of the
process, which upon receiving an event in a given state
non-deterministically moves to a new state and outputs a
set of events. It is required that the events and states in
the output can be computed (more formally, derived in the
usual Dolev-Yao style) from the current input event and
state. We note that in \cite{FettKuestersSchmitz-SP-2014}
the definition of an atomic process also contained a set of
nonces which the process may use. Instead of such a set, we
now consider a global sequence of (unused) nonces and new
nonces generated by an atomic process are taken from this
global sequence.

The so-called \emph{attacker process} is an atomic DY
process which records all messages it receives and outputs
all events it can possibly derive from its recorded
messages. Hence, an attacker process is the maximally
powerful DY process. It carries out all attacks any DY
process could possibly perform and is parametrized by the
set of sender addresses it may use. Attackers may corrupt
other DY processes (e.g., a browser).

A \emph{system} is a set of atomic processes. A
\emph{configuration} $(S,E,N)$ of this system consists of the current
states of all atomic processes in the system ($S$), the pool of
waiting events ($E$, here formally modeled as a sequence of
events; in \cite{FettKuestersSchmitz-SP-2014}, the pool was
modeled as a multiset), and the mentioned sequence of
unused nonces ($N$).

A \emph{run} of a system for an initial sequence of events
$E^0$ is a sequence of configurations, where each
configuration (except for the initial one) is obtained by
delivering one of the waiting events of the preceding
configuration to an atomic process $p$ (which listens to
the receiver address of the event), which in turn performs
a computation step according to its relation $R^p$. The
initial configuration consists of the initial states of the
atomic processes, the sequence $E^0$, and an initial
infinite sequence of unused nonces.

\subsubsection{Scripting Processes.}
The web model also defines scripting processes, which model
client-side scripting technologies, such as JavaScript.

A \emph{scripting process} (or simply, a \emph{script}) is
defined similarly to a DY process. It is called by the
browser in which it runs. The browser provides it with
state information $s$, and the script then, according to
its computation relation, outputs a term $s'$, which
represents the new internal state and some command which is
interpreted by the browser (see also below). Again, it is
required that a script's output is derivable from its
input.

Similarly to an attacker process, the so-called
\emph{attacker script} $\Rasp$ may output everything that
is derivable from the input.

\subsection{Web System}\label{sec:websystem}

A web system formalizes the web
infrastructure and web applications. Formally, a \emph{web
  system} is a tuple \[(\websystem,
\scriptset,\mathsf{script}, E^0)\] with the following
components:

\begin{itemize}
\item The first component, $\websystem$, denotes a system
  (a set of DY processes as defined above) and contains
  honest processes, web attacker, and network attacker
  processes. While a web attacker can listen to and send
  messages from its own addresses only, a network attacker
  may listen to and spoof all addresses (and therefore is
  the maximally powerful attacker). Attackers may corrupt
  other parties. In the analysis of a concrete web system,
  we typically have one network attacker only and no web
  attackers (as they are subsumed by the network attacker),
  or one or more web attackers but then no network
  attacker. Honest processes can either be web browsers,
  web servers, or DNS servers. The modeling of web servers
  heavily depends on the specific application. The web
  browser model, which is independent of a specific web
  application, is presented below.
\item The second component, $\scriptset$, is a finite set
  of scripts, including the attacker script $\Rasp$. In a
  concrete model, such as our \spresso model, the set
  $\scriptset\setminus\{\Rasp\}$ describes the set of
  honest scripts used in the web application under
  consideration while malicious scripts are modeled by the
  ``worst-case'' malicious script, $\Rasp$.
\item The third component, $\mathsf{script}$, is an
  injective mapping from a script in $\scriptset$ to its
  string representation $\mathsf{script}(s)$ (a constant in
  $\Sigma$) so that it can be part of a messages, e.g., an
  HTTP response. 
\item Finally, $E^0$ is a sequence of events, which always
  contains an infinite number of events of the form
  $\an{a,a,\trigger}$ for every IP address $a$ in the web
  system. 
\end{itemize}
A \emph{run} of the web system is a run of
  $\websystem$ initiated by $E^0$.

\subsection{Web Browsers}\label{sec:web-browsers}

We now sketch the model of the web browser, with full
details provided in Appendix~\ref{app:deta-descr-brows}. A web browser is modeled as a
DY process $(I^p, Z^p, R^p, s^p_0)$.

An honest browser is thought to be used by one honest user,
who is modeled as part of the browser. User actions are
modeled as non-deterministic actions of the web browser.
For example, the browser itself non-deterministically
follows the links in a web page. User data (i.e., passwords
and identities) is stored in the initial state of the
browser and is given to a web page when needed, similar to
the AutoFill feature in browsers.

Besides the user identities and passwords, the state of a
web browser (modeled as a term) contains a tree of open
windows and documents, lists of cookies, localStorage and
sessionStorage data, a DNS server address, and other data.

In the browser state, the $\mi{windows}$ subterm is the
most complex one. It contains a window subterm for every
open window (of which there may be many at a time), and
inside each window, a list of documents, which represent
the history of documents that have been opened in that
window, with one of these documents being active, i.e.,
this document is presented to the user and ready for
interaction. A document contains a script loaded from a web
server and represents one loaded HTML page. A document also
contains a list of windows itself, modeling
iframes. Scripts may, for example, navigate or create
windows, send \xhrs and postMessages, submit forms,
set/change cookies, localStorage, and sessionStorage data,
and create iframes.  When activated, the browser provides a
script with all data it has access to, such as a (limited)
view on other documents and windows, certain cookies as
well as localStorage and sessionStorage.

Figure~\ref{fig:browser-structure} shows a brief overview
of the browser relation $R^p$ which defines how browsers
behave. For example, when a $\str{TRIGGER}$ message is
delivered to the browser, the browser non-deterministically
choses an $\mi{action}$. If, for instance, this action is
1, then an active document is selected
non-deterministically, and its script is triggered. The
script (with inputs as outlined above), can now output a
command, for example, to follow a hyperlink
($\str{HREF}$). In this case, the browser will follow this
link by first creating a new DNS request.  Once a response
to that DNS request arrives, the actual HTTP request (for
the URL defined by the script) will be sent out. After a
response to that HTTP request arrives, the browser creates
a new document from the contents of the response. Complex
navigation and security rules ensure that scripts can only
manipulate specific aspects of the browser's
state. Browsers can become corrupted, i.e., be taken over
by web and network attackers.  The browser model comprises
two types of corruption: \emph{close-corruption}, modeling
that a browser is closed by the user, and hence, certain
data is removed (e.g., session cookies and opened windows),
before it is taken over by the attacker, and \emph{full
  corruption}, where no data is removed in advance. Once
corrupted, the browser behaves like an attacker process.

\begin{figure}[tb]
  \centering
  \begin{minipage}{0.5\linewidth}
    \footnotesize
    { \underline{\textsc{Processing Input Message
          \hlExp{$m$}}}
      \begin{itemize}[itemsep=0.2ex,label=,leftmargin=0pt]
      \item $m = \fullcorrupt$:
        $\mi{isCorrupted} := \fullcorrupt$
      \item $m = \closecorrupt$:
        $\mi{isCorrupted} := \closecorrupt$
      \item $m = \trigger$:
        non-deterministically choose $\mi{action}$ from
        $\{1,2,3\}$
        \begin{itemize}[itemsep=0.1ex,label=,leftmargin=1em]
        \item $\mi{action} = 1$: \parbox[t]{20em}{Call
          script
          of some active document.\\ Outputs 
          new state and $\mi{command}$.}
          \begin{itemize}[itemsep=0.4ex,label=,leftmargin=1em]
          \item $\mi{command} = \tHref$:
            $\rightarrow$ \emph{Initiate request}
          \item $\mi{command} = \tIframe$:
            Create subwindow, $\rightarrow$ \emph{Initiate
              request}
          \item $\mi{command} = \tForm$:
            $\rightarrow$ \emph{Initiate request}
          \item $\mi{command} = \tSetScript$:
            Change script in given document.
          \item $\mi{command} =
                \tSetScriptState$: \parbox[t]{10em}{Change state of script\\
            in given document.}
          \item $\mi{command} =
                \tXMLHTTPRequest$: $\rightarrow$
            \emph{Initiate request}
          \item $\mi{command} = \tBack$ or $\tForward$: Navigate given window.
          \item $\mi{command} = \tClose$:
            Close given window.
          \item $\mi{command} = \tPostMessage$: Send \pm to specified document.
          \end{itemize}
        \item $\mi{action} = 2$:
          $\rightarrow$ \emph{Initiate request to some URL
            in new window}
        \item $\mi{action} = 3$:
          $\rightarrow$ \emph{Reload some document}
        \end{itemize}
    \item $m=$ DNS response: send
      corresponding HTTP request
    \item $m=$ HTTP(S) response:
      (decrypt,) find reference.
      \begin{itemize}[itemsep=0.2ex,label=,leftmargin=1em]
      \item\emph{reference to window:} create document
        in window
      \item \emph{reference to
          document:} \parbox[t]{15em}{add response
          body to document's\\ script input}
      \end{itemize}
    \end{itemize}
  }
\end{minipage}\vspace{-0.5em}
  \caption{The basic structure of the web browser relation
    $R^p$ with an extract of the most important processing
    steps, in the case that the browser is not already
    corrupted.}\label{fig:browser-structure}
\end{figure}

%
\section{Indistinguishability of Web Systems}
\label{sec:form-defin-priv}

We now define the indistinguishability of web systems. This
definition is not tailored towards a specific web application,
and hence, is of independent interest.

Our definition follows the idea of trace equivalence in
Dolev-Yao models (see, e.g.,
\cite{ChevalComonLundhDelaune-CCS-2011}), which in turn is
an abstract version of cryptographic indistinguishability. 

Intuitively, two web systems are indistinguishable if the
following is true: whenever the attacker performs the same
actions in both systems, then the sequence of messages he
obtains in both runs look the same from the attacker's
point of view, where, as usual in Dolev-Yao models, two
sequences are said to ``look the same'' when they are
statically equivalent \cite{AbadiFournet-POPL-2001} (see
below). More specifically, since, in general, web systems
allow for non-deterministic actions (also of honest
parties), the sequence of actions of the attacker might
induce a set of runs. Then indistinguishability says that
for all actions of the attacker and for every run induced
by such actions in one system, there exists a run in the
other system, induced by the same attacker actions, such
that the sequences of messages the attacker obtains in both
runs look the same to the attacker.

Defining the actions of attackers in web systems requires
care because the attacker can control different components
of such a system, but some only partially: A web attacker
(unlike a network attacker) controls only part of the
network. Also an attacker might control certain servers
(web servers and DNS servers) and browsers. Moreover, he
might control certain scripts running in honest browsers,
namely all attacker scripts $\Rasp$ running in browsers;
dishonest browsers are completely controlled by the
attacker anyway.

We model a single action of the attacker by what we call a
\emph{(web system) command}; not to be confused with
commands output by a script to the browser.  A command is
of the form
\[\an{i,j,\tau_\text{process},\mi{cmd}_\text{switch},\mi{cmd}_\text{window},\tau_\text{script},\mi{url}}\,.\] The
first component $i \in \mathbb{N}$ determines which event
from the pool of events is processed. If this event could
be delivered to several processes (recall that a network
attacker, if present, can listen to all addresses), then
$j$ determines the process which actually gets to process
the event. Now, there are different cases depending on the
process to which the event is delivered and depending on
the event itself. We denote the process by $p$ and the
event by $e$: i) If $p$ is corrupted (it is a web attacker,
network attacker, some corrupted browser or server), then
the new state of this process and its output are determined
by the term $\tau_\text{process}$, i.e., this term is
evaluated with the current state of the process and the
input $e$. ii) If $p$ is an honest browser and $e$ is not a
trigger message (e.g., a DNS or HTTP(S) response), then the
browser processes $e$ as usual (in a deterministic
way). iii) If $p$ is an honest browser and $e$ is a trigger
message, then there are three actions a browser can
(non-deterministically) choose from: open a new window,
reload a document, or run a script. The term
$\mi{cmd}_\text{switch} \in \{1,2,3\}$ selects one of these
actions. If it chooses to open a new window, a document
will be loaded from the URL $\mi{url}$. In the remaining
two cases, $\mi{cmd}_\text{window}$ determines the window
which should be reloaded or in which a script is
executed. If a script is executed and this script is the
attacker script, then the output of this script is derived
(deterministically) by the term $\tau_\text{script}$, i.e.,
this term is evaluated with the data provided by the
browser. The resulting command, if any, is processed
(deterministically) by the browser. If the script to be
executed is an honest script (i.e., not $\Rasp$), then this
script is evaluated and the resulting command is processed
by the browser. (Note that the script might perform
non-deterministic actions.) iv) If $p$ is an honest process
(but not a browser), then the process evaluates $e$ as
usual. (Again, the computation might be non-deterministic,
as honest processes might be non-deterministic.)

We call a finite sequence of commands a
\emph{schedule}. Given a web system
$\completewebsystem=(\websystem,
\scriptset,\mathsf{script}, E^0)$, a schedule $\sigma$
induces a set of (finite) runs in the obvious way. We
denote this set by
$\sigma(\completewebsystem)$. Intuitively, a schedule
models the attacker actions in a run. Note that we consider
a very strong attacker. He not only determines the actions
of all dishonest processes and all attacker scripts, but
also schedules all events, not only events intended for the
attacker; clearly, the attacker does not get to see
explicitly events not intended for him.

Before we can define indistinguishability of two web
systems, we need to, as mentioned above, recall the
definition of static equivalence of two messages $t_1$ and
$t_2$. We say that the messages $t_1$ and $t_2$ are
\emph{statically equivalent}, written $t_1\approx t_2$, if
and only if, for all terms $M(x)$ and $N(x)$ which contain
one variable $x$ and do not use nonces, we have that
$M(t_1)\equiv N(t_1)$ iff $M(t_2)\equiv N(t_2)$. That is,
every test performed by the attacker yields the same result
for $t_1$ and $t_2$, respectively. For example, if $k$ and
$k'$ are nonces, and $r$ and $r'$ are
different constants, then
\[\ehreqWithVariable{r}{k'}{\pub(k)}\approx
\ehreqWithVariable{r'}{k'}{\pub(k)}\ .\]
Intuitively, this is the case because the attacker does not
know the private key $k$.

We also need the following terminology.  If
$(\websystem, \scriptset,\mathsf{script}, E^0)$ is
a web system and $p$ is an attacker process in
$\websystem$, then we say that $(\websystem,
\scriptset,\mathsf{script}, E^0,p)$ is a \emph{web
  system with a distinguished attacker process $p$}. If
$\rho$ is a finite run of this system, we denote by
$\rho(p)$ the state of $p$ at the end of this run. In our
indistinguishability definition, we will consider the state
of the distinguished attacker process only. This is
sufficient since the attacker can send all its data to this
process.

Now, we are ready to define indistinguishability of web
systems in a natural way.

\begin{definition}\label{def:indistwebsystems}
  Let $\completewebsystem_0$ and $\completewebsystem_1$ be
  two web system each with a distinguished attacker process
  $p_0$ and $p_1$, respectively. We say that these systems
  are \emph{indistinguishable}, written
  $\completewebsystem_0\approx \completewebsystem_1$, iff
  for every schedule $\sigma$ and every $i\in \{0,1\}$, we
  have that for every run $\rho\in
  \sigma(\completewebsystem_i)$ there exists a run
  $\rho'\in \sigma(\completewebsystem_{1-i})$ such that
  $\rho(p_i)\approx \rho'(p_{1-i})$.
\end{definition}

%

\section{Formal Model of \spresso}
\label{sec:formal-model-spresso}

We now present the formal model of \spresso, which closely
follows the description in
Section~\ref{sec:description-spresso} and the
implementation of the system. This model is the basis for
our formal analysis of privacy and authentication
properties presented in Sections~\ref{sec:proving-privacy}
and \ref{sec:prov-auth}. 

We model \spresso as a web system (in the sense of
Section~\ref{sec:websystem}).  We call
$\spressowebsystem=(\bidsystem, \scriptset,
\mathsf{script}, E^0)$ an \emph{\spresso web system} if it
is of the form described in what follows.

The set $\bidsystem=\mathsf{Hon}\cup \mathsf{Web} \cup
\mathsf{Net}$ consists of a finite set of web attacker
processes (in $\mathsf{Web}$), at most one
network attacker process (in $\mathsf{Net}$), a finite set
$\fAP{FWD}$ of forwarders, a finite set $\fAP{B}$ of web
browsers, a finite set $\fAP{RP}$ of web servers for the
relying parties, a finite set $\fAP{IDP}$ of web servers
for the identity providers, and a finite set $\fAP{DNS}$ of
DNS servers, with $\mathsf{Hon} := \fAP{B} \cup \fAP{RP}
\cup \fAP{IDP} \cup \fAP{FWD} \cup \fAP{DNS}$.
Figure~\ref{fig:scripts-in-w} shows the set of scripts $\scriptset$
and their respective string representations that are defined by the
mapping $\mathsf{script}$.
The set $E^0$ contains only the trigger events as specified in
Section~\ref{sec:websystem}.

\begin{figure}[tb]
  \centering
  \begin{tabular}{|@{\hspace{1ex}}l@{\hspace{1ex}}|@{\hspace{1ex}}l@{\hspace{1ex}}|}\hline 
    \hfill $s \in \scriptset$\hfill  &\hfill  $\mathsf{script}(s)$\hfill  \\\hline\hline
    $\Rasp$ & $\str{att\_script}$  \\\hline
    $\mi{script\_rp}$ & $\str{script\_rp}$  \\\hline
    $\mi{script\_rp\_redir}$ & $\str{script\_rp\_redir}$  \\\hline
    $\mi{script\_idp}$ &  $\str{script\_idp}$  \\\hline
    $\mi{script\_fwd}$ & $\str{script\_fwd}$ \\\hline
  \end{tabular}
  
  \caption{List of scripts in $\scriptset$ and their respective string
    representations.}
  \label{fig:scripts-in-w}
\end{figure}

We now sketch the processes and the scripts in $\bidsystem$
and $\scriptset$ (see Appendix~\ref{app:model-spresso-auth}
for full details). As mentioned, our modeling closely
follows the description in
Section~\ref{sec:description-spresso} and the
implementation of the system:
\begin{itemize}[leftmargin=1.2em]

\item Browsers (in $\fAP{B}$) are defined as described in Section~\ref{sec:web-browsers}.

\item A relying party (in $\fAP{RP}$) is a web server. RP
  knows four distinct paths: $\mathtt{/}$, where it serves
  the index web page ($\str{script\_rp}$),
  $\mathtt{/startLogin}$, where it only accepts POST
  requests and mainly issues a fresh RP nonce,
  $\mathtt{/redir}$, where it only accepts requests with a
  valid login session token and serves
  $\str{script\_rp\_redir}$ to redirect the browser to the
  IdP, and $\mathtt{/login}$, where it also only accepts
  POST requests with login data obtained during the login
  process by $\str{script\_rp}$ running in the browser.  It
  checks this data and, if the data is considered to be
  valid, it issues a service token. The RP keeps a list of
  such tokens in its state. Intuitively, a client having
  such a token can use the service of the RP.

\item Each IdP (in $\fAP{IDP}$) is a web server. It knows
  three distinct paths:
  $\mathtt{/.well\mhyphen{}known/spresso\mhyphen{}login}$,
  where it serves the login dialog web page
  ($\str{script\_idp}$),
  $\mathtt{/.well\mhyphen{}known/spresso\mhyphen{}info}$,
  where it serves the support document containing its
  public key, and $\mathtt{/sign}$, where it issues a
  (signed) identity assertion. Users can authenticate to
  the IdP with their credentials and IdP tracks the state
  of the users with sessions.  Only authenticated users can
  receive IAs from the IdP.

\item Forwarders (in $\fAP{FWD}$) are web servers that have
  only one state (i.e., they are stateless) and serve only
  the script $\str{script\_fwd}$, except if they become corrupted. 

\item Each DNS server (in $\fAP{DNS}$) contains the assignment of
domain names to IP addresses and answers DNS requests
accordingly.
\end{itemize}

\noindent Besides the browser, RPs, IdPs, and FWDs can
become corrupted: If they receive the message $\corrupt$,
they start collecting all incoming messages in their state
and when triggered send out some message that is derivable
from their state and collected input messages, just like an
attacker process.

%

\section{Privacy of \spresso}
\label{sec:proving-privacy}

In our privacy analysis, we show that an identity provider
in \spresso cannot learn where its users log in. We
formalize this property as an indistinguishability
property: an identity provider (modeled as a web attacker)
cannot distinguish between a user logging in at one relying
party and the same user logging in at a different relying
party.

\subsubsection{Definition of Privacy of \spresso.}
\label{sec:formal-model-spresso-priv}
The web systems considered for the privacy of \spresso are
the web systems $\spressowebsystem$ defined in
Section~\ref{sec:formal-model-spresso} which now contain
one or more web attackers, no network attackers, one honest
DNS server, one honest forwarder, one browser, and two
honest relying parties $r_1$ and $r_2$. All honest parties
may not become corrupted and use the honest DNS server for
address resolving. Identity providers are assumed to be
dishonest, and hence, are subsumed by the web attackers
(which govern all identities). The web attacker subsumes
also potentially dishonest forwarders, DNS servers, relying
parties, and other servers. The honest relying parties are
set up such that they already contain the public signing
keys (used to verify identity assertions) for each domain
registered at the DNS server, modeling that these have been
cached by the relying parties, as discussed in
Section~\ref{sec:login-flow}.

In order to state the privacy property, we replace the
(only) honest browser in the above described web systems by
a slightly extended browser, which we call a
\emph{challenge browser}: This browser may not
become corrupted and is parameterized by a domain $r$ of a
relying party. When it is to assemble an HTTP(S) request for the special domain
$\str{CHALLENGE}$, then instead of putting together and
sending out the request for $\str{CHALLENGE}$ it takes the
domain $r$. However, this is done only for the first
request to $\str{CHALLENGE}$. Further requests to this domain are not
altered (and would fail, as the domain $\str{CHALLENGE}$
is not listed in the honest DNS server).

We denote web systems as described above by
$\spressoprivwebsystem(r)$, where $r$ is the domain of the
relying party given to the challenge browser in this
system.

We can now define privacy of $\spresso$. We note that it is
not important which attacker process in
$\spressoprivwebsystem(\cdot)$ is the distinguished one (in
the sense of Section~\ref{sec:form-defin-priv}).

\begin{definition}\label{def:privacyspresso}
  We say that \spresso is \emph{IdP-private} iff for every
  web system $\spressoprivwebsystem(\cdot)$ and domains
  $r_1$ and $r_2$ of relying parties as described above, we
  have that $\spressoprivwebsystem(r_1)\approx
  \spressoprivwebsystem(r_2)$, i.e.,
  $\spressoprivwebsystem(r_1)$ and
  $\spressoprivwebsystem(r_2)$ are indistinguishable.
\end{definition}
Note that there are many different situations where the
honest browser in $\spressoprivwebsystem(\cdot)$ could be
triggered to send an HTTP(S) request to
$\str{CHALLENGE}$. This could, for example, be triggered by
the user who enters a URL in the location bar of the
browser, a location header (e.g., determined by the
adversary), an (attacker) script telling the browser to
follow a link or create an iframe, etc.

Now, the above definition requires that in every stage of a
run and no matter how and by whom the $\str{CHALLENGE}$
request was triggered, no (malicious) IdP can tell whether
$\str{CHALLENGE}$ was replaced by $r_1$ or $r_2$, i.e.,
whether this resulted in a login request for $r_1$ or
$r_2$. Recall that the $\str{CHALLENGE}$ request is
replaced by the honest browser only once. This is the only
place in a run where the adversary does not know whether
this is a request to $r_1$ or $r_2$. Other requests in a
run, even to both $r_1$ and $r_2$, the adversary can
determine. Still, he should not be able to figure out what
happened in the $\str{CHALLENGE}$ request. Hence, this
definition captures in a strong sense the intuition that a
malicious IdP should not be able to distinguish whether a
user logs in/has logged in at $r_1$ or $r_2$.

\subsubsection{Analyzing Privacy of \spresso.}
\label{sec:analyz-priv-spresso}
The following theorem says that \spresso enjoys the
described privacy definition.

\begin{theorem}
  \spresso is IdP-private.
\end{theorem}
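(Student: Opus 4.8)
I would prove $\spressoprivwebsystem(r_1)\approx\spressoprivwebsystem(r_2)$ by a simulation argument in the spirit of trace equivalence. Fix a schedule $\sigma$; by symmetry it suffices to exhibit, for every run $\rho\in\sigma(\spressoprivwebsystem(r_1))$, a run $\rho'\in\sigma(\spressoprivwebsystem(r_2))$ with $\rho(p)\approx\rho'(p')$ for the distinguished attacker processes $p$ and $p'$. The run $\rho'$ is obtained from $\rho$ by a uniform transformation that \emph{transplants the challenge login flow from the server $r_1$ to the server $r_2$}: in each configuration of $\rho$ one (i) moves the session data stored at the $r_1$ server under the challenge login session token, together with the challenge service token, to the $r_2$ server, rewriting $r_1$ to $r_2$ inside it; (ii) rewrites, in the browser state, the documents and windows created during the challenge flow whose origin is $r_1$ (the RPdoc and RPRedirDoc documents, together with the URLs and RP-domain-dependent values they carry) to origin $r_2$; (iii) redirects the challenge-flow DNS and HTTPS events between the browser and the $r_1$ server to the $r_2$ server; and (iv) leaves everything else unchanged, except that in every attacker-visible place the challenge tag $\mi{tag}_1=\encs{\an{r_1,\mi{rpNonce}}}{\mi{tagKey}}$ is replaced by $\mi{tag}_2=\encs{\an{r_2,\mi{rpNonce}}}{\mi{tagKey}}$ --- with the \emph{same} $\mi{rpNonce}$ and $\mi{tagKey}$, because the global nonce sequence is consumed in the same order in both runs.

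The core of the argument is an induction over the run maintaining two invariants. First, a \emph{secrecy invariant}: the web attacker never derives the tag key $\mi{tagKey}$. This holds because $\mi{tagKey}$ is a fresh nonce created by an honest RP and afterwards only travels over honest-to-honest HTTPS connections (the \texttt{/startLogin} response) and in one origin-restricted postMessage from RPdoc to the honest forwarder's document; the forwarder serves only its fixed honest script, so the attacker can never run code under the forwarder's origin and cannot obtain $\mi{tagKey}$ that way either. (The other challenge-flow values need no secrecy argument: $\mi{loginSessionToken}$ and the challenge service token never occur in any attacker-visible message, while $\mi{iaKey}$, $\mi{rpNonce}$, and the email address are RP-domain-independent and, by the nonce alignment, take literally the same values in $\rho$ and $\rho'$.) Second, a \emph{simulation invariant}: each configuration of $\rho'$ is exactly the image of the corresponding configuration of $\rho$ under the transformation above; in particular the state of the distinguished attacker, and more generally every attacker-visible message and every script input handed to an instance of $\Rasp$, in $\rho'$ is the image under the substitution $[\mi{tag}_2/\mi{tag}_1]$ of the corresponding object in $\rho$. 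Propagating this invariant through each step uses that $r_1$ and $r_2$ are honest RPs with identical behaviour up to their domain and IP address; that the browser relation and the honest DNS and forwarder servers are oblivious to which RP domain is involved (in $\rho'$ the honest forwarder simply decrypts the transplanted tag to $r_2$ and forwards $\mi{eia}$ to that origin); and that each attacker step --- a fixed term $\tau_\text{process}$ or $\tau_\text{script}$ evaluated on its current input --- commutes with $[\mi{tag}_2/\mi{tag}_1]$ because the attacker cannot reduce any subterm built over $\mi{tag}_i$ (it lacks $\mi{tagKey}$). The non-deterministic choices of honest browsers and scripts in $\rho'$ are chosen to mirror those in $\rho$ under the transformation, and since the network attacker is absent the schedule's process-selection component is always forced, so $\rho'$ is indeed induced by $\sigma$.

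The only genuinely delicate point --- and the only place where the design of \spresso is really used --- is establishing, as part of the invariant, that the challenge flow never hands $r_1$, or any term from which $r_1$ could be derived other than the opaque $\mi{tag}_1$, to an attacker-controlled process; equivalently, that the attacker's state really is of the form $C[\mi{tag}_1]$ for a context $C$ not otherwise mentioning $r_1$. For the (attacker-controlled) IdP server this is immediate from the flow: the only request it receives in the challenge flow carries $\mi{email}$, $\mi{tag}$, and the forwarder domain, and $r_1$ occurs in it only inside $\mi{tag}$ --- note that the support document is pre-cached at the honest RPs, so no login-triggered request for it is ever sent. For an instance of $\Rasp$ running as IdPdoc one has to inspect the browser model: IdPdoc learns only its own URL (whose fragment carries $\mi{email}$, $\mi{tag}$, the forwarder domain, and $\mi{iaKey}$, but not $r_1$), its own cookies and web and session storage (for the attacker's own origin), and a restricted view of the window tree --- from which, by the navigation and same-origin rules of the model, it can read neither the location nor the origin of the cross-origin RPdoc window that opened its window, nor of the cross-origin FWDdoc iframe it created; and the only postMessage RPdoc ever emits into that subtree is the one carrying $\mi{tagKey}$, sent with the forwarder's HTTPS origin as target and hence not delivered to IdPdoc, while FWDdoc emits postMessages only to RPdoc, never to its parent IdPdoc. (This is precisely where the RPRedirDoc redirector with the \texttt{noreferrer} link, the origin-restricted postMessages, and the passing of $r_1$ to the IdP only inside the tag do their work; removing any of them creates a leak.) I expect this browser-model bookkeeping to be the main obstacle; the rest, while lengthy, is routine.

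Given the two invariants, $\rho'$ is a valid run of $\spressoprivwebsystem(r_2)$ induced by $\sigma$, and the distinguished attacker states are $\rho(p)=C[\mi{tag}_1]$ and $\rho'(p')=C[\mi{tag}_2]$ for one common context $C$ (with $\mi{tag}_1$, resp. $\mi{tag}_2$, plugged into every hole). Since the attacker cannot derive $\mi{tagKey}$, we get $\mi{tag}_1\approx\mi{tag}_2$, exactly as in the example $\ehreqWithVariable{r}{k'}{\pub(k)}\approx\ehreqWithVariable{r'}{k'}{\pub(k)}$ of Section~\ref{sec:form-defin-priv}; static equivalence is preserved under plugging into the common context $C$, so $\rho(p)\approx\rho'(p')$. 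The symmetric construction deals with runs of $\spressoprivwebsystem(r_2)$, and therefore $\spressoprivwebsystem(r_1)\approx\spressoprivwebsystem(r_2)$, i.e., \spresso is IdP-private.
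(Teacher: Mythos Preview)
Your approach is essentially the paper's: both prove the theorem by an induction over the schedule maintaining a coupling invariant between the two configurations, with the key observation that the attacker's view differs only in opaque tags whose key it cannot derive. The paper packages the invariant as an explicit equivalence relation ($\alpha$-equivalence, decomposed into $\gamma$-equivalence on states and $\beta$-equivalence on pending events), whereas you describe it as a transformation plus a secrecy invariant; these are two presentations of the same bisimulation argument.

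Two points where the paper is more careful than your outline and where you would hit friction when writing it out. First, there is not \emph{one} challenge tag but potentially a growing set: once the single \texttt{CHALLENGE} request has loaded RPdoc from $r_1$ (resp.\ $r_2$), every subsequent same-origin XHR from that document --- and any reload of it --- goes to $r_1$ (resp.\ $r_2$) as well, so each \texttt{/startLogin} the schedule triggers produces a fresh tag that differs between the two runs. The paper therefore tracks a set $\theta$ of ``proto-tags'' and a matching set $K$ of tag keys, and the static-equivalence step at the end is for terms that differ at many positions, not one. Your single-substitution $[\mi{tag}_2/\mi{tag}_1]$ picture needs to be widened accordingly. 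Second, your assertion that $\mi{loginSessionToken}$ ``never occurs in any attacker-visible message'' is exactly one of the things the invariant must \emph{prove}, not assume: the token sits in RPdoc's script state and in the \texttt{/redir} URL, and the case analysis has to confirm that no attacker script or postMessage can extract it. The paper carries a set $L$ of these tokens through the invariant for this reason. Finally, a small simplification you can take: in the privacy web system all honest processes and honest scripts are deterministic, so a schedule induces at most one run in each system --- you do not need to ``choose non-deterministic actions to mirror $\rho$''; the run $\rho'$ is simply the unique run of $\spressoprivwebsystem(r_2)$ under $\sigma$.
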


The full proof is provided in Appendix~\ref{app:formal-proof-privacy}. In the
proof, we define an equivalence relation between
configurations of $\spressoprivwebsystem(r_1)$ and
$\spressoprivwebsystem(r_2)$, comprising equivalences
between states and equivalences between events (in the pool
of waiting events). For the states, for each (type of an)
atomic DY process in the web system, we define how their
states are related. For example, the state of the FWD
server must be identical in both configurations.  As
another example, roughly speaking, the attacker's state is
the same up to subterms the attacker cannot decrypt.
Regarding (waiting) events, we distinguish between messages
that result (directly or indirectly) from a
$\str{CHALLENGE}$ request by the browser and other
messages. While the challenged messages may differ in
certain ways, other messages may only differ in parts that
the attacker cannot decrypt.

Given these equivalences, we then show by induction and an
exhaustive case distinction that, starting from equivalent
configurations, every schedule leads to equivalent
configurations. (We note that in
$\spressoprivwebsystem(\cdot)$ a schedule induces a single
run because in $\spressoprivwebsystem(\cdot)$ we do not
have non-deterministic actions that are not determined by a
schedule: honest servers and scripts perform only
deterministic actions.) As an example, we distinguish
between the potential receivers of an event. If, e.g., FWD
is a receiver of a message, given its identical state in
both configurations (as per the equivalence definition) and
the equivalence on the input event, we can immediately show
that the equivalence holds on the output message and state.
For other atomic DY processes, such as browsers and RPs,
this is much harder to show. For example, for browsers, we
need to distinguish between the different scripts that can
potentially run in the browser (including the attacker
script), the origins under which these scripts run, and the
actions they can perform.

For equivalent configurations of
$\spressoprivwebsystem(r_1)$ and
$\spressoprivwebsystem(r_2)$, we show that the attacker's
views are indistinguishable. Given that for all
$\spressoprivwebsystem(r_1)$ and
$\spressoprivwebsystem(r_2)$ every schedule leads to
equivalent configurations, we have that \spresso is
IdP-private.

%
\section{Authentication of \spresso}
\label{sec:prov-auth}

We show that \spresso satisfies two fundamental
authentication properties.

\subsubsection{Formal Model of \spresso for Authentication.}
\label{sec:formal-model-spresso-auth}
For the authentication analysis, we consider web systems as
defined in Section~\ref{sec:formal-model-spresso} which now
contain one network attacker, a finite set of browsers, a
finite set of relying parties, a finite set of identity
providers, and a finite set of forwarders. Browsers,
forwarders, and relying parties can become corrupted by the
network attacker. The network attacker subsumes all web
attackers and also acts as a (dishonest) DNS server to all
other parties.  We denote a web system in this class of web
systems by $\spressoauthwebsystem$.

\subsubsection{Defining Authentication for \spresso.}
\label{sec:defin-auth-spresso}
We state two fundamental authentication properties every
SSO system should satisfy. These properties are adapted
from~\cite{FettKuestersSchmitz-SP-2014}. 

Informally, these properties can be stated as follows:
\textbf{(A)} \emph{The attacker should not be able to use a
  service of an honest RP as an honest user}. In other
words, the attacker should not get hold of (be able to
derive from his current knowledge) a service token issued
by an honest RP for an ID of an honest user (browser), even
if the browser was closed and then later used by a
malicious user, i.e., after a $\str{CLOSECORRUPT}$ (see
Section~\ref{sec:web-browsers}). \textbf{(B)} \emph{The
  attacker should not be able to authenticate an honest
  browser to an honest RP with an ID that is not owned by
  the browser (identity injection)}. For both properties,
we clearly have to require that the forwarder used by the
honest RP is honest as well.

We call a web system $\spressoauthwebsystem$ \emph{secure
  w.r.t.~authentication} if the above conditions are
satisfied in all runs of the system. We refer the reader to
Appendix~\ref{app:form-secur-prop} for the formal definition of
(A) and (B).

\subsubsection{Analyzing Authentication of \spresso.}
\label{sec:analyz-auth-spresso}
We prove the following theorem:

\begin{theorem}\label{thm:authentication}
  Let $\spressoauthwebsystem$ be an \spresso web system as
  defined above. Then $\spressoauthwebsystem$ is secure
  w.r.t.~authentication.
\end{theorem}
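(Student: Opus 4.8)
The plan is to argue by contradiction: assume some run $\rho$ of an \spresso authentication web system $\spressoauthwebsystem$ violates (A) or (B), and derive a contradiction. As is standard for Dolev–Yao web-model proofs, the backbone is a family of auxiliary lemmas proven by induction on the length of $\rho$. The first block is about secrecy of long-term secrets: the signing key $k_{IdP}$ of every honest IdP, the TLS private keys of all honest servers, and the credentials of every identity owned by an honest (not fully corrupted) browser are never derivable by the network attacker — the attacker-controlled DNS does not help here because HTTPS binds to the servers' keys. From TLS-key secrecy I get the usual HTTPS lemma: between two honest parties (e.g.\ an honest RP and an honest browser, or an honest IdP and an honest browser) message bodies are confidential and authentic. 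From $k_{IdP}$-secrecy I get that any term $\sig{\an{t_1,t_2,t_3}}{k_{IdP}}$ occurring in $\rho$ was produced by the honest IdP in step~\refprotostep{create-ia}, so in particular the IdP checked the credentials of email $t_2$ in step~\refprotostep{check-credentials} and received the tag $t_1$ and FWD domain $t_3$ in the matching \texttt{/loginxhr} POST.

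The second, more delicate block concerns the login artifacts of a fixed \emph{honest} RP $r$. Each tag $r$ creates in step~\refprotostep{create-tag} carries a fresh $\mi{rpNonce}$ and $r$'s own domain as its first component, so tags of two distinct login sessions of $r$ are distinct and a tag of $r$ is never equal to a tag honestly created by another honest RP. For any login session of $r$ with token $\mi{lst}$ I then prove: (i) $\mi{lst}$ stays confidential, since it travels only over HTTPS between $r$ and the browser that started the session and honest $\mi{script\_rp}$ never hands it out; and (ii) whenever that session belongs to an honest browser and its stored email is owned by an honest browser, the corresponding $\mi{eia}$ and the identity assertion it encrypts are never derivable by the attacker. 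Item~(ii) — this is exactly where honesty of the forwarder is used — holds because $\mi{eia}$ lives only inside honest $\mi{script\_idp}$, the honest IdP, HTTPS channels, and honest $\mi{script\_fwd}$, and in step~\refprotostep{pm-eia} $\mi{script\_fwd}$ sends $\mi{eia}$, via the target-origin parameter of postMessage, only to a document whose origin is the domain encoded in the tag; since $r$'s tag encodes $r$'s own domain, $\mi{eia}$ can reach only documents of the honest $r$, which run honest $\mi{script\_rp}$ and do not leak it. I expect establishing~(ii) to be the main obstacle of the whole proof: it needs a careful analysis of the browser's window/opener/iframe tree and the postMessage origin checks, plus an exhaustive case distinction over which scripts (including $\Rasp$) may run in the honest browser and over the window navigations the attacker can trigger between steps~\refprotostep{pm-ready} and~\refprotostep{pm-eia}.

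Given these lemmas, property~(A) follows by tracing backwards from a service token $t$ the attacker is assumed to derive, issued by an honest RP $r$ for an identity $\mi{id}$ owned by an honest browser. By construction $r$ issued $t$ in step~\refprotostep{choose-service-token} only after, in step~\refprotostep{decrypt-ia}, a POST to \texttt{/login} with a valid \str{Origin} header $\str{https://}r$, a token $\mi{lst}$ pointing to session data $\an{\mi{tag},\mi{email},\mi{FWDDomain},\mi{iaKey}}$ with $\mi{email}=\mi{id}$, and an $\mi{eia}$ decrypting under $\mi{iaKey}$ to a valid signature of $\an{\mi{tag},\mi{id},\mi{FWDDomain}}$ under $\pub(k_{IdP})$. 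The \str{Origin} check forces that POST to come from $\mi{script\_rp}$ running under $r$'s origin in some browser. If that browser is honest, it is the browser that started this session, so $t$ resides only in that browser's $\mi{script\_rp}$ state (erased on \str{CLOSECORRUPT}), in $r$, and on HTTPS channels — not derivable by the attacker, a contradiction. If that browser is corrupted and forged \str{Origin}, then $\mi{lst}$ was not $r$'s honest-browser session (by confidentiality of $\mi{lst}$), hence was created by the attacker, who can read $\mi{tag}$, $\mi{iaKey}$, $\mi{FWDDomain}$ via \texttt{/redir}; to finish the \texttt{/login} POST the attacker must still supply a valid $\mi{eia}$ for this session, i.e.\ a signature of $\an{\mi{tag},\mi{id},\mi{FWDDomain}}$ under $k_{IdP}$. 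By the $k_{IdP}$ lemma this forces the honest IdP to have signed it after verifying $\mi{id}$'s credentials — only the honest browser $b$ owning $\mi{id}$ could trigger that — and $b$'s IdPdoc must have held exactly $\mi{tag}$; since $\mi{tag}$ is a fresh, domain-$r$ tag it cannot have arisen in $b$'s flow at an honest RP, so $b$'s flow was at a dishonest RP, and then the resulting $\mi{eia}$ is, by the honest-FWD lemma, forwarded only to documents of $r$ and never reaches the attacker — contradicting that the attacker supplied it. Property~(B) is a shorter variant: if an honest browser ends up logged in at an honest RP $r$ under $\mi{id}'$, then $r$'s \texttt{/login} POST carried the browser's own (confidential) $\mi{lst}$ and thus came from its honest $\mi{script\_rp}$, whose session was started in step~\refprotostep{post-start-login} with an email the browser owns; the $\mi{eia}$ that $\mi{script\_rp}$ forwards encrypts, by the IA/tag binding and HTTPS, an identity assertion for exactly that owned email, so $r$ issues the token for an owned identity, not $\mi{id}'$ — and injecting a foreign session token or a foreign $\mi{eia}$ into the honest $\mi{script\_rp}$ is excluded by confidentiality of $\mi{lst}$ and the origin-restricted postMessage of the honest FWD.
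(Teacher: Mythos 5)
Your proposal is correct and follows essentially the same route as the paper's proof: the same decomposition into secrecy of signing/TLS keys and user credentials, the HTTPS integrity lemma, the provenance of any identity assertion from the honest IdP (triggerable only by the owning browser), the confinement of the encrypted IA along the chain IdPdoc\,$\to$\,honest FWD\,$\to$\,RP document via the tag decryption and origin-restricted postMessage, and the Origin-header plus login-session-token binding for property (B). You correctly identify the FWD confinement step as the crux, exactly where the paper invests its Lemmas on the script chain.
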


In other words, the authentication properties~(A) and~(B)
are fulfilled for every \spresso web system.

For the proof, we first show some general properties of
$\spressoauthwebsystem$. In particular, we show that
encrypted communication over HTTPS between an honest
relying party and an honest IdP cannot be altered by the
(network) attacker, and, based on that, any honest relying
party always retrieves the ``correct'' public signature
verification key from honest IdPs. We then proceed to show
that for a service token to be issued by an honest RP, a
request of a specific form has to be received by the RP.

We then use these properties and the general web system
properties shown in the full version of
\cite{FettKuestersSchmitz-ESORICS-BrowserID-Primary-2015}
to prove properties~(A) and~(B) separately. In both cases,
we assume that the respective property is not satisfied and
lead this to a contradiction. Again, the full proof is
provided in Appendix~\ref{app:proof-spresso}.

%
\section{Further Related Work}\label{sec:relatedwork}

As mentioned in the introduction, many SSO systems have been
developed. However, unlike \spresso, none of them is
privacy-respecting.

Besides the design and implementation of \spresso, the formal analysis
of this system based on an expressive web model is an important part
of our work. The formal treatment of the security of web applications
is a young discipline. Of the few works in this area even less are
based on a general model that incorporates essential mechanisms of the
web. Early works in formal web security analysis (see, e.g.,
\cite{kerschbaum-SP-2007-XSRF-prevention,Jackson-TACAS-2002-Alloy,ArmandoEtAl-FMSE-2008,SantsaiHaekyBeznosov-CS-2012,ChariJutlaRoy-IACR-2011})
are based on very limited models developed specifically for the
application under scrutiny. The first work to consider a general model
of the web, written in the finite-state model checker Alloy, is the
work by Akhawe et al.\cite{AkhawBarthLamMitchellSong-CSF-2010}.
Inspired by this work, Bansal et
al.\cite{BansalBhargavanetal-POST-2013-WebSpi,BansalBhargavanMaffeis-CSF-2012}
built a more expressive model, called WebSpi, in ProVerif
\cite{Blanchet-CSFW-2001}, a tool for symbolic cryptographic protocol
analysis. These models have successfully been applied to web standards
and applications. Recently, Kumar \cite{Kumar-RAID-2014} presented a
high-level Alloy model and applied it to SAML single sign-on. The web
model presented in~\cite{FettKuestersSchmitz-SP-2014},
which we further extend and refine here, is the most comprehensive web
model to date (see also the discussion in
\cite{FettKuestersSchmitz-SP-2014}). In fact, this is the only model
in which we can analyze SPRESSO. For example, other models do not
incorporate a precise handling of windows, documents, or iframes;
cross-document messaging (postMessages) are not included at all.

\section{Conclusion}\label{sec:conclusion}

In this paper, we proposed the first privacy-respecting
(web-based) SSO system, where the IdP cannot track at which
RP a user logs in. Our system, \spresso, is open and
decentralized. Users can log in at any RP with any email
address with SPRESSO support, allowing for seamless and
convenient integration into the usual login process. Being
solely based on standard HTML5 and web features, SPRESSO
can be used across browsers, platforms, and devices.

We formally prove that SPRESSO indeed enjoys strong
authentication and privacy properties. This is important
since, as discussed in the paper, numerous attacks on other
SSO systems have been discovered. These attacks demonstrate
that designing a secure SSO system is non-trivial and
security flaws can easily go undetected when no rigorous
analysis is carried out.

As mentioned in Section~\ref{sec:relatedwork}, there have
been only very few analysis efforts, based on expressive
models of the web infrastructure, on web applications in
general and SSO systems in particular in the literature so
far. Therefore, the analysis carried out in this paper is
also of independent interest.

Our work is the first to analyze privacy properties based
on an expressive web model, in fact the most expressive
model to date.  The general indistinguishability/privacy
definition we propose, which is not tailored to any
specific web application, will be useful beyond the
analysis performed in this paper.

%

\bibliographystyle{abbrv}

\appendix
\section{The Web Model}\label{app:web-model}

In this section, we present the model of the web infrastructure as
proposed in~\cite{FettKuestersSchmitz-SP-2014}
and~\cite{FettKuestersSchmitz-TR-BrowserID-Primary-2015}, along with
the following changes and additions:

\begin{itemize}
\item The set of waiting events is replaced by a (infinite) sequence
  of waiting events. The sequence initially only contains an infinite
  number of trigger events (interleaved by receiver). All new events
  output by processes are added in the front of the sequence.
\item We write events as terms, i.e., $(a{:}f{:}m)$ becomes $\an{a,f,m}$.
\item $E_0$ and $E$ in runs/processing steps are now infinite
  sequences instead of multi-sets.
\item In runs, the index of states and events (and nonces) are now written superscript (instead of subscript).
\item For atomic DY processes, we replace the set of output messages
  by a sequence term (as defined in the equational theory) of the form
  $\an{\an{a,f,m}, \an{a',f',m'}, \dots}$. Each time such a sequence
  is output by any DY process, its elements are prepended to the
  sequence of waiting events.
\item We introduce a global sequence of nonces $(n_1, n_2, \dots)$.
  Whenever any DY process outputs special placeholders $\nu_1, \nu_2,
  \dots$ (in its state or output messages), these placeholders are
  replaced by freshly chosen nonces from the global set of nonces.
\item A similar approach applies to scripts (running inside browsers).
  Instead of receiving and using a fresh set of nonces each time they
  are called by the browser, scripts now get no dedicated set of
  nonces as inputs, but instead may output operators $\mu_1, \mu_2,
  \dots$. After the script run has finished, these are replaced by
  ``fresh'' $\nu$ placeholders by the browser (i.e., $\nu$
  placeholders the browser itself does not use otherwise.)
\item We therefore remove the sets of nonces from DY processes.
\item We remove the function symbol $\unsig{\cdot}$ which extracted
  the signed term from a signature. Instead, we added a new function
  symbol $\checksigThree{\cdot}{\cdot}{\cdot}$ that checks that a
  given term was signed.
\item For an accurate privacy analysis, we introduce the
  \emph{Referer}\footnote{A spelling error in the early HTTP
    standards.} header and associated document property. We also
  introduce the \emph{location} document property.
\item For the script command for following a link ($\tHref$) we add
  the option to avoid sending the \emph{referer} header (as a model
  for the \texttt{rel="noreferrer"} attribute for links in
  HTML5).\footnote{Note that in practice, all major browsers except
    for the Internet Explorer support this property.}
\item DNS responses now not only contain the IP address of the domain
  for which the DNS request was sent, but also the domain itself. This
  is a more realistic model.
\end{itemize}

\subsection{Communication Model}\label{app:communication-model}

We here present details and definitions on the basic concepts of the
communication model.

\subsubsection{Terms, Messages and Events} 
The signature $\Sigma$ for the terms and
messages considered in this work is the union of the following
pairwise disjoint sets of function symbols:
\begin{itemize}
\item constants $C = \addresses\,\cup\, \mathbb{S}\cup
  \{\True,\bot,\notdef\}$ where the three sets are pairwise disjoint,
  $\mathbb{S}$ is interpreted to be the set of ASCII strings
  (including the empty string $\varepsilon$), and $\addresses$ is
  interpreted to be a set of (IP) addresses,
\item function symbols for public keys, (a)symmetric
  en\-cryp\-tion/de\-cryp\-tion, and signatures: $\mathsf{pub}(\cdot)$,
  $\enc{\cdot}{\cdot}$, $\dec{\cdot}{\cdot}$, $\encs{\cdot}{\cdot}$,
  $\decs{\cdot}{\cdot}$, $\sig{\cdot}{\cdot}$,
  $\checksig{\cdot}{\cdot}$, and $\unsig{\cdot}$,
\item $n$-ary sequences $\an{}, \an{\cdot}, \an{\cdot,\cdot},
  \an{\cdot,\cdot,\cdot},$ etc., and
\item projection symbols $\pi_i(\cdot)$ for all $i \in \mathbb{N}$.
\end{itemize}
For strings (elements in $\mathbb{S}$), we use a
specific font. For example, $\cHttpReq$ and $\cHttpResp$
are strings. We denote by $\dns\subseteq \mathbb{S}$ the
set of domains, e.g., $\str{example.com}\in \dns$.  We
denote by $\methods\subseteq \mathbb{S}$ the set of methods
used in HTTP requests, e.g., $\mGet$, $\mPost\in \methods$.

The equational theory associated with the signature
$\Sigma$ is given in Figure~\ref{fig:equational-theory}.

\begin{figure}
\begin{align}
\dec{\enc x{\pub(y)}}{y} &= x\\
\decs{\encs x{y}}{y} &= x\\
\checksigThree{\sig{x}{y}}{x}{\pub(y)} &= \True\\
\pi_i(\an{x_1,\dots,x_n}) &= x_i \text{\;\;if\ } 1 \leq i \leq n \\
\proj{j}{\an{x_1,\dots,x_n}} &= \notdef \text{\;\;if\ } j
\not\in \{1,\dots,n\}
\end{align}
\caption{Equational theory for $\Sigma$.}\label{fig:equational-theory}
\end{figure}

\begin{definition}[Nonces and Terms]\label{def:terms}
  By $X=\{x_0,x_1,\dots\}$ we denote a set of variables and by
  $\nonces$ we denote an infinite set of constants (\emph{nonces})
  such that $\Sigma$, $X$, and $\nonces$ are pairwise disjoint. For
  $N\subseteq\nonces$, we define the set $\gterms_N(X)$ of
  \emph{terms} over $\Sigma\cup N\cup X$ inductively as usual: (1) If
  $t\in N\cup X$, then $t$ is a term. (2) If $f\in \Sigma$ is an
  $n$-ary function symbol in $\Sigma$ for some $n\ge 0$ and
  $t_1,\ldots,t_n$ are terms, then $f(t_1,\ldots,t_n)$ is a term.
\end{definition}

By $\equiv$ we denote the congruence relation on $\terms(X)$ induced
by the theory associated with $\Sigma$. For example, we have that
$\pi_1(\dec{\enc{\an{\str{a},\str{b}}}{\pub(k)}}{k})\equiv \str{a}$.

\begin{definition}[Ground Terms, Messages, Placeholders, Protomessages]\label{def:groundterms-messages-placeholders-protomessages}
  By $\gterms_N=\gterms_N(\emptyset)$, we denote the set of all terms
  over $\Sigma\cup N$ without variables, called \emph{ground terms}.
  The set $\messages$ of messages (over $\nonces$) is defined to be
  the set of ground terms $\gterms_{\nonces}$. 
  
  We define the set $V_{\text{process}} = \{\nu_1, \nu_2, \dots\}$ of
  variables (called placeholders). The set $\messages^\nu :=
  \gterms_{\nonces}(V_{\text{process}})$ is called the set of \emph{protomessages},
  i.e., messages that can contain placeholders.
\end{definition}

\begin{example}
  For example, $k\in \nonces$ and $\pub(k)$ are messages, where $k$
  typically models a private key and $\pub(k)$ the corresponding
  public key. For constants $a$, $b$, $c$ and the nonce $k\in
  \nonces$, the message $\enc{\an{a,b,c}}{\pub(k)}$ is interpreted to
  be the message $\an{a,b,c}$ (the sequence of constants $a$, $b$,
  $c$) encrypted by the public key $\pub(k)$.
\end{example}

\begin{definition}[Normal Form]\gs{Kann man das so schreiben?}
  Let $t$ be a term. The \emph{normal form} of $t$ is acquired by
  reducing the function symbols from left to right as far as possible
  using the equational theory shown in
  Figure~\ref{fig:equational-theory}. For a term $t$, we denote its
  normal form as $t\nf$.
\end{definition}

\begin{definition}[Pattern Matching]
  Let $\mi{pattern} \in \terms(\{*\})$ be a term containing the
  wildcard (variable $*$). We say that a term $t$ \emph{matches}
  $\mi{pattern}$ iff $t$ can be acquired from $\mi{pattern}$ by
  replacing each occurrence of the wildcard with an arbitrary term
  (which may be different for each instance of the wildcard). We write
  $t \sim \mi{pattern}$.

  For a term $t'$ we write $t'|\, \mi{pattern}$ to denote the term
  that is acquired from $t'$ by removing all immediate subterms of
  $t'$ that do not match $\mi{pattern}$.
\end{definition}

\begin{example}
  For example, for a pattern $p = \an{\top,*}$ we have that $\an{\top,42} \sim p$, $\an{\bot,42} \not\sim p$, and \[\an{\an{\bot,\top},\an{\top,23},\an{\str{a},\str{b}},\an{\top,\bot}} |\, p = \an{\an{\top,23},\an{\top,\bot}}\ .\]
\end{example}

\begin{definition}[Variable Replacement]
  Let $N\subseteq \nonces$, $\tau \in \gterms_N(\{x_1,\ldots,x_n\})$,
  and $t_1,\ldots,t_n\in \gterms_N$. By
  $\tau[t_1\!/\!x_1,\ldots,t_n\!/\!x_n]$ we denote the (ground) term obtained
  from $\tau$ by replacing all occurrences of $x_i$ in $\tau$ by
  $t_i$, for all $i\in \{1,\ldots,n\}$.
\end{definition}

\begin{definition}[Events and Protoevents]
  An \emph{event (over $\addresses$ and $\messages$)} is a term of the
  form $\an{a, f, m}$, for $a$, $f\in \addresses$ and $m \in
  \messages$, where $a$ is interpreted to be the receiver address and
  $f$ is the sender address. We denote by $\events$ the set of all
  events. Events over $\addresses$ and $\messages^\nu$ are called
  \emph{protoevents} and are denoted $\events^\nu$. By
  $2^{\events\an{}}$ (or $2^{\events^\nu\an{}}$, respectively) we
  denote the set of all sequences of (proto)events, including the
  empty sequence (e.g., $\an{}$, $\an{\an{a, f, m}, \an{a', f', m'},
    \dots}$, etc.). 
\end{definition}

\subsubsection{Atomic Processes, Systems and Runs} 

An atomic process takes its current state and an
event as input, and then (non-deterministi\-cally) outputs a new state
and a set of events.
\begin{definition}[Generic Atomic Processes and Systems]\label{def:atomic-process-and-process}
  A \emph{(generic) \ap} is a tuple $p = (I^p, Z^p, R^p, s^p_0)$ where
  $I^p \subseteq \addresses$, $Z^p \in \terms$ is a set of states,
  $R^p\subseteq (\events \times Z^p) \times (2^{\events^\nu\an{}}
  \times \terms(V_{\text{process}}))$ (input event and old state map to sequence of
  output events and new state), and $s^p_0\in Z^p$ is the initial
  state of $p$. For any new state $s$ and any sequence of nonces
  $(\eta_1, \eta_2, \dots)$ we demand that $s[\eta_1/\nu_1,
  \eta_2/\nu_2, \dots] \in Z^p$. A \emph{system} $\process$ is a
  (possibly infinite) set of \aps.
\end{definition}

\begin{definition}[Configurations]
  A \emph{configuration of a system $\process$} is a tuple $(S, E, N)$
  where the state of the system $S$ maps every atomic process
  $p\in \process$ to its current state $S(p)\in Z^p$, the sequence of
  waiting events $E$ is an infinite sequence\footnote{Here: Not in the
    sense of terms as defined earlier.} $(e_1, e_2, \dots)$ of events
  waiting to be delivered, and $N$ is an infinite sequence of nonces
  $(n_1, n_2, \dots)$.
\end{definition}

\begin{definition}[Concatenating sequences]
  For a term $a = \an{a_1, \dots, a_i}$ and a sequence $b = (b_1, b_2,
  \dots)$, we define the \emph{concatenation} as $a \cdot b := (a_1,
  \dots, a_i, b_1, b_2, \dots)$.
  
\end{definition}

\begin{definition}[Subtracting from Sequences]
  For a sequence $X$ and a set or sequence $Y$ we define $X \setminus
  Y$ to be the sequence $X$ where for each element in $Y$, a
  non-deterministically chosen occurence of that element in $X$ is
  removed.
\end{definition}

\begin{definition}[Processing Steps]\label{def:processing-step}
  A \emph{processing step of the system $\process$} is of the form
  \[(S,E,N) \xrightarrow[p \rightarrow E_{\text{out}}]{e_\text{in}
    \rightarrow p} (S', E', N')\]
  where
  \begin{enumerate}
  \item $(S,E,N)$ and $(S',E',N')$ are configurations of $\process$,
  \item $e_\text{in} = \an{a, f, m} \in E$ is an event,
  \item $p \in \process$ is a process,
  \item $E_{\text{out}}$ is a sequence (term) of events
  \end{enumerate}
  such that there exists 
  \begin{enumerate}
  \item a sequence (term)
    $E^\nu_{\text{out}} \subseteq 2^{\events^\nu\an{}}$ of protoevents,
  \item a term $s^\nu \in \gterms_{\nonces}(V_{\text{process}})$, 
  \item a sequence $(v_1, v_2, \dots, v_i)$ of all placeholders appearing in $E^\nu_{\text{out}}$ (ordered lexicographically),
  \item a sequence $N^\nu = (\eta_1, \eta_2, \dots, \eta_i)$ of the first $i$ elements in $N$ 
  \end{enumerate}
  with
  \begin{enumerate}
  \item $((e_{\text{in}}, S(p)), (E^\nu_{\text{out}}, s^\nu)) \in R^p$
    and $a \in I^p$,
  \item $E_{\text{out}} = E^\nu_{\text{out}}[m_1/v_1, \dots, m_i/v_i]$
  \item $S'(p) = s^\nu[m_1/v_1, \dots, m_i/v_i]$ and $S'(p') = S(p')$ for all $p' \neq p$
  \item $E' = E_{\text{out}} \cdot (E \setminus \{e_{\text{in}}\})$ 
  \item $N' = N \setminus N^\nu$ 
  \end{enumerate}
  We may omit the superscript and/or subscript of the arrow.
\end{definition} 
Intuitively, for a processing step, we select one of the processes in
$\process$, and call it with one of the events in the list of waiting
events $E$. In its output (new state and output events), we replace
any occurences of placeholders $\nu_x$ by ``fresh'' nonces from $N$
(which we then remove from $N$). The output events are then prepended
to the list of waiting events, and the state of the process is
reflected in the new configuration.

\begin{definition}[Runs]
  Let $\process$ be a system, $E^0$ be sequence of events, and $N^0$ be
  a sequence of nonces. A \emph{run $\rho$ of a system $\process$
    initiated by $E^0$ with nonces $N^0$} is a finite sequence of
  configurations $((S^0, E^0, N^0),\dots,(S^n, E^n, N^n))$ or an infinite sequence
  of configurations $((S^0, E^0, N^0),\dots)$ such that $S^0(p) = s_0^p$ for
  all $p \in \process$ and $(S^i, E^i, N^i) \xrightarrow{} (S^{i+1},
  E^{i+1}, N^{i+1})$ for all $0 \leq i < n$ (finite run) or for all $i \geq 0$  
  (infinite run).

  We denote the state $S^n(p)$ of a process $p$ at the end of a run $\rho$ by $\rho(p)$.
\end{definition}

Usually, we will initiate runs with a set $E^0$ that contains infinite
trigger events of the form $\an{a, a, \str{TRIGGER}}$ for each $a \in
\addresses$, interleaved by address.

\subsubsection{Atomic Dolev-Yao Processes}  We next define
atomic Dolev-Yao processes, for which we require that the
messages and states that they output can be computed (more
formally, derived) from the current input event and
state. For this purpose, we first define what it means to
derive a message from given messages.

\begin{definition}[Deriving Terms]
  Let $M$ be a set of ground terms. We say that \emph{a
    term $m$ can be derived from $M$ with placeholders $V$} if there
  exist $n\ge 0$, $m_1,\ldots,m_n\in M$, and $\tau\in
  \gterms_{\emptyset}(\{x_1,\ldots,x_n\} \cup V)$ such that $m\equiv
  \tau[m_1/x_1,\ldots,m_n/x_n]$. We denote by $d_V(M)$ the set of all
  messages that can be derived from $M$ with variables $V$.
\end{definition}
For example, $a\in d_{\{\}}(\{\enc{\an{a,b,c}}{\pub(k)}, k\})$.

\begin{definition}[Atomic Dolev-Yao Process] \label{def:adyp} An \emph{atomic Dolev-Yao process
    (or simply, a DY process)} is a tuple $p = (I^p, Z^p,$ $R^p,
  s^p_0)$ such that $(I^p, Z^p, R^p, s^p_0)$ is an atomic process and
  (1) $Z^p \subseteq \gterms_{\nonces}$ (and hence, $s^p_0\in
  \gterms_{\nonces}$), and (2) for all events $e \in \events$,
  sequences of protoevents $E$, $s\in \gterms_{\nonces}$, $s'\in
  \gterms_{\nonces}(V_{\text{process}})$, with $((e, s), (E, s')) \in R^p$ it holds
  true that $E$, $s' \in d_{V_{\text{process}}}(\{e,s\})$.
\end{definition}

\begin{definition}[Atomic Attacker Process]\label{def:atomicattacker}
  An \emph{(atomic) attacker process for a set of sender addresses
    $A\subseteq \addresses$} is an atomic DY process $p = (I, Z, R,
  s_0)$ such that for all events $e$, and $s\in \gterms_{\nonces}$ we
  have that $((e, s), (E,s')) \in R$ iff $s'=\an{e, E, s}$ and
  $E=\an{\an{a_1, f_1, m_1}, \dots, \an{a_n, f_n, m_n}}$ with $n \in
  \mathbb{N}$, $a_1,\dots,a_n\in \addresses$, $f_0,\dots,f_n\in A$,
  $m_1,\dots,m_n\in d_{V_{\text{process}}}(\{e,s\})$.
\end{definition}

\subsection{Scripting Processes}
We define scripting processes, which model client-side scripting
technologies, such as JavaScript. Scripting processes are defined
similarly to DY processes.
\begin{definition}[Placeholders for Scripting Processes]\label{def:placeholder-sp}
  By $V_{\text{script}} = \{\lambda_1, \dots\}$ we denote an infinite set of variables
  used in scripting processes.
\end{definition}

\begin{definition}[Scripting Processes]\label{def:sp} A \emph{scripting process} (or
  simply, a \emph{script}) is a relation $R\subseteq \terms \times
  \terms(V_{\text{script}})$ such that for all $s \in \terms$, $s' \in \terms(V_{\text{script}})$ with
  $(s, s') \in R$ it follows that $s'\in d_{V_{\text{script}}}(s)$.
\end{definition}
A script is called by the browser which provides it with state
information (such as the script's last state and limited information
about the browser's state) $s$. The script then outputs a term $s'$,
which represents the new internal state and some command which is
interpreted by the browser. The term $s'$ may contain variables
$\lambda_1, \dots$ which the browser will replace by (otherwise
unused) placeholders $\nu_1,\dots$ which will be replaced by nonces
once the browser DY process finishes (effectively providing the script
with a way to get ``fresh'' nonces).

Similarly to an attacker process, we define the
\emph{attacker script} $\Rasp$: 
\begin{definition}[Attacker Script]
  The attacker script $\Rasp$ outputs everything that is derivable
  from the input, i.e., $\Rasp=\{(s, s')\mid s\in \terms, s'\in
  d_{V_{\text{script}}}(s)\}$.
\end{definition}

\subsection{Web System}\label{app:websystem}

The web infrastructure and web applications are formalized by what is
called a web system. A web system contains, among others, a (possibly
infinite) set of DY processes, modeling web browsers, web servers, DNS
servers, and attackers (which may corrupt other entities, such as
browsers).

\begin{definition}\label{def:websystem}
  A \emph{web system $\completewebsystem=(\websystem,
    \scriptset,\mathsf{script}, E^0)$} is a tuple with its
  components defined as follows:

  The first component, $\websystem$, denotes a system
  (a set of DY processes) and is partitioned into the
  sets $\mathsf{Hon}$, $\mathsf{Web}$, and $\mathsf{Net}$
  of honest, web attacker, and network attacker processes,
  respectively.  

  Every $p\in \mathsf{Web} \cup \mathsf{Net}$ is an
  attacker process for some set of sender addresses
  $A\subseteq \addresses$. For a web attacker $p\in
  \mathsf{Web}$, we require its set of addresses $I^p$ to
  be disjoint from the set of addresses of all other web
  attackers and honest processes, i.e., $I^p\cap I^{p'} =
  \emptyset$ for all $p' \in \mathsf{Hon} \cup
  \mathsf{Web}$. Hence, a web attacker cannot listen to
  traffic intended for other processes. Also, we require
  that $A=I^p$, i.e., a web attacker can only use sender
  addresses it owns. Conversely, a network attacker may
  listen to all addresses (i.e., no restrictions on $I^p$)
  and may spoof all addresses (i.e., the set $A$ may be
  $\addresses$).

  Every $p \in \mathsf{Hon}$ is a DY process which
  models either a \emph{web server}, a \emph{web browser},
  or a \emph{DNS server},\gs{to check:} as further described in the
  following subsections. Just as for web attackers, we
  require that $p$ does not spoof sender addresses and that
  its set of addresses $I^p$ is disjoint from those of
  other honest processes and the web attackers. 

  The second component, $\scriptset$, is a finite set of
  scripts such that $\Rasp\in \scriptset$. The third
  component, $\mathsf{script}$, is an injective mapping
  from $\scriptset$ to $\mathbb{S}$, i.e., by
  $\mathsf{script}$ every $s\in \scriptset$ is assigned its
  string representation $\mathsf{script}(s)$. 

  Finally, $E^0$ is an \gs{changed:} (infinite) sequence of events, containing an
  infinite number of events of the form $\an{a,a,\trigger}$
  for every $a \in \bigcup_{p\in \websystem} I^p$.

  A \emph{run} of $\completewebsystem$ is a run of
  $\websystem$ initiated by $E^0$.
\end{definition}

\section{Message and Data
  Formats}\label{app:message-data-formats}

We now provide some more details about data and message
formats that are needed for the formal treatment of the web
model and the analysis of BrowserID presented in the rest
of the appendix.

\subsection{Notations}\label{app:notation}

\begin{definition}[Sequence Notations]
  For a sequence $t = \an{t_1,\dots,t_n}$ and a set $s$ we
  use $t \subsetPairing s$ to say that $t_1,\dots,t_n \in
  s$.  We define $\left. x \inPairing t\right. \iff \exists
  i: \left. t_i = x\right.$.
  We write $t \plusPairing y$ to denote the sequence
  $\an{t_1,\dots,t_n,y}$.
  For a finite set $M$ with $M = \{m_1, \dots,m_n\}$ we use
  $\an{M}$ to denote the term of the form
  $\an{m_1,\dots,m_n}$. (The order of the elements does not
  matter; one is chosen arbitrarily.) 
\end{definition}

\begin{definition}\label{def:dictionaries}
  A \emph{dictionary over $X$ and $Y$} is a term of the
  form \[\an{\an{k_1, v_1}, \dots, \an{k_n,v_n}}\] where
  $k_1, \dots,k_n \in X$, $v_1,\dots,v_n \in Y$, and the
  keys $k_1, \dots,k_n$ are unique, i.e., $\forall i\neq j:
  k_i \neq k_j$. We call every term $\an{k_i,v_i}$, $i\in
  \{1,\ldots,n\}$, an \emph{element} of the dictionary with
  key $k_i$ and value $v_i$.  We often write $\left[k_1:
    v_1, \dots, k_i:v_i,\dots,k_n:v_n\right]$ instead of
  $\an{\an{k_1, v_1}, \dots, \an{k_n,v_n}}$. We denote the
  set of all dictionaries over $X$ and $Y$ by $\left[X
    \times Y\right]$.
\end{definition}
We note that the empty dictionary is equivalent to the
empty sequence, i.e.,  $[] = \an{}$.  Figure
\ref{fig:dictionaries} shows the short notation for
dictionary operations that will be used when describing the
browser atomic process. For a dictionary $z = \left[k_1:
  v_1, k_2: v_2,\dots, k_n:v_n\right]$ we write $k \in z$ to
say that there exists $i$ such that $k=k_i$. We write
$z[k_j] := v_j$ to extract elements. If $k \not\in z$, we
set $z[k] := \an{}$.

\begin{figure}[htb!]\centering
  \begin{align}
    \left[k_1: v_1, \dots, k_i:v_i,\dots,k_n:v_n\right][k_i] = v_i%
  \end{align}\vspace{-2.5em}
  \begin{align}
    \nonumber \left[k_1: v_1, \dots, k_{i-1}:v_{i-1},k_i: v_i, k_{i+1}:v_{i+1}\dots,k_n: v_n\right]-k_i =\\
         \left[k_1: v_1, \dots, k_{i-1}:v_{i-1},k_{i+1}:v_{i+1}\dots,k_n: v_n\right]
  \end{align}
  \caption{Dictionary operators with $1\le i\le n$.}\label{fig:dictionaries}
\end{figure}

Given a term $t = \an{t_1,\dots,t_n}$, we can refer to any
subterm using a sequence of integers. The subterm is
determined by repeated application of the projection
$\pi_i$ for the integers $i$ in the sequence. We call such
a sequence a \emph{pointer}:

\begin{definition}\label{def:pointer}
  A \emph{pointer} is a sequence of non-negative
  integers. We write $\tau.\ptr{p}$ for the application of
  the pointer $\ptr{p}$ to the term $\tau$. This operator
  is applied from left to right. For pointers consisting of
  a single integer, we may omit the sequence braces for
  brevity.
\end{definition}

\begin{example}
  For the term $\tau = \an{a,b,\an{c,d,\an{e,f}}}$ and the
  pointer $\ptr{p} = \an{3,1}$, the subterm of $\tau$ at
  the position $\ptr{p}$ is $c =
  \proj{1}{\proj{3}{\tau}}$. Also, $\tau.3.\an{3,1} =
  \tau.3.\ptr{p} = \tau.3.3.1 = e$.
\end{example}

To improve readability, we try to avoid writing, e.g.,
$\compn{o}{2}$ or $\proj{2}{o}$ in this document. Instead,
we will use the names of the components of a sequence that
is of a defined form as pointers that point to the
corresponding subterms. E.g., if an \emph{Origin} term is
defined as $\an{\mi{host}, \mi{protocol}}$ and $o$ is an
Origin term, then we can write $\comp{o}{protocol}$ instead
of $\proj{2}{o}$ or $\compn{o}{2}$. See also
Example~\ref{ex:url-pointers}.

\subsection{URLs}\label{app:urls}

\begin{definition}\label{def:url}
  A \emph{URL} is a term of the form $\an{\tUrl, \mi{protocol},
    \mi{host}, \mi{path}, \mi{parameters}}$ with $\mi{protocol}$
  $\in \{\http, \https\}$ (for \textbf{p}lain (HTTP) and
  \textbf{s}ecure (HTTPS)), $\mi{host} \in \dns$,
  $\mi{path} \in \mathbb{S}$ and $\mi{parameters} \in
  \dict{\mathbb{S}}{\terms}$. The set of all valid URLs
  is $\urls$.
\end{definition}

\begin{example} \label{ex:url-pointers}
  For the URL $u = \an{\tUrl, a, b, c, d}$, $\comp{u}{protocol} =
  a$. If, in the algorithm described later, we say $\comp{u}{path} :=
  e$ then $u = \an{\tUrl, a, b, c, e}$ afterwards. 
\end{example}

\subsection{Origins}\label{app:origins}
\begin{definition} An \emph{origin} is a term of the form
  $\an{\mi{host}, \mi{protocol}}$ with $\mi{host} \in
  \dns$ and $\mi{protocol} \in \{\http, \https\}$. We write
  $\origins$ for the set of all origins.  
\end{definition}

\begin{example}
  For example, $\an{\str{FOO}, \https}$ is the HTTPS origin
  for the domain $\str{FOO}$, while $\an{\str{BAR}, \http}$
  is the HTTP origin for the domain $\str{BAR}$.
\end{example}
\subsection{Cookies}\label{app:cookies}

\begin{definition} A \emph{cookie} is a term of the form
  $\an{\mi{name}, \mi{content}}$ where $\mi{name} \in
  \terms$, and $\mi{content}$ is a term of the form
  $\an{\mi{value}, \mi{secure}, \mi{session},
    \mi{httpOnly}}$ where $\mi{value} \in \terms$,
  $\mi{secure}$, $\mi{session}$, $\mi{httpOnly} \in
  \{\True, \bot\}$. We write $\cookies$ for the set of all
  cookies and $\cookies^\nu$ for the set of all cookies
  where names and values are defined over $\terms(V)$.
\end{definition}

If the $\mi{secure}$ attribute of a cookie is set, the
browser will not transfer this cookie over unencrypted HTTP
connections. If the $\mi{session}$ flag is set, this cookie
will be deleted as soon as the browser is closed. The
$\mi{httpOnly}$ attribute controls whether JavaScript has
access to this cookie.

Note that cookies of the form described here are only
contained in HTTP(S) requests. In responses, only the
components $\mi{name}$ and $\mi{value}$ are transferred as
a pairing of the form $\an{\mi{name}, \mi{value}}$.

\subsection{HTTP Messages}\label{app:http-messages-full}
\begin{definition}
  An \emph{HTTP request} is a term of the form shown in
  (\ref{eq:default-http-request}). An \emph{HTTP response}
  is a term of the form shown in
  (\ref{eq:default-http-response}).
  \begin{align}
    \label{eq:default-http-request}
    & \hreq{ nonce=\mi{nonce}, method=\mi{method},
      xhost=\mi{host}, xpath=\mi{path},
      parameters=\mi{parameters}, headers=\mi{headers},
      xbody=\mi{body}
    } \\
    \label{eq:default-http-response}
    & \hresp{ nonce=\mi{nonce}, status=\mi{status},
      headers=\mi{headers}, xbody=\mi{body} }
  \end{align}
  The components are defined as follows:
  \begin{itemize}
  \item $\mi{nonce} \in \nonces$ serves to map each
    response to the corresponding request 
  \item $\mi{method} \in \methods$ is one of the HTTP
    methods.
  \item $\mi{host} \in \dns$ is the host name in the HOST
    header of HTTP/1.1.
  \item $\mi{path} \in \mathbb{S}$ is a string indicating
    the requested resource at the server side
  \item $\mi{status} \in \mathbb{S}$ is the HTTP status
    code (i.e., a number between 100 and 505, as defined by
    the HTTP standard)
  \item $\mi{parameters} \in
    \dict{\mathbb{S}}{\terms}$ contains URL parameters
  \item $\mi{headers} \in \dict{\mathbb{S}}{\terms}$,
    containing request/response headers. The dictionary
    elements are terms of one of the following forms: 
    \begin{itemize}
    \item $\an{\str{Origin}, o}$ where $o$ is an origin
    \item $\an{\str{Set{\mhyphen}Cookie}, c}$ where $c$ is
      a sequence of cookies
    \item $\an{\str{Cookie}, c}$ where $c \in
      \dict{\mathbb{S}}{\terms}$ (note that in this header,
      only names and values of cookies are transferred)
    \item $\an{\str{Location}, l}$ where $l \in \urls$
    \item $\an{\str{Referer}, r}$ where $r \in \urls$
    \item $\an{\str{Strict{\mhyphen}Transport{\mhyphen}Security},\True}$
    \end{itemize}
  \item $\mi{body} \in \terms$ in requests and responses. 
  \end{itemize}
  We write $\httprequests$/$\httpresponses$ for the set of
  all HTTP requests or responses, respectively.
\end{definition}

\begin{example}[HTTP Request and Response]
  \begin{align}
    \label{eq:ex-request}
    \nonumber \mi{r} := & \langle
                   \cHttpReq,
                   n_1,
                   \mPost,
                   \str{example.com},
                   \str{/show},
                   \an{\an{\str{index,1}}},\\ & \quad
                   [\str{Origin}: \an{\str{example.com, \https}}],
                   \an{\str{foo}, \str{bar}}
                \rangle \\
    \label{eq:ex-response} \mi{s} := & \hresp{ nonce=n_1,
      status=200,
      headers=\an{\an{\str{Set{\mhyphen}Cookie},\an{\an{\str{SID},\an{n_2,\bot,\bot,\True}}}}},
      xbody=\an{\str{somescript},x}}
  \end{align}
  \noindent
  An HTTP $\mGet$ request for the URL
  \url{http://example.com/show?index=1} is shown in
  (\ref{eq:ex-request}), with an Origin header and a body
  that contains $\an{\str{foo},\str{bar}}$. A possible
  response is shown in (\ref{eq:ex-response}), which
  contains an httpOnly cookie with name $\str{SID}$ and
  value $n_2$ as well as the string representation
  $\str{somescript}$ of the scripting process
  $\mathsf{script}^{-1}(\str{somescript})$ (which should be
  an element of $\scriptset$) and its initial state
  $x$.
\end{example}

\subsubsection{Encrypted HTTP
  Messages.} \label{app:http-messages-encrypted-full}
For HTTPS, requests are encrypted using the public key of
the server.  Such a request contains an (ephemeral)
symmetric key chosen by the client that issued the
request. The server is supported to encrypt the response
using the symmetric key.

\begin{definition} An \emph{encrypted HTTP request} is of
  the form $\enc{\an{m, k'}}{k}$, where $k$, $k' \in
  \nonces$ and $m \in \httprequests$. The corresponding
  \emph{encrypted HTTP response} would be of the form
  $\encs{m'}{k'}$, where $m' \in \httpresponses$. We call
  the sets of all encrypted HTTP requests and responses
  $\httpsrequests$ or $\httpsresponses$, respectively.
\end{definition}

\begin{example}
  \begin{align}
    \label{eq:ex-enc-request} \ehreqWithVariable{r}{k'}{\pub(k_\text{example.com})} \\
    \label{eq:ex-enc-response} \ehrespWithVariable{s}{k'}
  \end{align} The term (\ref{eq:ex-enc-request}) shows an
  encrypted request (with $r$ as in
  (\ref{eq:ex-request})). It is encrypted using the public
  key $\pub(k_\text{example.com})$.  The term
  (\ref{eq:ex-enc-response}) is a response (with $s$ as in
  (\ref{eq:ex-response})). It is encrypted symmetrically
  using the (symmetric) key $k'$ that was sent in the
  request (\ref{eq:ex-enc-request}).
\end{example}

\subsection{DNS Messages}\label{app:dns-messages}
\begin{definition} A \emph{DNS request} is a term of the form
$\an{\cDNSresolve, \mi{domain}, \mi{n}}$ where $\mi{domain}$ $\in
\dns$, $\mi{n} \in \nonces$. We call the set of all DNS requests
$\dnsrequests$.
\end{definition}

\begin{definition} A \emph{DNS response} is a term of the form
$\an{\cDNSresolved, \mi{domain}, \mi{result}, \mi{n}}$ with $\mi{domain}$ $\in
\dns$, $\mi{result} \in
\addresses$, $\mi{n} \in \nonces$. We call the set of all DNS
responses $\dnsresponses$.
\end{definition}

DNS servers are supposed to include the nonce they received
in a DNS request in the DNS response that they send back so
that the party which issued the request can match it with
the request.

\subsection{DNS Servers}\label{app:DNSservers}

Here, we consider a flat DNS model in which DNS queries are
answered directly by one DNS server and always with the
same address for a domain. A full (hierarchical) DNS system
with recursive DNS resolution, DNS caches, etc.~could also
be modeled to cover certain attacks on the DNS system
itself.

\begin{definition}
  A \emph{DNS server} $d$ (in a flat DNS model) is modeled
  in a straightforward way as an atomic DY process
  $(I^d, \{s^d_0\}, R^d, s^d_0)$. It has a finite set of
  addresses $I^d$ and its initial (and only) state $s^d_0$
  encodes a mapping from domain names to addresses of the
  form
$$s^d_0=\langle\an{\str{domain}_1,a_1},\an{\str{domain}_2, a_2}, \ldots\rangle \ .$$ DNS
queries are answered according to this table (otherwise
ignored).
\end{definition}

\section{Detailed Description of the Browser Model}
\label{app:deta-descr-brows}
Following the informal description of the browser model in
\gs{beim zusammenkleben aufpassen:}Section~\ref{sec:web-browsers}, we now present a formal
model. We start by introducing some notation and
terminology. 

\subsection{Notation and Terminology (Web Browser State)}

Before we can define the state of a web browser, we first
have to define windows and documents. 

\begin{sloppypar}
  \begin{definition} A \emph{window} is a term of the form
    $w = \an{\mi{nonce}, \mi{documents}, \mi{opener}}$ with
    $\mi{nonce} \in \nonces$,
    $\mi{documents} \subsetPairing \documents$ (defined
    below), $\mi{opener} \in \nonces \cup \{\bot\}$ where
    $\comp{d}{active} = \True$ for exactly one
    $d \inPairing \mi{documents}$ if $\mi{documents}$ is
    not empty (we then call $d$ the \emph{active document
      of $w$}). We write $\windows$ for the set of all
    windows. We write $\comp{w}{activedocument}$ to denote
    the active document inside window $w$ if it exists and
    $\an{}$ else.
  \end{definition}
\end{sloppypar}

We will refer to the window nonce as \emph{(window)
  reference}.

The documents contained in a window term to the left of the
active document are the previously viewed documents
(available to the user via the ``back'' button) and the
documents in the window term to the right of the currently
active document are documents available via the ``forward''
button.

A window $a$ may have opened a top-level window $b$ (i.e.,
a window term which is not a subterm of a document
term). In this case, the \emph{opener} part of the term $b$
is the nonce of $a$, i.e., $\comp{b}{opener} =
\comp{a}{nonce}$.

\begin{sloppypar}
  \begin{definition} A \emph{document} $d$ is a term of the
    form
    \begin{align*}
      \an{\mi{nonce}, \mi{location}, \mi{referrer}, \mi{script},
      \mi{scriptstate},\mi{scriptinputs}, \mi{subwindows},
      \mi{active}}  
    \end{align*}
    where $\mi{nonce} \in \nonces$,
    $\mi{location} \in \urls$,
    $\mi{referrer} \in \urls \cup \{\bot\}$,
    $\mi{script} \in \terms$,
    $\mi{scriptstate} \in \terms$,
    $\mi{scriptinputs} \in \terms$,
    $\mi{subwindows} \subsetPairing \windows$,
    $\mi{active} \in \{\True, \bot\}$. A \emph{limited
      document} is a term of the form
    $\an{\mi{nonce}, \mi{subwindows}}$ with $\mi{nonce}$,
    $\mi{subwindows}$ as above. A window
    $w \inPairing \mi{subwindows}$ is called a
    \emph{subwindow} (of $d$). We write $\documents$ for
    the set of all documents. For a document term $d$ we
    write $d.\str{origin}$ to denote the origin of the
    document, i.e., the term
    $\an{d.\str{location}.\str{host},
      d.\str{location}.\str{protocol}} \in \origins$.
  \end{definition}
\end{sloppypar}

We will refer to the document nonce as \emph{(document)
  reference}.

We can now define the set of states of web browsers. Note
that we use the dictionary notation that we introduced in
Definition~\ref{def:dictionaries}.

\begin{definition} The
  \emph{set of states $Z^p$ of a web browser atomic process}
  $p$ consists of the terms of the form
  \begin{align*} \langle\mi{windows}, \mi{ids},
    \mi{secrets}, \mi{cookies}, \mi{localStorage},
    \mi{sessionStorage}, \mi{keyMapping}, \\\mi{sts},
    \mi{DNSaddress}, \mi{pendingDNS},
    \mi{pendingRequests}, \mi{isCorrupted}\rangle
  \end{align*} where
  \begin{itemize}
  \item $\mi{windows} \subsetPairing \windows$,
  \item $\mi{ids} \subsetPairing \terms$,
  \item $\mi{secrets} \in \dict{\origins}{\nonces}$,
  \item $\mi{cookies}$ is a dictionary over $\dns$ and
    dictionaries of $\cookies$,
  \item $\mi{localStorage} \in \dict{\origins}{\terms}$,
  \item $\mi{sessionStorage} \in \dict{\mi{OR}}{\terms}$ for $\mi{OR} := \left\{\an{o,r}
    \middle|\, o \in \origins,\, r \in \nonces\right\}$,
  \item $\mi{keyMapping} \in \dict{\dns}{\terms}$,
  \item $\mi{sts} \subsetPairing \dns$,
  \item $\mi{DNSaddress} \in \addresses$,
  \item $\mi{pendingDNS} \in \dict{\nonces}{\terms}$,
  \item $\mi{pendingRequests} \in$ $\terms$,
  \item and $\mi{isCorrupted} \in \{\bot, \fullcorrupt,$ $
    \closecorrupt\}$.
  \end{itemize} 
\end{definition}

\begin{definition} For two window terms $w$ and $w'$ we
  write $w \windowChildOf w'$ if \\
  \[w \inPairing \comp{\comp{w'}{activedocument}}{subwindows}\text{\ .}\]
We write
  $\windowChildOfX$ for the transitive closure.
\end{definition}

In the following description of the web browser relation
$R^p$ we will use the helper functions
$\mathsf{Subwindows}$, $\mathsf{Docs}$, $\mathsf{Clean}$,
$\mathsf{CookieMerge}$ and $\mathsf{AddCookie}$. 

Given a browser state $s$, $\mathsf{Subwindows}(s)$ denotes
the set of all pointers\footnote{Recall the definition of a
  pointer in Definition~\ref{def:pointer}.} to windows in
the window list $\comp{s}{windows}$, their active
documents, and (recursively) the subwindows of these
documents. We exclude subwindows of inactive documents and
their subwindows. With $\mathsf{Docs}(s)$ we denote the set
of pointers to all active documents in the set of windows
referenced by $\mathsf{Subwindows}(s)$.
\begin{definition} 
  For a browser state $s$ we denote by
  $\mathsf{Subwindows}(s)$ the minimal set of
  pointers that satisfies the
  following conditions: (1) For all windows $w \inPairing
  \comp{s}{windows}$ there is a $\ptr{p} \in
  \mathsf{Subwindows}(s)$ such that $\compn{s}{\ptr{p}} =
  w$. (2) For all $\ptr{p} \in \mathsf{Subwindows}(s)$, the
  active document $d$ of the window $\compn{s}{\ptr{p}}$
  and every subwindow $w$ of $d$ there is a pointer
  $\ptr{p'} \in \mathsf{Subwindows}(s)$ such that
  $\compn{s}{\ptr{p'}} = w$.

  Given a browser state $s$, the set $\mathsf{Docs}(s)$ of
  pointers to active documents is the minimal set such that
  for every $\ptr{p} \in \mathsf{Subwindows}(s)$, there is
  a pointer $\ptr{p'} \in \mathsf{Docs}(s)$ with
  $\compn{s}{\ptr{p'}} =
  \comp{\compn{s}{\ptr{p}}}{activedocument}$.
\end{definition}

By $\mathsf{Subwindows}^+(s)$ and $\mathsf{Docs}^+(s)$ we
denote the respective sets that also include the inactive
documents and their subwindows.

The function $\mathsf{Clean}$ will be used to determine
which information about windows and documents the script
running in the document $d$ has access to.
\begin{definition} Let $s$ be a browser state and $d$ a
  document.  By $\mathsf{Clean}(s, d)$ we denote the term
  that equals $\comp{s}{windows}$ but with all inactive
  documents removed (including their subwindows etc.) and
  all subterms that represent non-same-origin documents
  w.r.t.~$d$ replaced by a limited document $d'$ with the
  same nonce and the same subwindow list. Note that
  non-same-origin documents on all levels are replaced by
  their corresponding limited document.
\end{definition}

The function $\mathsf{CookieMerge}$ merges two sequences of
cookies together: When used in the browser,
$\mi{oldcookies}$ is the sequence of existing cookies for
some origin, $\mi{newcookies}$ is a sequence of new cookies
that was output by some script. The sequences are merged
into a set of cookies using an algorithm that is based on
the \emph{Storage Mechanism} algorithm described in
RFC6265.
\begin{definition} \label{def:cookiemerge} For a sequence
  of cookies (with pairwise different names)
  $\mi{oldcookies}$ and a sequence of cookies
  $\mi{newcookies}$, the set
  $\mathsf{CookieMerge}(\mi{oldcookies}, \mi{newcookies})$
  is defined by the following algorithm: From
  $\mi{newcookies}$ remove all cookies $c$ that have
  $c.\str{content}.\str{httpOnly} \equiv \True$. For any
  $c$, $c' \inPairing \mi{newcookies}$, $\comp{c}{name}
  \equiv \comp{c'}{name}$, remove the cookie that appears
  left of the other in $\mi{newcookies}$. Let $m$ be the
  set of cookies that have a name that either appears in
  $\mi{oldcookies}$ or in $\mi{newcookies}$, but not in
  both. For all pairs of cookies $(c_\text{old},
  c_\text{new})$ with $c_\text{old} \inPairing
  \mi{oldcookies}$, $c_\text{new} \inPairing
  \mi{newcookies}$, $\comp{c_\text{old}}{name} \equiv
  \comp{c_\text{new}}{name}$, add $c_\text{new}$ to $m$ if
  $\comp{\comp{c_\text{old}}{content}}{httpOnly} \equiv
  \bot$ and add $c_\text{old}$ to $m$ otherwise. The result
  of $\mathsf{CookieMerge}(\mi{oldcookies},
  \mi{newcookies})$ is $m$.
\end{definition}

The function $\mathsf{AddCookie}$ adds a cookie $c$
received in an HTTP response to the sequence of cookies
contained in the sequence $\mi{oldcookies}$. It is again
based on the algorithm described in RFC6265 but simplified
for the use in the browser model.
\begin{definition} \label{def:addcookie} For a sequence of cookies (with pairwise different
  names) $\mi{oldcookies}$ and a cookie $c$, the sequence
  $\mathsf{AddCookie}(\mi{oldcookies}, c)$ is defined by the
  following algorithm: Let $m := \mi{oldcookies}$. Remove
  any $c'$ from $m$ that has $\comp{c}{name} \equiv
  \comp{c'}{name}$. Append $c$ to $m$ and return $m$.
\end{definition}

The function $\mathsf{NavigableWindows}$ returns a set of
windows that a document is allowed to navigate. We closely
follow \cite{html5}, Section~5.1.4 for this definition.
\begin{definition} The set $\mathsf{NavigableWindows}(\ptr{w}, s')$
  is the set $\ptr{W} \subseteq
  \mathsf{Subwindows}(s')$ of pointers to windows that the
  active document in $\ptr{w}$ is allowed to navigate. The
  set $\ptr{W}$ is defined to be the minimal set such that
  for every $\ptr{w'}
  \in \mathsf{Subwindows}(s')$ the following is true: %
\begin{itemize}
\item If
  $\comp{\comp{\compn{s'}{\ptr{w}'}}{activedocument}}{origin}
  \equiv
  \comp{\comp{\compn{s'}{\ptr{w}}}{activedocument}}{origin}$
  (i.e., the active documents in $\ptr{w}$ and $\ptr{w'}$ are
  same-origin), then $\ptr{w'} \in \ptr{W}$, and
\item If ${\compn{s'}{\ptr{w}} \childof
    \compn{s'}{\ptr{w'}}}$ $\wedge$ $\nexists\, \ptr{w}''
  \in \mathsf{Subwindows}(s')$ with $\compn{s'}{\ptr{w}'}
  \childof \compn{s'}{\ptr{w}''}$ ($\ptr{w'}$ is a
  top-level window and $\ptr{w}$ is an ancestor window of
  $\ptr{w'}$), then $\ptr{w'} \in \ptr{W}$, and
\item If $\exists\, \ptr{p} \in \mathsf{Subwindows}(s')$
  such that $\compn{s'}{\ptr{w}'} \windowChildOfX
  \compn{s'}{\ptr{p}}$ \\$\wedge$
  $\comp{\comp{\compn{s'}{\ptr{p}}}{activedocument}}{origin}
  =
  \comp{\comp{\compn{s'}{\ptr{w}}}{activedocument}}{origin}$
  ($\ptr{w'}$ is not a top-level window but there is an
  ancestor window $\ptr{p}$ of $\ptr{w'}$ with an active
  document that has the same origin as the active document
  in $\ptr{w}$), then $\ptr{w'} \in \ptr{W}$, and
\item If $\exists\, \ptr{p} \in \mathsf{Subwindows}(s')$ such
  that $\comp{\compn{s'}{\ptr{w'}}}{opener} =
  \comp{\compn{s'}{\ptr{p}}}{nonce}$ $\wedge$ $\ptr{p} \in
  \ptr{W}$ ($\ptr{w'}$ is a top-level window---it has an
  opener---and $\ptr{w}$ is allowed to navigate the opener
  window of $\ptr{w'}$, $\ptr{p}$), then $\ptr{w'} \in
  \ptr{W}$. 
\end{itemize}
\end{definition}

\subsection{Description of the Web Browser Atomic
  Process}\label{app:descr-web-brows}
We will now describe the relation $R^p$ of a standard HTTP
browser $p$. We define $\left(\left(\an{\an{a,f,m}},
    s\right), \left(M, s'\right)\right)$ to belong to $R^p$
if\/f the non-deterministic algorithm presented below, when
given $\left(\an{a,f,m}, s\right)$ as input, terminates
with \textbf{stop}~$M$,~$s'$, i.e., with output $M$ and
$s'$. Recall that $\an{a,f,m}$ is an (input) event and $s$
is a (browser) state, $M$ is a sequence of (output)
protoevents, and $s'$ is a new (browser) state (potentially
with placeholders for nonces).

\paragraph{Notations.} The notation $\textbf{let}\ n \leftarrow N$ is used to
describe that $n$ is chosen non-de\-ter\-mi\-nis\-tic\-ally from the
set $N$.  We write $\textbf{for each}\ s \in M\ \textbf{do}$
to denote that the following commands (until \textbf{end
  for}) are repeated for every element in $M$, where the
variable $s$ is the current element. The order in which the
elements are processed is chosen non-deterministically. %
We will write, for example, 
\begin{algorithmic}
  \LetST{$x,y$}{$\an{\str{Constant},x,y} \equiv
    t$}{doSomethingElse}
\end{algorithmic} \setlength{\parindent}{1em}
for some variables $x,y$, a string
$\str{Constant}$, and some term $t$ to express that $x :=
\proj{2}{t}$, and $y := \proj{3}{t}$ if $\str{Constant}
\equiv \proj{1}{t}$ and if $|\an{\str{Constant},x,y}| =
|t|$,  and that otherwise
$x$ and $y$ are not set and doSomethingElse is executed.

\paragraph{Placeholders.} In several places throughout the
algorithms presented next we use placeholders to generate
``fresh'' nonces as described in our communication model
(see Definition~\ref{def:terms}).
Figure~\ref{fig:browser-placeholder-list} shows a list of
all placeholders used.

\begin{figure}[htb]
  \centering
  \begin{tabular}{|@{\hspace{1ex}}l@{\hspace{1ex}}|@{\hspace{1ex}}l@{\hspace{1ex}}|}\hline 
    \hfill Placeholder\hfill  &\hfill  Usage\hfill  \\\hline\hline
    $\nu_1$ & Algorithm~\ref{alg:browsermain}, new window nonces  \\\hline
    $\nu_2$ & Algorithm~\ref{alg:browsermain}, new HTTP request nonce   \\\hline
    $\nu_3$ & Algorithm~\ref{alg:browsermain}, lookup key for pending HTTP requests entry  \\\hline
    $\nu_4$ & Algorithm~\ref{alg:runscript}, new HTTP request nonce (multiple lines)  \\\hline
    $\nu_5$ & Algorithm~\ref{alg:runscript}, new subwindow nonce  \\\hline
    $\nu_6$ & Algorithm~\ref{alg:processresponse}, new HTTP request nonce  \\\hline
    $\nu_7$ & Algorithm~\ref{alg:processresponse}, new document nonce   \\\hline
    $\nu_8$ & Algorithm~\ref{alg:send}, lookup key for pending DNS entry  \\\hline
    $\nu_9$ & Algorithm~\ref{alg:getnavigablewindow}, new window nonce  \\\hline
    $\nu_{10}, \dots$ & Algorithm~\ref{alg:runscript}, replacement for placeholders in scripting process output   \\\hline

  \end{tabular}
  
  \caption{List of placeholders used in browser algorithms.}
  \label{fig:browser-placeholder-list}
\end{figure}

Before we describe the main browser algorithm, we first
define some functions.

\subsubsection{Functions.} \label{app:proceduresbrowser} In
the description of the following functions we use $a$,
$f$, $m$, and $s$ as read-only global input
variables. All other variables are local variables or
arguments.

The following function, $\mathsf{GETNAVIGABLEWINDOW}$, is
called by the browser to determine the window that is
\emph{actually} navigated when a script in the window
$s'.\ptr{w}$ provides a window reference for navigation
(e.g., for opening a link). When it is given a window
reference (nonce) $\mi{window}$,
$\mathsf{GETNAVIGABLEWINDOW}$ returns a pointer to a
selected window term in $s'$:
\begin{itemize}
\item If $\mi{window}$ is the string $\wBlank$, a new
  window is created and a pointer to that window is
  returned.
\item If $\mi{window}$ is a nonce (reference) and there is
  a window term with a reference of that value in the
  windows in $s'$, a pointer $\ptr{w'}$ to that window term
  is returned, as long as the window is navigable by the
  current window's document (as defined by
  $\mathsf{NavigableWindows}$ above).
\end{itemize}
In all other cases, $\ptr{w}$ is returned instead (the
script navigates its own window).
\captionof{algorithm}{\label{alg:getnavigablewindow}
  Determine window for navigation.}
\begin{algorithmic}[1]
  \Function{$\mathsf{GETNAVIGABLEWINDOW}$}{$\ptr{w}$, $\mi{window}$, $\mi{noreferrer}$, $s'$}
    \If{$\mi{window} \equiv \wBlank$} \Comment{Open a new window when $\wBlank$ is used}
      \If{$\mi{noreferrer} \equiv \bot$}
        \Let{$w'$}{$\an{\nu_9, \an{}, \comp{\compn{s'}{\ptr{w}}}{nonce} }$}
      \Else
        \Let{$w'$}{$\an{\nu_9, \an{}, \bot}$}
      \EndIf
      \Append{$w'$}{$\comp{s'}{windows}$} \textbf{and} let
      $\ptr{w}'$ be a pointer to this new element in $s'$
      \State \Return{$\ptr{w}'$}
    \EndIf
    \LetNDST{$\ptr{w}'$}{$\mathsf{NavigableWindows}(\ptr{w},
      s')$}{$\comp{\compn{s'}{\ptr{w}'}}{nonce} \equiv
      \mi{window}$}{\textbf{return} $\ptr{w}$} %
    \State \Return{$\ptr{w'}$}
  \EndFunction
\end{algorithmic} \setlength{\parindent}{1em}

The following function takes a window reference as input
and returns a pointer to a window as above, but it checks
only that the active documents in both windows are
same-origin. It creates no new windows.
\captionof{algorithm}{\label{alg:getwindow} Determine same-origin window.}
\begin{algorithmic}[1]
  \Function{$\mathsf{GETWINDOW}$}{$\ptr{w}$, $\mi{window}$, $s'$}
    \LetNDST{$\ptr{w}'$}{$\mathsf{Subwindows}(s')$}{$\comp{\compn{s'}{\ptr{w}'}}{nonce} \equiv \mi{window}$}{\textbf{return} $\ptr{w}$} %
    \If{
      $\comp{\comp{\compn{s'}{\ptr{w}'}}{activedocument}}{origin}
      \equiv
      \comp{\comp{\compn{s'}{\ptr{w}}}{activedocument}}{origin}$
    }
      \State \Return{$\ptr{w}'$}
    \EndIf
    \State \Return{$\ptr{w}$}
  \EndFunction
\end{algorithmic} \setlength{\parindent}{1em}

The next function is used to stop any pending
requests for a specific window. From the pending requests
and pending DNS requests it removes any requests with the
given window reference $n$.
\captionof{algorithm}{\label{alg:cancelnav} Cancel pending requests for given window.}
\begin{algorithmic}[1]
  \Function{$\mathsf{CANCELNAV}$}{$n$, $s'$}
    \State \textbf{remove all} $\an{n, \mi{req}, \mi{key}, \mi{f}}$ \textbf{ from } $\comp{s'}{pendingRequests}$ \textbf{for any} $\mi{req}$, $\mi{key}$, $\mi{f}$
    \State \textbf{remove all} $\an{x, \an{n, \mi{message}, \mi{protocol}}}$ \textbf{ from } $\comp{s'}{pendingDNS}$ \textbf{for any} $\mi{x}$, $\mi{message}$, $\mi{protocol}$
    \State \Return{$s'$}
  \EndFunction
\end{algorithmic} \setlength{\parindent}{1em}

The following function takes an HTTP request
$\mi{message}$ as input, adds cookie and origin headers to
the message, creates a DNS request for the hostname given
in the request and stores the request in
$\comp{s'}{pendingDNS}$ until the DNS resolution
finishes. For normal HTTP requests, $\mi{reference}$ is a
window reference. For \xhrs, $\mi{reference}$ is a value of
the form $\an{\mi{document}, \mi{nonce}}$ where
$\mi{document}$ is a document reference and $\mi{nonce}$ is
some nonce that was chosen by the script that initiated the
request. $\mi{protocol}$ is either $\http$ or
$\https$. $\mi{origin}$ is the origin header value that is
to be added to the HTTP request.

\captionof{algorithm}{\label{alg:send} Prepare headers, do DNS resolution, save message. }
\begin{algorithmic}[1]
  \Function{$\mathsf{SEND}$}{$\mi{reference}$, $\mi{message}$, $\mi{protocol}$, $\mi{origin}$, $\mi{referrer}$, $s'$}
    \If{$\comp{\mi{message}}{host} \inPairing \comp{s'}{sts}$}
      \Let{$\mi{protocol}$}{$\https$}
    \EndIf
    \Let{ $\mi{cookies}$}{$\langle\{\an{\comp{c}{name}, \comp{\comp{c}{content}}{value}} | c\inPairing \comp{s'}{cookies}\left[\comp{\mi{message}}{host}\right]$} \label{line:assemble-cookies-for-request} \breakalgohook{1} $\wedge \left(\comp{\comp{c}{content}}{secure} \implies \left(\mi{protocol} = \https\right)\right) \}\rangle$ \label{line:cookie-rules-http}
    \Let{$\comp{\mi{message}}{headers}[\str{Cookie}]$}{$\mi{cookies}$}
    \If{$\mi{origin} \not\equiv \bot$}
      \Let{$\comp{\mi{message}}{headers}[\str{Origin}]$}{$\mi{origin}$}
    \EndIf
    \If{$\mi{referrer} \not\equiv \bot$}
      \Let{$\comp{\mi{message}}{headers}[\str{Referer}]$}{$\mi{referrer}$}
    \EndIf
    \Let{$\comp{s'}{pendingDNS}[\nu_8]$}{$\an{\mi{reference},
        \mi{message}, \mi{protocol}}$} \label{line:add-to-pendingdns}
    \State \textbf{stop} $\an{\an{\comp{s'}{DNSaddress},a,
    \an{\cDNSresolve, \mi{host}, n}}}$, $s'$
  \EndFunction
\end{algorithmic} \setlength{\parindent}{1em}
\noindent

The function $\mathsf{RUNSCRIPT}$ performs a script
execution step of the script in the document
$\compn{s'}{\ptr{d}}$ (which is part of the window
$\compn{s'}{\ptr{w}}$). A new script and document state is
chosen according to the relation defined by the script and
the new script and document state is saved. Afterwards, the
$\mi{command}$ that the script issued is interpreted. 

\captionof{algorithm}{\label{alg:runscript} Execute a script.}
\begin{algorithmic}[1]
  \Function{$\mathsf{RUNSCRIPT}$}{$\ptr{w}$, $\ptr{d}$, $s'$}
    \Let{$\mi{tree}$}{$\mathsf{Clean}(s', \compn{s'}{\ptr{d}})$} \label{line:clean-tree}

    \Let{$\mi{cookies}$}{$\langle\{\an{\comp{c}{name}, \comp{\comp{c}{content}}{value}} | c \inPairing \comp{s'}{cookies}\left[  \comp{\comp{\compn{s'}{\ptr{d}}}{origin}}{host}  \right]$
     \breakalgohook{1} $\wedge\,\comp{\comp{c}{content}}{httpOnly} = \bot$ \breakalgohook{1} $\wedge\,\left(\comp{\comp{c}{content}}{secure} \implies \left(\comp{\comp{\compn{s'}{\ptr{d}}}{origin}}{protocol} \equiv \https\right)\right) \}\rangle$} \label{line:assemble-cookies-for-script}
    \LetND{$\mi{tlw}$}{$\comp{s'}{windows}$ \textbf{such that} $\mi{tlw}$ is the top-level window containing $\ptr{d}$} 
    \Let{$\mi{sessionStorage}$}{$\comp{s'}{sessionStorage}\left[\an{\comp{\compn{s'}{\ptr{d}}}{origin}, \comp{\mi{tlw}}{nonce}}\right]$} %
    \Let{$\mi{localStorage}$}{$\comp{s'}{localStorage}\left[\comp{\compn{s'}{\ptr{d}}}{origin}\right]$}
    \Let{$\mi{secret}$}{$\comp{s'}{secrets}\left[\comp{\compn{s'}{\ptr{d}}}{origin}\right]$} \label{line:browser-secrets}
    \LetND{$R$}{$\mathsf{script}^{-1}(\comp{\compn{s'}{\ptr{d}}}{script})$} %
    \Let{$\mi{in}$}{$\langle\mi{tree}$, $\comp{\compn{s'}{\ptr{d}}}{nonce}, \comp{\compn{s'}{\ptr{d}}}{scriptstate}$, $\comp{\compn{s'}{\ptr{d}}}{scriptinputs}$, $\mi{cookies}$,  $\mi{localStorage}$, $\mi{sessionStorage}$, $\comp{s'}{ids}$, $\mi{secret}\rangle$}\label{line:browser-scriptinputs}
    \LetND{$\mi{state}'$}{$\terms(V)$, \breakalgohook{1}
      $\mi{cookies}' \gets \mathsf{Cookies}^\nu$, \breakalgohook{1}
      $\mi{localStorage}' \gets \terms(V)$,\breakalgohook{1}
      $\mi{sessionStorage}' \gets \terms(V)$,\breakalgohook{1}
      $\mi{command} \gets \terms(V)$, \breakalgohook{1} 
      $\mi{out}^\lambda := \an{\mi{state}', \mi{cookies}', \mi{localStorage}',$ $\mi{sessionStorage}', \mi{command}}$
      \breakalgohook{1} \textbf{such that} $(\mi{in}, \mi{out}^\lambda) \in R$}  \label{line:trigger-script} 
    \Let{$\mi{out}$}{$\mi{out}^\lambda[\nu_{10}/\lambda_1, \nu_{11}/\lambda_2, \dots]$}

    \Let{$\comp{s'}{cookies}\left[\comp{\comp{\compn{s'}{\ptr{d}}}{origin}}{host}\right]$}{$\langle\mathsf{CookieMerge}(\comp{s'}{cookies}\left[\comp{\comp{\compn{s'}{\ptr{d}}}{origin}}{host}\right]$, $\mi{cookies}')\rangle$} \label{line:cookiemerge}
    \Let{$\comp{s'}{localStorage}\left[\comp{\compn{s'}{\ptr{d}}}{origin}\right]$}{$\mi{localStorage}'$}
    \Let{$\comp{s'}{sessionStorage}\left[\an{\comp{\compn{s'}{\ptr{d}}}{origin}, \comp{\mi{tlw}}{nonce}}\right]$}{$\mi{sessionStorage}'$}
    \Let{$\comp{\compn{s'}{\ptr{d}}}{scriptstate}$}{$state'$}
    \Switch{$\mi{command}$}
      \Case{$\an{\tHref, \mi{url},
          \mi{hrefwindow}, \mi{noreferrer}}$} \df{$\nf$ Should we check types here?}
      \Let{$\ptr{w}'$}{$\mathsf{GETNAVIGABLEWINDOW}$($\ptr{w}$,
        $\mi{hrefwindow}$, $\mi{noreferrer}$, $s'$)} 
      \Let{$\mi{req}$}{$\hreq{ nonce=\nu_4, 
          method=\mGet, host=\comp{\mi{url}}{host},
          path=\comp{\mi{url}}{path},
          headers=\an{},
          parameters=\comp{\mi{url}}{parameters}, body=\an{}
        }$}
      \If{$\mi{noreferrer} \equiv \bot$}
        \Let{$\mi{referrer}$}{$\comp{\compn{s'}{\ptr{d}}}{location}$}
      \Else
        \Let{$\mi{referrer}$}{$\bot$}
      \EndIf
      \Let{$s'$}{$\mathsf{CANCELNAV}(\comp{\compn{s'}{\ptr{w}'}}{nonce}, s')$}
      \State \textsf{SEND}($\comp{\compn{s'}{\ptr{w}'}}{nonce}$, $\mi{req}$, $\comp{\mi{url}}{protocol}$, $\bot$, $\mi{referrer}$, $s'$) \label{line:send-href}
      \EndCase
      \Case{$\an{\tIframe, \mi{url}, \mi{window}}$}
        \Let{$\ptr{w}'$}{$\mathsf{GETWINDOW}(\ptr{w}, \mi{window}, s')$}
        \Let{$\mi{req}$}{$\hreq{
            nonce=\nu_4,
            method=\mGet,
            host=\comp{\mi{url}}{host},
            path=\comp{\mi{url}}{path},
            headers=\an{},
            parameters=\comp{\mi{url}}{parameters},
            body=\an{}
          }$}
        \Let{$\mi{referrer}$}{$s'.\ptr{w}'.\str{activedocument}.\str{location}$}
        \Let{$w'$}{$\an{\nu_5, \an{}, \bot}$}
        \Let{$\comp{\comp{\compn{s'}{\ptr{w}'}}{activedocument}}{subwindows}$}{ $\comp{\comp{\compn{s'}{\ptr{w}'}}{activedocument}}{subwindows} \plusPairing w'$}
        \State \textsf{SEND}($\nu_5$, $\mi{req}$, $\comp{\mi{url}}{protocol}$, $\bot$, $\mi{referrer}$, $s'$) \label{line:send-iframe}
      \EndCase
      \Case{$\an{\tForm, \mi{url}, \mi{method}, \mi{data}, \mi{hrefwindow}}$}
        \If{$\mi{method} \not\in \{\mGet, \mPost\}$} \footnote{The working draft for HTML5 allowed for DELETE and PUT methods in HTML5 forms. However, these have since been removed. See \url{http://www.w3.org/TR/2010/WD-html5-diff-20101019/\#changes-2010-06-24}.}
          \State \textbf{stop} $\an{}$, $s'$
        \EndIf
        \Let{$\ptr{w}'$}{$\mathsf{GETNAVIGABLEWINDOW}$($\ptr{w}$, $\mi{hrefwindow}$, $\bot$, $s'$)}
        \If{$\mi{method} = \mGet$}
          \Let{$\mi{body}$}{$\an{}$}
          \Let{$\mi{parameters}$}{$\mi{data}$}
          \Let{$\mi{origin}$}{$\bot$}
        \Else
          \Let{$\mi{body}$}{$\mi{data}$}
          \Let{$\mi{parameters}$}{$\comp{\mi{url}}{parameters}$}
          \Let{$\mi{origin}$}{$\comp{\compn{s'}{\ptr{d}}}{origin}$}
        \EndIf
        \Let{$\mi{req}$}{$\hreq{
            nonce=\nu_4,
            method=\mi{method},
            host=\comp{\mi{url}}{host},
            path=\comp{\mi{url}}{path},
            headers=\an{},
            parameters=\mi{parameters},
            xbody=\mi{body}
          }$}
        \Let{$\mi{referrer}$}{$\comp{\compn{s'}{\ptr{d}}}{location}$}
        \Let{$s'$}{$\mathsf{CANCELNAV}(\comp{\compn{s'}{\ptr{w}'}}{nonce}, s')$}
        \State \textsf{SEND}($\comp{\compn{s'}{\ptr{w}'}}{nonce}$, $\mi{req}$, $\comp{\mi{url}}{protocol}$, $\mi{origin}$, $\mi{referrer}$, $s'$) \label{line:send-form}
      \EndCase
      \Case{$\an{\tSetScript, \mi{window}, \mi{script}}$}
        \Let{$\ptr{w}'$}{$\mathsf{GETWINDOW}(\ptr{w}, \mi{window}, s')$}
        \Let{$\comp{\comp{\compn{s'}{\ptr{w}'}}{activedocument}}{script}$}{$\mi{script}$}
        \State \textbf{stop} $\an{}$, $s'$
      \EndCase
      \Case{$\an{\tSetScriptState, \mi{window}, \mi{scriptstate}}$}
        \Let{$\ptr{w}'$}{$\mathsf{GETWINDOW}(\ptr{w}, \mi{window}, s')$}
        \Let{$\comp{\comp{\compn{s'}{\ptr{w}'}}{activedocument}}{scriptstate}$}{$\mi{scriptstate}$}
        \State \textbf{stop} $\an{}$, $s'$
      \EndCase
      \Case{$\an{\tXMLHTTPRequest, \mi{url}, \mi{method}, \mi{data}, \mi{xhrreference}}$}
        \If{$\mi{method} \in \{\mConnect, \mTrace, \mTrack\} \wedge \mi{xhrreference} \not\in \{\nonces, \bot\}$} 
          \State \textbf{stop} $\an{}$, $s'$
        \EndIf
        \If{$\comp{\mi{url}}{host} \not\equiv \comp{\comp{\compn{s'}{\ptr{d}}}{origin}}{host}$  $\vee$ $\comp{\mi{url}}{protocol} \not\equiv \comp{\comp{\compn{s'}{\ptr{d}}}{origin}}{protocol}$} 
          \State \textbf{stop} $\an{}$, $s'$
        \EndIf
        \If{$\mi{method} \in \{\mGet, \mHead\}$}
          \Let{$\mi{data}$}{$\an{}$}
          \Let{$\mi{origin}$}{$\bot$}
        \Else
          \Let{$\mi{origin}$}{$\comp{\compn{s'}{\ptr{d}}}{origin}$}
        \EndIf
        \Let{$\mi{req}$}{$\hreq{
            nonce=\nu_4,
            method=\mi{method},
            host=\comp{\mi{url}}{host},
            path=\comp{\mi{url}}{path},
            headers={},
            parameters=\comp{\mi{url}}{parameters},
            xbody=\mi{data}
          }$}
        \Let{$\mi{referrer}$}{$\comp{\compn{s'}{\ptr{d}}}{location}$}
        \State \textsf{SEND}($\an{\comp{\compn{s'}{\ptr{d}}}{nonce}, \mi{xhrreference}}$, $\mi{req}$, $\comp{\mi{url}}{protocol}$, $\mi{origin}$, $\mi{referrer}$, $s'$)\label{line:send-xhr}
      \EndCase
      \Case{$\an{\tBack, \mi{window}}$} \footnote{Note that
        navigating a window using the back/forward buttons
        does not trigger a reload of the affected
        documents. While real world browser may chose to
        refresh a document in this case, we assume that the
        complete state of a previously viewed document is
        restored. A reload can be triggered
        non-deterministically at any point (in the main algorithm).}
      \Let{$\ptr{w}'$}{$\mathsf{GETNAVIGABLEWINDOW}$($\ptr{w}$,
        $\mi{window}$, $\bot$, $s'$)} \If{$\exists\, \ptr{j} \in
        \mathbb{N}, \ptr{j} > 1$ \textbf{such that}
        $\comp{\compn{\comp{\compn{s'}{\ptr{w'}}}{documents}}{\ptr{j}}}{active}
        \equiv \True$} %
      \Let{$\comp{\compn{\comp{\compn{s'}{\ptr{w'}}}{documents}}{\ptr{j}}}{active}$}{$\bot$}
      \Let{$\comp{\compn{\comp{\compn{s'}{\ptr{w'}}}{documents}}{(\ptr{j}-1)}}{active}$}{$\True$}
      \Let{$s'$}{$\mathsf{CANCELNAV}(\comp{\compn{s'}{\ptr{w}'}}{nonce},
        s')$}
        \EndIf
        \State \textbf{stop} $\an{}$, $s'$
      \EndCase
      \Case{$\an{\tForward, \mi{window}}$}
        \Let{$\ptr{w}'$}{$\mathsf{GETNAVIGABLEWINDOW}$($\ptr{w}$, $\mi{window}$, $\bot$, $s'$)}
        \If{$\exists\, \ptr{j} \in \mathbb{N} $ \textbf{such that} $\comp{\compn{\comp{\compn{s'}{\ptr{w'}}}{documents}}{\ptr{j}}}{active} \equiv \True$  $\wedge$  $\compn{\comp{\compn{s'}{\ptr{w'}}}{documents}}{(\ptr{j}+1)} \in \mathsf{Documents}$} %
          \Let{$\comp{\compn{\comp{\compn{s'}{\ptr{w'}}}{documents}}{\ptr{j}}}{active}$}{$\bot$}
          \Let{$\comp{\compn{\comp{\compn{s'}{\ptr{w'}}}{documents}}{(\ptr{j}+1)}}{active}$}{$\True$}
          \Let{$s'$}{$\mathsf{CANCELNAV}(\comp{\compn{s'}{\ptr{w}'}}{nonce}, s')$}
        \EndIf
        \State \textbf{stop} $\an{}$, $s'$
      \EndCase
      \Case{$\an{\tClose, \mi{window}}$}
        \Let{$\ptr{w}'$}{$\mathsf{GETNAVIGABLEWINDOW}$($\ptr{w}$, $\mi{window}$, $\bot$, $s'$)}
        \State \textbf{remove} $\compn{s'}{\ptr{w'}}$ from the sequence containing it 
        \State \textbf{stop} $\an{}$, $s'$
      \EndCase

      \Case{$\an{\tPostMessage, \mi{window}, \mi{message}, \mi{origin}}$}
        \LetND{$\ptr{w}'$}{$\mathsf{Subwindows}(s')$ \textbf{such that} $\comp{\compn{s'}{\ptr{w}'}}{nonce} \equiv \mi{window}$} %
        \If{$\exists \ptr{j} \in \mathbb{N}$ \textbf{such that} $\comp{\compn{\comp{\compn{s'}{\ptr{w'}}}{documents}}{\ptr{j}}}{active} \equiv \True$ \breakalgohook{3} $\wedge  (\mi{origin} \not\equiv \bot \implies \comp{\compn{\comp{\compn{s'}{\ptr{w'}}}{documents}}{\ptr{j}}}{origin} \equiv \mi{origin})$}    \label{line:append-pm-to-scriptinputs-condition} %
        \Let{$\comp{\compn{\comp{\compn{s'}{\ptr{w'}}}{documents}}{\ptr{j}}}{scriptinputs}$\breakalgohook{4}}{ $\comp{\compn{\comp{\compn{s'}{\ptr{w'}}}{documents}}{\ptr{j}}}{scriptinputs}$ \breakalgohook{4} $\plusPairing$
         $\an{\tPostMessage, \comp{\compn{s'}{\ptr{w}}}{nonce}, \comp{\compn{s'}{\ptr{d}}}{origin}, \mi{message}}$} \label{line:append-pm-to-scriptinputs}
        \EndIf
        \State \textbf{stop} $\an{}$, $s'$
      \EndCase
      \Case{else}
        \State \textbf{stop} $\an{}$, $s'$
      \EndCase
    \EndSwitch
  \EndFunction
\end{algorithmic} \setlength{\parindent}{1em}

The function $\mathsf{PROCESSRESPONSE}$ is responsible for
processing an HTTP response ($\mi{response}$) that was
received as the response to a request ($\mi{request}$) that
was sent earlier. In $\mi{reference}$, either a window or a
document reference is given (see explanation for
Algorithm~\ref{alg:send} above). Again, $\mi{protocol}$ is
either $\http$ or $\https$.

The function first saves any cookies that were contained in
the response to the browser state, then checks whether a
redirection is requested (Location header). If that is not
the case, the function creates a new document (for normal
requests) or delivers the contents of the response to the
respective receiver (for \xhr responses).
\captionof{algorithm}{\label{alg:processresponse} Process an HTTP response.}
\begin{algorithmic}[1]
\Function{$\mathsf{PROCESSRESPONSE}$}{$\mi{response}$, $\mi{reference}$, $\mi{request}$, $\mi{protocol}$, $s'$}
  \If{$\mathtt{Set{\mhyphen}Cookie} \in
    \comp{\mi{response}}{headers}$}
    \For{\textbf{each} $c \inPairing \comp{\mi{response}}{headers}\left[\mathtt{Set{\mhyphen}Cookie}\right]$, $c \in \mathsf{Cookies}$}
      \Let{$\comp{s'}{cookies}\left[\comp{\comp{\mi{request}}{url}}{host}\right]$}{$\mathsf{AddCookie}(\comp{s'}{cookies}\left[\comp{\comp{\mi{request}}{url}}{host}\right], c)$} \label{line:set-cookie}
    \EndFor
  \EndIf  
  \If{$\mathtt{Strict{\mhyphen}Transport{\mhyphen}Security} \in \comp{\mi{response}}{headers}$ $\wedge$ $\mi{protocol} \equiv \https$}
    \Append{$\comp{\mi{request}}{host}$}{$\comp{s'}{sts}$}
  \EndIf
  \If{$\str{Referer} \in \comp{request}{headers}$}
    \Let{$\mi{referrer}$}{$\comp{request}{headers}[\str{Referer}]$}
  \Else
    \Let{$\mi{referrer}$}{$\bot$}
  \EndIf

  \If{$\mathtt{Location} \in \comp{\mi{response}}{headers} \wedge \comp{\mi{response}}{status} \in \{303, 307\}$} \label{line:location-header} \footnote{The RFC for HTTPbis (currently in draft status), which obsoletes RFC 2616, does not specify whether a POST/DELETE/etc. request that was answered with a status code of 301 or 302 should be rewritten to a GET request or not (``for historic reasons'' that are detailed in Section~7.4.). 
As the specification is clear for the status codes 303 and 307 (and most browsers actually follow the specification in this regard), we focus on modeling these.}
    \Let{$\mi{url}$}{$\comp{\mi{response}}{headers}\left[\mathtt{Location}\right]$}
    \Let{$\mi{method}'$}{$\comp{\mi{request}}{method}$} \footnote{While the standard demands that users confirm redirections of non-safe-methods (e.g., POST), we assume that users generally confirm these redirections.}
    \Let{$\mi{body}'$}{$\comp{\mi{request}}{body}$} \footnote{If, for example, a GET request is redirected and the original request contained a body, this body is preserved, as HTTP allows for payloads in messages with all HTTP methods, except for the TRACE method (a detail which we omit). 
Browsers will usually not send body payloads for methods that do not specify semantics for such data in the first place.}
    \If{$\str{Origin} \in \comp{request}{headers}$}
      \Let{$\mi{origin}$}{$\an{\comp{request}{headers}[\str{Origin}], \an{\comp{request}{host}, \mi{protocol}}}$}
    \Else
      \Let{$\mi{origin}$}{$\bot$}
    \EndIf
    \If{$\comp{\mi{response}}{status} \equiv 303 \wedge \comp{\mi{request}}{method} \not\in \{\mGet, \mHead\}$}
      \Let {$\mi{method}'$}{$\mGet$}
      \Let{$\mi{body}'$}{$\an{}$}
    \EndIf
    \If{$\nexists\, \ptr{w} \in \mathsf{Subwindows}(s')$ \textbf{such that} $\comp{\compn{s'}{\ptr{w}}}{nonce} \equiv \mi{reference}$} \Comment{Do not redirect XHRs.}
      \State \textbf{stop} $\an{}$, $s$
    \EndIf
    \Let{$\mi{req}$}{$\hreq{
            nonce=\nu_6,
            method=\mi{method'},
            host=\comp{\mi{url}}{host},
            path=\comp{\mi{url}}{path},
            headers=\an{},
            parameters=\comp{\mi{url}}{parameters},
            xbody=\mi{body}'
          }$}
    \State \textsf{SEND}($\mi{reference}$, $\mi{req}$, $\comp{\mi{url}}{protocol}$, $\mi{origin}$, $\mi{referrer}$, $s'$)\label{line:send-redirect}
  \EndIf

  \If{$\exists\, \ptr{w} \in \mathsf{Subwindows}(s')$ \textbf{such that} $\comp{\compn{s'}{\ptr{w}}}{nonce} \equiv \mi{reference}$} \Comment{normal response}
    \Let{$\mi{location}$}{$\an{\cUrl, \mi{protocol}, \comp{\mi{request}}{host}, \comp{\mi{request}}{path}, \comp{\mi{request}}{parameters}}$}
    \If{$\mi{response}.\str{body} \not\sim \an{*,*}$}
      \State \textbf{stop} $\{\}$, $s'$
    \EndIf
    \Let{$\mi{script}$}{$\proj{1}{\comp{\mi{response}}{body}}$}
    \Let{$\mi{scriptstate}$}{$\proj{2}{\comp{\mi{response}}{body}}$}
    \Let{$d$}{$\an{\nu_7, \mi{location}, \mi{referrer}, \mi{script}, \mi{scriptstate}, \an{}, \an{}, \True}$} \label{line:take-script} \label{line:set-origin-of-document}
    \If{$\comp{\compn{s'}{\ptr{w}}}{documents} \equiv \an{}$}
      \Let{$\comp{\compn{s'}{\ptr{w}}}{documents}$}{$\an{d}$}
    \Else
      \LetND{$\ptr{i}$}{$\mathbb{N}$ \textbf{such that} $\comp{\compn{\comp{\compn{s'}{\ptr{w}}}{documents}}{\ptr{i}}}{active} \equiv \True$} %
      \Let{$\comp{\compn{\comp{\compn{s'}{\ptr{w}}}{documents}}{\ptr{i}}}{active}$}{$\bot$}
      \State \textbf{remove} $\compn{\comp{\compn{s'}{\ptr{w}}}{documents}}{(\ptr{i}+1)}$ and all following documents  from $\comp{\compn{s'}{\ptr{w}}}{documents}$
      \Append{$d$}{$\comp{\compn{s'}{\ptr{w}}}{documents}$}
    \EndIf
    \State \textbf{stop} $\{\}$, $s'$
  \ElsIf{$\exists\, \ptr{w} \in \mathsf{Subwindows}(s')$, $\ptr{d}$ \textbf{such that} $\comp{\compn{s'}{\ptr{d}}}{nonce} \equiv \proj{1}{\mi{reference}} $ \breakalgohook{1}  $\wedge$  $\compn{s'}{\ptr{d}} = \comp{\compn{s'}{\ptr{w}}}{activedocument}$} \label{line:process-xhr-response} \Comment{process XHR response}
    \Append{$\an{\tXMLHTTPRequest, \comp{\mi{response}}{body}, \proj{2}{\mi{reference}}}$}{$\comp{\compn{s'}{\ptr{d}}}{scriptinputs}$}
  \EndIf
\EndFunction
\end{algorithmic} \setlength{\parindent}{1em}

\subsubsection{Main Algorithm.}\label{app:mainalgorithmwebbrowserprocess}
This is the main algorithm of the browser relation.
It receives the message $m$ as input, as
well as $a$, $f$ and $s$ as above.

\captionof{algorithm}{\label{alg:browsermain} Main Algorithm}
\begin{algorithmic}[1]
\Statex[-1] \textbf{Input:} $\an{a,f,m},s$
  \Let{$s'$}{$s$}

  \If{$\comp{s}{isCorrupted} \not\equiv \bot$}
    \Let{$\comp{s'}{pendingRequests}$}{$\an{m, \comp{s}{pendingRequests}}$} \Comment{Collect incoming messages}
    \LetND{$m'$}{$d_{V}(s')$} %
    \LetND{$a'$}{$\addresses$} %
    \State \textbf{stop} $\an{\an{a',a,m'}}$, $s'$
  \EndIf
  \If{$m \equiv \trigger$} \Comment{A special trigger message. }
    \LetND{$\mi{switch}$}{$\{1,2,3\}$} \label{line:browser-switch} %
    \If{$\mi{switch} \equiv 1$} \Comment{Run some script.}
      \LetNDST{$\ptr{w}$}{$\mathsf{Subwindows}(s')$}{$\comp{\compn{s'}{\ptr{w}}}{documents} \neq \an{}$}{\textbf{stop} $\an{}$, $s'$} \label{line:browser-trigger-window}%
      \Let{$\ptr{d}$}{$\ptr{w} \plusPairing \str{activedocument}$}
      \State \textsf{RUNSCRIPT}($\ptr{w}$, $\ptr{d}$, $s'$)
    \ElsIf{$\mi{switch} \equiv 2$} \Comment{Create some new request.}
      \Let{$w'$}{$\an{\nu_1, \an{}, \bot}$}
      \Append{$w'$}{$\comp{s'}{windows}$}
      \LetND{$\mi{protocol}$}{$\{\http, \https\}$} \label{line:browser-choose-url} %
      \LetND{$\mi{host}$}{$\dns$} %
      \LetND{$\mi{path}$}{$\mathbb{S}$} %
      \LetND{$\mi{parameters}$}{$\dict{\mathbb{S}}{\mathbb{S}}$} %
      \Let{$\mi{req}$}{$\hreq{
          nonce=\nu_2,
          method=\mGet,
          host=\mi{host},
          path=\mi{path},
          headers=\an{},
          parameters=\mi{parameters},
          body=\an{}
        }$}
      \State \textsf{SEND}($\nu_1$, $\mi{req}$, $\mi{protocol}$, $\bot$, $s'$)\label{line:send-random}
    \ElsIf{$\mi{switch} \equiv 3$} \Comment{Reload some document.}
      \LetNDST{$\ptr{w}$}{$\mathsf{Subwindows}(s')$}{$\comp{\compn{s'}{\ptr{w}}}{documents} \neq \an{}$}{\textbf{stop} $\an{}$, $s'$} \label{line:browser-reload-window}%
      \Let{$\mi{url}$}{$s'.\ptr{w}.\str{activedocument}.\str{location}$}
      \Let{$\mi{req}$}{$\hreq{ nonce=\nu_2, 
          method=\mGet, host=\comp{\mi{url}}{host},
          path=\comp{\mi{url}}{path},
          headers=\an{},
          parameters=\comp{\mi{url}}{parameters}, body=\an{}
        }$}
      \Let{$\mi{referrer}$}{$s'.\ptr{w}.\str{activedocument}.\str{referrer}$}
      \Let{$s'$}{$\mathsf{CANCELNAV}(\comp{\compn{s'}{\ptr{w}}}{nonce}, s')$}
      \State \textsf{SEND}($\comp{\compn{s'}{\ptr{w}}}{nonce}$, $\mi{req}$, $\comp{\mi{url}}{protocol}$, $\bot$, $\mi{referrer}$, $s'$)

    \EndIf
  \ElsIf{$m \equiv \fullcorrupt$} \Comment{Request to corrupt browser}
    \Let{$\comp{s'}{isCorrupted}$}{$\fullcorrupt$}
    \State \textbf{stop} $\an{}$, $s'$
  \ElsIf{$m \equiv \closecorrupt$} \Comment{Close the browser}
    \Let{$\comp{s'}{secrets}$}{$\an{}$}  
    \Let{$\comp{s'}{windows}$}{$\an{}$}
    \Let{$\comp{s'}{pendingDNS}$}{$\an{}$}
    \Let{$\comp{s'}{pendingRequests}$}{$\an{}$}
    \Let{$\comp{s'}{sessionStorage}$}{$\an{}$}
    \State \textbf{let} $\comp{s'}{cookies} \subsetPairing \cookies$ \textbf{such that}  $(c \inPairing \comp{s'}{cookies}) {\iff} (c \inPairing \comp{s}{cookies} \wedge \comp{\comp{c}{content}}{session} \equiv \bot$)
    \Let{$\comp{s'}{isCorrupted}$}{$\closecorrupt$}
    \State \textbf{stop} $\an{}$, $s'$
  \ElsIf{$\exists\, \an{\mi{reference}, \mi{request}, \mi{key}, f}$
      $\inPairing \comp{s'}{pendingRequests}$ \breakalgohook{0}
      \textbf{such that} $\proj{1}{\decs{m}{\mi{key}}} \equiv \cHttpResp$ } %
    \Comment{Encrypted HTTP response}
    \Let{$m'$}{$\decs{m}{\mi{key}}$}
    \If{$\comp{m'}{nonce} \not\equiv \comp{\mi{request}}{nonce}$}
      \State \textbf{stop} $\an{}$, $s$
    \EndIf
    \State \textbf{remove} $\an{\mi{reference}, \mi{request}, \mi{key}, f}$ \textbf{from} $\comp{s'}{pendingRequests}$
    \State \textsf{PROCESSRESPONSE}($m'$, $\mi{reference}$, $\mi{request}$, $\https$, $s'$)
  \ElsIf{$\proj{1}{m} \equiv \cHttpResp$ $\wedge$ $\exists\, \an{\mi{reference}, \mi{request}, \bot, f}$ $\inPairing \comp{s'}{pendingRequests}$ \textbf{such that} $\comp{m'}{nonce} \equiv \comp{\mi{request}}{key}$ } %
    \State \textbf{remove} $\an{\mi{reference}, \mi{request}, \bot, f}$ \textbf{from} $\comp{s'}{pendingRequests}$
    \State \textsf{PROCESSRESPONSE}($m$, $\mi{reference}$, $\mi{request}$, $\http$, $s'$)
  \ElsIf{$m \in \dnsresponses$} \Comment{Successful DNS response}
      \If{$\comp{m}{nonce} \not\in \comp{s}{pendingDNS} \vee \comp{m}{result} \not\in \addresses \vee \comp{m}{domain} \not\equiv \comp{\proj{2}{\comp{s}{pendingDNS}}}{host}$}
        \State \textbf{stop} $\an{}$, $s$ \label{line:browser-dns-response-stop}
      \EndIf
      \Let{$\an{\mi{reference}, \mi{message}, \mi{protocol}}$}{$\comp{s}{pendingDNS}[\comp{m}{nonce}]$}
      \If{$\mi{protocol} \equiv \https$}
        \Append{$\langle\mi{reference}$, $\mi{message}$, $\nu_3$, $\comp{m}{result}\rangle$}{$\comp{s'}{pendingRequests}$} \label{line:add-to-pendingrequests-https}
        \Let{$\mi{message}$}{$\enc{\an{\mi{message},\nu_3}}{\comp{s'}{keyMapping}\left[\comp{\mi{message}}{host}\right]}$} \label{line:select-enc-key}
      \Else
        \Append{$\langle\mi{reference}$, $\mi{message}$, $\bot$, $\comp{m}{result}\rangle$}{$\comp{s'}{pendingRequests}$} \label{line:add-to-pendingrequests}
      \EndIf
      \Let{$\comp{s'}{pendingDNS}$}{$\comp{s'}{pendingDNS} - \comp{m}{nonce}$}
      \State \textbf{stop} $\an{\an{\comp{m}{result}, a, \mi{message}}}$, $s'$
  \EndIf
  \State \textbf{stop} $\an{}$, $s$

\end{algorithmic} \setlength{\parindent}{1em}

\section{Formal Model of \spresso}
\label{app:model-spresso-auth}

We here present the full details of our formal model of \spresso.
\gs{added:} For our analysis regarding our authentication and privacy
properties below, we will further restrict this generic model to suit
the setting of the respective analysis.

We model \spresso as a web system (in the sense of
Appendix~\ref{app:websystem}). \gs{this should be a definition:}\df{Would be too long for a definition.} We call a web system
$\spressowebsystem=(\bidsystem, \scriptset, \mathsf{script}, E^0)$ an
\emph{\spresso web system} if it is of the form described in what
follows.

\subsection{Outline}\label{app:outlinespressomodel}
The system $\bidsystem=\mathsf{Hon}\cup \mathsf{Web} \cup
\mathsf{Net}$ consists of web attacker processes (in $\mathsf{Web}$), network attacker processes
(in $\mathsf{Net}$), a finite set $\fAP{FWD}$ of forwarders, a finite set
$\fAP{B}$ of web browsers, a finite set $\fAP{RP}$ of web servers for
the relying parties, a finite set $\fAP{IDP}$ of web servers for
the identity providers, and a finite set $\fAP{DNS}$ of DNS servers, with $\mathsf{Hon} := \fAP{B} \cup \fAP{RP}
\cup \fAP{IDP} \cup \fAP{FWD} \cup \fAP{DNS}$. More details on the processes
in $\bidsystem$ are provided below.
Figure~\ref{fig:scripts-in-w} shows the set of scripts $\scriptset$
and their respective string representations that are defined by the
mapping $\mathsf{script}$.
The set $E^0$ contains only the trigger events as specified in
Appendix~\ref{app:websystem}.

\begin{figure}[htb]
  \centering
  \begin{tabular}{|@{\hspace{1ex}}l@{\hspace{1ex}}|@{\hspace{1ex}}l@{\hspace{1ex}}|}\hline 
    \hfill $s \in \scriptset$\hfill  &\hfill  $\mathsf{script}(s)$\hfill  \\\hline\hline
    $\Rasp$ & $\str{att\_script}$  \\\hline
    $\mi{script\_rp}$ & $\str{script\_rp}$  \\\hline
    $\mi{script\_rp\_redir}$ & $\str{script\_rp\_redir}$  \\\hline
    $\mi{script\_idp}$ &  $\str{script\_idp}$  \\\hline
    $\mi{script\_fwd}$ & $\str{script\_fwd}$ \\\hline
  \end{tabular}
  
  \caption{List of scripts in $\scriptset$ and their respective string
    representations.}
  \label{fig:scripts-in-w}
\end{figure}

This outlines $\spressowebsystem$. We will now define the DY processes in
$\spressowebsystem$ and their addresses, domain names, and secrets in more
detail. The scripts are defined in detail in Appendix~\ref{app:spresso-scripts}.

\subsection{Addresses and Domain Names}
The set $\addresses$ contains for every web attacker in $\fAP{Web}$, every network attacker in $\fAP{Net}$, every relying
party in $\fAP{RP}$, every identity provider in $\fAP{IDP}$, every
forwarder in $\fAP{FWD}$, every DNS server in $\fAP{DNS}$, and every browser in $\fAP{B}$ a finite set
of addresses each. By $\mapAddresstoAP$ we denote the corresponding
assignment from a process to its address. The set $\dns$ contains a
finite set of domains for every forwarder $\fAP{FWD}$, every
relying party in $\fAP{RP}$, every identity provider in $\fAP{IDP}$,
 every web attacker in $\fAP{Web}$, and every network attacker in $\fAP{Net}$. Browsers (in $\fAP{B})$ and DNS servers (in $\fAP{DNS}$) do not have a domain.

By $\mapAddresstoAP$ and $\mapDomain$ we denote the assignments from
atomic processes to sets of $\addresses$ and $\dns$, respectively.

\subsection{Keys and Secrets} The set $\nonces$ of nonces is
partitioned into four sets, an infinite sequence $N$, an infinite set
$K_\text{SSL}$, an infinite set $K_\text{sign}$, and a finite set
$\RPSecrets$. We thus have
\begin{align*}
\def\hereMaxHeightPhantom{\vphantom{K_{\text{p}}^\bidsystem}}
\nonces = 
\underbrace{N\hereMaxHeightPhantom}_{\text{infinite sequence}} 
\dot\cup \underbrace{K_{\text{SSL}}\hereMaxHeightPhantom}_{\text{finite}} 
\dot\cup \underbrace{K_{\text{sign}}\hereMaxHeightPhantom}_{\text{finite}} 
\dot\cup \underbrace{\RPSecrets\hereMaxHeightPhantom}_{\text{finite}}\ .
\end{align*}
The set $N$ contains the nonces that are available for each DY process
in $\bidsystem$ (it can be used to create a run of $\bidsystem$). 

The set $K_\text{SSL}$ contains the keys that will be used for SSL
encryption. Let $\mapSSLKey\colon \dns \to K_\text{SSL}$ be an injective
mapping that assigns a (different) private key to every domain.

The set $K_\text{sign}$ contains the keys that will be used by IdPs
for signing IAs. Let $\mapSignKey\colon \fAP{IdPs} \to K_\text{sign}$
be an injective mapping that assigns a (different) private key to every identity
provider.

The set $\RPSecrets$ is the
set of passwords (secrets) the browsers share with the identity
providers. 

\subsection{Identities}\label{app:spresso-pidp-identities}
Indentites are email addresses, which consist of a user name and a
domain part. For our model, this is defined as follows:
\begin{definition}
  An \emph{identity} (email address) $i$ is a term of the form
  $\an{\mi{name},\mi{domain}}$ with $\mi{name}\in \mathbb{S}$ and
  $\mi{domain} \in \dns$.

  Let $\IDs$ be the finite set of identities. By $\IDs^y$ we denote
  the set $\{ \an{\mi{name}, \mi{domain}} \in \IDs\,|\, \mi{domain}
  \in \mapDomain(y) \}$.

  We say that an ID is \emph{governed} by the DY process to which the
  domain of the ID belongs. Formally, we define the mapping $\mapGovernor:
  \IDs \to \bidsystem$, $\an{\mi{name}, \mi{domain}} \mapsto
  \mapDomain^{-1}(\mi{domain})$.
\end{definition}%

The governor of an ID will usually be an IdP, but could also be the
attacker. 

By $\mapIDtoPLI:\IDs \to \RPSecrets$ we denote the bijective mapping
that assigns secrets to all identities. 

Let $\mapPLItoOwner: \RPSecrets \to \fAP{B}$ denote the mapping that
assigns to each secret a browser that \emph{owns} this secret. Now, we
define the mapping $\mapIDtoOwner: \IDs \to \fAP{B}$, $i \mapsto
\mapPLItoOwner(\mapIDtoPLI(i))$, which assigns to each identity the
browser that owns this identity (we say that the identity belongs to
the browser).

\subsection{Tags and Identity Assertions}
\label{app:identity-assertions}

\begin{definition}\label{def:tag}
  A \emph{tag} is a term of the form $\encs{\an{o,
    n}}{k}$ for some domain $d$, a nonce
  $n \in \nonces$, and a nonce (here used as a symmetric key)
  $k$.
\end{definition}
\begin{definition}
  An \emph{identity assertion (IA)} is a term of the form $\sig{\an{t,
      e, d'}}{k}$ for a tag $t$, an email address (identity) $e$, a
  domain $d'$ and a nonce $k$. We call it an \emph{encrypted identity
    assertion (EIA)} if it is additionally (symmmetrically) encrypted
  (i.e., it is of the form $\encs{s}{k'}$ if $s$ is an IA and $k'$ is
  a nonce.

\end{definition}

\subsection{Corruption}\gs{DNS servers may not be corrupted?}
RPs, IdPs and FWDs can become corrupted: If they receive the message
$\corrupt$, they start collecting all incoming messages in their state
and (upon triggering) send out all messages that are derivable from
their state and collected input messages, just like the attacker
process. We say that an RP, an IdP or an forwarder is \emph{honest} if the according
part of their state ($s.\str{corrupt}$) is $\bot$, and that they are
corrupted otherwise.

We are now ready to define the processes in $\websystem$ as well as
the scripts in $\scriptset$ in more detail.

\subsection{Processes in $\bidsystem$ (Overview)}

We first provide an overview of the processes in $\bidsystem$. All
processes in $\websystem$ (except for DNS servers) contain in their initial states all public
keys and the private keys of their respective domains (if any). We
define $I^p=\mapAddresstoAP(p)$ for all $p\in \mathsf{Hon} \cup \mathsf{Web}$.

\subsubsection{Web Attackers.}  Each $\mi{wa} \in \mathsf{Web}$  is a
web attacker (see Appendix~\ref{app:websystem}), who
uses only his own addresses for sending and listening. 

\subsubsection{Network Attackers.}  Each $\mi{na} \in \mathsf{Net}$  is a
network attacker (see Appendix~\ref{app:websystem}), who
uses all addresses for sending and listening. 

\subsubsection{Browsers.} Each $b \in \fAP{B}$ is a web browser as
defined in Appendix~\ref{app:deta-descr-brows}. The initial state
contains all secrets owned by $b$, stored under the origin of the
respective IdP. See Appendix~\ref{app:browsers-spresso} for details.

\subsubsection{Relying Parties.} A relying party $r \in \fAP{RP}$ is a
web server. RP knows four distinct paths: $\mathtt{/}$, where it
serves the index web page ($\str{script\_rp}$),
$\mathtt{/startLogin}$, where it only accepts POST requests and mainly
issues a fresh RP nonce (details see below), $\mathtt{/redir}$, where it only accepts requests with a valid login session token and serves $\str{script\_rp\_redir}$ to redirect the browser to the IdP, and $\mathtt{/login}$,
where it also only accepts POST requests with login data obtained
during the login process by $\str{script\_rp}$ running in the browser.
It checks this data and, if the data is deemed ``valid'', it issues a
service token (again, for details, see below). The RP keeps a list of
such tokens in its state. Intuitively, a client having such a token
can use the service of the RP (for a specific identity record along
with the token). Just like IdPs, RPs can become corrupted.

\subsubsection{Identity Providers.} Each IdP is a web server. As outlined
in Section~\ref{sec:main-features}, users can authenticate to the IdP with
their credentials. IdP tracks the state of the users with
sessions. Authenticated users can receive IAs from the
IdP. When receiving a special message ($\corrupt$) IdPs can become
corrupted. Similar to the definition of corruption for the browser,
IdPs then start sending out all messages that are derivable from their
state.

\subsubsection{Forwarders.} FWDs are web servers that have only one state and
only serve the script $\str{script\_fwd}$. See
Appendix~\ref{app:fwd-spresso} for details.

\subsubsection{DNS.} Each $\mi{dns} \in \fAP{DNS}$ is a DNS server as defined in Appendix~\ref{app:DNSservers}. Their state contains the allocation of domain names to IP addresses.

\subsection{SSL Key Mapping}\label{app:common-data-structures}
Before we define the atomic DY processes in more detail, we first
define the common data structure that holds the mapping of domain
names to public SSL keys: For an atomic DY process $p$ we define
\[\mi{sslkeys}^p = \an{\left\{\an{d, \mapSSLKey(d)} \mid d \in
    \mapDomain(p)\right\}}.\]

\subsection{Web Attackers}\label{app:webattackers-spresso} Each $\mi{wa} \in \fAP{Web}$ is a web attacker. The initial state of each $\mi{wa}$ is $s_0^\mi{wa} =
\an{\mi{attdoms}, \mi{sslkeys}, \mi{signkeys}}$, where $\mi{attdoms}$
is a sequence of all domains along with the corresponding private keys
owned by $\mi{wa}$, $\mi{sslkeys}$ is a sequence of all domains and
the corresponding public keys, and $\mi{signkeys}$ is a sequence
containing all public signing keys for all IdPs. All other parties use
the attacker as a DNS server.

\subsection{Network Attackers}\label{app:networkattackers-spresso} As mentioned, each network attacker
$\mi{na}$ is modeled to be a network attacker as specified in
Appendix~\ref{app:websystem}. We allow it to listen to/spoof all
available IP addresses, and hence, define $I^\mi{na} =
\addresses$. The initial state is $s_0^\mi{na} =
\an{\mi{attdoms}, \mi{sslkeys}, \mi{signkeys}}$, where $\mi{attdoms}$
is a sequence of all domains along with the corresponding private keys
owned by the attacker $\mi{na}$, $\mi{sslkeys}$ is a sequence of all domains and
the corresponding public keys, and $\mi{signkeys}$ is a sequence
containing all public signing keys for all IdPs.

\subsection{Browsers}\label{app:browsers-spresso} 

Each $b \in \fAP{B}$ is a web browser as defined in
Appendix~\ref{app:deta-descr-brows}, with $I^b := \mapAddresstoAP(b)$
being its addresses.

To define the inital state, first let $\mi{ID}^b :=
\mapIDtoOwner^{-1}(b)$ be
 the set of all IDs of $b$, $\mi{ID}^{b,d} :=
\{i \mid \exists\, x:\ i = \an{x, d} \in \mi{ID}^b\}$ be the set of
IDs of $b$ for a domain $d$, and $\mi{SecretDomains}^b := \{d \mid
\mi{ID}^{b,d} \neq \emptyset \}$ be the set of all domains that $b$
owns identities for.

Then, the initial state $s_0^b$ is defined as follows: the key mapping
maps every domain to its public (ssl) key, according to the mapping
$\mapSSLKey$; the DNS address is $\mapAddresstoAP(p)$ with $p \in \bidsystem$;
the list of secrets contains an entry $\an{\an{d,\https}, s}$ for each
$d \in \mi{SecretDomains}^b$ and $s = \mapIDtoPLI(i)$ for some $i \in
\mi{ID}^{b,d}$ ($s$ is the same for all $i$); $\mi{ids}$ is
$\an{\mi{ID}^b}$; $\mi{sts}$ is empty.

\subsection{Relying Parties} \label{app:relying-parties-spresso}

A relying party $r \in \fAP{RP}$ is a web server modeled as an atomic
DY process $(I^r, Z^r, R^r, s^r_0)$ with the addresses $I^r :=
\mapAddresstoAP(r)$. Its initial state $s^r_0$ contains its domains,
the private keys associated with its domains, the DNS server address,
and the domain name of a forwarder. The full state additionally
contains the sets of service tokens and login session identifiers the
RP has issued. RP only accepts HTTPS requests.

RP manages two kinds of sessions: The \emph{login sessions}, which are
only used during the login phase of a user, and the \emph{service
  sessions} (we call the session identifier of a service session a
\emph{service token}). Service sessions allow a user to use RP's
services. The ultimate goal of a login flow is to establish such a
service session.

In a typical flow with one client, $r$ will first receive an HTTP GET
request for the path $\str{/}$. In this case, $r$ returns the script
$\str{script\_rp}$ (see below).

After the user entered her email address, $r$ will receive an HTTPS
POST XMLHTTPRequest for the path $\str{/startLogin}$. In this request,
it expects the email address the user entered. The relying party then
contacts the user's email provider to retrieve the \spresso support
document (where it extracts the public key of the IdP). After that, $r$ selects
the nonces $\mi{rpNonce}$, $\mi{iaKey}$, $\mi{tagKey}$, and
$\mi{loginSessionToken}$. It creates the $\mi{tag}$ as the (symmetric)
encryption of its own domain and the $\mi{rpNonce}$ with
$\mi{tagKey}$. It then returns to the browser the
$\mi{loginSessionToken}$, the $\mi{tagKey}$, and the domain of the
forwarder ($S(r).\str{FWDDomain}$).

When the RP document in the browser opens the login dialog, $r$
receives a third request, in this case a GET request for the path
$\str{/redir}$ with a parameter containing $\mi{loginSessionToken}$.
This is now used by $r$ to look up the user's session and redirect the
user to the IdP (this redirection serves mainly to hide the referer
string from the request to IdP). For this, $r$ sends the script
$\mi{script\_rp\_redir}$ and, in its initial script state, defines
that the script should redirect the user to the URL of the login
dialog (which is ``https://'' plus the domain of the user's email
address plus ``/.well-known/spresso-login'').

Finally, $r$ receives a last request in the login flow. This POST
request contains the encrypted IA and the $\mi{loginSessionToken}$. .
To conclude the login, $r$ looks up the user's login session, decrypts
the IA, and checks that it is a signature over the tag, the user's
email address, and the FWD domain. If successful, $r$ returns a new
service token, which is also stored in the state of $r$.

If $r$ receives a corrupt message, it becomes corrupt and acts like
the attacker from then on.

We now provide the formal definition of $r$ as an atomic DY process
$(I^r, Z^r, R^r, s^r_0)$. As mentioned, we define $I^r =
\mapAddresstoAP(r)$. Next, we define the set $Z^r$ of states of
$r$ and the initial state $s^r_0$ of $r$.

\begin{definition}
  A \emph{login session record} is a term of the form $\an{\mi{email},
    \mi{rpNonce}, \mi{iaKey}, \mi{tag}}$ with $\mi{email} \in \IDs$
  and $\mi{rpNonce}$, $\mi{iaKey}$, $\mi{tag} \in \nonces$.
\end{definition}

\begin{sloppypar}
  \begin{definition}\label{def:relying-parties}
    A \emph{state $s\in Z^r$ of an RP $r$} is a term of the form
    $\langle\mi{DNSAddress}$, $\mi{FWDDomain}$, $\mi{keyMapping}$,
    $\mi{sslkeys}$, $\mi{pendingDNS}$, $\mi{pendingRequests}$,
    $\mi{loginSessions}$, $\mi{serviceTokens}$, $\mi{wkCache}$,
    $\mi{corrupt}\rangle$ where $\mi{DNSAddress} \in \addresses$,
    $\mi{FWDDomain} \in \dns$,
    $\mi{keyMapping} \in \dict{\mathbb{S}}{\nonces}$,
    $\mi{sslkeys}=\mi{sslkeys}^r$,
    $\mi{pendingDNS} \in \dict{\nonces}{\terms}$,
    $\mi{pendingRequests} \in \dict{\nonces}{\terms}$,
    $\mi{serviceTokens}\in\dict{\nonces}{\mathbb{S}}$,
    $\mi{loginSessions} \in \dict{\nonces}{\terms}$ is a dictionary of
    login session records,
    $\mi{wkCache} \in \dict{\mathbb{S}}{\terms}$,
    $\mi{corrupt}\in\terms$.

    The \emph{initial state $s^r_0$ of $r$} is a state of $r$ with
    $s^r_0.\str{serviceTokens} = s^r_0.\str{loginSessions} =
    s^r_0.\str{wkCache} = \an{}$,
    $s^r_0.\str{corrupt} = \bot$, and $s^r_0.\str{keyMapping}$ is the
    same as the keymapping for browsers above.
  \end{definition}
\end{sloppypar}

We now specify the relation $R^r$. Just like
in Appendix~\ref{app:deta-descr-brows}, we describe this
relation by a non-deterministic algorithm. 

\captionof{algorithm}{\label{alg:sendstartloginresponse} Sending the response to a startLogin XMLHTTPRequest}
\begin{algorithmic}[1]
  \Function{$\mathsf{SENDSTARTLOGINRESPONSE}$}{$a$, $f$, $k$, $n$, $\mi{email}$, $\mi{inDomain}$, $s'$}
  \Let{$\mi{rpNonce}$}{$\nu_1$}
  \Let{$\mi{tagKey}$}{$\nu_2$}
  \Let{$\mi{iaKey}$}{$\nu_3$}
  \Let{$\mi{loginSessionToken}$}{$\nu_4$}
  \Let{$\mi{tag}$}{$\encs{\an{\mi{inDomain}, \mi{rpNonce}}}{\mi{tagKey}}$}
  \Let{$s'.\str{loginSessions}[\mi{loginSessionToken}]$}{$\an{\mi{email}, \mi{rpNonce}, \mi{iaKey}, \mi{tag}}$}
  \Let{$\mi{body}$}{$\an{\an{\str{tagKey}, \mi{tagKey}} \an{\str{loginSessionToken}, \mi{loginSessionToken}}, \an{\str{FWDDomain}, s'.\str{FWDDomain}}}$}
  \Let{$m'$}{$\encs{\an{\cHttpResp, n, 200, \an{}, \mi{body}}}{k}$}
  \Stop{\StopWithMPrime}
  \EndFunction
\end{algorithmic} \setlength{\parindent}{1em}

\captionof{algorithm}{\label{alg:rp-spresso} Relation of a Relying
  Party $R^r$}
\begin{algorithmic}[1]
\Statex[-1] \textbf{Input:} $\an{a,f,m},s$
  \If{$s'.\str{corrupt} \not\equiv \bot \vee m \equiv \corrupt$}
    \Let{$s'.\str{corrupt}$}{$\an{\an{a, f, m}, s'.\str{corrupt}}$}
    \LetND{$m'$}{$d_{V}(s')$}\label{line:usage-of-signkey-corrupt-spresso} %
    \LetND{$a'$}{$\addresses$} %
    \Stop{$\an{\an{a',a,m'}}$, $s'$}
  \EndIf
  \If{$\exists\, \an{\mi{reference}, \mi{request}, \mi{key}, f}$
      $\inPairing \comp{s'}{pendingRequests}$ \breakalgohook{0}
      \textbf{such that} $\proj{1}{\decs{m}{\mi{key}}} \equiv \cHttpResp$ } \label{line:rp-https-response} %
    \Comment{Encrypted HTTP response}
    \Let{$m'$}{$\decs{m}{\mi{key}}$}
    \If{$\comp{m'}{nonce} \not\equiv \comp{\mi{request}}{nonce}$}
      \Stop{\DefStop}
    \EndIf
    \State \textbf{remove} $\an{\mi{reference}, \mi{request}, \mi{key}, f}$ \textbf{from} $\comp{s'}{pendingRequests}$
    \LetST{$a'$, $f'$, $k$, $n$, $\mi{email}$, $\mi{inDomain}$}{$\an{a', f', k, n, \mi{email}, \mi{inDomain}} \equiv \mi{reference}$}{\textbf{stop} \DefStop} %
    \Let{$s'.\str{wkCache}[\mi{request}.\str{host}]$}{$m'.\str{body}$} \label{line:rp-populate-wkcache}\label{line:rp-accept}
    \State $\mathsf{SENDSTARTLOGINRESPONSE}$($a'$, $f'$, $k$, $n$, $\mi{email}$, $\mi{inDomain}$, $s'$) \label{line:call-ssl-second}
  \ElsIf{$m \in \dnsresponses$} \Comment{Successful DNS response}
      \If{$\comp{m}{nonce} \not\in \comp{s}{pendingDNS} \vee \comp{m}{result} \not\in \addresses \vee \comp{m}{domain} \not\equiv \comp{\proj{2}{\comp{s}{pendingDNS}}}{host}$}
      \Stop{\DefStop}
      \EndIf
      \Let{$\an{\mi{reference}, \mi{message}}$}{$\comp{s}{pendingDNS}[\comp{m}{nonce}]$}
      \AppendBreak{2}{$\langle\mi{reference}$, $\mi{message}$, $\nu_5$, $\comp{m}{result}\rangle$}{$\comp{s'}{pendingRequests}$} \label{line:move-reference-to-pending-request}
      \Let{$\mi{message}$}{$\enc{\an{\mi{message},\nu_5}}{\comp{s'}{keyMapping}\left[\comp{\mi{message}}{host}\right]}$} \label{line:select-enc-key-spresso}
      \Let{$\comp{s'}{pendingDNS}$}{$\comp{s'}{pendingDNS} - \comp{m}{nonce}$}
      \Stop{$\an{\an{\comp{m}{result}, a, \mi{message}}}$, $s'$} 
   \Else \Comment{Handle HTTP requests}
     \LetST{$m_{\text{dec}}$, $k$, $k'$, $\mi{inDomain}$}{\breakalgohook{0}$\an{m_{\text{dec}}, k} \equiv \dec{m}{k'} \wedge \an{inDomain,k'} \in s.\str{sslkeys}$\breakalgohook{0}}{\textbf{stop} \DefStop} \label{line:receive-https-request} %
     \LetST{$n$, $\mi{method}$, $\mi{path}$, $\mi{parameters}$, $\mi{headers}$, $\mi{body}$}{\breakalgohook{0}$\an{\cHttpReq, n, \mi{method}, \mi{inDomain}, \mi{path}, \mi{parameters}, \mi{headers}, \mi{body}} \equiv m_{\text{dec}}$\breakalgohook{0}}{\textbf{stop} \DefStop} %
     \If{$\mi{path} \equiv \str{/}$} \label{line:serve-rp-index}\Comment{Serve index page.}
       \Let{$m'$}{$\encs{\an{\cHttpResp, n, 200, \an{}, \an{\str{script\_rp}, \mi{initState_{rp}}}}}{k}$} \Comment{Initial state defined for $\mi{script\_rp}$ (below).} 
       \Stop{$\an{\an{f,a,m'}}$, $s'$}
     \ElsIf{$\mi{path} \equiv \str{/startLogin} \wedge \mi{method} \equiv \mPost$} \label{line:serve-rp-start-login}\Comment{Serve start login request.}
       \If{$\mi{body} \not\in \mi{ids}$}
         \Stop{\DefStop}
       \EndIf
       \Let{$\mi{domain}$}{$\mi{body}.\str{domain}$}
       \If{$\mi{domain} \in s.\str{wkCache}$} \label{line:branch-wk-cache}
         \State $\mathsf{SENDSTARTLOGINRESPONSE}$($a$, $f$, $k$, $n$, $\mi{body}$, $\mi{inDomain}$, $s'$)
       \Else 
         \Let{$\mi{message}$}{$\hreq{ nonce=\nu_6, method=\mGet,
             xhost=\mi{domain}, path=\str{/.well\mhyphen{}known/spresso\mhyphen{}info}, parameters=\an{}, headers=\an{}, xbody=\an{}}$} \label{line:rp-send-request}
         \Let{$\comp{s'}{pendingDNS}[\nu_6]$}{$\an{\an{a, f, k, n, \mi{body}, \mi{inDomain}}, \mi{message}}$}
         \Stop{$\an{\an{\comp{s'}{DNSaddress}, a, \an{\cDNSresolve, \mi{domain}, \nu_6}}}$, $s'$}
       \EndIf
     \ElsIf{$\mi{path} \equiv \str{/redir} \wedge \mi{method} \equiv \mGet$} \label{line:serve-rp-redir} \Comment{Serve redirection script.}
       \Let{$\mi{loginSession}$}{$s'.\str{loginSessions}[\mi{body}[\str{loginSessionToken}]]$}\label{line:rp-redir-start}
       \If{$\mi{loginSession} \equiv \an{}$}
         \Stop{\DefStop} \label{line:rp-redir-stop}
       \EndIf
       \Let{$\mi{domain}$}{$\mi{loginSession}.\str{email}.\str{domain}$}
       \Let{$\mi{params}$}{$\an{\an{\str{email}, \mi{loginSession}.\str{email}}, \an{\str{tag}, \mi{loginSession}.\str{tag}},$\breakalgohook{5}$ \an{\str{iaKey}, \mi{loginSession}.\str{iaKey}}, \an{\str{FWDDomain}, s'.\str{FWDDomain}}}$}
       \Let{$\mi{url}$}{$\an{\cUrl, \https, \mi{domain}, \str{/.well\mhyphen{}known/spresso\mhyphen{}login}, \mi{params}}$}
       \Let{$m'$}{$\encs{\an{\cHttpResp, n, 200, \an{}, \an{\str{script\_rp\_redir}, \mi{url}}}}{k}$} 
       \Stop{$\an{\an{f,a,m'}}$, $s'$} \label{line:rp-redir-response}
     \ElsIf{$\mi{path} \equiv \str{/login} \wedge \mi{method} \equiv \mPost$} \label{line:assemble-login-response} \label{line:check-path-method-rp} \Comment{Serve login request.}
       \If{$\mi{headers}[\str{Origin}] \not\equiv \an{\mi{inDomain}, \https} \vee \mi{body}[\str{loginSessionToken}] \equiv \an{}$}\label{line:check-origin-header}
         \Stop{\DefStop}
       \EndIf

       \Let{$\mi{loginSession}$}{$s'.\str{loginSessions}[\mi{body}[\str{loginSessionToken}]]$}
       \If{$\mi{loginSession} \equiv \an{}$}
         \Stop{\DefStop}
       \EndIf
       \Let{$s'.\str{loginSessions}$}{$s'.\str{loginSessions} - \mi{body}[\str{loginSessionToken}]$}
       \Let{$\mi{ia}$}{$\decs{\mi{body}[\str{eia}]}{\mi{loginSession}.\str{iaKey}}$} \label{line:check-service-token-request-contents}
       \Let{$e$}{$\an{\mi{loginSession}.\str{tag}, \mi{loginSession}.\str{email}, s'.\str{FWDDomain}}$}
       \If{$\checksigThree{e}{ia}{s'.\str{wkCache}[\mi{loginSession}.\str{email}.\str{domain}][\str{signkey}]} \equiv \bot$} \label{line:rp-check-ia}
         \Stop{\DefStop}
       \EndIf
       \Let{$\mi{serviceTokenNonce}$}{$\nu_7$} \label{line:choose-service-token}
       \Let{$\mi{serviceToken}$}{$\an{\mi{serviceTokenNonce}, \mi{loginSession}.\str{email}}$} \label{line:assemble-service-token}
       \Append{$\mi{serviceToken}$}{$s'.\str{serviceTokens}$} \label{line:store-service-token}
       \Let{$m'$}{$\encs{\an{\cHttpResp, n, 200, \an{}, \mi{serviceToken}}}{k}$} \label{line:send-service-token}
       \Stop{$\an{\an{f,a,m'}}$, $s'$}
     \EndIf
  \EndIf
  \Stop{\DefStop}
  
\end{algorithmic} \setlength{\parindent}{1em}

\subsection{Identity Providers}  \label{app:idps}

An identity provider $i \in \mathsf{IdPs}$ is a web server modeled as
an atomic process $(I^i, Z^i, R^i, s_0^i)$ with the addresses $I^i
:= \mapAddresstoAP(i)$. Its initial state $s^i_0$ contains
a list of its domains and (private) SSL keys, a
list of users and identites, and a private key for signing
UCs. Besides this, the full state of $i$ further contains a list of
used nonces, and information about active sessions.

IdPs react to three types of requests:

First, they provide the ``well-known document'', a machine-readable
document which contains the IdP's verification key. This document is served upon
a GET request to the path
$\str{/.well\mhyphen{}known/spresso\mhyphen{}info}$.

Second, upon a request to the LD path (i.e., $\str{/.well\mhyphen{}known/spresso\mhyphen{}login}$), an IdP serves the login dialog
script, i.e., $\str{script\_idp}$. Into the initial state of this
script, IdPs encode whether the browser is already logged in or not.
Further, IdP issues an XSRF token to the browser (in the same way RPs
do).

The login dialog will eventually send an XMLHTTPRequest to the path
$\str{loginxhr}$, where it retrieves the IA. This is also the last
type of requests IdPs answer to. Before serving the response to this
request, IdP checks whether the user is properly authenticated. It
then creates the IA and sends it to the browser.

\subsubsection{Formal description.} In the following, we will first
define the (initial) state of $i$ formally and afterwards present the
definition of the relation $R^i$.

To define the initial state, we will need a term that represents the
``user database'' of the IdP $i$. We will call this term
$\mi{userset}^i$. This database defines, which secret is valid for
which identity. It is encoded as a mapping of identities to secrets.
For example, if the secret $\mi{secret}_1$ is valid for the identites
$\mi{id}_1$and the secret $\mi{secret}_2$ is valid for the identity
$\mi{id}_2$, the $\mi{userset}^i$ looks as follows:
\begin{align*}
\mi{userset}^i = [\mi{id}_1{:}\mi{secret}_1, \mi{id}_2{:}\mi{secret}_2]
\end{align*}

We define $\mi{userset}^i$ as $\mi{userset}^i = \an{\{\an{u,
    \mapIDtoPLI(u)}\, |\, u \in \IDs^i\}}$.

\begin{definition}\label{def:initial-state-idp}
  A \emph{state $s\in Z^i$ of an IdP $i$} is a term of the form
  $\langle\mi{sslkeys}$, $\mi{users}$, $\mi{signkey}$,
  $\mi{sessions}$, $\mi{corrupt}\rangle$ where $\mi{sslkeys} =
  \mi{sslkeys}^i $, $\mi{users} = \mi{userset}^i$, $\mi{signkey} \in
  \nonces$ (the key used by the IdP $i$ to sign UCs),
  $\mi{sessions}\in\dict{\nonces}{\terms}$, $\mi{corrupt} \in \terms$.

  An \emph{initial state $s^i_0$ of $i$} is a state of the form $\an{
    \mi{sslkeys}^i, \mi{userset}^i, \mapSignKey(i), \an{},
    \bot}$.
\end{definition}

The relation $R^i$ that defines the behavior of the IdP $i$ is defined as follows:

\captionof{algorithm}{\label{alg:idp-spresso} Relation of IdP $R^i$}
\begin{algorithmic}[1]
\Statex[-1] \textbf{Input:} $\an{a,f,m},s$
  \Let{$s'$}{$s$}
  \If{$s'.\str{corrupt} \not\equiv \bot \vee m \equiv \corrupt$}
    \Let{$s'.\str{corrupt}$}{$\an{\an{a, f, m}, s'.\str{corrupt}}$}
    \LetND{$m'$}{$d_{V}(s')$}\label{line:usage-of-signkey-corrupt-spresso}
    \LetND{$a'$}{$\addresses$}
    \State \textbf{stop} $\an{\an{a',a,m'}}$, $s'$
  \EndIf
  \LetST{$m_{\text{dec}}$, $k$, $k'$, $\mi{inDomain}$}{\breakalgohook{0}$\an{m_{\text{dec}}, k} \equiv \dec{m}{k'} \wedge \an{inDomain,k'} \in s.\str{sslkeys}$\breakalgohook{0}}{\textbf{stop} \DefStop} %
  \LetST{$n$, $\mi{method}$, $\mi{path}$, $\mi{parameters}$, $\mi{headers}$, $\mi{body}$}{\breakalgohook{0}$\an{\cHttpReq, n, \mi{method}, \mi{inDomain}, \mi{path}, \mi{parameters}, \mi{headers}, \mi{body}} \equiv m_{\text{dec}}$\breakalgohook{0}}{\textbf{stop} \DefStop} %
  \If{$\mi{path} \equiv \str{/.well\mhyphen{}known/spresso\mhyphen{}info}$} \label{line:idp-response-wk} \Comment{Serve support document.}
    \Let{$\mi{wkDoc}$}{$\an{\an{\str{signkey},\pub(s'.\str{signkey})}}$}
    \Let{$m'$}{$\encs{\an{\cHttpResp, n, 200, \an{}, \mi{wkDoc}}}{k}$} 
    \Stop{\StopWithMPrime}
  \ElsIf{$\mi{path} \equiv \str{/.well\mhyphen{}known/spresso\mhyphen{}login}$} \Comment{Serve login dialog.}
    \Let{$\mi{sessionid}$}{$\mi{headers}[\str{Cookie}][\str{sessionid}]$}
    \Let{$\mi{email}$}{$s'.\str{sessions}[\mi{sessionid}]$}
    \Let{$m'$}{$\encs{\an{\cHttpResp, n, 200, \an{}, \an{\str{script\_idp}, \an{\str{start}, \mi{email}, \an{}}}}}{k}$} \Comment{Initial scriptstate of $\mi{script\_idp}$ (defined below).}
    \Stop{\StopWithMPrime}
  \ElsIf{$\mi{path} \equiv \str{/sign} \wedge \mi{method} \equiv \mPost$} \Comment{Serve signing request.}
    \Let{$\mi{sessionid}$}{$\mi{headers}[\str{Cookie}][\str{sessionid}]$}
    \Let{$\mi{loggedInAs}$}{$s'.\str{sessions}[\mi{sessionid}]$}
    \If{$\mi{body}[\str{email}] \not\equiv \mi{loggedInAs} \wedge \mi{body}[\str{password}] \not\equiv s'.\str{userset}[\mi{body}[\str{email}]]$} \label{line:spresso-idp-check-login-state}
      \Stop{\DefStop}
    \EndIf
    \Let{$\mi{ia}$}{$\sig{\an{\mi{body}[\str{tag}],\mi{body}[\str{email}],\mi{body}[\str{FWDDomain}]}}{s'.\str{signkey}}$} \label{line:sign-ia}
    \Let{$\mi{sessionid}$}{$\nu_8$}
    \Let{$s'.\str{sessions}[\mi{sessionid}]$}{$\mi{body}[\str{email}]$}
    \Let{$\mi{setCookie}$}{$\an{\cSetCookie, \an{\an{\str{sessionid}, \mi{sessionid}, \True, \True, \True}}}$}
    \Let{$m'$}{$\encs{\an{\cHttpResp, n, 200, \an{\mi{setCookie}}, \mi{ia}}}{k}$} 
    \Stop{\StopWithMPrime}
  \EndIf
  \Stop{\DefStop}
\end{algorithmic} \setlength{\parindent}{1em}

\subsection{Forwarders} \label{app:fwd-spresso} 

We define FWDs formally as atomic DY processes
$\mathit{fwd} = (I^\mathit{fwd}, Z^\mathit{fwd}, R^\mathit{fwd},
s^\mathit{fwd}_0)$.
As already mentioned, we define
$I^\mathit{fwd} = \mapAddresstoAP(\mathit{fwd})$ with the set of
states $Z^\mathit{fwd}$ being all terms of the form
$\an{\mi{sslkeys}, \mi{corrupt}}$ for $\mi{sslkeys}$,
$\mi{corrupt} \in \terms$. The initial state $s^\mathit{fwd}_0$ of an
FWD contains the private key of its domain and the corruption state:
$s^\mathit{fwd}_0 = \an{\mi{sslkeys}^\mathit{fwd}, \bot}$.

An FWD responds to any HTTPS request with $\str{script\_fwd}$ and its
initial state, which is empty. 

We now specify the relation $R^\mathit{fwd}$ of FWDs. We describe this
relation by a non-deterministic algorithm.

\captionof{algorithm}{\label{alg:spresso-fwd} Relation of an FWD
  $R^\mathit{fwd}$ }
\begin{algorithmic}[1]
\Statex[-1] \textbf{Input:} $\an{a,f,m},s$
  \If{$s.\str{corrupt} \not\equiv \bot \vee m \equiv \corrupt$}
    \Let{$s'.\str{corrupt}$}{$\an{\an{a, f, m}, s.\str{corrupt}}$}
    \LetND{$m'$}{$d_{V}(s')$}
    \LetND{$a'$}{$\addresses$}
    \State \textbf{stop} $\an{\an{a',a,m'}}$, $s'$
  \EndIf
  \LetST{$m_{\text{dec}}$, $k$, $k'$, $\mi{inDomain}$}{\breakalgohook{0}$\an{m_{\text{dec}}, k} \equiv \dec{m}{k'} \wedge \an{inDomain,k'} \in s$\breakalgohook{0}}{\textbf{stop} \DefStop} %
  \LetST{$n$, $\mi{method}$, $\mi{path}$, $\mi{parameters}$, $\mi{headers}$, $\mi{body}$}{\breakalgohook{0}$\an{\cHttpReq, n, \mi{method}, \mi{inDomain}, \mi{path}, \mi{parameters}, \mi{headers}, \mi{body}} \equiv m_{\text{dec}}$\breakalgohook{0}}{\textbf{stop} \DefStop} %
  \Let{$m'$}{$\encs{\an{\cHttpResp, n, 200, \an{}, \an{\str{script\_fwd, \an{}}}}}{k}$} \label{line:fwd-send-response}
  \Stop{$\an{\an{f,a,m'}},s$}
\end{algorithmic} \setlength{\parindent}{1em}

\subsection{DNS Servers}

As already outlined above, DNS servers are modeled as generic DNS
servers presented in Appendix~\ref{app:DNSservers}. Their (static)
state is set according to the allocation of domain names to IP
addresses\gs{das hier ist falsch und nicht unbedingt noetig
  hinzuschreiben: , i.e., their state is $\an{\dns}$}.  DNS servers
may not become corrupted.

\subsection{\spresso Scripts} \label{app:spresso-scripts} As
already mentioned in Appendix~\ref{app:outlinespressomodel}, the set
$\scriptset$ of the web system
$\spressowebsystem=(\bidsystem, \scriptset,
\mathsf{script}, E^0)$ consists of the scripts $\Rasp$,
$\mi{script\_rp}$, $\mi{script\_idp}$, and
$\mi{script\_fwd}$, with their string representations being
$\str{att\_script}$, $\str{script\_rp}$,
$\str{script\_idp}$, and $\str{script\_fwd}$ (defined by
$\mathsf{script}$). 

In what follows, the scripts $\mi{script\_rp}$,
$\mi{script\_idp}$, and $\mi{script\_fwd}$ are
defined formally. First, we introduce some notation and
helper functions.

\subsubsection{Notations and Helper Functions.}
In the formal description of the scripts we use an abbreviation for
URLs. We write $\mathsf{URL}^d_\mi{path}$ to
describe the following URL term: $\an{\tUrl, \https, d,
  \mi{path}, \an{}}$.  

In order to simplify the description of the scripts, several helper
functions are used.

\paragraph{CHOOSEINPUT.}

The state of a document contains a term $\mi{scriptinputs}$ which
records the input this document has obtained so far (via \xhrs and
\pms, append-only). If the script of the document is activated, it
will typically need to pick one input message from the sequence
$\mi{scriptinputs}$ and record which input it has already processed.
For this purpose, the function
$\mathsf{CHOOSEINPUT}(\mi{scriptinputs},\mi{pattern})$ is used. If called, it chooses
the first message in $\mi{scriptinputs}$ that matches $\mi{pattern}$
and returns it.

\captionof{algorithm}{\label{alg:chooseinput} Choose an unhandled input message for a script}
\begin{algorithmic}[1]
  \Function{$\mathsf{CHOOSEINPUT}$}{$\mi{scriptinputs},\mi{pattern}$}
  \LetST{$i$}{$i = \min\{ j : \proj{j}{\mi{scriptinputs}} \sim \mi{pattern}\}$}{\Return$\bot$}
  \State \Return$\proj{i}{\mi{scriptinputs}}$
  \EndFunction
\end{algorithmic} \setlength{\parindent}{1em}

\paragraph{PARENTWINDOW.} To determine the nonce referencing the
active document in the parent window in the browser, the function
$\mathsf{PARENTWINDOW}(\mi{tree}, \mi{docnonce})$ is used. It takes
the term $\mi{tree}$, which is the (partly cleaned) tree of browser
windows the script is able to see and the document nonce
$\mi{docnonce}$, which is the nonce referencing the current document
the script is running in, as input. It outputs the nonce referencing
the active document in the window which directly contains in its
subwindows the window of the document referenced by $\mi{docnonce}$.
If there is no such window (which is the case if the script runs in a
document of a top-level window) or no active document,
$\mathsf{PARENTWINDOW}$ returns $\mi{docnonce}$.

\paragraph{SUBWINDOWS.} This function takes a term
$\mi{tree}$ and a document nonce $\mi{docnonce}$ as input
just as the function above. If $\mi{docnonce}$ is not a
reference to a document contained in $\mi{tree}$, then
$\mathsf{SUBWINDOWS}(\mi{tree},\mi{docnonce})$ returns
$\an{}$. Otherwise, let $\an{\mi{docnonce}$, $\mi{origin}$,
  $\mi{script}$, $\mi{scriptstate}$, $\mi{scriptinputs}$,
  $\mi{subwindows}$, $\mi{active}}$ denote the subterm of
$\mi{tree}$ corresponding to the document referred to by
$\mi{docnonce}$. Then,
$\mathsf{SUBWINDOWS}(\mi{tree},\mi{docnonce})$ returns
$\mi{subwindows}$.

\paragraph{AUXWINDOW.} This function takes a term $\mi{tree}$ and a
document nonce $\mi{docnonce}$ as input as above. From all window
terms in $\mi{tree}$ that have the window containing the document
identified by $\mi{docnonce}$ as their opener, it selects one
non-deterministically and returns its active document's nonce. If
there is no such window or no active document, it returns
$\mi{docnonce}$.

\paragraph{OPENERWINDOW.} This function takes a
term $\mi{tree}$ and a document nonce $\mi{docnonce}$ as
input as above. It returns the window nonce of the opener
window of the window that contains the document identified
by $\mi{docnonce}$. Recall that the nonce identifying the
opener of each window is stored inside the window term. If
no document with nonce $\mi{docnonce}$ is found in the tree
$\mi{tree}$, $\notdef$ is returned.

\paragraph{GETWINDOW.} This function takes a term
$\mi{tree}$ and a document nonce $\mi{docnonce}$ as input
as above. It returns the nonce of the window containing $\mi{docnonce}$.

\paragraph{GETORIGIN.} To extract the origin of a
document, the function
$\mathsf{GETORIGIN}(\mi{tree},\mi{docnonce})$ is used. This
function searches for the document with the identifier
$\mi{docnonce}$ in the (cleaned) tree $\mi{tree}$ of the
browser's windows and documents. It returns the origin $o$
of the document. If no document with nonce $\mi{docnonce}$
is found in the tree $\mi{tree}$, $\notdef$ is returned.

\paragraph{GETPARAMETERS.} Works exactly as GETORIGIN, but returns the
document's parameters instead.

\subsubsection{Relying Party Index Page (script\_rp).}\label{app:spresso-script-rp}
As defined in Appendix~\ref{app:websystem}, a script is a relation that
takes as input a term and outputs a new term. As specified in
Appendix~\ref{app:deta-descr-brows} (Triggering the Script of a Document
(\textbf{\hlExp{$m = \trigger$}, \hlExp{$\mi{action} = 1$}})) and
formally specified in Algorithm~\ref{alg:runscript}, the input term is
provided by the browser. It contains the current internal state of the
script (which we call \emph{scriptstate} in what follows) and
additional information containing all browser state information the
script has access to, such as the input the script has obtained so far
via \xhrs and \pms, information about windows, etc. The browser
expects the output term to have a specific form, as also specified in
Appendix~\ref{app:deta-descr-brows} and Algorithm~\ref{alg:runscript}. The
output term contains, among other information, the new internal
scriptstate.

We first describe the structure of the internal scriptstate
of the script $\mi{script\_rp}$.

\begin{definition} \label{def:scriptstaterp} A \emph{scriptstate $s$
    of $\mi{script\_rp}$} is a term of the form $\langle q$,
  $\mi{loginSessionToken}$, $\mi{refXHR}$, 
  $\mi{tagKey}$, $\mi{FWDDomain} \rangle$ where $q \in \mathbb{S}$, $\mi{loginSessionToken}$,
  $\mi{refXHR}$, $\mi{tagKey} \in \nonces \cup
  \{\bot\}$, 
  $\mi{FWDDomain} \in \terms$. 

  The
  \emph{initial scriptstate $\mi{initState_{rp}}$} of
  $\mi{script\_rp}$ is
  $\an{\str{start},\bot,\bot,\bot,\bot}$.
\end{definition}

We now specify the relation $\mi{script\_rp}$ formally. We describe this relation
by a non-deterministic algorithm.

Just like all scripts, as explained in
Appendix~\ref{app:deta-descr-brows} (see also
Algorithm~\ref{alg:runscript} for the formal
specification), the input term this script obtains from the
browser contains the cleaned tree of the browser's windows
and documents $\mi{tree}$, the nonce of the current
document $\mi{docnonce}$, its own scriptstate
$\mi{scriptstate}$ (as defined in
Definition~\ref{def:scriptstaterp}), a sequence of all
inputs $\mi{scriptinputs}$ (also containing already handled
inputs), a dictionary $\mi{cookies}$ of all accessible
cookies of the document's domain, the \ls
$\mi{localStorage}$ belonging to the document's origin, the
secrets $\mi{secret}$ of the document's origin, and a set
$\mi{nonces}$ of fresh nonces as input. The script returns
a new scriptstate $s'$, a new set of cookies
$\mi{cookies'}$, a new \ls $\mi{localStorage'}$, and a term
$\mi{command}$ denoting a command to the browser.
\captionof{algorithm}{\label{alg:spresso-script-rp} Relation of $\mi{script\_rp}$ }
\begin{algorithmic}[1]
\Statex[-1] \textbf{Input:} $\langle\mi{tree}$, $\mi{docnonce}$, $\mi{scriptstate}$, $\mi{scriptinputs}$, $\mi{cookies}$, $\mi{localStorage}$, $\mi{sessionStorage}$,\breakalgohook{-1}$\mi{ids}$, $\mi{secret}\rangle$
\Let{$s'$}{$\mi{scriptstate}$}
\Let{$\mi{command}$}{$\an{}$}
\Let{$\mi{origin}$}{$\mathsf{GETORIGIN}(\mi{tree},\mi{docnonce})$} \label{line:determine-origin-rp-script}
\Switch{$s'.\str{q}$}
 \Case{$\str{start}$}
  \LetND{$s'.\str{email}$}{$\mi{ids}$} \label{line:select-email-address}
  \Let{$s'.\str{refXHR}$}{$\lambda_1$}
  \Let{$\mi{command}$}{$\an{\tXMLHTTPRequest,\textsf{URL}^{\mi{origin}.\str{domain}}_\str{/startLogin},\mPost,s'.\str{email},s'.\str{refXHR}}$} \label{line:send-start-login}
  \Let{$s'.\str{q}$}{$\str{expectStartLoginResponse}$}
 \EndCase

 \Case{$\str{expectStartLoginResponse}$}
  \Let{$\mi{pattern}$}{$\an{\tXMLHTTPRequest,*,s'.\str{refXHR}}$}
  \Let{$\mi{input}$}{\textsf{CHOOSEINPUT}($\mi{scriptinputs},\mi{pattern}$)}
  \If{$\mi{input} \not\equiv \bot$}
   \Let{$s'.\str{loginSessionToken}$}{$\proj{2}{\mi{input}}[\str{loginSessionToken}]$}  \label{line:set-login-session-token}
   \Let{$s'.\str{tagKey}$}{$\proj{2}{\mi{input}}[\str{tagKey}]$}
   \Let{$s'.\str{FWDDomain}$}{$\proj{2}{\mi{input}}[\str{FWDDomain}]$}
   \Let{$\mi{command}$}{\breakalgohook{3}$\an{\tHref,\an{\tUrl, \https, \mi{origin}.\str{domain},
  \str{/redir}, \an{\an{\str{loginSessionToken}, s'.\str{loginSessionToken}}}},\wBlank,\an{}}$} 
   \Let{$s'.\str{q}$}{$\str{expectFWDReady}$}
  \EndIf
 \EndCase

 \Case{$\str{expectFWDReady}$}
  \Let{$\mi{fwdWindowNonce}$}{\textsf{SUBWINDOWS}($\mi{tree}$, \textsf{AUXWINDOW}($\mi{tree}$, $\mi{docnonce}$))$.1.\str{nonce}$}
  \Let{$\mi{pattern}$}{$\an{\tPostMessage,\mi{fwdWindowNonce},\an{s'.\str{FWDDomain}, \https},\str{ready}}$}
  \Let{$\mi{input}$}{\textsf{CHOOSEINPUT}($\mi{scriptinputs},\mi{pattern}$)}
  \If{$\mi{input} \not\equiv \bot$}
   \Let{$\mi{command}$}{$\langle\tPostMessage$, $\mi{fwdWindowNonce}$, $\an{\str{tagKey},\mi{tagKey}}$, $\an{s'.\str{FWDDomain}, \https}\rangle$}
  \Let{$s'.\str{q}$}{$\str{expectEIA}$}
  \EndIf
 \EndCase

 \Case{$\str{expectEIA}$}
  \Let{$\mi{fwdWindowNonce}$}{\textsf{SUBWINDOWS}($\mi{tree}$, \textsf{AUXWINDOW}($\mi{tree}$, $\mi{docnonce}$))$.1.\str{nonce}$}
  \Let{$\mi{pattern}$}{$\an{\tPostMessage,\mi{fwdWindowNonce},\an{s'.\str{FWDDomain}, \https},\an{\str{eia},*}}$}
  \Let{$\mi{input}$}{\textsf{CHOOSEINPUT}($\mi{scriptinputs},\mi{pattern}$)}
  \If{$\mi{input} \not\equiv \bot$}
   \Let{$\mi{eia}$}{$\proj{2}{\proj{4}{\mi{input}}}$}
   \Let{$s'.\str{refXHR}$}{$\lambda_1$}
    \Let{$\mi{body}$}{$\an{\an{\str{eia}, \mi{eia}}, \an{\str{loginSessionToken}, s'.\str{loginSessionToken}}}$} \label{line:spresso-rp-script-assemble-login-body}
    \Let{$\mi{command}$}{$\an{\tXMLHTTPRequest,\textsf{URL}^{\mi{origin}.\str{domain}}_\str{/login},\mPost,\mi{body},s'.\str{refXHR}}$} \label{line:send-ia-to-r}
    \Let{$s'.\str{q}$}{$\str{expectServiceToken}$}
  \EndIf
 \EndCase
\EndSwitch
\State \textbf{stop} $\an{s',\mi{cookies},\mi{localStorage},\mi{sessionStorage},\mi{command}}$

\end{algorithmic} \setlength{\parindent}{1em}

\subsubsection{Relying Party Redirection Page (script\_rp\_redir).}\label{app:spresso-script-rp-redir}
This simple script (which is loaded from RP in a regular run) is used
to redirect the login dialog window to the actual login dialog
(IdPdoc) loaded from IdP. It expects the URL of the page to which the
browser should be redirected in its initial (and only) state.

\captionof{algorithm}{\label{alg:spresso-script-rp-redir} Relation of $\mi{script\_rp\_redir}$ }
\begin{algorithmic}[1]
\Statex[-1] \textbf{Input:} $\langle\mi{tree}$, $\mi{docnonce}$, $\mi{scriptstate}$, $\mi{scriptinputs}$, $\mi{cookies}$, $\mi{localStorage}$, $\mi{sessionStorage}$,\breakalgohook{-1}$\mi{ids}$, $\mi{secret}\rangle$
\Let{$\mi{command}$}{$\an{\tHref,\mi{scriptstate},\bot,\True}$} \label{line:redir-to-idpdoc}
\State \textbf{stop} $\an{\mi{scriptstate},\mi{cookies},\mi{localStorage},\mi{sessionStorage},\mi{command}}$
\end{algorithmic} \setlength{\parindent}{1em}

\subsubsection{Login Dialog Script (script\_idp).}\label{app:spresso-script-idp}
This script models the contents of the login dialog.

\begin{definition}\label{def:scriptstateidp}
  A \emph{scriptstate $s$ of $\mi{script\_idp}$} is a term of the form
  $\langle q$, $\mi{email} \rangle$ with $q \in
  \mathbb{S}$, $\mi{email} \in \IDs \cup \{\an{}\} \in \gterms$. We call the
  scriptstate $s$ an \emph{initial scriptstate} of $\mi{script\_idp}$
  iff $s \sim \an{\str{start},*}$.
\end{definition}

We now formally specify the relation $\mi{script\_idp}$ of the
LD's scripting process. 

\captionof{algorithm}{\label{alg:spresso-script-idp} Relation of $\mi{script\_idp}$ }
\begin{algorithmic}[1]
\Statex[-1] \textbf{Input:} $\langle\mi{tree}$, $\mi{docnonce}$, $\mi{scriptstate}$, $\mi{scriptinputs}$, $\mi{cookies}$, $\mi{localStorage}$, $\mi{sessionStorage}$,\breakalgohook{-1}$\mi{ids}$, $\mi{secret}\rangle$
\Let{$s'$}{$\mi{scriptstate}$}
\Let{$\mi{command}$}{$\an{}$}
\Let{$\mi{origin}$}{$\mathsf{GETORIGIN}(\mi{tree},\mi{docnonce})$}

\Switch{$s'.\str{q}$}
 \Case{$\str{start}$}
  \Let{$\mi{email}$}{\textsf{GETPARAMETERS}$(\mi{tree}, \mi{docnonce})[\str{email}]$}
  \Let{$\mi{tag}$}{\textsf{GETPARAMETERS}$(\mi{tree}, \mi{docnonce})[\str{tag}]$}
  \Let{$\mi{FWDDomain}$}{\textsf{GETPARAMETERS}$(\mi{tree}, \mi{docnonce})[\str{FWDDomain}]$}
  \Let{$\mi{body}$}{$\langle\an{\str{email}, \mi{email}}, \an{\str{password}, \mi{secret}},  \an{\str{tag}, \mi{tag}}, \an{\str{FWDDomain}, \mi{FWDDomain}}\rangle$}
  \Let{$\mi{command}$}{$\an{\tXMLHTTPRequest,\textsf{URL}^{\mi{origin}.\str{domain}}_\str{/sign},\mPost,\mi{body}, \bot}$} \label{line:send-to-idp}
  \Let{$s'.\str{q}$}{$\str{expectIA}$}
 \EndCase

 \Case{$\str{expectIA}$}
  \Let{$\mi{pattern}$}{$\an{\tXMLHTTPRequest,*,*}$}
  \Let{$\mi{input}$}{\textsf{CHOOSEINPUT}($\mi{scriptinputs},\mi{pattern}$)}
  \If{$\mi{input} \not\equiv \bot$}
  \Let{$\mi{iaKey}$}{\textsf{GETPARAMETERS}$(\mi{tree}, \mi{docnonce})[\str{iaKey}]$}
  \Let{$\mi{FWDDomain}$}{\textsf{GETPARAMETERS}$(\mi{tree}, \mi{docnonce})[\str{FWDDomain}]$}
  \Let{$\mi{tag}$}{\textsf{GETPARAMETERS}$(\mi{tree}, \mi{docnonce})[\str{tag}]$}
   \Let{$\mi{eia}$}{$\encs{\proj{2}{\mi{input}}}{\mi{iaKey}}$}
   \Let{$\mi{url}$}{$\an{\tUrl, \https, \mi{FWDDomain}, \str{/}, \an{\an{\str{tag}, \mi{tag}}, \an{\str{eia}, \mi{eia}}}}$} \label{line:send-to-fwd}
   \Let{$\mi{command}$}{$\an{\tIframe,\mi{url},\wSelf}$}
   \Let{$s'.\str{q}$}{$\str{stop}$}
  \EndIf
 \EndCase
\EndSwitch
\State \textbf{stop} $\an{s',\mi{cookies},\mi{localStorage},\mi{sessionStorage},\mi{command}}$
\end{algorithmic} \setlength{\parindent}{1em}

\subsubsection{Forwarder Script (script\_fwd).}\label{app:spresso-script-fwd}

\begin{definition}\label{def:scriptstatetp}
  A \emph{scriptstate $s$ of $\mi{script\_fwd}$} is a term
  of the form $ q $ with $q \in \mathbb{S}$. 
  We call $s$ the \emph{initial scriptstate of
  $\mi{script\_fwd}$} iff $s \equiv
  \str{start}$.
\end{definition}

We now formally specify the relation $\mi{script\_rp\_index}$ of the
FWD's scripting process. 

\captionof{algorithm}{\label{alg:spresso-script-fwd} Relation of $\mi{script\_fwd}$}
\begin{algorithmic}[1]
\Statex[-1] \textbf{Input:} $\langle\mi{tree}$, $\mi{docnonce}$, $\mi{scriptstate}$, $\mi{scriptinputs}$, $\mi{cookies}$, $\mi{localStorage}$, $\mi{sessionStorage}$,\breakalgohook{-1}$\mi{ids}$, $\mi{secret}\rangle$
\Let{$s'$}{$\mi{scriptstate}$}
\Let{$\mi{command}$}{$\an{}$}
\Let{$\mi{target}$}{\textsf{OPENERWINDOW}$(\mi{tree}, $\textsf{PARENTWINDOW}$(\mi{tree}, \mi{docnonce}))$}

\Switch{$s'.\str{q}$}
 \Case{$\str{start}$}
  \Let{$\mi{command}$}{$\langle\tPostMessage$, $\mi{target}$, $\str{ready}$, $\bot\rangle$}
  \Let{$s'.\str{q}$}{$\str{expectTagKey}$}
 \EndCase
 \Case{$\str{expectTagKey}$}
  \Let{$\mi{pattern}$}{$\an{\tPostMessage,\mi{target},*,\an{\str{tagKey},*}}$}
  \Let{$\mi{input}$}{\textsf{CHOOSEINPUT}($\mi{scriptinputs},\mi{pattern}$)}
   \If{$\mi{input} \not\equiv \bot$}
    \Let{$\mi{tagKey}$}{$\proj{2}{\proj{4}{\mi{input}}}$}
    \Let{$\mi{tag}$}{\textsf{GETPARAMETERS}$(\mi{tree}, \mi{docnonce})[\str{tag}]$}
    \Let{$\mi{eia}$}{\textsf{GETPARAMETERS}$(\mi{tree}, \mi{docnonce})[\str{eia}]$}
    \Let{$\mi{rpOrigin}$}{$\an{\decs{\mi{tag}}{\mi{tagKey}}.1, \https}$}
    \Let{$\mi{command}$}{$\an{\tPostMessage, \mi{target}, \an{\str{eia},\mi{eia}}, \mi{rpOrigin}}$}
    \Let{$s'.\str{q}$}{$\str{stop}$}
   \EndIf
 \EndCase
\EndSwitch

\State \textbf{stop} $\an{s',\mi{cookies},\mi{localStorage},\mi{sessionStorage},\mi{command}}$
\end{algorithmic} \setlength{\parindent}{1em}

\section{Formal Security Properties Regarding
  Authentication}\label{app:form-secur-prop}
 To state the security properties for \spresso, we first
define an \emph{\spresso web system for authentication analysis}. This
web system is based on the \spresso web system and only considers one
network attacker (which subsumes all web attackers and further network
attackers).

\begin{definition}
  Let
  $\spressoauthwebsystem = (\bidsystem, \scriptset, \mathsf{script},
  E^0)$
  an \spresso web system. We call $\spressoauthwebsystem$ an
  \emph{\spresso web system for authentication analysis} iff
  $\bidsystem$ contains only one network attacker process
  $\fAP{attacker}$ and no other attacker processes (i.e.,
  $\mathsf{Net} = \{\fAP{attacker}\}$, $\mathsf{Web} = \emptyset$).
  Further, $\bidsystem$ contains no DNS servers. DNS servers are
  assumed to be dishonest, and hence, are subsumed by
  $\fAP{attacker}$. In the initial state $s_0^b$ of each browser $b$
  in $\bidsystem$, the DNS address is
  $\mapAddresstoAP(\fAP{attacker})$. Also, in the initial state
  $s_0^r$ of each relying party $r$, the DNS address is
  $\mapAddresstoAP(\fAP{attacker})$.
\end{definition}

The security properties for \spresso are formally defined as follows.
First note that every RP service token $\an{n,i}$ recorded in RP was
created by RP as the result of an HTTPS $\mPost$ request $m$. We refer
to $m$ as the \emph{request corresponding to $\an{n,i}$}.

\begin{definition}\label{def:spresso-security-property} Let $\spressoauthwebsystem$ be an \spresso web
  system for authentication analysis. We say that
  \emph{$\spressoauthwebsystem$ is secure} if for every run $\rho$ of
  $\spressoauthwebsystem$, every state $(S^j, E^j, N^j)$ in $\rho$,
  every $r\in \fAP{RP}$ that is honest in $S^j$ with
  $s_0(r).\str{FWDDomain}$ being a domain of an FWD that is honest in
  $S^j$, every RP service token of the form $\an{n,i}$ recorded in
  $S^j(r).\str{serviceTokens}$, the following two conditions are
  satisfied:

  \textbf{(A)} If $\an{n,i}$ is derivable from the attackers knowledge
  in $S^j$ (i.e., $\an{n,i} \in d_{\emptyset}(S^j(\fAP{attacker}))$),
  then it follows that the browser $b$ owning $i$ is fully corrupted
  in $S^j$ (i.e., the value of $\mi{isCorrupted}$ is $\fullcorrupt$)
  or $\mapGovernor(i)$ is not an honest IdP (in $S^j$).

  \textbf{(B)} If the request corresponding to $\an{n,i}$ was sent by
  some $b\in \fAP{B}$ which is honest in $S^j$, then $b$ owns $i$.
\end{definition}

\section{Proof of Theorem~\ref{thm:authentication}}
\label{app:proof-spresso}

Before we prove Theorem~\ref{thm:authentication}, we show some general properties of the $\spressoauthwebsystem$.

\subsection{Properties of $\spressoauthwebsystem$}

Let $\spressoauthwebsystem = (\bidsystem, \scriptset, \mathsf{script}, E^0)$
be a web system. In the following, we write $s_x = (S^x,E^x,N^x)$ for the
states of a web system.\gs{hint: the following generic properties also apply for the generic \spresso model.}

\begin{definition}\label{def:emitting}
  In what follows, given an atomic process $p$ and a message $m$, we
  say that \emph{$p$ emits $m$} in a run $\rho=(s_0,s_1,\ldots)$ if
  there is a processing step  of the form
  \[ s_{u-1} \xrightarrow[p \rightarrow E]{} s_{u}\] for some $u \in
  \mathbb{N}$, a set of events $E$ and some addresses $x$, $y$ with
  $\an{x,y,m} \in E$.
\end{definition}

\begin{definition}\label{def:contains}
  We say that a term $t$ \emph{is derivably contained in (a term) $t'$
    for (a set of DY processes) $P$ (in a processing step $s_i
    \rightarrow s_{i+1}$ of a run $\rho=(s_0,s_1,\ldots)$)} if $t$ is
  derivable from $t'$ with the knowledge available to $P$, i.e.,
  \begin{align*}
    t \in d_{\emptyset}(\{t'\} \cup \bigcup_{p\in P}S^{i+1}(p))
  \end{align*}

\end{definition}

\begin{definition}\label{def:leak}
  We say that \emph{a set of processes $P$ leaks a term $t$ (in a
    processing step $s_i \rightarrow s_{i+1}$) to a set of processes
    $P'$} if there exists a message $m$ that is emitted (in $s_i
  \rightarrow s_{i+1}$) by some $p \in P$ and $t$ is derivably
  contained in $m$ for $P'$ in the processing step $s_i \rightarrow
  s_{i+1}$. If we omit $P'$, we define $P' := \bidsystem \setminus
  P$. If $P$ is a set with a single element, we omit the set notation.
\end{definition}

\begin{definition}\label{def:creating}
  We say that an DY process $p$ \emph{created} a message $m$ (at
  some point) in a run if $m$ is derivably contained in a message
  emitted by $p$ in some processing step and if there is no earlier
  processing step where $m$ is derivably contained in a message
  emitted by some DY process $p'$.
\end{definition}

\begin{definition}\label{def:accepting}
  We say that \emph{a browser $b$ accepted} a message (as a response
  to some request) if the browser decrypted the message (if it was an
  HTTPS message) and called the function $\mathsf{PROCESSRESPONSE}$,
  passing the message and the request (see
  Algorithm~\ref{alg:processresponse}).
\end{definition}

\begin{definition}\label{def:rp-accepting}
  In a similar fashion, we say that \emph{an RP $r$ accepted} a
  message (as a response to some request) if the RP decrypted the
  message (RPs can only accept HTTPS messages) and added the message's
  body to the $\str{wkCache}$ in its state (i.e.,
  Line~\ref{line:rp-accept} of Algorithm~\ref{alg:rp-spresso} was called).
\end{definition}

\begin{definition}\label{def:knowing}
  We say that an atomic DY process \emph{$p$ knows a term $t$} in some
  state $s=(S,E,N)$ of a run if it can derive the term from its
  knowledge, i.e., $t \in d_{\emptyset}(S(p))$.
\end{definition}

\begin{definition}\label{def:initiating}
  We say that a \emph{script initiated a request $r$} if a browser
  triggered the script (in Line~\ref{line:trigger-script} of
  Algorithm~\ref{alg:runscript}) and the first component of the
  $\mi{command}$ output of the script relation is either $\tHref$, 
  $\tIframe$, $\tForm$, or $\tXMLHTTPRequest$ such that the browser
  issues the request $r$ in the same step as a result.
\end{definition}

For a run $\rho = (s_0, s_1,\dots)$ of $\spressoauthwebsystem$, we state the
following lemmas:

\begin{lemma}\label{lemma:k-does-not-leak-from-honest-rp} 
  If in the processing step $s_i \rightarrow s_{i+1}$ of a run $\rho$
  of $\spressoauthwebsystem$ an honest relying party $r$ (I) emits an HTTPS
  request of the form

  \[ m = \ehreqWithVariable{\mi{req}}{k}{\pub(k')} \]
  (where $\mi{req}$ is an HTTP request, $k$ is a nonce (symmetric
  key), and $k'$ is the private key of some other DY process $u$), and (II) in the
  initial state $s_0$ the private key $k'$ is only known to $u$, and
  (III) $u$ never leaks $k'$, then all of the following
  statements are true:
  \begin{enumerate}
  \item There is no state of $\spressoauthwebsystem$ where any party except
    for $u$ knows $k'$, thus no one except for $u$ can
    decrypt $\mi{req}$.
    \label{prop:attacker-cannot-decrypt-spresso}
  \item If there is a processing step $s_j \rightarrow s_{j+1}$ where
    the RP $r$ leaks $k$ to $\bidsystem \setminus \{u, r\}$ there
    is a processing step $s_h \rightarrow s_{h+1}$ with $h < j$
    where $u$ leaks the symmetric key $k$ to $\bidsystem \setminus
    \{u,r\}$ or $r$ is corrupted in
    $s_j$. \label{prop:k-doesnt-leak-spresso}
  \item The value of the host header in $\mi{req}$ is the domain that
    is assigned the public key $\pub(k')$ in RP's keymapping
    $s_0.\str{keyMapping}$ (in its initial
    state). \label{prop:host-header-matches-spresso}
  \item If $r$ accepts a response (say, $m'$) to $m$ in a processing step $s_j
    \rightarrow s_{j+1}$ and $r$ is honest in $s_j$ and $u$ did not
    leak the symmetric key $k$ to $\bidsystem \setminus \{u,r\}$ prior
    to $s_j$, then $u$ created the HTTPS response $m'$ to the HTTPS
    request $m$, i.e., the nonce of the HTTP request $\mi{req}$ is not known to
    any atomic process $p$, except for the atomic DY processes $r$ and
    $u$.\label{prop:only-owner-answers-spresso}
  \end{enumerate}
\end{lemma}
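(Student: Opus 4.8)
The plan is to establish the four statements essentially in the order given, each by induction over the processing steps of the run $\rho$, relying on three ingredients: (a) the freshness of the symmetric key $k$ and of $\mi{req}.\str{nonce}$ at the step $s_i\to s_{i+1}$ in which $m$ is emitted, so that before that step no process knows either value; (b) the equational theory, which forces possession of $k'$ in order to extract anything from $\enc{\cdot}{\pub(k')}$ and possession of $k$ in order to produce or open $\encs{\cdot}{k}$; and (c) a syntactic inspection of the honest relying-party relation $R^r$ (Algorithm~\ref{alg:rp-spresso}). Throughout I would invoke the generic web-model lemmas from the full version of \cite{FettKuestersSchmitz-ESORICS-BrowserID-Primary-2015}, in particular the standard fact that a nonce only ever spreads to a new process if some process that already knows it sends it in a derivably contained position.

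Statements~\ref{prop:attacker-cannot-decrypt-spresso} and~\ref{prop:host-header-matches-spresso} are the easy ones. For~\ref{prop:attacker-cannot-decrypt-spresso}: by assumption (II) only $u$ knows $k'$ in $s_0$, and by (III) $u$ never leaks $k'$; taking the first state in which some $p\neq u$ would know $k'$ and tracing how $p$ obtained it yields a contradiction, since the message it learned $k'$ from must have been emitted by a process already knowing $k'$, i.e.\ by $u$, contradicting (III). The decryption claim follows because the only equation producing $x$ from $\enc{x}{\pub(k')}$ requires $k'$. For~\ref{prop:host-header-matches-spresso}: inspecting $R^r$, the only place an honest $r$ emits an encrypted request is the DNS-response branch, where it sends $\enc{\an{\mi{message},\nu_5}}{s'.\str{keyMapping}[\mi{message}.\str{host}]}$, and $r$ never overwrites $s'.\str{keyMapping}$, so it still equals $s_0^r.\str{keyMapping}$; matching this with $m=\enc{\an{\mi{req},k}}{\pub(k')}$ and using injectivity of $\mapSSLKey$ gives that $\mi{req}.\str{host}$ is exactly the domain that $s_0^r.\str{keyMapping}$ assigns $\pub(k')$.

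Statements~\ref{prop:k-doesnt-leak-spresso} and~\ref{prop:only-owner-answers-spresso} are the substantive ones and share a key confinement argument. Since $k$ is drawn fresh at $s_i\to s_{i+1}$, I would prove by induction over the subsequent steps that in every state only $r$ and $u$ can know $k$, unless one of the two has leaked it: by inspection of $R^r$ the only occurrences of $k=\nu_5$ in messages emitted by an honest $r$ are inside $\enc{\cdot}{\pub(k')}$ (it is stored in $\mi{pendingRequests}$ solely to decrypt the matching response), which by Statement~\ref{prop:attacker-cannot-decrypt-spresso} nobody but $u$ can open, and $r$ performs no other derivation exporting $k$. Hence if $r$ leaks $k$ to $\bidsystem\setminus\{u,r\}$ at $s_j$ while honest there, the receivers must already have known $k$, which by the induction is possible only if $u$ leaked $k$ earlier — this is~\ref{prop:k-doesnt-leak-spresso}. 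For~\ref{prop:only-owner-answers-spresso}: if $r$ accepts $m'$ as a response to $m$, then by $R^r$ we must have $m'=\encs{m'_{\text{dec}}}{k}$ with $m'_{\text{dec}}$ an HTTP response whose nonce equals $\mi{req}.\str{nonce}$; letting $p$ be the creator of $m'$, and noting that $\encs{\cdot}{k}$ cannot be derived without $k$, the hypothesis that $u$ did not leak $k$ before $s_j$ together with~\ref{prop:k-doesnt-leak-spresso} and the honesty of $r$ confines $k$ to $\{r,u\}$ up to $s_j$, forcing $p\in\{r,u\}$; inspection of $R^r$ rules out $p=r$ (an honest $r$ never emits a term $\encs{\cdot}{k}$ with an HTTP response inside and $k$ one of its own pending-request keys), so $p=u$. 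The restatement that $\mi{req}.\str{nonce}$ is known only to $r$ and $u$ follows the same pattern, as this nonce is fresh at $s_i$ and appears in honest-$r$ output only under $\enc{\cdot}{\pub(k')}$.

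The main obstacle is the bookkeeping behind the confinement invariant used in~\ref{prop:k-doesnt-leak-spresso} and~\ref{prop:only-owner-answers-spresso}: one needs an exhaustive, careful case distinction over every branch of $R^r$ (including the corruption branch, which is excluded by the honesty hypotheses) to be sure that an honest RP never exports $k$ or the request nonce in readable form and never manufactures a response to one of its own outstanding requests, and this must be combined correctly with the generic Dolev-Yao secrecy reasoning covering the attacker and all other honest processes. The remaining steps are routine applications of the equational theory and of the already-cited web-model lemmas.
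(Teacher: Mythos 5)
Your proposal is correct and follows essentially the same route as the paper's own proof: statement (1) from the non-derivability of $k'$, statement (3) from the fact that the encryption key is selected via the host header from the never-modified $\mi{keyMapping}$ (Line~\ref{line:select-enc-key-spresso} of Algorithm~\ref{alg:rp-spresso}), and statements (2) and (4) from the confinement of the fresh key $k$ to $\mi{pendingRequests}$, where it is only checked against incoming responses and never exported in readable form, combined with the observation that an honest RP never encrypts a response under one of its own pending-request keys. Your explicit induction over processing steps merely makes precise the bookkeeping the paper leaves implicit; no substantive difference.
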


\begin{proof} 

  \textbf{(\ref{prop:attacker-cannot-decrypt-spresso})} follows
  immediately from the condition. If $k'$ is initially only known to
  $u$ and $u$ never leaks $k'$, i.e., even with the knowledge of all
  nonces (except for those of $u$), $k'$ can never be derived from any network
  output of $u$, $k'$ cannot be known to any other party. Thus, nobody
  except for $u$ can derive $\mi{req}$ from $m$.

  \textbf{(\ref{prop:k-doesnt-leak-spresso})} 
  We assume that $r$ leaks $k$ to $\bidsystem \setminus \{u,r\}$ in
  the processing step $s_j \rightarrow s_{j+1}$ without $u$ prior
  leaking the key $k$ to anyone except for $u$ and $r$ and that the
  RP is not fully corrupted in $s_j$, and lead this to a contradiction.

  The RP is honest in $s_i$. From the definition of the RP, we see
  that the key $k$ is always a fresh nonce that is not used anywhere
  else. Further, the key is stored in $\mi{pendingRequests}$. The
  information from $\mi{pendingRequests}$ is not extracted or used
  anywhere else, except when handling the received messages, where it
  is only checked against. Hence, $r$ does not leak $k$ to any other
  party in $s_j$ (except for $u$ and $r$). This proves
  (\ref{prop:k-doesnt-leak-spresso}).

  \textbf{(\ref{prop:host-header-matches-spresso})} Per the definition
  of RPs (Algorithm~\ref{alg:rp-spresso}), a host header is always
  contained in HTTP requests by RPs. From
  Line~\ref{line:select-enc-key-spresso} of Algorithm~\ref{alg:rp-spresso}
  we can see that the encryption key for the request $\mi{req}$ was
  chosen using the host header of the message. It is chosen from the
  $\mi{keyMapping}$ in RP's state, which is never changed during
  $\rho$. This proves (\ref{prop:host-header-matches-spresso}).

  \textbf{(\ref{prop:only-owner-answers-spresso})} An HTTPS response
  $m'$ that is accepted by $r$ as a response to $m$ has to be
  encrypted with $k$. The nonce $k$ is stored by the RP in the
  $\mi{pendingRequests}$ state information. The RP only stores freshly
  chosen nonces there (i.e., the nonces are not used twice, or for
  other purposes than sending one specific request). The information
  cannot be altered afterwards (only deleted) and cannot be read
  except when the browser checks incoming messages. The nonce $k$ is
  only known to $u$ (which did not leak it to any other party prior to
  $s_j$) and $r$ (which did not leak it either, as $u$ did not leak it
  and $r$ is honest, see (\ref{prop:k-doesnt-leak-spresso})). The RP $r$
  cannot send responses that are encrypted by symmetric encryption
  keys used for outgoing HTTPS requests (all encryption keys used for
  encrypting responses are taken from the matching HTTPS requests and
  never from $\mi{pendingRequests}$). This proves
  (\ref{prop:only-owner-answers-spresso}). \qed
\end{proof}

\begin{lemma}\label{lemma:wkcache-never-lies}
  For every honest relying party $r \in \fAP{RP}$, every $s \in \rho$, every
  $\an{\mi{host}, \mi{wkDoc}} \inPairing S(r).\str{wkCache}$ it holds
  that $\mi{wkDoc}[\str{signkey}] \equiv
  \pub(\mathsf{signkey}(\mapDomain^{-1}(\mi{host})))$ if
  $\mapDomain^{-1}(\mi{host})$ is an honest IdP.
\end{lemma}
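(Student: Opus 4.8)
\textbf{Proof plan for Lemma~\ref{lemma:wkcache-never-lies}.}

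The plan is to argue by induction on the processing steps of the run $\rho$, tracking how entries enter the dictionary $S(r).\str{wkCache}$ of an honest relying party $r$. In the initial state $s_0$, by Definition~\ref{def:relying-parties} we have $s^r_0.\str{wkCache} = \an{}$, so the claim holds vacuously. For the inductive step, inspecting Algorithm~\ref{alg:rp-spresso} shows that $S(r).\str{wkCache}$ is only ever written to in Line~\ref{line:rp-populate-wkcache}, i.e.\ in the branch handling an encrypted HTTP response matching a pending request stored in $S(r).\str{pendingRequests}$. There, the key $\mi{request}.\str{host}$ is set to the host of the pending request $\mi{request}$ and the value is set to $m'.\str{body}$, where $m'$ is the decrypted response. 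So it suffices to show: whenever $r$ accepts such a response $m'$ to a pending request $\mi{request}$ whose host is $\mi{host}$ with $\mapDomain^{-1}(\mi{host})$ an honest IdP, then $m'.\str{body}[\str{signkey}] \equiv \pub(\mathsf{signkey}(\mapDomain^{-1}(\mi{host})))$.

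First I would establish that any pending request of $r$ that gets a host header equal to such an $\mi{host}$ must be the support-document request created in Line~\ref{line:rp-send-request} of Algorithm~\ref{alg:rp-spresso}: this is the only place $r$ sends outgoing HTTPS requests (other than none), and it sends a $\mGet$ to path $\str{/.well\mhyphen{}known/spresso\mhyphen{}info}$ at the domain extracted from the user's email address. I would then apply Lemma~\ref{lemma:k-does-not-leak-from-honest-rp} with $u := \mapDomain^{-1}(\mi{host})$ and $k' := \mathsf{sslkey}(\mi{host})$: since DNS is controlled by the attacker but the request is an HTTPS request encrypted under $\pub(k')$, and since an honest IdP never leaks its SSL private key (the IdP relation of Algorithm~\ref{alg:idp-spresso} only ever uses $k'$ to decrypt, never emits it), property~(\ref{prop:only-owner-answers-spresso}) of that lemma tells us that the accepted response $m'$ was in fact created by the honest IdP $u$ itself. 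Property~(\ref{prop:host-header-matches-spresso}) ensures the host header indeed matches the domain whose key was used, tying $\mi{host}$ to $\mapDomain(u)$.

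Finally, I would inspect what an honest IdP returns for a $\mGet$ to $\str{/.well\mhyphen{}known/spresso\mhyphen{}info}$: in Algorithm~\ref{alg:idp-spresso}, Line~\ref{line:idp-response-wk}, it responds with body $\mi{wkDoc} = \an{\an{\str{signkey},\pub(s'.\str{signkey})}}$, and $s'.\str{signkey} = \mapSignKey(u)$ is never altered across the run (the IdP state only changes $\mi{sessions}$ and $\mi{corrupt}$). Hence $m'.\str{body}[\str{signkey}] \equiv \pub(\mapSignKey(u)) = \pub(\mathsf{signkey}(\mapDomain^{-1}(\mi{host})))$, which is exactly what the entry stored in $\str{wkCache}$ records, completing the induction. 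The main obstacle is the careful bookkeeping in the first half of step two: one must verify that $\mi{pendingRequests}$ of an honest $r$ contains only the freshly created support-document requests (with their symmetric keys unused elsewhere) so that Lemma~\ref{lemma:k-does-not-leak-from-honest-rp} genuinely applies, and that no attacker-injected response can be accepted in its place — this is where the honesty of the IdP (and the secrecy of its SSL key) is essential, and where a corrupted IdP would break the statement, matching the lemma's hypothesis.
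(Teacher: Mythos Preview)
Your proposal is correct and follows essentially the same route as the paper: identify Line~\ref{line:rp-populate-wkcache} as the only write to $\str{wkCache}$, trace the stored request back to the support-document $\mGet$ in Line~\ref{line:rp-send-request}, invoke Lemma~\ref{lemma:k-does-not-leak-from-honest-rp} to conclude the honest IdP created the response, and read off the body from Line~\ref{line:idp-response-wk} of Algorithm~\ref{alg:idp-spresso} together with Definition~\ref{def:initial-state-idp}. The explicit induction framing and the remark about verifying that $\str{pendingRequests}$ only holds these fresh requests are a bit more careful than the paper's presentation, but the argument is the same.
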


\begin{proof}
  First, we can see that (in an honest RP) $S(r).\str{wkCache}$ can
  only be populated in Line~\ref{line:rp-populate-wkcache} (of
  Algorithm~\ref{alg:rp-spresso}). There, the body of a received message
  $m'$ is written to $S(r).\str{wkCache}$. From
  Line~\ref{line:rp-https-response} we can see that $m'$ is the
  response to a HTTPS message that was sent by $r$. Only in
  Lines~\ref{line:rp-send-request}ff., $r$ can assemble
  (and later sent) such requests.

  All such requests are sent to the path
  $\str{/.well\mhyphen{}known/spresso\mhyphen{}info}$. As the original request was stored
  in $\str{pendingRequests}$, in Line~\ref{line:rp-populate-wkcache},
  we know that $\mi{request}.\str{host}$ is the domain the original
  request was encrypted for and finally sent to.

  With the condition of this lemma we see that
  $\mapDomain^{-1}(\mi{request}.\str{host})$ is an honest IdP, say, $p$.
  Lemma~\ref{lemma:k-does-not-leak-from-honest-rp} applies here and we
  can see that $p$ created the HTTPS response, and it was not altered
  by any other party.
  In Algorithm~\ref{alg:idp-spresso} we can see that an honest IdP
  responds to requests to the path $\str{/.well\mhyphen{}known/spresso\mhyphen{}info}$ in
  Line~\ref{line:idp-response-wk}ff. Here, $p$ constructs a document
  $\mi{wkDoc}$ and sends this document in the body of the HTTPS
  response.
  This document is of the following form: $\an{\an{\str{signkey}, \pub(s'.\str{signkey})}}$. The
  term $s'.\str{signkey}$ is defined in
  Definition~\ref{def:initial-state-idp} to be $\mathsf{signkey}(p)$
  and is never changed in Algorithm~\ref{alg:idp-spresso}.

  Therefore, a pairing of the form
  $\an{\mi{request}.\str{host}, x}$ with $x[\str{signkey}]\equiv
  \pub(\mathsf{signkey}(\mapDomain^{-1}(\mi{request}.\str{host})))$ is
  stored in $S(r).\str{wkCache}$. As this applies to all pairings in $S(r).\str{wkCache}$,
  this proves the lemma.

 \qed
\end{proof}

\begin{definition}
  For every service token $\an{n,i}$ we define a \emph{service token
    response for $\an{n,i}$} to be an HTTPS response where the value
  $n$ is contained in the body of the message. A \emph{service token
    request for $\an{n,i}$} is an HTTPS request that triggered the
  service token response for $\an{n,i}$.
\end{definition}

\begin{lemma}\label{lemma:spresso-request-exists}
  In a run $\rho$ of $\spressoauthwebsystem$, for every state $s_j \in
  \rho$, every RP $r \in \fAP{RP}$ that is honest in $s_j$, every
  $\an{n,i} \inPairing S^j(r).\str{serviceTokens}$, the following
  properties hold:

  \begin{enumerate}
  \item There exists exactly one $l' < j$ such that there exists a
    processing step in $\rho$ of the form
    \[ s_{l'} \xrightarrow[r \rightarrow \an{\an{a',f',m'}}]{e'
      \rightarrow r} s_{l'+1}\]
    with $e'$ being some events, $a'$ and $f'$
    being addresses and $m'$ being a service token response for $\an{n,i}$.

  \item There exists exactly one $l < j$ such that there exists a
    processing step in $\rho$ of the form 
    \[ s_{l} \xrightarrow[r \rightarrow e]{\an{a,f,m} \rightarrow r}
    s_{l+1} \] with $e$ being some events, $a$ and $f$ being
    addresses and $m$ being a service token request for $\an{n,i}$.

  \item The processing steps from (1) and (2) are the same, i.e., $l = l'$.

  \item The service token request for $\an{n,i}$, $m$ in (2), is an HTTPS message of the following form:
    \[ \mathsf{enc}_\mathsf{a}(\langle \hreq{ 
      nonce=n_\text{req}, 
      method=\mPost,
      xhost=d_r, 
      path=\str{/login}, 
      parameters=x, 
      headers=h,
      xbody=b}, k\rangle, \pub(\mapSSLKey(d_r))) \]
    for $d_r \in \mapDomain(r)$, some terms $x$, $h$, $n_\text{req}$, and a dictionary $b$ such that 
    \[ b[\str{eia}] \equiv \encs{\sig{\an{\mi{tag}, i, S(r).\str{FWDDomain}}}{k_\text{sign}}}{\mi{iaKey}} \]
    with 
    \[ \mi{tag} \equiv \encs{\an{d_r, n_\text{rp}}}{\mi{tagKey}}, \]
    \[ i \equiv S^l(r).\str{loginSessions}[b[\str{loginSessionToken}]].\str{email}, \]
    \[ \mi{tag} \equiv S^l(r).\str{loginSessions}[b[\str{loginSessionToken}]].\str{tag}, \]
    \[ \mi{iaKey} \equiv S^l(r).\str{loginSessions}[b[\str{loginSessionToken}]].\str{iaKey} \]
    for some nonces $n_\text{rp}$, and $k_\text{sign}$.
  \item If the governor of $i$ is an honest IdP, we have that $k_\text{sign} = \mathsf{signkey}(\mapGovernor(i))$.
  \end{enumerate}
\end{lemma}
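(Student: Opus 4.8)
The plan is to prove Lemma~\ref{lemma:spresso-request-exists} by working backwards from the fact that a service token $\an{n,i}$ appears in $S^j(r).\str{serviceTokens}$, tracing which code paths in Algorithm~\ref{alg:rp-spresso} could have produced it. First I would establish parts (1)--(3) together: inspecting Algorithm~\ref{alg:rp-spresso}, the only line that appends to $\str{serviceTokens}$ is Line~\ref{line:store-service-token}, inside the branch handling a $\mPost$ request to $\str{/login}$ (Line~\ref{line:check-path-method-rp}ff.). In that same processing step the RP emits exactly one HTTPS response whose body contains $n$ (Line~\ref{line:send-service-token}), namely a service token response for $\an{n,i}$, and this step is triggered by exactly the incoming message $m$ being processed, which is a service token request for $\an{n,i}$. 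Uniqueness of $l=l'$ follows because $n=\nu_7$ is a freshly chosen nonce (Line~\ref{line:choose-service-token}), so it cannot have been placed in a response in any other processing step, and the request/response pair is atomic. I would also note that since $r$ is honest in $S^j$ and corruption is irreversible in the model, $r$ is honest in all $S^x$ for $x \le j$, so the honest algorithm genuinely governs all these steps.

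Next, for part (4), I would unfold the $\str{/login}$ branch of Algorithm~\ref{alg:rp-spresso} step by step. The RP only accepts HTTPS requests (the outer decryption at Line~\ref{line:receive-https-request}), so $m$ has the stated encrypted form with host $d_r \in \mapDomain(r)$ and the SSL key $\mapSSLKey(d_r)$ (which is fixed in the initial state and never changed). The branch checks that $\mi{headers}[\str{Origin}]$ equals $\an{\mi{inDomain},\https}$ and that a $\str{loginSessionToken}$ is present in the body (Line~\ref{line:check-origin-header}), then looks up the corresponding login session record $\mi{loginSession} = S^l(r).\str{loginSessions}[b[\str{loginSessionToken}]]$, which must be non-empty. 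It decrypts $b[\str{eia}]$ with $\mi{loginSession}.\str{iaKey}$ to obtain $\mi{ia}$ (Line~\ref{line:check-service-token-request-contents}), forms $e = \an{\mi{loginSession}.\str{tag}, \mi{loginSession}.\str{email}, s'.\str{FWDDomain}}$, and requires $\checksigThree{e}{\mi{ia}}{s'.\str{wkCache}[\ldots][\str{signkey}]} \equiv \True$ (Line~\ref{line:rp-check-ia}). By the equational theory for $\checksigThree{\cdot}{\cdot}{\cdot}$, this forces $\mi{ia} = \sig{e}{k_\text{sign}}$ where $k_\text{sign}$ is the private key matching the public key stored in $\str{wkCache}$ for the domain $\mi{loginSession}.\str{email}.\str{domain}$. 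The service token $\an{n,i}$ is assembled in Line~\ref{line:assemble-service-token} as $\an{\nu_7, \mi{loginSession}.\str{email}}$, so $i = \mi{loginSession}.\str{email}$. It then remains to observe that any login session record was itself created (only) in Algorithm~\ref{alg:sendstartloginresponse}, where $\mi{tag} = \encs{\an{\mi{inDomain},\mi{rpNonce}}}{\mi{tagKey}}$ with $\mi{inDomain} \in \mapDomain(r)$ — so writing $d_r$ for $\mi{inDomain}$ and $n_\text{rp}$ for $\mi{rpNonce}$ gives the claimed shape of $\mi{tag}$. Collecting the equalities $i \equiv \mi{loginSession}.\str{email}$, $\mi{tag} \equiv \mi{loginSession}.\str{tag}$, $\mi{iaKey} \equiv \mi{loginSession}.\str{iaKey}$ yields the displayed identities for (4), with the $\str{FWDDomain}$ component being $S^l(r).\str{FWDDomain} = s_0(r).\str{FWDDomain}$ since that field is never modified.

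Finally, for part (5), I would invoke Lemma~\ref{lemma:wkcache-never-lies}: since the governor of $i$ is an honest IdP and $i$'s domain is $\mi{loginSession}.\str{email}.\str{domain}$, the entry $S^l(r).\str{wkCache}[\mi{loginSession}.\str{email}.\str{domain}][\str{signkey}]$ equals $\pub(\mathsf{signkey}(\mapGovernor(i)))$. Combined with the $\checksigThree{\cdot}{\cdot}{\cdot}$ check from part (4), this forces $k_\text{sign} = \mathsf{signkey}(\mapGovernor(i))$, because the only private key whose public counterpart appears in that slot is the IdP's signing key. I would need to be a little careful that $\str{wkCache}$ could only have been populated honestly — but that is exactly the content of Lemma~\ref{lemma:wkcache-never-lies}, which I may assume. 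I expect the main obstacle to be the bookkeeping in part (4): one must carefully track which state index ($l$ versus $j$ versus the earlier step creating the login session) each subterm is evaluated in, and verify that the relevant fields ($\str{FWDDomain}$, $\str{keyMapping}$, $\str{loginSessions}$ entries once written) are stable across the intervening steps, so that the equalities stated relative to $S^l(r)$ actually hold. This is routine case analysis over Algorithm~\ref{alg:rp-spresso} but tedious; everything else follows quickly from freshness of nonces and the equational theory.
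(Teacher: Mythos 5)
Your proposal is correct and follows essentially the same route as the paper's proof: uniqueness of the processing step via freshness of $\nu_7$ and the fact that Line~\ref{line:store-service-token} is the only place $\str{serviceTokens}$ is extended, the structure of the request by unfolding the $\str{/login}$ branch, and part (5) via Lemma~\ref{lemma:wkcache-never-lies}. The only difference is that you additionally trace the tag's shape back to its creation in $\mathsf{SENDSTARTLOGINRESPONSE}$, a detail the paper's proof of (4) leaves implicit behind the phrase that the values ``are checked''.
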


\begin{proof}
  \textbf{(1).} The service token nonce $n$ of service tokens
  $\an{n,i} \inPairing S^j(r).\str{serviceTokens}$ can only be
  contained in a response that is assembled in
  Lines~\ref{line:assemble-login-response}ff of
  Algorithm~\ref{alg:rp-spresso}. The $n$ is freshly chosen in
  Line~\ref{line:choose-service-token}, stored (along with the
  identity $i$) to $S^j(r).\str{serviceTokens}$ (actually to
  $S^q(r).\str{serviceTokens}$ for some $q \leq j$) in
  Line~\ref{line:store-service-token} and sent out in the service
  token response in Line~\ref{line:send-service-token}f. The service
  tokens stored in $S^j(r).\str{serviceTokens}$ are not used or
  altered anywhere else. Therefore, each service token nonce is sent
  in exactly one (service token) response.

  \noindent
  \textbf{(2).} From Line~\ref{line:assemble-login-response} of
  Algorithm~\ref{alg:rp-spresso} it is
  easy to see that each service token response is triggered by exactly
  one request.

  \noindent
  \textbf{(3).} Follows immediately from (2). 

  \noindent
  \textbf{(4).} The basic form of the encrypted HTTPS request, the
  host header, and the usage of the correct encryption key are
  enforced by Lines~\ref{line:receive-https-request}f. The $\mi{path}$
  component is checked to be $\str{/login}$ and the $\mi{method}$
  component is checked to be $\mPost$ in
  Line~\ref{line:check-path-method-rp}. The values of $b[\str{eia}]$, $\mi{i}$,
  $\mi{tag}$, and $\mi{iaKey}$ are checked in
  Lines~\ref{line:check-service-token-request-contents}ff.

  \noindent
  \textbf{(5).} In Line~\ref{line:rp-check-ia}, the term $\mi{ia}$ is
  checked to be signed with the signature key stored in
  $S^q(r).\str{wkCache}$ indexed under the domain of the email address
  $i$ (for some $q \leq j$). With
  Lemma~\ref{lemma:wkcache-never-lies}, we can see that for the domain
  of the email address $i$ this signature key is
  $\mathsf{signkey}(\mapDomain^{-1}(i.\str{domain}))$. With
  $\mapDomain^{-1}(i.\str{domain}) = \mapGovernor(i)$ we can see that
  $\mi{ia}$ must have been signed with the signature key of the honest
  IdP that governs the email address $i$. Further, in the same line,
  the contents of the signature, including the tag, are checked. 

  \qed
\end{proof}

\subsection{Property A}

As stated above, the Property A is defined as follows: 
\begin{definition}\label{def:spresso-security-property} Let $\spressoauthwebsystem$ be an \spresso web
  system for authentication analysis. We say that \emph{$\spressoauthwebsystem$ is secure (with respect to Property A)} if for every
  run $\rho$ of $\spressoauthwebsystem$, every state $(S^j, E^j, N^j)$ in $\rho$,
  every $r\in \fAP{RP}$ that is honest in $S^j$ with
  $S^0(r).\str{FWDDomain}$ being a domain of an FWD that is honest in
  $S^j$, every RP service token of the form $\an{n,i}$ recorded in
  $S^j(r).\str{serviceTokens}$ and derivable from the attackers
  knowledge in $S^j$ (i.e., $\an{n,i} \in
  d_{\emptyset}(S^j(\fAP{attacker}))$), it follows that
  the browser $b$ owning $i$ is fully corrupted in $S^j$ (i.e., the
  value of $\mi{isCorrupted}$ is $\fullcorrupt$) or $\mapGovernor(i)$
  is not an honest IdP (in $S^j$). 
\end{definition}

We want to show that every \spresso web system is secure with regard to
Property A and therefore assume that there exists an \spresso web system
that is not secure. We will lead this to a contradication and thereby
show that all \spresso web systems are secure (with regard to Property A).

In detail, we assume: \emph{There exists an \spresso web system
$\spressoauthwebsystem$, a run $\rho$ of $\spressoauthwebsystem$, a
state $s_j = (S^j, E^j, N^j)$ in $\rho$, a RP $r\in \fAP{RP}$ that is
honest in $S^j$ with $S^0(r).\str{FWDDomain}$ being a domain of an FWD
that is honest in $S^j$, an RP service token of the form $\an{n,i}$
recorded in $S^j(r).\str{serviceTokens}$ and derivable from the
attackers knowledge in $S^j$ (i.e., $\an{n,i} \in
d_{\emptyset}(S^j(\fAP{attacker}))$), and the browser $b$ owning $i$
is not fully corrupted and $\mapGovernor(i)$ is an honest IdP (in
$S^j$).}

We now proceed to to proof that this is a contradiction. First, we can
see that for $\an{n,i}$ and $s_j$, the conditions in
Lemma~\ref{lemma:spresso-request-exists} are fulfilled, i.e., a
service token request $m$ and a service token response $m'$ to/from
$r$ exist, and $m'$ is of form shown in
Lemma~\ref{lemma:spresso-request-exists} (4). Let $I :=
\mapGovernor(i)$. We know that $I$ is an honest IdP. As such, it never
leaks its signing key (see Algorithm~\ref{alg:idp-spresso}).
Therefore, the signed subterm $\mi{ia} := \sig{\an{\mi{tag}, i,
    S(r).\str{FWDDomain}}}{\mathsf{signkey}(I)}$ had to be created by
the IdP $I$.
An (honest) IdP creates signatures only in Line~\ref{line:sign-ia} of
Algorithm~\ref{alg:idp-spresso}.

\begin{lemma}
  Under the assumption above, only the browser $b$ can issue a request
  (say, $m_\text{cert}$) that triggers the IdP $I$ to create the
  signed term $\mi{ia}$. The request $m_\text{cert}$ was sent by $b$
  over HTTPS using $I$'s public HTTPS key.
\end{lemma}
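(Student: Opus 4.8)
The plan is to establish the lemma by tracing backwards from the IdP's signing step in Algorithm~\ref{alg:idp-spresso} and determining which requests can reach that code path. The IdP $I$ creates the signature $\mi{ia}$ only in Line~\ref{line:sign-ia}, which is reached only upon an HTTPS \mPost{} request to the path $\str{/sign}$ that passes the credential check in Line~\ref{line:spresso-idp-check-login-state}. That check requires that either $\mi{body}[\str{email}]$ matches the identity stored in the session referenced by the $\str{sessionid}$ cookie, or that $\mi{body}[\str{password}]$ equals $s'.\str{userset}[\mi{body}[\str{email}]]$, i.e., equals $\mapIDtoPLI(i)$ for $i = \mi{body}[\str{email}]$. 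First I would argue that, since $I$ is honest, the HTTPS key $\mapSSLKey(d)$ for each domain $d$ of $I$ is known only to $I$ and never leaked (this follows directly from Algorithm~\ref{alg:idp-spresso}, which never outputs the SSL keys, combined with the initial-state definition); hence, by the same reasoning as in Lemma~\ref{lemma:k-does-not-leak-from-honest-rp}~(\ref{prop:attacker-cannot-decrypt-spresso}), nobody but $I$ can decrypt such requests, and the request $m_\text{cert}$ must have been sent encrypted under $I$'s public HTTPS key.

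Next I would rule out all senders other than the browser $b$. The candidate senders are: the attacker (network attacker and any corrupted parties), honest RPs, honest FWDs, honest IdPs, and honest browsers. Honest RPs (Algorithm~\ref{alg:rp-spresso}) never send requests to the $\str{/sign}$ path of an IdP --- they only ever contact IdPs at $\str{/.well\mhyphen{}known/spresso\mhyphen{}info}$; similarly honest FWDs (Algorithm~\ref{alg:spresso-fwd}) send no outgoing HTTP requests at all, and an honest IdP never sends requests to another IdP. This leaves the attacker and honest browsers. For the attacker (and corrupted parties) to produce a valid request, it would need to either know $\mapIDtoPLI(i)$ or possess a valid $\str{sessionid}$ cookie for a session of $I$ in which the logged-in identity is $i$. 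The secret $\mapIDtoPLI(i)$ is stored, by the initial-state definition of the browser $b$ owning $i$, only in $b$'s \str{secrets} subterm under the origin $\an{d,\https}$ for the IdP domain $d$; I would invoke the browser secrecy properties from the full version of \cite{FettKuestersSchmitz-ESORICS-BrowserID-Primary-2015} (which apply since $b$ is not fully corrupted and $I$ is honest, so the secret is released by the browser only to scripts running under $I$'s HTTPS origin, and $I$'s scripts never leak it) to conclude that the attacker cannot derive $\mapIDtoPLI(i)$. The session-cookie route is excluded because IdP session ids are freshly chosen nonces set as \str{httpOnly}, \str{secure}, \str{session} cookies (Line in Algorithm~\ref{alg:idp-spresso}), so they are only ever transmitted over HTTPS to $I$ and are inaccessible to scripts; and a session id in $I$'s state maps to an identity $i$ only if it was created in response to a successful prior $\str{/sign}$ request --- so by an innermost-minimal-counterexample / induction-on-the-run argument, the first such request with logged-in identity $i$ must itself have been authenticated by the password $\mapIDtoPLI(i)$, which again only $b$ knows.

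The main obstacle I expect is the careful handling of the session-cookie alternative in the credential check: one must rule out the possibility that the attacker obtains, reuses, or forges a \str{sessionid} cookie that is already bound to identity $i$ inside $I$'s state, which requires both the cookie-secrecy facts about \str{httpOnly}/\str{secure}/\str{session} cookies in the browser model and an inductive argument over the run showing that every session of $I$ bound to $i$ traces back, ultimately, to a password-authenticated request from $b$. The remaining steps --- that the request was over HTTPS, that honest RPs/FWDs/IdPs are not the sender, and that $\mapIDtoPLI(i)$ stays inside $b$ --- are comparatively routine given the process definitions and the previously established generic web and browser properties. I would therefore structure the proof of the lemma as: (i) HTTPS-key secrecy for $I$ and the consequence that $m_\text{cert}$ is encrypted under $\pub(\mapSSLKey(d))$; (ii) elimination of non-browser honest senders by inspection of their relations; (iii) password secrecy for $\mapIDtoPLI(i)$ via the browser secrecy lemmas; (iv) the inductive cookie-session argument; and (v) conclude that only $b$ could have sent $m_\text{cert}$.
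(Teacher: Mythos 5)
Your proposal is correct and follows essentially the same route as the paper's proof: a case split on whether the $\str{/sign}$ request is authenticated by the password $\mapIDtoPLI(i)$ (known only to $b$, released only over HTTPS to $I$, and discarded on close-corruption) or by a $\str{sessionid}$ cookie, with the cookie case reduced to the password case by tracing the session back to its creating request, plus the secure/httpOnly cookie attributes. Your extra steps (explicit HTTPS-key secrecy for $I$ and explicit elimination of honest RPs/FWDs/IdPs as senders) are left implicit in the paper but do not change the argument.
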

\begin{proof}
  We have to consider two cases for the request $m_\text{cert}$:

  \textbf{(A).} First, if the user is not logged in with the identity $i$ at $I$
  (i.e., the browser $b$ has no session cookie that carries a nonce
  which is a session id at $I$ for which the identitiy $i$ is marked
  as being logged in, compare
  Line~\ref{line:spresso-idp-check-login-state} of
  Algorithm~\ref{alg:idp-spresso}), then the request has to carry (in
  the request body) the password matching the identity $i$
  ($\mapIDtoPLI(i)$). This secret is only known to $b$ initially.
  Depending on the corruption status of $b$, we can now have two
  cases:
  \begin{enumerate}
  \item[a)] If $b$ is honest in $s_j$, it has not sent the secret to
    any party except over HTTPS to $I$ (as defined in the definition
    of browsers). 
  \item[b)] If $b$ is close-corrupted, it has not sent it to any other
    party while it was honest (case a). When becoming close-corrupted,
    it discarded the secret.
  \end{enumerate}
  I.e., the secret has been sent only to $I$ over HTTPS or to nobody
  at all. The IdP $I$ cannot send it to any other party. Therefore we
  know that only the browser $b$ can send the request $m_\text{cert}$
  in this case.

  \textbf{(B).} Second, if the user is logged in for the identity $i$
  at $I$, the browser provides a session id to $I$ that refers to a
  logged in session at $I$. This session id can only be retrieved from
  $I$ by logging in, i.e., case (A) applies, in particular, $b$ has to
  provide the proper secret, which only itself and $I$ know (see
  above). The session id is sent to $b$ in the form of a cookie, which
  is set to secure (i.e., it is only sent back to $I$ over HTTPS, and
  therefore not derivable by the attacker, see prior work \df{todo})
  and httpOnly (i.e., it is not accessible by any scripts). The
  browser $b$ sends the cookie only to $I$. The IdP $I$ never sends
  the session id to any other party than $b$. The session id therefore
  only leaks to $b$ and $I$, and never to the attacker. Hence, the
  browser $b$ is the only atomic DY process which can send the request
  $m_\text{cert}$ in this case.

  We can see that in both cases, the request was sent by $b$ using
  HTTPS and $I$'s public key: If the browser would intend to sent the
  request without encryption, the request would not contain the
  password in case (A) or the cookie in case (B). The browser always
  uses the ``correct'' encryption key for any domain (as defined in
  $\spressoauthwebsystem$).\qed
\end{proof}

As the request $m_\text{cert}$ is sent over HTTPS, it cannot be
altered or read by any other party. In particular, it is easy to see
that at the point in the run where $m_\text{cert}$ was sent, $b$ was honest (otherwise,
it would have had no knowledge of the secret anymore).

\begin{lemma}
  In the browser $b$, the request $m_\text{cert}$ was triggered by
  $\mi{script\_idp}$ loaded from the origin $\an{d,\https}$ for some
  $d \in \mapDomain(I)$. 
\end{lemma}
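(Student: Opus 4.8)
The plan is to first pin down the exact shape of $m_\text{cert}$, then to trace back through the browser relation every way in which $b$ could have emitted it, and finally to use an invariant on who can ever know the tag in order to rule out every possibility but the intended one.

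First I would combine Lemma~\ref{lemma:spresso-request-exists} with inspection of Algorithm~\ref{alg:idp-spresso}: the signed term $\mi{ia} = \sig{\an{\mi{tag}, i, S(r).\str{FWDDomain}}}{\mathsf{signkey}(I)}$ is produced by the honest IdP $I$ only in Line~\ref{line:sign-ia}, and only in response to an HTTPS $\mPost$ request to the path $\str{/sign}$ at one of $I$'s domains $d \in \mapDomain(I)$ whose body $\mi{body}$ is a dictionary with $\mi{body}[\str{email}] \equiv i$, $\mi{body}[\str{tag}] \equiv \mi{tag}$, and $\mi{body}[\str{FWDDomain}] \equiv S(r).\str{FWDDomain}$; this request is $m_\text{cert}$. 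Here $\mi{tag} \equiv \encs{\an{d_r, n_\text{rp}}}{\mi{tagKey}}$ is precisely the tag stored in $r$'s login session (Lemma~\ref{lemma:spresso-request-exists}(4),(5)), with $n_\text{rp}$ and $\mi{tagKey}$ freshly chosen nonces of the honest $r$.

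Next I would analyse how the honest browser $b$ (honest at that point by the preceding lemma) can emit such a request. A browser emits an HTTPS request either (i) directly as the effect of a script command ($\tHref$, $\tIframe$, $\tForm$, or $\tXMLHTTPRequest$ in $\mathsf{RUNSCRIPT}$), (ii) while handling a redirect in $\mathsf{PROCESSRESPONSE}$, or (iii) through a non-deterministic new-window or reload action in its main algorithm. Options (iii), as well as the $\tHref$ and $\tIframe$ commands, yield only $\mGet$ requests with empty body, so they are excluded since $m_\text{cert}$ is a $\mPost$ with non-empty body. Redirects preserve method and body and are never applied to XHRs; hence, following the (finite) redirect chain backwards, $m_\text{cert}$ is either directly the request issued by a $\tXMLHTTPRequest$ command, or (after one or more redirects) originates from the request issued by a $\tForm$ command. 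In both cases the body $\mi{body}$ of $m_\text{cert}$ equals the $\mi{data}$ argument that the issuing script passed to its command; in particular, \emph{some} script running in \emph{some} document of $b$ produced in its output a term containing the nonce-encrypted term $\mi{tag}$.

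The main obstacle, and the heart of the argument, is the following invariant, proved by induction over the run: the only documents in $b$ whose script can ever have $\mi{tag}$ available are documents of origin $\an{d', \https}$ with $d' \in \mapDomain(I)$ (which can only arise from a response of the honest $I$, and $I$ returns a document body only on the login-dialog path, namely $\an{\str{script\_idp},\ldots}$, so the script running there is $\mi{script\_idp}$), documents of origin $\an{S(r).\str{FWDDomain}, \https}$ (which can only arise from a response of the honest FWD, so the script is $\mi{script\_fwd}$), and the RP redirector document of origin $\an{d_r,\https}$ running $\mi{script\_rp\_redir}$. The induction uses that $r$ chooses $\mi{tag}$ freshly, stores it in a login session, and emits it only inside the login-dialog URL in its $\str{/redir}$ response; that $\mi{script\_rp\_redir}$ merely navigates there ($\tHref$, empty body), placing $\mi{tag}$ into the parameters of the IdP document, where $\mi{script\_idp}$ reads it (and forwards it only into the parameters of the FWD-iframe URL, where $\mi{script\_fwd}$ reads it); that the honest $I$ and FWD may receive $\mi{tag}$ as a request parameter but never re-emit it; and that $\mi{tagKey}$ likewise never leaves this honest bubble, so the attacker can neither obtain nor forge $\mi{tag}$, whence no attacker-controlled script (which runs only in attacker-origin documents) ever has it. Of the three candidate scripts, $\mi{script\_fwd}$ issues only $\tPostMessage$ commands (Algorithm~\ref{alg:spresso-script-fwd}) and $\mi{script\_rp\_redir}$ issues only a $\tHref$ with empty body (Algorithm~\ref{alg:spresso-script-rp-redir}), so neither can be the originator of $m_\text{cert}$; hence it is $\mi{script\_idp}$. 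Inspecting Algorithm~\ref{alg:spresso-script-idp}, $\mi{script\_idp}$ issues a request-generating command only in its $\str{start}$ case, and there it is $\an{\tXMLHTTPRequest, \an{\tUrl,\https,d,\str{/sign},\an{}}, \mPost, \mi{body}, \bot}$ where $d$ is the host of the document's origin; so $m_\text{cert}$ is exactly this XHR (consistent with XHRs not being redirected), the same-origin check in $\mathsf{RUNSCRIPT}$ forces the issuing document's origin to be $\an{d,\https}$, and since the request is accepted and processed by the honest $I$ it was encrypted under $\mapSSLKey$ of its host, so $d \in \mapDomain(I)$. This is precisely the claim; the remaining routine work is making the $\mi{tag}$-knowledge invariant fully precise as an induction hypothesis covering all atomic processes.
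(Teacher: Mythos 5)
Your proof is correct in substance but takes a genuinely different route from the paper's. The paper's argument is much shorter and hinges on the user's \emph{secret} rather than on the tag: from the preceding lemma, $m_\text{cert}$ carries the password $\mapIDtoPLI(i)$ (or a session cookie obtained by earlier presenting it), and by the browser definition (Line~\ref{line:browser-secrets} of Algorithm~\ref{alg:runscript} together with Appendix~\ref{app:browsers-spresso}) that secret is handed only to scripts running under the origin $\an{d,\https}$ with $d\in\mapDomain(I)$; since the honest $I$ serves only $\mi{script\_idp}$, the claim follows in two sentences. Your argument instead keys on the fact that $m_\text{cert}$ must contain $\mi{tag}$ in its body and builds a run-long secrecy invariant for the tag. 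What your route buys: it treats the password and cookie cases uniformly (the paper's proof, as written, only speaks about requests containing the secret), and it pins down exactly which honest scripts could ever emit the tag. What it costs: the tag-knowledge invariant is a substantial induction over all atomic processes that you defer as ``routine,'' whereas the paper reuses an already-available one-line property of the browser. One imprecision you would have to repair when making that invariant formal: documents running $\mi{script\_rp}$ are same-origin with the redirector document (both have origin $\an{d_r,\https}$), so via $\mathsf{Clean}$ they can see the redirector's scriptstate and hence the tag; your list of tag-holding documents must therefore include them, and they must be ruled out (easily, by inspection of Algorithm~\ref{alg:spresso-script-rp}: its request bodies never contain the tag and its XHRs are addressed to $d_r$, not to $I$) alongside $\mi{script\_fwd}$ and $\mi{script\_rp\_redir}$.
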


\begin{proof}
  First, $\an{d,\https}$ for some $d \in \mapDomain(I)$ is the only
  origin that has access to the secret $\mapIDtoPLI(i)$ for the
  identity $i$ (as defined in Appendix~\ref{app:browsers-spresso}).

  With the general properties defined in~\cite{FettKuestersSchmitz-TR-BrowserID-Primary-2015} and the
  definition of Identity Providers in Appendix~\ref{app:idps}, in
  particular their property that they only send out one script,
  $\mi{script\_idp}$, we can see that this is the only script that can
  trigger a request containing the secret.\qed
\end{proof}

\begin{lemma}\label{lemma:idp-to-script-idp}
  In the browser $b$, the script $\mi{script\_idp}$ receives the
  response to the request $m_\text{cert}$ (and no other script), and
  at this point, the browser is still honest.
\end{lemma}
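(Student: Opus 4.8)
\textit{Proof sketch.} The plan is to follow the HTTPS request $m_\text{cert}$ through the bookkeeping of the browser model by tracking the unique reference attached to it. By the preceding lemmas, $m_\text{cert}$ is an $\tXMLHTTPRequest$ to path $\str{/sign}$ (method $\mPost$) issued by $\mi{script\_idp}$ running in some document $d^\star$ that $b$ loaded over HTTPS from a domain $d\in\mapDomain(I)$, and $b$ is honest at the step in which $m_\text{cert}$ is sent. First I would inspect the $\tXMLHTTPRequest$ case of Algorithm~\ref{alg:runscript} together with $\mi{script\_idp}$ (Algorithm~\ref{alg:spresso-script-idp}): the emitted command carries XHR-reference $\bot$, so $\mathsf{SEND}$ is invoked with reference $\an{n^\star,\bot}$, where $n^\star$ is the globally unique document nonce of $d^\star$. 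Propagating this reference through $\mathsf{SEND}$ (stored in $\mi{pendingDNS}$) and through the DNS-response branch of Algorithm~\ref{alg:browsermain} (moved into $\mi{pendingRequests}$ together with a freshly chosen symmetric key and under a fresh HTTP-request nonce), I obtain that the only way any message is processed as the response to $m_\text{cert}$ is that $b$ locates this $\mi{pendingRequests}$ entry, removes it, and calls $\mathsf{PROCESSRESPONSE}$ with reference $\an{n^\star,\bot}$.

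Next I would settle the honesty claim. A fully corrupted browser never executes this code path at all (it only records incoming messages and emits derivable terms), and a close-corrupted browser has empty $\mi{pendingRequests}$ and no windows; hence, whenever a message is processed as the response to $m_\text{cert}$, $b$ is at that step neither fully nor close-corrupted, and since $b$ was already honest when $m_\text{cert}$ was sent and corruption is monotone, $b$ is honest throughout this interval. Moreover, since $I$ is honest it never leaks $\mapSSLKey(d)$ (Algorithm~\ref{alg:idp-spresso}), so the symmetric key $b$ placed inside $m_\text{cert}$ (encrypted under $\pub(\mapSSLKey(d))$) is known only to $b$ and $I$; by the general HTTPS/browser properties of the web model (cf.\ Lemma~\ref{lemma:k-does-not-leak-from-honest-rp} and the general properties of~\cite{FettKuestersSchmitz-TR-BrowserID-Primary-2015}) the response accepted by $b$ is the one $I$ creates on path $\str{/sign}$, which carries a $\str{Set{\mhyphen}Cookie}$ header and body $\mi{ia}$ and no $\str{Location}$ header.

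It then follows that in $\mathsf{PROCESSRESPONSE}$ the redirect branch is not taken (no $\str{Location}$ header), and because $\an{n^\star,\bot}$ is not a bare window nonce the ``normal response'' (new-document) branch is not taken either; so the XHR branch fires and appends $\an{\tXMLHTTPRequest,\mi{ia},\bot}$ to $\mi{scriptinputs}$ of exactly the document with nonce $n^\star$, namely $d^\star$, and to no other document. That $d^\star$ still runs $\mi{script\_idp}$ at this point is argued as follows: the document nonce is unique, and the $\str{script}$ field of a document is overwritten only by a $\tSetScript$ command from a same-origin script; but over HTTPS $b$ always encrypts for $\mapSSLKey(d)$, which the attacker—even acting as the DNS server in $\spressoauthwebsystem$—cannot invert, so every document with origin $\an{d,\https}$ was served by $I$, and $I$ serves only $\mi{script\_idp}$ as a script (its well-known-info response has a non-pair body and therefore yields no document), and $\mi{script\_idp}$ never emits $\tSetScript$. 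Finally, I would show the login-dialog window still shows $d^\star$ when the response arrives, using the $\mathsf{NavigableWindows}$ rules together with the behaviour of the honest scripts: it is a fresh top-level window whose opener runs $\mi{script\_rp}$, which opens it once (via $\wBlank$) and thereafter only sends postMessages and an XHR; $\mi{script\_rp\_redir}$ only self-navigates it to $d^\star$ and then stops running; and $\mi{script\_idp}$ only issues the ($\tSetScript$-free) commands $\tXMLHTTPRequest$ and then $\tIframe$ with $\wSelf$, which creates a subframe rather than navigating the window—while no cross-origin or attacker script is $\mathsf{NavigableWindows}$-permitted to navigate this window or its opener.

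I expect this last navigation-invariance step, together with the careful propagation of the reference across the three browser sub-algorithms ($\mathsf{SEND}$, the main relation, $\mathsf{PROCESSRESPONSE}$), to be the only non-routine parts; everything else is direct inspection of the browser model and of the SPRESSO scripts. \qed
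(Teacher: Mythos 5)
Your proposal is correct and follows the same route as the paper: the paper's own proof consists of exactly two sentences, namely (i) that a (close-)corrupted browser discards its pending-request state (in particular the symmetric HTTPS keys), so the response to $m_\text{cert}$ can only be decrypted and processed while $b$ is still honest, and (ii) that the delivery of the response to $\mi{script\_idp}$ and no other script ``follows from the general properties'' of the web model in~\cite{FettKuestersSchmitz-TR-BrowserID-Primary-2015}. Your honesty argument coincides with (i), and your reference-tracking through $\mathsf{SEND}$, the main browser relation, and $\mathsf{PROCESSRESPONSE}$, together with the document-nonce/$\tSetScript$/navigation considerations, is a correct self-contained elaboration of exactly the material the paper delegates to (ii).
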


\begin{proof}
  From the definition of browser corruption, we can see that the
  browser $b$ discards any information about pending requests in its
  state when it becomes close-corrupted, in particular any SSL keys.
  It can therefore not decrypt the response if it becomes
  close-corrupted before receiving the response.

  The rest follows from the general properties defined
  in~\cite{FettKuestersSchmitz-TR-BrowserID-Primary-2015}.\qed
\end{proof}

We now know that only the script $\mi{script\_idp}$ received the
response containing the IA. For the following lemmas, we will assume
that the browser $b$ is honest. In the other case (the browser is
close-corrupted), the IA $\mi{ia}$ and any information about pending
HTTPS requests (in particular, any decryption keys) would be discarded
from the browser's state (as seen in the proof for
Lemma~\ref{lemma:idp-to-script-idp}). This would be a contradiction to
the assumption (which requires that the IA arrived at the RP).

\begin{lemma}\label{lemma:script-idp-to-script-fwd}
  After receiving $\mi{ia}$, $\mi{script\_idp}$ forwards the $\mi{ia}$
  only to an FWD that is honest (in $s_j$, and therefore, also at any
  earlier point in the run) and a document $\mi{script\_fwd}$ that was
  loaded from this FWD over HTTPS.
\end{lemma}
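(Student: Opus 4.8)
The plan is to inspect $\mi{script\_idp}$ (Algorithm~\ref{alg:spresso-script-idp}), trace the single message it emits after it receives $\mi{ia}$, and show that this message is an HTTPS request that only the honest forwarder governing $S^0(r).\str{FWDDomain}$ can answer. First I would use Lemma~\ref{lemma:idp-to-script-idp}: the response to $m_\text{cert}$, whose body is $\mi{ia}$, is delivered, while $b$ is still honest, to a document $d^\star$ of $b$ running $\mi{script\_idp}$; by the XHR delivery mechanism $d^\star$ is the very document whose $\str{start}$-activation issued $m_\text{cert}$. Examining Algorithm~\ref{alg:spresso-script-idp}, the only activation of $\mi{script\_idp}$ that consumes an $\tXMLHTTPRequest$ input is the $\str{expectIA}$ case: it sets $\mi{eia} := \encs{\mi{ia}}{\mi{iaKey}}$, builds $\mi{url} := \an{\tUrl, \https, \mi{FWDDomain}, \str{/}, \an{\an{\str{tag}, \mi{tag}}, \an{\str{eia}, \mi{eia}}}}$ from the document's URL parameters and the received $\mi{ia}$, emits the single command $\an{\tIframe, \mi{url}, \wSelf}$, and moves to state $\str{stop}$, in which no further command is ever emitted. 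Hence the only way the (encrypted) IA leaves $\mi{script\_idp}$ is through the browser's creation of this iframe.

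Next I would identify $\mi{FWDDomain}$ with $S^0(r).\str{FWDDomain}$. The $\str{start}$-activation of $d^\star$ places $\mi{FWDDomain}_0 := \mathsf{GETPARAMETERS}(\mi{tree},\mi{docnonce})[\str{FWDDomain}]$ into the body of $m_\text{cert}$, and the honest IdP $I := \mapGovernor(i)$ signs $\an{\mi{body}[\str{tag}], \mi{body}[\str{email}], \mi{body}[\str{FWDDomain}]}$ in Line~\ref{line:sign-ia} of Algorithm~\ref{alg:idp-spresso} to produce $\mi{ia}$. By Lemma~\ref{lemma:spresso-request-exists}, parts~(4) and~(5), and honesty of $I$, the IA actually consumed by $r$ equals $\sig{\an{\mi{tag}, i, S^0(r).\str{FWDDomain}}}{\mathsf{signkey}(I)}$; since (as established just before the present lemma) this is the IA that $I$ created in response to $m_\text{cert}$, comparing the signed triples via $\proj{3}{\cdot}$ gives $\mi{FWDDomain}_0 = S^0(r).\str{FWDDomain}$. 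Because a document's URL parameters are immutable after creation, the $\mi{FWDDomain}$ used in the $\str{expectIA}$-activation of $d^\star$ is again $S^0(r).\str{FWDDomain}$. By hypothesis this is a domain of an FWD, say $\mathit{fwd}$, that is honest in $s_j$; as the $\str{corrupt}$ flag is only ever set and never reset, $\mathit{fwd}$ is honest throughout $\rho$.

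Finally I would argue that the document created in the iframe is genuinely $\mathit{fwd}$'s. When the browser processes $\an{\tIframe, \mi{url}, \wSelf}$ it issues a $\mGet$ request to $\mi{url}$; since $\mi{url}$ has protocol $\https$ and host $S^0(r).\str{FWDDomain}$, and the browser's $\mi{keyMapping}$ (fixed to the honest mapping induced by $\mapSSLKey$ in $\spressoauthwebsystem$, cf.\ Appendix~\ref{app:browsers-spresso}) is never modified, the request is encrypted under $\pub(\mapSSLKey(S^0(r).\str{FWDDomain}))$, no matter how the attacker, acting as DNS, resolves the host. Only $\mathit{fwd}$ knows the matching private key and, being honest, never leaks it, so no other party can decrypt the request or forge a response the browser will accept (responses are encrypted under the fresh symmetric key inside the request, which stays secret); this is the same HTTPS-confidentiality argument as in Lemma~\ref{lemma:k-does-not-leak-from-honest-rp} with $\mathit{fwd}$ in place of $u$, together with the imported browser properties of \cite{FettKuestersSchmitz-TR-BrowserID-Primary-2015}. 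By Algorithm~\ref{alg:spresso-fwd}, $\mathit{fwd}$ answers any such HTTPS request with body $\an{\str{script\_fwd}, \an{}}$, so the browser creates in the iframe a document with script $\mi{script\_fwd}$ and origin $\an{S^0(r).\str{FWDDomain}, \https}$, loaded from $\mathit{fwd}$ over HTTPS, which is exactly the claim.

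The main obstacle I anticipate is the bookkeeping in the second paragraph: pinning down that the two activations of $\mi{script\_idp}$ take place in the \emph{same} document and that its $\str{FWDDomain}$ parameter is precisely the one appearing inside the signature that eventually reaches $r$ (which hinges on honesty of $I$ and on $I$ producing exactly $\mi{ia}$ in response to $m_\text{cert}$). Everything else — immutability of document parameters, the HTTPS confidentiality step, and the stateless one-script behaviour of honest FWDs — is routine given the lemmas already available. The possibility that $b$ becomes close-corrupted between receiving $\mi{ia}$ and creating the iframe is already excluded by the standing assumption that $b$ is honest: otherwise $\mi{ia}$ and the relevant pending-request data would have been discarded, contradicting that the IA reaches $r$.
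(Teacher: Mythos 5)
Your proposal is correct and follows essentially the same route as the paper's proof: trace the single $\tIframe$ command emitted by $\mi{script\_idp}$ in its $\str{expectIA}$ state, use the immutability of the document's $\str{FWDDomain}$ parameter together with the fact that this same parameter appears in the triple signed by the honest IdP to identify it with $S(r).\str{FWDDomain}$, and then invoke the honesty of that FWD plus the HTTPS properties to conclude that the iframe document is $\mi{script\_fwd}$. Your write-up is somewhat more explicit than the paper's (e.g., in invoking Lemma~\ref{lemma:spresso-request-exists}(4,5) to pin down the signed triple and in spelling out the confidentiality argument for the iframe request), but the decomposition and key observations are the same.
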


\begin{proof}
  We know that the browser $b$ is either close-corrupted (in which
  case the $\mi{ia}$ would be discarded as it is only stored in the
  window structure, or, more precisely, the script states inside the
  window structure of the browser, which are removed when the browser
  becomes close-corrupted) or it is honest. In the latter case,
  $\mi{script\_idp}$ (defined in Algorithm~\ref{alg:spresso-script-idp})
  opens an iframe from the FWDDomain that was given to it by RP. It
  always uses HTTPS for this request.
  
  We can see that $\mi{script\_idp}$ forwards the $\mi{ia}$ to the
  domain stored in the variable $\mi{FWDDomain}$
  (Line~\ref{line:send-to-fwd} of
  Algorithm~\ref{alg:spresso-script-idp}). This variable is set five
  lines earlier with the value taken from the parameters of the
  current document. While we cannot know the actual value of the
  parameter $\str{FWDDomain}$ yet, we know that this parameter does not
  change (in the browser definition, it is only set once, when the
  document is loaded). We can also see that the very same parameter
  was sent to $I$ in Line~\ref{line:send-to-idp} as the value for the
  FWD domain that was then signed by $I$ in the $\mi{ia}$. As we know
  the value of the FWD origin in the $\mi{ia}$ (it is
  $S(r).\str{FWDDomain}$), we know that the domain to which the
  $\mi{ia}$ is forwarded is the same.

  From our assumption, we know that $S(r).\str{FWDDomain}$ is the
  origin of an honest FWD in $s_j$. It is contacted over HTTPS, so the
  general properties defined
  in~\cite{FettKuestersSchmitz-TR-BrowserID-Primary-2015}
  apply. According to the definition of forwarders
  (Algorithm~\ref{alg:spresso-fwd}), they only respond with
  $\mi{script\_fwd}$. The $\mi{ia}$ is therefore only forwarded to the
  FWD and its script $\mi{script\_fwd}$.\qed
\end{proof}

\begin{lemma}\label{lemma:script-fwd-to-script-rp}
  The script $\mi{script\_fwd}$ forwards the $\mi{ia}$ only to the
  script $\mi{script\_rp}$ loaded from the origin $\an{d_r, \https}$.
\end{lemma}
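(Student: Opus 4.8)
The plan is to unfold the (very short) definition of $\mi{script\_fwd}$ and pin down, one by one, the target window of its outgoing postMessage, the exact shape of the tag it handles, and which tag keys it can possibly accept, and then read the claim off the browser's postMessage-delivery rule. Inspecting Algorithm~\ref{alg:spresso-script-fwd}, the only command $\mi{script\_fwd}$ ever issues that carries (a term containing) the IA is the one sent in scriptstate $\str{expectTagKey}$, namely $\an{\tPostMessage, \mi{target}, \an{\str{eia}, \mi{eia}}, \mi{rpOrigin}}$ with $\mi{target} = \mathsf{OPENERWINDOW}(\mi{tree}, \mathsf{PARENTWINDOW}(\mi{tree}, \mi{docnonce}))$, where $\mi{eia}$ and $\mi{tag}$ are read from the document's URL parameters and $\mi{rpOrigin} = \an{\decs{\mi{tag}}{\mi{tagKey}}.1, \https}$, and $\mi{tagKey}$ is taken from an incoming message matching $\an{\tPostMessage, \mi{target}, *, \an{\str{tagKey}, *}}$. (The $\str{ready}$ message sent in scriptstate $\str{start}$ carries no IA, and once in $\str{stop}$ nothing more is emitted.) Strictly, $\mi{ia}$ occurs in $\mi{eia} = \encs{\mi{ia}}{\mi{iaKey}}$ only under the key $\mi{iaKey}$, which $\mi{script\_fwd}$ does not possess; I use ``forwards the IA'' in the paper's loose sense of forwarding $\mi{eia}$.

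First I would identify $\mi{target}$. By Lemma~\ref{lemma:script-idp-to-script-fwd} the FWD document is the iframe $\mi{script\_idp}$ created with target $\wSelf$, so its parent window is the IdPdoc window; and the IdPdoc window was opened by $\mi{script\_rp}$ via a $\tHref$ command into a fresh window (which preserves the $\mi{opener}$ handle, pointing at the RPdoc window) and subsequently navigated \emph{in place} to IdPdoc by $\mi{script\_rp\_redir}$, which does not alter the $\mi{opener}$. Hence $\mi{target}$ is the RPdoc window. Next, the $\mi{tag}$ in the FWD document's parameters is the term $\mi{script\_idp}$ copied from its own $\str{tag}$ parameter into the iframe URL (Algorithm~\ref{alg:spresso-script-idp}), which is precisely the tag $I$ signed inside $\mi{ia}$; since $r$ accepted $\mi{ia}$, Lemma~\ref{lemma:spresso-request-exists}(4) gives $\mi{tag} \equiv \encs{\an{d_r, n_\text{rp}}}{\mi{tagKey}_0}$ with $d_r \in \mapDomain(r)$, where $\mi{tagKey}_0$ is the symmetric key $r$ freshly generated (and stored in the login session record) in Algorithm~\ref{alg:sendstartloginresponse}; by freshness it is emitted by $r$ in exactly one network message, the HTTPS $\str{/startLogin}$ response to the browser.

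The crucial step is confinement of $\mi{tagKey}_0$: in the browser it is held in the scriptstate of the $\mi{script\_rp}$ instance that received that response, which releases it only through the postMessage of scriptstate $\str{expectFWDReady}$ whose declared recipient origin is $\an{S(r).\str{FWDDomain}, \https}$ --- the honest FWD's origin --- and $\mi{script\_fwd}$ never copies $\mi{tagKey}_0$ into an outgoing message. Combining Lemma~\ref{lemma:k-does-not-leak-from-honest-rp} with the generic HTTPS/postMessage/same-origin secrecy properties from the full version of~\cite{FettKuestersSchmitz-ESORICS-BrowserID-Primary-2015}, $\mi{tagKey}_0$ is never derivable by the attacker nor known to any honest script other than that $\mi{script\_rp}$ document. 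Now a case distinction on the accepted $\mi{tagKey}$: since the equational theory lets $\decs{\encs{\an{d_r, n_\text{rp}}}{\mi{tagKey}_0}}{\mi{tagKey}}$ reduce to a sequence only when $\mi{tagKey} \equiv \mi{tagKey}_0$, if $\mi{tagKey} \not\equiv \mi{tagKey}_0$ then $\decs{\mi{tag}}{\mi{tagKey}}.1$ is not a domain, so $\mi{rpOrigin}$ matches no document's origin and the browser delivers the payload to nobody. Otherwise $\mi{tagKey} \equiv \mi{tagKey}_0$; the accepted message was sent from the RPdoc window and carried $\mi{tagKey}_0$, so by confinement it was produced by a $\mi{script\_rp}$ document that holds $\mi{tagKey}_0$, and such a document runs at origin $\an{d_r, \https}$ (its origin's domain equals the host header of its $\str{/startLogin}$ request, which by construction of the tag in Algorithm~\ref{alg:sendstartloginresponse} is exactly the domain $d_r$ inside the tag). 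Then $\mi{rpOrigin} = \an{d_r, \https}$, and by the browser's postMessage rule $\an{\str{eia}, \mi{eia}}$ is delivered only to the active document of the RPdoc window and only if that document's origin is $\an{d_r, \https}$; a document served by the honest $r$ under that origin runs $\mi{script\_rp}$. Hence $\mi{script\_fwd}$ forwards $\mi{eia}$ (and thus the IA) only to $\mi{script\_rp}$ loaded from $\an{d_r, \https}$.

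The hard part will be the confinement argument for $\mi{tagKey}_0$ --- it must be threaded carefully through one HTTPS round trip, one origin-restricted postMessage, and the isolation of the FWD document's scriptstate, so it leans heavily on the generic web-model lemmas --- together with the window-structure bookkeeping showing that the $\mi{opener}$ of the IdPdoc window survives the RP-side redirector indirection and that the attacker cannot navigate the RPdoc window away from $\an{d_r, \https}$; neither of these involves any \spresso-specific cryptographic reasoning, but both are where subtle web mechanics can bite.
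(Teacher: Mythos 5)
Your proof is correct and follows the same basic route as the paper's: fix the form of the tag via Lemma~\ref{lemma:spresso-request-exists}~(4), observe that the only IA-carrying output of $\mi{script\_fwd}$ is the origin-restricted postMessage whose declared receiver origin is $\an{\decs{\mi{tag}}{\mi{tagKey}}.1,\https}$, and conclude via the browser's postMessage delivery rule that only $\mi{script\_rp}$ at $\an{d_r,\https}$ can receive it. The paper's proof is much terser and simply conditions on $\mi{script\_fwd}$ ``receiv[ing] a matching key''; you instead make the dichotomy on the supplied $\mi{tagKey}$ explicit, and your observation that a wrong key leaves $\decs{\mi{tag}}{\mi{tagKey}}$ irreducible --- so that $\mi{rpOrigin}$ matches no document's origin and the browser delivers the payload to nobody --- is precisely the argument the paper leaves implicit; that is a genuine gain in completeness. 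Two remarks. First, your confinement argument for $\mi{tagKey}_0$ is superfluous for this lemma: once $\mi{tagKey}\equiv\mi{tagKey}_0$, the equational theory already yields $\mi{rpOrigin}=\an{d_r,\https}$ directly from the form of the tag, regardless of who sent the key, so identifying the sender of the tag-key postMessage contributes nothing to where the EIA goes (confinement of $\mi{tagKey}_0$ matters for liveness and for the privacy analysis, not here). Likewise, the window-structure bookkeeping identifying $\mi{target}$ as the RPdoc window is not needed for this safety statement --- the origin restriction alone confines delivery, whatever window is targeted --- which is why the paper omits it. Second, a small imprecision: $r$ also serves $\mi{script\_rp\_redir}$ at origin $\an{d_r,\https}$, so the needed qualification is the paper's --- the only document delivered by $r$ \emph{that processes postMessages} is $\mi{script\_rp}$, since $\mi{script\_rp\_redir}$ never reads its script inputs.
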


\begin{proof}
  The script $\mi{script\_idp}$ that runs in the honest browser $b$
  forwards the (then encrypted) IA along with the tag to
  $\mi{script\_fwd}$. From the definition of the IdP script
  (Algorithm~\ref{alg:spresso-script-idp}) it is clear that the tag that
  is forwarded along with the encrypted IA is the same that was signed
  by the IdP.

  This script (Algorithm~\ref{alg:spresso-fwd}) tries to decrypt the tag
  (once it receives a matching key) and sends a postMessage containing
  the encrypted IA to the domain contained in the tag, which is $d_r$.

  The protocol part of the origin is HTTPS. The only document that $r$
  delivers and which receives postMessages is $\mi{script\_rp}$, and
  this therefore is the only script that can receive this postMessage. \qed
\end{proof}

\begin{lemma}\label{lemma:script-rp-to-rp}
  From the RP document, the EIA is only sent to the RP $r$ and over
  HTTPS.
\end{lemma}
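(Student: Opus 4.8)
The plan is to follow the EIA forward from the one honest script that, by Lemma~\ref{lemma:script-fwd-to-script-rp}, actually receives it inside the browser, namely $\mi{script\_rp}$ running in a document with origin $\an{d_r,\https}$ in the browser $b$ — which, under the standing honesty assumption on $b$ in this part of the proof, executes the relation of Algorithm~\ref{alg:spresso-script-rp} faithfully on that document. (Recall that if $b$ ever became close-corrupted, the EIA together with all pending-request data would be discarded, contradicting the assumption that a service token response eventually reaches $r$; this is the same argument already used in Lemma~\ref{lemma:idp-to-script-idp}.) So it suffices to read off from Algorithm~\ref{alg:spresso-script-rp} every output of the RP document that can carry the EIA.

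First I would do the branch analysis of Algorithm~\ref{alg:spresso-script-rp}. The incoming postMessage carrying the EIA matches the relevant pattern only in the $\str{expectEIA}$ branch; there the script binds $\mi{eia}$ and issues a single command containing it, the request $\an{\tXMLHTTPRequest,\mathsf{URL}^{\mi{origin}.\str{domain}}_{\str{/login}},\mPost,\mi{body},s'.\str{refXHR}}$ with $\mi{body}[\str{eia}]\equiv\mi{eia}$. In this branch (and in all the others) the script returns its cookies, localStorage, and sessionStorage unchanged, so the EIA is not written to any browser storage. Hence the only output of the RP document that contains the EIA is this one XHR.

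Next I would identify its target. At the top of the algorithm $\mi{origin}$ is set to $\mathsf{GETORIGIN}(\mi{tree},\mi{docnonce})$, which for a document loaded over HTTPS from $d_r$ equals $\an{d_r,\https}$, so $\mi{origin}.\str{domain}=d_r$; and the abbreviation $\mathsf{URL}^{d}_{\mi{path}}$ expands to $\an{\tUrl,\https,d,\mi{path},\an{}}$. Thus the XHR is an HTTPS request to $d_r$ with path $\str{/login}$. By the browser model — the $\tXMLHTTPRequest$ case of Algorithm~\ref{alg:runscript} together with Algorithms~\ref{alg:send} and~\ref{alg:browsermain} — an honest browser only issues same-origin XHRs and, for an HTTPS host, encrypts the request under the public key recorded for that host in its key mapping, which is fixed over the run and maps $d_r$ to $\pub(\mapSSLKey(d_r))$. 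Since $d_r\in\mapDomain(r)$, $\mapSSLKey$ is injective, and $r$ is honest and hence never leaks its private keys (a general web-system property, cf.~\cite{FettKuestersSchmitz-TR-BrowserID-Primary-2015}), no party other than $r$ can decrypt this request. Therefore the EIA leaves the RP document only inside this $\mPost$ request to $\str{/login}$ at a domain of $r$, sent over HTTPS and readable only by $r$, which is exactly the claim.

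The main point to be careful about is the exhaustiveness of the branch analysis — verifying that $\mi{eia}$ genuinely occurs in no other command and in none of the returned storage terms — and then the correct invocation of the browser's same-origin XHR handling, its HTTPS encryption via the fixed key mapping, and the "honest servers never leak their SSL keys" property, so that "sent to $r$" really does exclude the network attacker from learning the EIA here. Everything else is a direct read-off of the relevant algorithms; there is no combinatorial difficulty.
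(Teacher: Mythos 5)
Your proposal is correct and takes essentially the same route as the paper: the paper's proof is a two-line appeal to Algorithm~\ref{alg:spresso-script-rp} (specifically the $\str{/login}$ XHR command in the $\str{expectEIA}$ branch together with the determination of $\mi{origin}$ via $\mathsf{GETORIGIN}$) plus the previously established fact that the RP document has origin $\an{d_r,\https}$. You simply spell out the branch exhaustiveness, the browser's same-origin XHR handling, and the HTTPS key-mapping details that the paper leaves implicit or delegates to its general web-model properties.
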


\begin{proof}
  This follows immediately from the definition of $\mi{script\_rp}$
  (see Algorithm~\ref{alg:spresso-script-rp}, in particular
  Line~\ref{line:send-ia-to-r} in conjunction with
  Line~\ref{line:determine-origin-rp-script}) and the fact that the RP
  document must have been loaded from the origin $\an{d_r, \https}$
  (as shown above).
\end{proof}

With Lemmas~\ref{lemma:idp-to-script-idp}--\ref{lemma:script-rp-to-rp} we
see that the $\mi{ia}$, once it was signed by $I$, was transferred
only to $r$, the browser $b$, and to an honest forwarder. It
cannot be known to the attacker or any corrupted party, as none of
the listed parties leak it to any corrupted party or the attacker.

Now, for $\an{n,i}$ to be created and recorded in $S^j(r)$, a message
$m$ as shown above has to be created and sent. This can only be done
with knowledge of $\mi{eia}$. From their definitions, we can see that
neither $I$, $r$ nor any forwarder create such a message, with the
only option left being $b$. If $b$ sends such a request, it is the
only party able to read the response (see general security properties
in~\cite{FettKuestersSchmitz-TR-BrowserID-Primary-2015}) and it will not do anything with the contents
of the response (see Algorithm~\ref{alg:spresso-script-rp}), in
particular not leak it to the attacker or any corrupted party.

This is a contradication to the assumption, where we assumed that
$\an{n,i} \in d_{\emptyset}(S^j(\fAP{attacker}))$. This shows
every $\spressoauthwebsystem$ is secure in the sense of Property~A.

\QED

\subsection{Property B}

As stated above, Property B is defined as follows: 
\begin{definition}
  Let $\spressoauthwebsystem$ be an \spresso web system. We say that
  \emph{$\spressoauthwebsystem$ is secure (with respect to Property B)} if
  for every run $\rho$ of $\spressoauthwebsystem$, every state $(S^j, E^j, N^j)$
  in $\rho$, every $r\in \fAP{RP}$ that is honest in $S^j$ with
  $S^0(r).\str{FWDDomain}$ being a domain of an FWD that is honest in
  $S^j$, every RP service token of the form $\an{n,i}$ recorded in
  $S^j(r).\str{serviceTokens}$, with the request corresponding to
  $\an{n,i}$ sent by some $b\in \fAP{B}$ which is honest in $S^j$,
  $b$ owns $i$.
\end{definition}

Applying Lemma~\ref{lemma:spresso-request-exists} (1--4), we call the
request corresponding to $\an{n,i}$ (or service token request) $m$ and
its response $m'$, and (as in Lemma~\ref{lemma:spresso-request-exists}
(2)) we refer to the state of $\spressoauthwebsystem$ in the run $\rho$
where $r$ processes $m$ by $s_l$.

\begin{lemma}\label{lemma:request-m-is-from-script-rp}
  The request $m$ was sent by $\mi{script\_rp}$ loaded from the origin
  $\an{d_r, \https}$ where $d_r$ is some domain of $r$.
\end{lemma}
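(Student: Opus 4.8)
\textbf{Plan for the proof of Lemma~\ref{lemma:request-m-is-from-script-rp}.}
The goal is to trace the service token request $m$ backwards through the browser $b$ to the script that initiated it. Recall from Lemma~\ref{lemma:spresso-request-exists}~(4) that $m$ is an HTTPS $\mPost$ request to the path $\str{/login}$ at a domain $d_r\in\mapDomain(r)$, carrying a body dictionary $b$ with entries $\str{eia}$ and $\str{loginSessionToken}$, and that $r$ only accepts such a request if its $\str{Origin}$ header equals $\an{d_r,\https}$ (see Line~\ref{line:check-origin-header} of Algorithm~\ref{alg:rp-spresso}). First I would observe that, since $b$ is honest in $S^j$ and $m$ was sent by $b$, $b$ was also honest at the point in $\rho$ where $m$ was emitted (a corrupted browser discards its state and would no longer behave as in Algorithm~\ref{alg:browsermain}); I can then reason entirely with the honest browser relation. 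An honest browser emits an HTTPS $\mPost$ request carrying an $\str{Origin}$ header only as the result of executing a script command --- either $\tForm$ (with method $\mPost$) or $\tXMLHTTPRequest$ (with method $\mPost$) in Algorithm~\ref{alg:runscript}, via $\mathsf{SEND}$ --- because the non-deterministic ``new request'' and ``reload'' branches of the main algorithm (switch $=2,3$) only ever issue $\mGet$ requests with no $\str{Origin}$ header, and redirects in $\mathsf{PROCESSRESPONSE}$ preserve the method/origin of the original request, which pushes the argument back to the original script-initiated request. So $m$ was script-initiated.

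Next I would pin down \emph{which} script and \emph{which} origin. The key point is that the $\str{Origin}$ header the browser attaches is, by the definition of $\mathsf{SEND}$/$\mathsf{RUNSCRIPT}$ (Line~\ref{line:send-form}, Line~\ref{line:send-xhr} and the surrounding code), exactly the origin of the document in which the issuing script runs. Since $r$ only accepts $m$ when this header is $\an{d_r,\https}$, the issuing script runs in a document with origin $\an{d_r,\https}$. By the general web-model properties (documents are created only from HTTP(S) responses, and the response body for a given origin can only be produced by the process owning that origin's SSL key --- here $r$, which is honest), the document loaded at origin $\an{d_r,\https}$ must have been delivered by $r$ itself. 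Inspecting Algorithm~\ref{alg:rp-spresso}, $r$ serves only two scripts: $\str{script\_rp}$ at path $\str{/}$ and $\str{script\_rp\_redir}$ at path $\str{/redir}$. The redirection script (Algorithm~\ref{alg:spresso-script-rp-redir}) only ever issues a single $\tHref$ command (a $\mGet$ navigation) and never an $\mPost$ request with a body of the required shape, so it cannot have produced $m$. Hence the document is running $\mi{script\_rp}$, loaded from $\an{d_r,\https}$, which is exactly the claim.

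Finally I would close the small gap that the attacker script $\Rasp$ could in principle also be running at origin $\an{d_r,\https}$. This is ruled out by the same argument as above: a document at that origin can only have been created from a response signed (SSL-encrypted) by $r$, and $r$ --- being honest --- only ever puts $\str{script\_rp}$ or $\str{script\_rp\_redir}$ into its response bodies; it never delivers $\str{att\_script}$. The $\str{setScript}$ command cannot change this either, since by the browser's navigation rules a script may only set the script of a same-origin document, and no honest-origin document ever starts as $\Rasp$. Therefore the only script that could have issued $m$ is $\mi{script\_rp}$ loaded from $\an{d_r,\https}$.

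\textbf{Main obstacle.} I expect the delicate part to be the exhaustive case analysis showing that \emph{no other mechanism} in the honest browser can emit an HTTPS $\mPost$ request to $\str{/login}$ with the right $\str{Origin}$ header and body shape --- in particular carefully handling redirects (the $303$/$307$ rewriting in $\mathsf{PROCESSRESPONSE}$, which can change a $\mPost$ into a $\mGet$ but not vice versa, and which preserves the recomputed $\str{Origin}$ header only in a controlled way) and confirming that the two non-script branches of Algorithm~\ref{alg:browsermain} genuinely cannot contribute. Much of this is bookkeeping that can be discharged by appeal to the general browser properties already established in \cite{FettKuestersSchmitz-TR-BrowserID-Primary-2015}, but it must be stated precisely enough that the subsequent lemmas (tracking the $\str{loginSessionToken}$ and $\mi{eia}$ back through $\mi{script\_fwd}$, $\mi{script\_idp}$, and the IdP) can build on it.
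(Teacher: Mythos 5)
Your proposal is correct and follows essentially the same route as the paper's proof: the Origin-header check at RP (Line~\ref{line:check-origin-header}) forces the issuing document to have origin $\an{d_r,\https}$, and since the honest $r$ only ever serves $\str{script\_rp}$ and $\str{script\_rp\_redir}$ (the latter issuing only a $\mGet$ navigation), the request must come from $\mi{script\_rp}$. The paper's own argument is just a terser version of yours; your additional bookkeeping on redirects, the non-script browser branches, and the attacker script is consistent with (and subsumed by) what the paper leaves implicit.
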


\begin{proof}
  The request $m$ is XSRF protected. In
  Algorithm~\ref{alg:rp-spresso}, Line~\ref{line:check-origin-header},
  RP checks the presence of the Origin header and its value. If the
  request $m$ was initiated by a document from a different origin than
  $\an{d_r, \https}$, the (honest!) browser $b$ would have added an
  Origin header that would not pass this test (or no Origin header at
  all), according to the browser definition. The script
  $\mi{script\_rp}$ is the only script that the honest party $r$ sends
  as a response and that sends a request to $r$. \qed
\end{proof}

\begin{lemma}\label{lemma:m-contains-lst-for-i}
  The request $m$ contains a nonce $\mi{loginSessionToken}$ such that
  \[S^l(r).\str{loginSessions}[\mi{loginSessionToken}].\str{email} \equiv i'\] and
  $b$ owns $i'$, i.e., $\mapIDtoOwner(i') = b$.
\end{lemma}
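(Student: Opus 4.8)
The plan is to continue the chain of provenance arguments for Property B, now pushing back from the request $m$ (already known, by Lemma~\ref{lemma:request-m-is-from-script-rp}, to have been sent by $\mi{script\_rp}$ running in $b$ at origin $\an{d_r,\https}$) to the identity contained in the login session referenced by $m$. First I would inspect Algorithm~\ref{alg:spresso-script-rp}: the request to $\str{/login}$ is emitted only in the $\str{expectEIA}$ branch, and its body (assembled in Line~\ref{line:spresso-rp-script-assemble-login-body}) contains $s'.\str{loginSessionToken}$, whose value was set in the $\str{expectStartLoginResponse}$ branch (Line~\ref{line:set-login-session-token}) from the body of the XHR response to the $\str{/startLogin}$ request. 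So the login session token in $m$ equals the one $r$ generated and returned to $b$ in $\mathsf{SENDSTARTLOGINRESPONSE}$ (Algorithm~\ref{alg:sendstartloginresponse}). The key step is then to trace how $r$ populated $S^l(r).\str{loginSessions}[\mi{loginSessionToken}].\str{email}$: this happens exactly once, in the line of $\mathsf{SENDSTARTLOGINRESPONSE}$ that stores $\an{\mi{email},\mi{rpNonce},\mi{iaKey},\mi{tag}}$, where $\mi{email}$ is the body of the originating $\str{/startLogin}$ POST request. By Lemma~\ref{lemma:request-m-is-from-script-rp}-style reasoning (the Origin-header / XSRF check), that $\str{/startLogin}$ request was also initiated by $\mi{script\_rp}$ in an honest browser, and looking at Line~\ref{line:send-start-login} of Algorithm~\ref{alg:spresso-script-rp}, the body is $s'.\str{email}$, chosen non-deterministically from the browser-provided $\mi{ids}$ in Line~\ref{line:select-email-address}.

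Next I would connect the $\str{/login}$ request and the $\str{/startLogin}$ request to the \emph{same} browser instance and the \emph{same} $\mi{script\_rp}$ document. This is where I expect the main obstacle: a priori the login session token generated for browser $b$ could be smuggled to a different script or a different browser, which would break the argument. To close this gap I would argue that the $\mi{loginSessionToken}$, while delivered to $\mi{script\_rp}$ in $b$, is kept only in that script's state (it is never written to a cookie, never postMessaged, and only appears again inside the body of the HTTPS $\str{/login}$ request to $r$), so it does not leak to the attacker or to any other document; combined with Lemma~\ref{lemma:request-m-is-from-script-rp} (which forces $m$ to originate from $\mi{script\_rp}$ at $\an{d_r,\https}$ in an honest browser) and the freshness of the token, the $\str{/startLogin}$ request and the $\str{/login}$ request $m$ must have been issued by the same $\mi{script\_rp}$ document in the same browser $b$. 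Since the $\str{/login}$ request $m$ is, by hypothesis of Property~B, sent by the honest browser $b$, that same $b$ sent the $\str{/startLogin}$ request, and hence the $\mi{email}$ value $i'$ in the login session is an element of $b$'s own $\mi{ids}$.

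Finally, I would note that for an honest browser the $\mi{ids}$ provided to scripts is $\an{\mi{ID}^b}$ (see Appendix~\ref{app:browsers-spresso}), i.e., exactly the set $\mapIDtoOwner^{-1}(b)$ of identities owned by $b$. Therefore $i' \in \mi{ID}^b$, which is precisely $\mapIDtoOwner(i') = b$, establishing the lemma. The remaining work to finish Property~B — not part of this lemma — is to use Lemma~\ref{lemma:spresso-request-exists}(4), which forces $i \equiv S^l(r).\str{loginSessions}[b[\str{loginSessionToken}]].\str{email}$, to conclude $i = i'$ and hence $b$ owns $i$; so the statement above is the crucial bridge, and its proof rests entirely on (a) the provenance of the two requests via the Origin-header check and (b) the non-leakage of the fresh login session token, with (b) being the delicate point.
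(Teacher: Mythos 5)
Your proposal is correct and follows essentially the same route as the paper's proof: trace the token in $m$ back through the script's state to the response of the script's own $\str{/startLogin}$ XHR, observe that $\mathsf{SENDSTARTLOGINRESPONSE}$ stores the email from that request's body (which $\mi{script\_rp}$ chose from the browser's own $\mi{ids}$), and use HTTPS confidentiality plus the non-leakage and immutability of the fresh $\mi{loginSessionToken}$ to bind the two requests to the same document in $b$. The only small inaccuracy is your appeal to an Origin-header check on $\str{/startLogin}$ — the RP performs that check only on $\str{/login}$ — but this is harmless since you also give the argument the paper actually uses (the token is returned encrypted under the requester's symmetric key and lives only in the script's state).
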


\begin{proof}
  With Lemma~\ref{lemma:request-m-is-from-script-rp} we know that the
  request was sent by $\mi{script\_rp}$. In
  Algorithm~\ref{alg:spresso-script-rp} defining $\mi{script\_rp}$, in
  Line~\ref{line:spresso-rp-script-assemble-login-body}, the body of the
  request $m$ is assembled (and this is the only line where this
  script sends a request that contains the same path as $m$). The
  login session token is taken from the script's state
  ($\mi{loginSessionToken}$). This part of the state is initially set
  to $\bot$ and is only changed in
  Line~\ref{line:set-login-session-token}. There, it is taken from the
  response to the start login XHR issued in
  Line~\ref{line:send-start-login} (the request and response are
  coupled using $\mi{refXHR}$ which is tracked in the script's state).
  In Line~\ref{line:select-email-address}, the script selects one of
  the browser's identities (which are the identities that the browser
  owns, by the definition of browsers in
  Appendix~\ref{app:browsers-spresso}). This identity is then used in the
  start login XHR.

  When receiving this request (which is an HTTPS message, and
  therefore, cannot be altered nor read by the attacker),
  ultimatively, the function $\mi{SENDSTARTLOGINRESPONSE}$
  (Algorithm~\ref{alg:sendstartloginresponse}) is called. There are
  two cases how this function can be called (see
  Line~\ref{line:branch-wk-cache} of Algorithm~\ref{alg:rp-spresso}):

  \begin{itemize}
  \item If the well-know cache of $r$ already contains an entry for
    the host contained in the email address,
    $\mi{SENDSTARTLOGINRESPONSE}$ is called immediately with the email
    address contained in the request's body.
  \item Else, the email address in the request's body is stored,
    together with the request's HTTP nonce, the HTTPS encryption key
    and other data, in the subterm $\str{pendingDNS}$ of $r$'s state.
    From there, it is later moved to $\str{pendingRequests}$
    (Line~\ref{line:move-reference-to-pending-request}). Finally, in
    Line~\ref{line:call-ssl-second}, $\mi{SENDSTARTLOGINRESPONSE}$ is
    called. 
  \end{itemize}

  We will come back to these two cases further down.

  After $\mi{SENDSTARTLOGINRESPONSE}$ is called, a new
  $\mi{loginSessionToken}$ is chosen and in the dictionary
  $S^x(r).\str{loginSessions}[\mi{loginSessionToken}]$ the email
  address (along with other data) is stored (for some $x$).

  The $\mi{loginSessionToken}$ is then sent as a response to $m$, in
  particular, it is encrypted with the symmetric key $k$ contained in
  the request. In the first case listed above, the $k$ is immediately
  retrieved from the request. Otherwise, the relationship between $k$
  and the email address is preserved in any case: If the receiver can
  decrypt the response to $m$, it sent the email address $i'$ in the
  request.

  As explained above, $\mi{script\_rp}$ takes the
  $\mi{loginSessionToken}$ from the response body and stores it in its
  state to later use it in the request $m$. Therefore the start login
  XHR described above must have taken place before $m$, i.e., $x < l$.

  The entries in the dictionary $\str{loginSessions}$ can not be
  altered and only be removed when a service token request with the
  corresponding value of $\mi{loginSessionToken}$ is processed. As
  each $\mi{loginSessionToken}$ is not leaked
  to any other party except $r$, we know that
  $S^l(r).\str{loginSessions}[\mi{loginSessionToken}].\str{email}
  \equiv i'$. As shown above, due to the way $i'$ is selected by the
  script, $b$ owns $i'$. \qed
\end{proof}

With Lemma~\ref{lemma:m-contains-lst-for-i}, we can now show that $i =
i'$: In Line~\ref{line:assemble-service-token} of
Algorithm~\ref{alg:rp-spresso}, the service token is assembled. In
particular, $i$ is chosen to be
$S^l(r).\str{loginSessions}[\mi{loginSessionToken}].\str{email}$, and
therefore $i = i'$ and $b$ owns $i$.

\QED

\section{Indistinguishability of Web Systems}

\gs{removed description/intro text as the paper contains a way better
  version of it}

\begin{definition}[Web System Command and Schedule]
  We call a term $\zeta$ a \emph{web system command} (or simply,
  command) if $\zeta$ is of the form
  \[
  \an{i,j,\tau_\text{process},\mi{cmd}_\text{switch},\mi{cmd}_\text{window},\tau_\text{script},\mi{url}}
  \]
  
  The components are defined as follows:
  \begin{itemize}
  \item $i \in \mathbb{N}$,
  \item $j \in \mathbb{N}$,
  \item $\mi{cmd}_\text{switch} \in \{1,2,3\}$,
  \item $\mi{cmd}_\text{window} \in \mathbb{N}$,
  \item
    $\tau_\text{script} \in \gterms_\emptyset(V_\text{script} \cup
    \{x\})$
    with $x$ being a variable and $V_\text{script}$ the set of
    placeholders for scripting processess (see
    Definition~\ref{def:placeholder-sp}).
  \item
    $\tau_{\text{process}} \in \gterms_\emptyset(V_\text{process} \cup
    \{x\})$
    with $x$ being a variable and $V_\text{process}$ the set of
    placeholders (see
    Definition~\ref{def:groundterms-messages-placeholders-protomessages}).
  \item $\mi{url} \in \urls$ with $\urls$ being the set of all valid
    URLs (see Definition~\ref{def:url}).
  \end{itemize}

  We call a (finite) sequence $\sigma = \an{\zeta_1,\ldots,\zeta_n}$,
  with $\zeta_1,\ldots,\zeta_n$ being web system commands, a \emph{web
    system schedule} (or simply, schedule).
\end{definition}

\begin{definition}[Induced Processing Step]
  Let
  $\completewebsystem = (\websystem,\scriptset,\mathsf{script},E^0)$
  be a web system and
  \[(S,E,N) \xrightarrow[p \rightarrow E_{\text{out}}]{\an{a,f,m}
    \rightarrow p} (S', E', N')\]
  be a processing step of $\websystem$ (as in
  Definition~\ref{def:processing-step}) with $E = (e_1, e_2, \ldots)$
  and
  \[\zeta =
  \an{i,j,\tau_\text{process},\mi{cmd}_\text{switch},\mi{cmd}_\text{window},\tau_\text{script},\mi{url}}\]
  a web system command. We say that this processing step is
  \emph{induced by $\zeta$} iff

  \begin{enumerate}
  \item $e_i = \an{a,f,m}$.
  \item Under a lexicographic ordering of $\websystem$, $p$ is the
    $j$-th process in $\websystem$ with $a \in I^p$.
  \item
    $E' = E_\text{out} \cdot (e_1, \ldots, e_{i-1}, e_{i+1}, \ldots)$.
  \item If $p$ is a (web) attacker process or $p$ is a corrupted
    browser (i.e., $S(p).\str{isCorrupted} \not\equiv \bot$), then
    $E_\text{out} = \an{ e_\text{out}}$ with
    $\an{S'(p),e_\text{out}} =
    \tau_\text{process}[\an{e_i,s}\!/\!x]\nf$.
  \item If $p$ is an honest browser (i.e.,
    $S(p).\str{isCorrupted} \equiv \bot$) and $m\equiv \str{TRIGGER}$,
    the browser relation behaves as follows and $E_\text{out}$ and
    $S'(p)$ are obtained accordingly:
    \begin{enumerate}
    \item If $\mi{cmd}_\text{switch} = 1$, the browser relation
      chooses $\mi{switch}=1$ in Line~\ref{line:browser-switch} of
      Algorithm~\ref{alg:browsermain} and $\overline{w}$ in
      Line~\ref{line:browser-trigger-window} of
      Algorithm~\ref{alg:browsermain} such that $\overline{w}$ is the
      $\mi{cmd}_\text{window}$-th window in the tree of browser's
      state $S(p).\str{windows}$. If this script is not the attacker
      script, the browser (deterministically) executes the script in
      this window. Otherwise, in Line~\ref{line:trigger-script} of
      Algorithm~\ref{alg:runscript}, the browser relation chooses the
      output of the script (of this window) as
      $\mi{out}^\lambda = \tau_\text{script}[\mi{in}\!/\!x]\nf$ with
      the variable $\mi{in}$ (deterministically) chosen in
      Line~\ref{line:browser-scriptinputs} of
      Algorithm~\ref{alg:runscript}.

    \item If $\mi{cmd}_\text{switch} = 2$, the browser relation
      chooses $\mi{switch}=2$ in Line~\ref{line:browser-switch} of
      Algorithm~\ref{alg:browsermain} and
      $\mi{protocol}$, $\mi{host}$, $\mi{domain}$, $\mi{path}$, $\mi{parameters}$
      in Line~\ref{line:browser-choose-url}ff. of
      Algorithm~\ref{alg:browsermain} such that
      $\mi{url} = \an{\tUrl, \mi{protocol},\mi{host}, \mi{path},
        \mi{parameters}}$.

    \item If $\mi{cmd}_\text{switch} = 3$, the browser relation
      chooses $\mi{switch}=3$ in Line~\ref{line:browser-switch} of
      Algorithm~\ref{alg:browsermain} and $\overline{w}$ in
      Line~\ref{line:browser-reload-window} of
      Algorithm~\ref{alg:browsermain} such that $\overline{w}$ is the
      $\mi{cmd}_\text{window}$-th window in the tree of browser's
      state $S(p).\str{windows}$. (The browser then starts to reload
      the document in this window.)
    \end{enumerate}
  \end{enumerate}

  We write
  \[
  (S,E,N) \xrightarrow{\zeta} (S',E',N') \ .
  \]
\end{definition}

\begin{corollary}\label{corr:processingstep-not-induced}
  In some cases a command
  $\sigma =
  \an{i,j,\tau_\text{process},\mi{cmd}_\text{switch},\mi{cmd}_\text{window},\tau_\text{script},\mi{url}}$
  does not induce a processing step under the configuration $(S,E,N)$
  in a web system: If $i > |E|$, a processing step cannot
  be induced. The same applies if $j$ does not refer to an existing
  process. Also, if the command schedules a $\str{TRIGGER}$ message to
  be delivered to a browser $p$, $\mi{cmd}_\text{switch} \in \{1,3\}$,
  and $\mi{cmd}_\text{window} > |\mathsf{Subwindows}(S(p))|$ (i.e.,
  the command chooses a window of the browser $p$, which does not
  exist), then no processing step can be induced.
\end{corollary}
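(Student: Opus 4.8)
The plan is to argue directly from the definition of an induced processing step. A command induces a processing step only if \emph{all} of the requirements (1)--(5) of that definition can be met simultaneously for some underlying processing step in the sense of Definition~\ref{def:processing-step}. Hence, for each of the three cases stated in the corollary, it suffices to exhibit one requirement that becomes unsatisfiable. (I would first note the minor notational identification that the command written $\sigma$ in the corollary plays the role of the term $\zeta$ in the definition.)

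For the clause $i > |E|$, I would invoke requirement~(1), which demands that the delivered event $e_\text{in} = \an{a,f,m}$ be exactly the $i$-th element $e_i$ of the waiting-event sequence $E = (e_1,e_2,\ldots)$. If $E$ has fewer than $i$ elements, there is no $e_i$, so requirement~(1) cannot hold; consequently no event is available for delivery under index $i$ and no processing step of the form $(S,E,N)\xrightarrow{\sigma}(S',E',N')$ exists. For the clause that $j$ does not refer to an existing process, I would assume $e_i$ is defined (otherwise we are already in the previous case), so that its receiver address $a$ is fixed, and then appeal to requirement~(2): it forces $p$ to be the $j$-th process of $\websystem$, in the fixed lexicographic ordering, satisfying $a \in I^p$. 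If strictly fewer than $j$ processes listen to $a$, there is no admissible $p$, requirement~(2) fails, and again no processing step is induced.

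For the $\trigger$/browser clause, $p$ is an honest browser and $m \equiv \trigger$, so requirement~(5) governs the behaviour. In the subcase $\mi{cmd}_\text{switch} = 1$, requirement~(5) mandates that the browser relation select, in Line~\ref{line:browser-trigger-window} of Algorithm~\ref{alg:browsermain}, the window $\overline{w}$ that is the $\mi{cmd}_\text{window}$-th window in the tree $S(p).\str{windows}$, a selection ranging over $\mathsf{Subwindows}(S(p))$; in the subcase $\mi{cmd}_\text{switch} = 3$ the analogous selection occurs in Line~\ref{line:browser-reload-window}. In either subcase, if $\mi{cmd}_\text{window} > |\mathsf{Subwindows}(S(p))|$ there is no $\mi{cmd}_\text{window}$-th window, so the required choice of $\overline{w}$ has no witness and the mandated deterministic browser behaviour cannot be realized; hence no processing step is induced.

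I expect the whole argument to be a mechanical unfolding of the definition, so the main obstacle is careful reading rather than mathematical depth. The one point that needs attention is, in the third clause, to verify that the window index $\mi{cmd}_\text{window}$ is resolved against the enumeration of $\mathsf{Subwindows}(S(p))$ (the full set of relevant windows, including subwindows) and not merely against the top-level window list, so that the bound $\mi{cmd}_\text{window} > |\mathsf{Subwindows}(S(p))|$ genuinely leaves the selections in Lines~\ref{line:browser-trigger-window} and~\ref{line:browser-reload-window} without any valid choice.
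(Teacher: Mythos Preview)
Your argument is correct and is exactly the direct unfolding of the definition of an induced processing step that the corollary is meant to record; the paper itself states the corollary without proof, treating it as an immediate observation. Your care about the window-index being resolved against $\mathsf{Subwindows}(S(p))$ is warranted and matches how the corollary and the later proof of Lemma~\ref{lemma:step-config-private} use it.
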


\begin{definition}[Induced Run]
  Let
  $\completewebsystem = (\websystem,\scriptset,\mathsf{script},E^0)$
  be a web system, $\sigma = \an{\zeta_1, \ldots, \zeta_n}$ be a
  finite web system schedule, and $N^0$ be an infinite sequence of
  pairwise disjoint nonces. We say that a finite run
  $\rho = ((S^0,E^0,N^0),\ldots,(S^n,E^n,N^n))$ of the system
  $\websystem$ is \emph{induced by $\sigma$ under nonces $N^0$} iff
  for all $1 \le i \le n$, $\zeta_i$ induces the processing step
  \[(S^{i-1},E^{i-1},N^{i-1}) \xrightarrow{\zeta_i} (S^i, E^i, N^i) \
  .\]
  We denote the set of runs induced by $\sigma$ under all infinite
  sequences of pairwise disjoint nonces $N^0$ by
  $\sigma(\completewebsystem)$.
\end{definition}

To define the notion of indistinguishability for web systems, we need
to define the notion of static equivalence of terms in our model. This
definition follows the notion of static equivalence by Abadi and
Fournet~\cite{AbadiFournet-POPL-2001}.

\begin{definition}[Static Equivalence]
  Let $t_1$, $t_2 \in \terms(V)$ be two terms with $V$ a set of
  variables. We say that $t_1$ and $t_2$ are \emph{statically
    equivalent}, written $t_1 \approx t_2$, iff for all terms $M$,
  $N \in \gterms_\emptyset(\{x\})$ with $x$ a variable and
  $x \not\in V$, it holds true that
  \[
  M[t_1\!/\!x] \equiv N[t_1\!/\!x] \quad \Leftrightarrow \quad
  M[t_2\!/\!x] \equiv N[t_2\!/\!x].
  \]
\end{definition}

\begin{definition}[Web System with Distinguished Attacker]\label{def:ws-d-att}
  Let
  $\completewebsystem = (\websystem,\scriptset,\mathsf{script},E^0)$
  be a web system with $\websystem$ partitioned into $\fAP{Hon}$,
  $\fAP{Web}$, and $\fAP{Net}$ (as in Definition~\ref{def:websystem}).
  Let $\fAP{attacker} \in \websystem$ be an attacker process (out of
  $\fAP{Web} \cup \fAP{Net}$). We call
  $\hat{\completewebsystem} =
  (\websystem,\scriptset,\mathsf{script},E^0,\fAP{attacker})$
  a \emph{web system with the distinguished attacker
    $\fAP{attacker}$}.
\end{definition}\gs{$\downarrow$ should we restrict this notion just to attackers?
  we could also say: a web system with distinguished process}

\begin{definition}[Indistinguishability]\label{def:indistinguishability}
  Let
  $\completewebsystem_0 =
  (\websystem_0,\scriptset_0,\mathsf{script}_0,E_0^0,p_0)$,
  $\completewebsystem_1 =
  (\websystem_1,\scriptset_1,\mathsf{script}_1,E_1^{0},p_1)$
  be web systems with a distinguished attacker. We call
  $\completewebsystem_0$ and $\completewebsystem_1$
  \emph{indistinguishable under the schedule $\sigma$} iff for every
  schedule $\sigma$ and every $i \in \{0,1\}$, we have that for every
  run $\rho \in \sigma(\completewebsystem_i)$ there exists a run
  $\rho' \in \sigma(\completewebsystem_{1-i})$ such that
  $\rho(p_i) \approx \rho'(p_{1-i})$.

  We call $\completewebsystem_0$ and $\completewebsystem_1$
  \emph{indistinguishable} iff they are indististinguishable under all
  schedules $\sigma$.
\end{definition}

\section{Formal Proof of Privacy}
\label{app:formal-proof-privacy}

We will here first describe the precise model that we use for privacy.
After that, we define an equivalence relation between configurations,
which we will then use in the proof of privacy.

\subsection{Formal Model of \spresso for Privacy Analysis}
\label{app:model-spresso-priv}

\begin{definition}[Challenge Browser]
  Let $\mi{dr}$ some domain and $b(\mi{dr})$ a DY process. We call $b(\mi{dr})$ a \emph{challenge browser} iff $b$
  is defined exactly the same as a browser (as described in
  Appendix~\ref{app:deta-descr-brows}) with two exceptions: (1) the
  state contains one more property, namely $\mi{challenge}$, which  initially contains the term $\top$. (2)
  Algorithm~\ref{alg:send} is extended by the following at its very
  beginning: It is checked if a message $m$ is addressed to the domain
  $\str{CHALLENGE}$ (which we call the challenger domain). If $m$ is
  addressed to this domain and no other message $m'$ was addressed to
  this domain before (i.e., $\mi{challenge} \not\equiv \bot$), then
  $m$ is changed to be addressed to the domain $\mi{dr}$ and $\mi{challenge}$ is set to $\bot$ to recorded
  that a message was addressed to $\str{CHALLENGE}$.
\end{definition}

\begin{definition}[Deterministic DY Process]
 We call a DY process $p = (I^p,Z^p,R^p,s_0^p)$ \emph{deterministic} iff the relation $R^p$ is a (partial) function.

We call a script $R_\text{script}$ \emph{deterministic} iff the relation $R_\text{script}$ is a (partial) function.
\end{definition}

\begin{definition}[\spresso Web System for Privacy Analysis]\label{def:spresso-ws-priv}
  Let $\spressowebsystem = (\bidsystem, \scriptset, \mathsf{script},
  E^0)$ be an \spresso web system with $\bidsystem = \mathsf{Hon} \cup
  \mathsf{Web} \cup \mathsf{Net}$, $\mathsf{Hon} = \fAP{B} \cup
  \fAP{RP} \cup \fAP{IDP} \cup \fAP{FWD} \cup \fAP{DNS}$ (as described
  in Appendix~\ref{app:outlinespressomodel}), $\fAP{RP} =
  \{r_1,r_2\}$, $\fAP{FWD} = \{\fAP{fwd}\}$, $\fAP{DNS} =
  \{\fAP{dns}\}$, $r_1$ and $r_2$ two (honest) relying parties,
  $\fAP{fwd}$ an honest forwarder, $\fAP{dns}$ an honest DNS
  server. Let $\fAP{attacker} \in \mathsf{Web}$ be some
  web attacker. Let
  $\mi{dr}$ be a domain of $r_1$ or $r_2$ and $b(\mi{dr})$ a challenge
  browser. Let $\mathsf{Hon}' := \{ b(\mi{dr}) \} \cup \fAP{RP} \cup
  \fAP{FWD} \cup \fAP{DNS}$, $\mathsf{Web}' := \mathsf{Web}$, and
  $\mathsf{Net}' := \emptyset$ (i.e., there is no network attacker).
  Let $\bidsystem' := \mathsf{Hon}' \cup \mathsf{Web}' \cup
  \mathsf{Net}'$.  Let $\scriptset' := \scriptset \setminus
  \{\str{script\_idp}\}$ and $\mathsf{script}'$ be accordingly. We
  call $\spressoprivwebsystem(\mi{dr}) = (\bidsystem', \scriptset',
  \mathsf{script}', E^0, \fAP{attacker})$ an \emph{\spresso web system
    for privacy analysis} iff %
  \gs{$\nf$ hier vielleicht noch nach wichtigkeit sortieren} the
  domain $\str{fwddomain}$ is the only domain assigned to $\fAP{fwd}$,
  the domain $\mi{dr}_1$ the only domain assigned to $r_1$, and
  $\mi{dr}_2$ the only domain assigned to $r_2$. Both, $r_1$ and $r_2$
  are configured to use the forwarder $\fAP{fwd}$, i.e., in their
  state $\str{FWDdomain}$ is set to $\str{fwddomain}$. The browser
  $b(\mi{dr})$ owns exactly one email address and this email address
  is governed by some attacker.  All honest parties (in
  $\mathsf{Hon}$) are not corruptible, i.e., they ignore any
  $\str{CORRUPT}$ message. Identity providers are assumed to be
  dishonest, and hence, are subsumed by the web attackers (which
  govern all identities). In the initial state $s_0^b$ of the (only)
  browser in $\bidsystem'$ and in the initial states $s_0^{r_1}$,
  $s_0^{r_2}$ of both relying parties, the DNS address is
  $\mapAddresstoAP(\fAP{dns})$. Further, $\mi{wkCache}$ in the initial
  states $s_0^{r_1}$, $s_0^{r_2}$ is equal and contains a public key
  for each domain registered in the DNS server (i.e., the relying
  parties already know some public key to verify \spresso identity
  assertions from all domains known in the system\gs{$\nf$ genauer
    formulieren} and they do not have to fetch them from IdP).
\end{definition}

As all parties in an \spresso web system for privacy analysis are
either web attackers, browsers, or deterministic processes and all
scripting processes are either the attacker script or deterministic,
it is easy to see that in \spresso web systems for privacy analysis with configuration $(S,E,N)$ a command $\zeta$ induces at most one processing step. We further note that, under a given infinite sequence of nonces $N^0$, all schedules $\sigma$ induce at most one run $\rho = ((S^0,E^0,N^0),\dots,(S^i,E^i,N^i),\dots,(S^{|\sigma|},E^{|\sigma|},N^{|\sigma|}))$ as all of its commands induce at most one processing step for the $i$-th configuration.

We will now define our privacy
property for \spresso:

\begin{definition}[IdP-Privacy]\label{def:idp-privacy}\gs{$\downarrow$ muessen wir hier $\mi{dr}_1$ einfueheren?}
  Let 
  \begin{align*}
    \spressoprivwebsystem_1 := \spressoprivwebsystem(\mi{dr}_1) =
    (\bidsystem_1, \scriptset, \mathsf{script}, E^0, \fAP{attacker}_1)&\text{ and}\\
    \spressoprivwebsystem_2 := \spressoprivwebsystem(\mi{dr}_2) =
    (\bidsystem_2, \scriptset, \mathsf{script}, E^0, \fAP{attacker}_2)&
  \end{align*}
  be \spresso web systems for privacy analysis.  Further, we require
  $\fAP{attacker}_1 = \fAP{attacker}_2 =: \fAP{attacker}$ and for $b_1
  := b(\mi{dr}_1)$, $b_2 := b(\mi{dr}_2)$ we require $S(b_1) = S(b_2)$
  and $\bidsystem_1 \setminus \{b_1\} = \bidsystem_2 \setminus
  \{b_2\}$ (i.e., the web systems are the same up to the parameter of
  the challenge browsers).  We say that $\spressoprivwebsystem$ is
  \emph{IdP-private} iff $\spressoprivwebsystem_1$ and
  $\spressoprivwebsystem_2$ are indistinguishable.
\end{definition}

\subsection{Definition of Equivalent Configurations}
\label{app:defin-equiv-stat}

Let $\spressoprivwebsystem_1 = (\bidsystem_1, \scriptset,
\mathsf{script}, E^0, \fAP{attacker})$ and $\spressoprivwebsystem_2 =
(\bidsystem_2, \scriptset, \mathsf{script}, E^0, \fAP{attacker})$ be
\spresso web systems for privacy analysis. Let $(S_1,E_1,N_1)$ be a
configuration of $\spressoprivwebsystem_1$ and $(S_2,E_2,N_2)$ be a
configuration of $\spressoprivwebsystem_2$.

\begin{definition}[Proto-Tags]
  We call a term of the form $\encs{\an{y,n}}{k}$ with the variable
  $y$ as a placeholder for a domain, and $n$ and $k$ some nonces a
  \emph{proto-tag}.
\end{definition}

\begin{definition}[Term Equivalence up to Proto-Tags]
  Let $\theta = \{a_1, \ldots, a_l \}$ be a finite set of proto-tags.
  Let $t$ and $t'$ be terms. We call $t_1$ and $t_2$
  \emph{term-equivalent under a set of proto-tags $\theta$} iff there
  exists a term $\tau \in \terms(\{x_1,\dots,x_l\})$ such that
  $t_1 = (\tau [ a_1 / x_1 , \dots , a_l / x_l ])[ \mi{dr}_1 / y ]$
  and
  $t_2 = (\tau [ a_1 / x_1 , \dots , a_l / x_l ])[ \mi{dr}_2 / y ]$.
  We write $t_1 \prototagequiv{\theta} t_2$.

  We say that two finite sets of terms $D$ and $D'$ are
  \emph{term-equivalent under a set of proto-tags $\theta$} iff
  $|D| = |D'|$ and, given a lexicographic ordering of the elements in
  $D$ of the form $(d_1,\dots,d_{|D|})$ and the elements in $D'$ of
  the form $(d'_1,\dots,d_{|D'|})$, we have that for all
  $i \in \{1,\dots,|D|\}$: $d_i \prototagequiv{\theta} d'_i$. We then
  write $D \prototagequiv{\theta} D'$.
\end{definition}

\begin{definition}[Equivalence of HTTP Requests]
  Let $m_1$ and $m_2$ be (potentially encrypted) HTTP requests and
  $\theta = \{a_1, \ldots, a_l \}$ be a finite set of proto-tags. We
  call $m_1$ and $m_2$ \emph{$\delta$-equivalent under a set of
    proto-tags $\theta$} iff $m_1 \prototagequiv{\theta} m_2$ or all
  subterms are equal with the following exceptions:
  \begin{enumerate}
  \item the Host value and the Origin/Referer headers in both requests
    are the same except that the domain $\mi{dr}_1$ in $m_1$ can be
    replaced by $\mi{dr}_2$ in $m_2$,
  \item the HTTP body $g_1$ of $m_1$ and the HTTP body $g_2$ of $m_2$
    are (I) term-equivalent under $\theta$, (II) for $j\in \{1,2\}$ if
    $g_j[\str{eia}] \sim \encs{\sig{ \an{\encs{\an{\mi{dr}_j,*}}{*},
          *, \str{fwddomain}} }{*}}{*} $
    and the origin (HTTP header) of HTTP message in $m_j$ is
    $\an{\mi{dr}_j,\https}$ then the receiver of this message is
    $r_j$, and (III) if $g_1$ contains a dictionary key
    $\str{loginSessionToken}$ then there exists an $l' \in L$ such
    that $g_1[\str{loginSessionToken}] \equiv l'$, and
  \item if $m_1$ is an encrypted HTTP request then and only then $m_2$
    is an encrypted HTTP request and the keys used to encrypt the
    requests have to be the correct keys for $\mi{dr}_1$ and
    $\mi{dr}_2$ respectively.
  \end{enumerate}
  We write $m_1 \httptagequiv{\theta} m_2$.

\end{definition}

\gs{$\nf$ check if we have $\fAP{FWD}$ for forwarder defined}

\gs{$\nf$ wir verwenden im Folgenden $\fAP{dns}$, $b$, etc.. sind
  diese hinreichend definiert?}

\begin{definition}[Extracting Entries from Login Sessions]\gs{$\nf$ todo: find name}
  Let $t_1$, $t_2$ be dictionaries over $\nonces$ and $\terms$,
  $\theta$ be a finite set of proto-tags, and $d$ a domain. We call
  $t_1$ and $t_2$ \emph{$\eta$-equivalent} iff $t_2$ can be
  constructed from $t_1$ as follows: For every proto-tag
  $a \in \theta$, we remove the entry identified by the dictionary key
  $i$ for which it holds that $\proj{4}{t_1[i]} \equiv a[ d / y]$, if
  any. We denote the set of removed entries by $D$. We write
  $\logsessminus{t_1}{t_2}{\theta}{d}{D}$.
\end{definition}

\begin{definition}
  Let $a$ be a proto-tag, $S_1$ and $S_2$ be states of \spresso web
  systems for privacy analysis, and $l$ a nonce. We call $l$ a login
  session token for the proto-tag $a$, written
  $l \in \mathsf{loginSessionTokens}(a,S_1,S_2)$ iff for any
  $i \in \{1,2\}$ and any $j \in \{1,2\}$ we have that
  $\proj{4}{S_i(r_j).\str{loginSessions}[l]} = a[\mi{dr}_j/y]$.
\end{definition}

\begin{definition}[Equivalence of States]\label{def:eq-of-states}
  Let $\theta$ be a set of proto-tags and $H$ be a set of nonces. Let
  $K := \{ k \mid \exists\,n: \encs{\an{y,n}}{k} \in \theta\}$. We
  call $S_1$ and $S_2$ \emph{$\gamma$-equivalent under
    $(\theta, H)$}\gs{schreibweise uebernehmen} iff the following
  conditions are met:
  \begin{enumerate}
  \item\label{eqs:fwd} $S_1(\fAP{fwd}) = S_2(\fAP{fwd})$, and
  \item\label{eqs:dns} $S_1(\fAP{dns}) = S_2(\fAP{dns})$, and
  \item\label{eqs:r1} $S_1(\fAP{r_1})$ equals $S_2(\fAP{r_1})$ except
    for the subterms $\str{pendingDNS}$, $\str{loginSessions}$ and
    $\str{serviceTokens}$, and
  \item\label{eqs:r2} $S_1(\fAP{r_2})$ equals $S_2(\fAP{r_2})$ except
    for the subterms $\str{pendingDNS}$, $\str{loginSessions}$ and
    $\str{serviceTokens}$, and
  \item\label{eqs:logsess} for two sets of terms $D$ and $D'$:
    $\logsessminus{S_1(\fAP{r_1}).\str{loginSessions}}{S_2(\fAP{r_1}).\str{loginSessions}}{\theta}{\mi{dr}_1}{D}$,
    $\logsessminus{S_2(\fAP{r_2}).\str{loginSessions}}{S_1(\fAP{r_2}).\str{loginSessions}}{\theta}{\mi{dr}_2}{D'}$,
    and $D \prototagequiv{\theta} D'$, and
  \item\label{eqs:rp-invalid-dns} for all entries $x$ in the subterms
    $\str{pendingDNS}$ of $S_1(\fAP{r_1})$, $S_1(\fAP{r_1})$,
    $S_1(\fAP{r_1})$, and $S_1(\fAP{r_1})$ it holds true that
    $\pi_2{x}.\str{host}$ is not a domain name known to the DNS
    server, and
  \item\label{eqs:rp-no-pendingRequests} the subterms
    $\str{pendingRequest}$ of $S_1(\fAP{r_1})$, $S_1(\fAP{r_2})$,
    $S_2(\fAP{r_1})$, and $S_2(\fAP{r_2})$ are $\an{}$, and
  \item\label{eqs:rp-wkCache} the subterm $\str{wkCache}$ of
    $S_1(\fAP{r_1})$, $S_1(\fAP{r_2})$, $S_2(\fAP{r_1})$, and
    $S_2(\fAP{r_2})$ are equal and contain a public key for each
    domain registered in the DNS server, and
  \item\label{eqs:att-not-k} $\forall k \in K$:
    $k \not\in d_\emptyset(\bigcup_{i \in \{1,2\},\ A\, \in\,
      \mathsf{Web}\, \cup \,\mathsf{Net}\, \cup\, \{\mathsf{dns}, \mathsf{fwd}\}}S_i(A))$
  \item\label{eqs:att} for each attacker $A$:
    $S_1(A) \prototagequiv{\theta} S_2(A)$, and
  \item\label{eqs:att-not-l} for all $a\in\theta$ and all attackers $A$ we have that
    $\nexists\ l \in \mathsf{loginSessionTokens}(a,S_1,S_2)$ such that
    $l$ is a subterm of $S_1(A)$ or $S_2(A)$.
  \item\label{eqs:b} $S_1(b_1)$ equals $S_2(b_2)$ except for for the
    subterms $\str{challenge}$, $\str{pendingDNS}$,
    $\str{pendingRequests}$, $\str{windows}$ and we have that

    \begin{enumerate}
    \item \label{eqs:b:challenge}
      $S_1(b_1).\str{challenge} = \mi{dr}_1 \wedge
      S_2(b_2).\str{challenge} = \mi{dr}_2$
      or $S_1(b_1).\str{challenge} = S_2(b_2).\str{challenge} = \bot$,
      and
    \item\label{eqs:b:pendingDNS}
      $|S_1(b_1).\str{pendingDNS}| = |S_2(b_2).\str{pendingDNS}| =:
      j$,
      for all $i \in \{1, \dots, j\}$,
      $q_1 := \proj{i}{S_1(b_1).\str{pendingDNS}}$,
      $q_2 := \proj{i}{S_2(b_2).\str{pendingDNS}}$ we have that
      $\proj{1}{q_1} = \proj{1}{q_2} \in \nonces$ and for
      $v_1 := \proj{2}{q_1}$ and $v_2 := \proj{2}{q_2}$:
      \begin{enumerate}
      \item $\proj{1}{v_1} = \proj{1}{v_2}$, and
      \item $\proj{3}{v_1} = \proj{3}{v_2}$, and
      \item $\proj{1}{v_1}$ is either a nonce ($\in \nonces$) or a
        term of the form $\an{x,y}$ with $x \in \nonces$ a nonce and
        $y \in \nonces \cup \{\bot\}$ a nonce or $\bot$, and 
      \item if
        $\proj{2}{v_1}.\str{host} = \mi{dr}_1 \wedge
        \proj{2}{v_2}.\str{host} = \mi{dr}_2$,\\then
        $\proj{2}{v_1} \httptagequiv{\theta}
        \proj{2}{v_2}$ $\wedge$ $\proj{2}{v_1}.\str{nonce} \in H$,\\
        else $\proj{2}{v_1} \prototagequiv{\theta} \proj{2}{v_2}$
        $\wedge$ $\proj{2}{v_1}.\str{nonce} \not\in H$ $\wedge$ $\nexists\, l \in L$ such that $l$ is a subterm of $\proj{2}{v_1}$,
      \end{enumerate}
and
\item\label{eqs:b:pendingRequests}
  $|S_1(b_1).\str{pendingRequests}| = |S_2(b_2).\str{pendingRequests}|
  =: j$,
  for all $i \in \{1, \dots, j\}$,
  $v_1 := \proj{i}{S_1(b_1).\str{pendingRequests}}$,
  $v_2 := \proj{i}{S_2(b_2).\str{pendingRequests}}$ we have that

  \begin{enumerate}
  \item $\proj{1}{v_1} = \proj{1}{v_2}$, and
  \item $\proj{3}{v_1} = \proj{3}{v_2}$, and
  \item $\proj{1}{v_1}$ is either a nonce ($\in \nonces$) or a term of
    the form $\an{x,y}$ with $x \in \nonces$ a nonce and
    $y \in \nonces \cup \{\bot\}$ a nonce or $\bot$, and 
  \item if
    $\proj{2}{v_1}.\str{host} = \mi{dr}_1 \wedge
    \proj{2}{v_2}.\str{host} = \mi{dr}_2$,\\then
    $\proj{2}{v_1} \httptagequiv{\theta}
    \proj{2}{v_2}$ $\wedge$ $\proj{2}{v_1}.\str{nonce} \in H$ $\wedge$ $\proj{4}{v_1} \in \mapAddresstoAP(r_1)$ $\wedge$ $\proj{4}{v_2} \in \mapAddresstoAP(r_2)$,\\
    else $\proj{2}{v_1} \prototagequiv{\theta} \proj{2}{v_2}$ $\wedge$
    $\proj{2}{v_1}.\str{nonce} \not\in H$ $\wedge$
    $\proj{4}{v_1} = \proj{4}{v_2}$ $\wedge$ $\nexists\, l \in L$ such that $l$ is a subterm of $\proj{2}{v_1}$, 
  \end{enumerate}
  and
\item\label{eqs:b:no-k-in-pending-something} there is no $k \in K$
  such that
  \begin{align*}
    k \in d_{\nonces \setminus \{k\}}(\{&S_1(b_1).\str{pendingRequests}, S_2(b_2).\str{pendingRequests},\\ & S_1(b_1).\str{pendingDNS}, S_2(b_2).\str{pendingDNS}\})
  \end{align*}
  (i.e., $k$ cannot be derived from these terms by any party unless it knows $k$), and
\item $S_1(b_1).\str{windows}$ equals $S_2(b_2).\str{windows}$ with
  the exception of the subterms $\str{location}$, $\str{referrer}$,
  $\str{scriptstate}$, and $\str{scriptinputs}$ of some document terms
  pointed to by
  $\mathsf{Docs}^+(S_1(b_1)) = \mathsf{Docs}^+(S_2(b_2)) =: J$. For
  all $j \in J$ we have that: \label{eqs:b:w}\gs{$\nf$ do we need
    normal form in the following?}
  \begin{enumerate}
  \item there is no $k \in K$ such that
    \begin{align*}
    k \in d_{\nonces \setminus \{k\}}(\{&S_1(b_1).j.\str{location}
    ,  S_2(b_2).j.\str{location},\\ & S_1(b_1).j.\str{referrer} , 
    S_2(b_2).j.\str{referrer}\})
    \end{align*}

  \item if
    $S_1(b_1).j.\str{origin} \in \{\an{\mi{dr}_1,
      \https},\an{\mi{dr}_2, \https}\}$
    then
    $S_1(b_1).j.\str{script} \in \{\str{script\_rp},
    \str{script\_rp\_redir}\}$, and
  \item if
    $S_1(b_1).j.\str{origin} \equiv \an{\str{fwddomain}, \https}$ then
    $S_1(b_1).j.\str{script} \equiv \str{script\_fwd}$, and
  \item\label{eqs:b:w:script_rp} if
    $S_1(b_1).j.\str{origin} \in \{\an{\mi{dr}_1,
      \https},\an{\mi{dr}_2, \https}\}$
    and $S_1(b_1).j.\str{script} \equiv \str{script\_rp}$ then \
    \begin{enumerate}
    \item $S_1(b_1).j.\str{location}$ and $S_2(b_2).j.\str{location}$
      are term-equivalent under $\theta$ except for the host part,
      which is either equal or $\mi{dr}_1$ in $b_1$ and $\mi{dr}_2$ in
      $b_2$, and
    \item $S_1(b_1).j.\str{referrer}$ and $S_2(b_2).j.\str{referrer}$
      are term-equivalent under $\theta$ except for the host part,
      which is either equal or $\mi{dr}_1$ in $b_1$ and $\mi{dr}_2$ in
      $b_2$, and
    \item
      $S_1(b_1).j.\str{scriptstate} \prototagequiv{\theta}
      S_2(b_2).j.\str{scriptstate}$ and if $\exists\, l \in L$ such that $l$ is a subterm of $S_1(b_1).j.\str{scriptstate}$, then $S_1(b_1).j.\str{location}.\str{host} \equiv \mi{dr}_1$ and $S_2(b_2).j.\str{location}.\str{host} \equiv \mi{dr}_2$, and
    \item for $p \in \{$
      \begin{align*}
        & \an{\tXMLHTTPRequest,*,*},\\
        & \an{\tPostMessage,*,\an{\str{fwddomain}, \https},\str{ready}},\\
        & \an{\tPostMessage,*,\an{\str{fwddomain},
          \https},\an{\str{eia},*}}
      \end{align*}
      $\}$ we have
      $S_1(b_1).j.\str{scriptinputs} |\, p \prototagequiv{\theta}
      S_2(b_2).j.\str{scriptinputs} |\, p$, and
    \item if $\exists\, l \in L$ such that $l$ is a subterm of
      $S_1(b_1).j.\str{scriptinputs}$, then
      $S_1(b_1).j.\str{location}.\str{host} \equiv \mi{dr}_1$ and
      $S_2(b_2).j.\str{location}.\str{host} \equiv \mi{dr}_2$, and
    \item $\forall k \in K$: $k$ is not contained in any subterm of
      $S_1(b_1).j.\str{scriptstate}$ except for
      $S_1(b_1).j.\str{scriptstate}.\str{tagKey}$, and
      \begin{itemize}
      \item
        $S_1(b_1).j.\str{origin} \not\equiv
        \an{\mi{dr}_1,\https}$\\$\implies
        k \not\equiv S_1(b_1).j.\str{scriptstate}.\str{tagKey}$, and
      \item $S_1(b_1).j.\str{origin}
        \not\equiv
        \an{\mi{dr}_1,\https}$\\$\implies k \not\in
        d_\emptyset(S_1(b_1).j.\str{scriptinputs})$, and
      \item
        $S_2(b_2).j.\str{origin} \not\equiv
        \an{\mi{dr}_2,\https}$\\$\implies
        k \not\equiv S_2(b_2).j.\str{scriptstate}.\str{tagKey}$, and
      \item $S_2(b_2).j.\str{origin}
        \not\equiv
        \an{\mi{dr}_2,\https}$\\$\implies k \not\in
        d_\emptyset(S_2(b_2).j.\str{scriptinputs})$,
        and \end{itemize}\end{enumerate}
  \item\label{eqs:b:w:script_rp_redir} if
    $S_1(b_1).j.\str{origin} \in \{\an{\mi{dr}_1,
      \https},\an{\mi{dr}_2, \https}\}$
    and
    $S_1(b_1).j.\str{script} \not\equiv \str{script\_rp}$\footnote{It
      immediately follows that
      $S_1(b_1).j.\str{script} \equiv \str{script\_rp\_redir}$ in this
      case.} then
    \begin{enumerate}
    \item $S_1(b_1).j.\str{location}$ and $S_2(b_2).j.\str{location}$
      are term-equivalent under $\theta$ except for the host part,
      which is either equal or $\mi{dr}_1$ in $b_1$ and $\mi{dr}_2$ in
      $b_2$, and
    \item $S_1(b_1).j.\str{referrer}$ and $S_2(b_2).j.\str{referrer}$
      are term-equivalent under $\theta$ except for the host part,
      which is either equal or $\mi{dr}_1$ in $b_1$ and $\mi{dr}_2$ in
      $b_2$, and
    \item
      $S_1(b_1).j.\str{scriptstate} \prototagequiv{\theta}
      S_2(b_2).j.\str{scriptstate}$ and if $\exists\, l \in L$ such that $l$ is a subterm of $S_1(b_1).j.\str{scriptstate}$, then $S_1(b_1).j.\str{location}.\str{host} \equiv \mi{dr}_1$ and $S_2(b_2).j.\str{location}.\str{host} \equiv \mi{dr}_2$, and
    \item there is no $k \in K$ such that
      $k \in d_{\nonces \setminus
        \{k\}}(\{S_1(b_1).j.\str{scriptstate}\})$
    \end{enumerate}
  \item\label{eqs:b:w:script_fwd} if
    $S_1(b_1).j.\str{origin} = \an{ \str{fwddomain} , \https}$ then
    \begin{enumerate}
    \item
      $S_1(b_1).j.\str{location} \prototagequiv{\theta}
      S_2(b_2).j.\str{location}$, and
    \item
      $S_1(b_1).j.\str{scriptstate} \prototagequiv{\theta}
      S_2(b_2).j.\str{scriptstate}$, and
    \item for $p = \an{\tPostMessage,*,*,\an{\str{tagKey},*}}$ and
      $x_1 = S_1(b_1).j.\str{scriptinputs}|\,p$ and
      $x_2 = S_2(b_2).j.\str{scriptinputs} |\, p$ we have that for all
      $i \in \{1,\dots,|x|\}$:
      \begin{itemize}
      \item
        $\proj{2}{\proj{i}{x_1}} \prototagequiv{\theta}
        \proj{2}{\proj{i}{x_2}}$, and
      \item $\proj{1}{\proj{3}{\proj{i}{x_1}}} \prototagequiv{\theta}
        \proj{1}{\proj{3}{\proj{i}{x_2}}}$ or\\
        $\proj{1}{\proj{3}{\proj{i}{x_1}}} = \mi{dr}_1 \wedge
        \proj{1}{\proj{3}{\proj{i}{x_2}}} = \mi{dr}_2$, and
      \item
        $\proj{2}{\proj{3}{\proj{i}{x_1}}} \prototagequiv{\theta}
        \proj{2}{\proj{3}{\proj{i}{x_2}}}$, and
      \item
        $\proj{4}{\proj{i}{x_1}} \prototagequiv{\theta}
        \proj{4}{\proj{i}{x_2}}$, and
      \end{itemize}
    \end{enumerate}
  \item\label{eqs:b:w:att_script} if
    $S_1(b_1).j.\str{origin} \not\in
    \{\an{\mi{dr}_1,\https},\an{\mi{dr}_2,\https},\an{\str{fwddomain},\https}\}$
    then
    \begin{enumerate}
    \item
      $S_1(b_1).j.\str{location} \prototagequiv{\theta}
      S_2(b_2).j.\str{location}$, and
    \item
      $S_1(b_1).j.\str{referrer} \prototagequiv{\theta}
      S_2(b_2).j.\str{referrer}$, and
    \item
      $S_1(b_1).j.\str{scriptstate} \prototagequiv{\theta}
      S_2(b_2).j.\str{scriptstate}$, and
    \item
      $S_1(b_1).j.\str{scriptinputs} \prototagequiv{\theta}
      S_2(b_2).j.\str{scriptinputs}$, and
    \item there is no $k \in K$ such that $k \in d_{\nonces \setminus
        \{k\}}(\{S_1(b_1).j.\str{scriptstate},
      S_1(b_1).j.\str{scriptinputs}\})$, and
    \item $\nexists\, l \in L$ such that $l$ is a subterm of
      $S_1(b_1).j.\str{scriptstate}$ or of
      $S_1(b_1).j.\str{scriptinputs}$, and
    \end{enumerate}
  \end{enumerate}

\item\label{eqs:b:misc} for
  $x \in
  \{\str{cookies},\str{localStorage},\str{sessionStorage},\str{sts}\}$
  we have that $S_1(b_1).x \prototagequiv{\theta} S_2(b_2).x$. For the
  domains $\mi{dr}_1$ and $\mi{dr}_2$ there are no entries in the
  subterms $x$.

\end{enumerate}

\end{enumerate}
\end{definition}

\begin{definition}[Equivalence of Events]\label{def:eq-of-events}
  Let $\theta$ be a set of proto-tags, $L$ be a set of login session
  tokens, $H$ be a set of nonces, and
  $K := \{ k \mid \exists\,n: \encs{\an{y,n}}{k} \in \theta\}$. We
  call $E_1 = (e_1^{(1)}, e_2^{(1)}\dots)$ and
  $E_2= (e_1^{(2)}, e_2^{(2)} \dots)$ \emph{$\beta$-equivalent under
    $(\theta, L, H)$}\gs{$\downarrow$ in alpha equiv
    erklaeren, was das genau aussagt?} iff all of the following conditions are
  satisfied for every $i \in \mathbb{N}$:

  \begin{enumerate}
  \item\label{eqe:distinction} One of the following conditions holds
    true:
    \begin{enumerate}
    \item\label{eqe:prototagequiv}
      $e_i^{(1)} \prototagequiv{\theta} e_i^{(2)}$ and if $e_i^{(1)}$
      contains an HTTP(S) message (i.e., HTTP(S) request or HTTP(S)
      response), then the HTTP nonce of this HTTP(S) message is not
      contained in $H$,\gs{$\nf$ kann man noch besser ausformulieren}
      or
    \item\label{eqe:dns-req} $e_i^{(1)}$ is a DNS request from $b_1$
      to $\fAP{dns}$ for $\mi{dr}_1$ and $e_i^{(2)}$ is a DNS request
      from $b_2$ to $\fAP{dns}$ for $\mi{dr}_2$, or
    \item\label{eqe:invalid-dns-req} $e_i^{(1)}$ and $e_i^{(2)}$ are
      both DNS requests from any party except $\fAP{dns}$ addressed to
      $\fAP{dns}$ for a domain unknown to the DNS server, or
    \item\label{eqe:dns-res} $e_i^{(1)}$ is a DNS response from
      $\fAP{dns}$ to $b_1$ for a DNS request for $\mi{dr}_1$ and
      $e_i^{(2)}$ is a DNS response from $\fAP{dns}$ to $b_2$ for a
      DNS request for $\mi{dr}_2$, or
    \item\label{eqe:http-req} $e_i^{(1)}$ is an HTTP request $m_1$
      from $b_1$ to $r_1$ and $e_i^{(2)}$ is an HTTP request $m_2$
      from $b_2$ to $r_2$, $m_1 \httptagequiv{\theta} m_2$, and both
      requests are unencrypted or encrypted (i.e., $m_1$ and $m_2$ are
      the content of the encryption) and $m_1.\str{nonce} \in H$, or
    \item\label{eqe:http-res} $e_i^{(1)}$ is an HTTP(S) response from
      $r_1$ to $b_1$ and $e_i^{(2)}$ is an HTTP(S) response from $r_2$
      to $b_2$, and their HTTP messages $m_1$ (contained in
      $e_i^{(1)}$) and $m_2$ (contained in $e_i^{(1)}$) are the same
      except for the HTTP body $g_1 := m_1.\str{body}$ and the HTTP
      body $g_2 := m_2.\str{body}$ which have to be
      $g_1 \prototagequiv{\theta} g_2$ and $m_1.\str{nonce} \in H$ and
      if $g_1$ contains a dictionary key $\str{loginSessionToken}$
      then there exists an $l' \in L$ such that
      $g_1[\str{loginSessionToken}] \equiv l'$.
    \end{enumerate}
  \item\label{eqe:pre:l} If there exists $l \in L$ such that $l$ is a
    subterm of $e_i^{(1)}$ or $e_i^{(2)}$ then we have that
    $e_i^{(1)}$ is a message from $b_1$ to $r_1$ and $e_i^{(2)}$ is a
    message from $b_2$ to $r_2$ or we have that $e_i^{(1)}$ is a
    message from $r_1$ to $b_1$ and $e_i^{(2)}$ is a message from
    $r_2$ to $b_2$.
  \item\label{eqe:pre:k} If there exists $k \in K$ such that
    $k \in d_{\nonces \setminus \{k\}}(\{e_i^{(1)}, e_i^{(2)}\})$ then
    $e_i^{(1)}$ is an HTTP(S) response from $r_1$ to $b_1$ and
    $e_i^{q(2)}$ is an HTTP(S) response from $r_2$ to $b_2$ and the
    bodies of both HTTP messages are of the form
    $\an{\an{\str{tagKey}, k}, *, *}$.
  \item\label{eqe:pre:fwd-script} If $e_i^{(1)}$ or $e_i^{(2)}$ is an
    encrypted HTTP response with body $g$ from $\fAP{fwd}$, then
    $\proj{1}{g}$ is $\str{script\_fwd}$.
  \item\label{eqe:pre:rp-scripts} If $e_i^{(1)}$ or $e_i^{(2)}$ is an
    HTTP(S) response with body $g$ from a relying party, then it does
    not contain any $\str{Location}$, $\cSTS$ or $\cSetCookie$ header
    and if $\proj{1}{g}$ is a string representing a script, then
    $\proj{1}{g}$ is either $\str{script\_rp}$ or
    $\str{script\_rp\_redir}$.
  \item\label{eqe:pre:no-invalid-dns-res} Neither $e_i^{(1)}$ nor
    $e_i^{(2)}$ are DNS responses from $\fAP{dns}$ for domains unknown
    to the DNS server.
  \item\label{eqe:pre:unencrypted-http} If $e_i^{(1)}$ or $e_i^{(2)}$
    is an unencrypted HTTP response, then the message was sent by some
    attacker.
  \end{enumerate}
\end{definition}

\begin{definition}[Equivalence of Configurations]
  We call $(S_1,E_1,N_1)$ and $(S_2,E_2,N_2)$
  \emph{$\alpha$-equivalent} iff there exists a set of proto-tags
  $\theta$ and a set of nonces $H$ such that $S_1$ and $S_2$ are
  $\gamma$-equivalent under $(\theta,H)$, $E_1$ and $E_2$ are
  $\beta$-equivalent under $(\theta,L,H)$ for
  $L := \bigcup_{a\in\theta} \mathsf{loginSessionTokens}(a,S_1,S_2)$,
  and $N_1 = N_2$.
\end{definition}

\subsection{Privacy Proof}

\begin{theorem}\label{thm:privacy-main}
  Every \spresso web system for privacy analysis is IdP-private.
\end{theorem}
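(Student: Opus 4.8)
The plan is to prove Theorem~\ref{thm:privacy-main} by showing that the $\alpha$-equivalence relation on configurations is an invariant of the lock-step execution of the two web systems $\spressoprivwebsystem_1 := \spressoprivwebsystem(\mi{dr}_1)$ and $\spressoprivwebsystem_2 := \spressoprivwebsystem(\mi{dr}_2)$ of Definition~\ref{def:idp-privacy}, and that it implies static equivalence of the two attacker states. By Definition~\ref{def:indistinguishability}, and since (as noted after Definition~\ref{def:spresso-ws-priv}) every schedule $\sigma$ induces at most one run of $\spressoprivwebsystem_i$ per infinite nonce sequence $N^0$, it then suffices to show: for every $\sigma$ and every $N^0$, the induced runs $\rho_1 \in \sigma(\spressoprivwebsystem_1)$ and $\rho_2 \in \sigma(\spressoprivwebsystem_2)$ satisfy $\rho_1(\fAP{attacker}) \approx \rho_2(\fAP{attacker})$. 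The base case is immediate: the two systems differ only in the domain parameter of the challenge browser, so their initial configurations are $\alpha$-equivalent with $\theta = H = L = \emptyset$ --- the challenge browsers start in the same state, $r_1$ and $r_2$ have equal $\mi{wkCache}$ and empty $\str{loginSessions}$/$\str{serviceTokens}$, $\fAP{fwd}$ and $\fAP{dns}$ have identical states, and the attacker state is the same in both.

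The core step is an invariant-preservation lemma: if $(S_1,E_1,N_1)$ and $(S_2,E_2,N_2)$ are $\alpha$-equivalent under $(\theta,H)$ and a command $\zeta$ induces a processing step from one of them, it induces one from the other, and the successor configurations are $\alpha$-equivalent under some $(\theta' \supseteq \theta, H' \supseteq H)$. That $\zeta$ is applicable in one system iff in the other follows from $|E_1| = |E_2|$ (by $\beta$-equivalence), the address-preserving one-to-one correspondence of processes, and the fact that condition~(\ref{eqs:b:w}) of Definition~\ref{def:eq-of-states} makes $\mathsf{Subwindows}$/$\mathsf{Docs}^+$ of $b_1$ and $b_2$ coincide. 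Preservation is proved by an exhaustive case distinction on the receiver $p$ and the delivered event: (i) for a web attacker, its output is a context applied to its state and the input, so by condition~(\ref{eqs:att}) and $\beta$-equivalence of the event the outputs are term-equivalent under $\theta$, and conditions~(\ref{eqs:att-not-k}), (\ref{eqs:att-not-l}) survive because, by conditions~(\ref{eqe:pre:k}), (\ref{eqe:pre:l}), every event carrying a tag key from $K$ or a login-session token from $L$ is a direct $b_j\leftrightarrow r_j$ message, which a web attacker (there is no network attacker here) cannot observe; (ii) for $\fAP{fwd}$ or $\fAP{dns}$ the states are identical (conditions~(\ref{eqs:fwd}), (\ref{eqs:dns})), so outputs are determined by inputs, and the ``challenged'' sub-cases --- DNS requests/responses for $\mi{dr}_1$/$\mi{dr}_2$, the forwarder always serving $\str{script\_fwd}$ --- are exactly clauses~(\ref{eqe:dns-req}), (\ref{eqe:dns-res}), (\ref{eqe:invalid-dns-req}), (\ref{eqe:pre:fwd-script}); (iii) for $p \in \{r_1,r_2\}$ we walk through the four paths of Algorithm~\ref{alg:rp-spresso}: serving $\mathtt{/}$ and $\mathtt{/redir}$ yields $\theta$-equivalent responses, a challenged $\mathtt{/startLogin}$ creates the tag $\encs{\an{\mi{dr}_j,n}}{k}$ whose underlying proto-tag $\encs{\an{y,n}}{k}$ we add to form $\theta'$ (the nonces $n$, $k$ and the freshly chosen login-session token agree across both systems since $N_1 = N_2$ and the steps are synchronized), and $\mathtt{/login}$ removes precisely the login-session entries flagged by $\theta$, re-establishing conditions~(\ref{eqs:logsess}), (\ref{eqs:r1}), (\ref{eqs:r2}); (iv) for $p = b_j$ we split on the message and, for $\str{TRIGGER}$, on $\mi{cmd}_\text{switch}$ and the script in the selected window, which by condition~(\ref{eqs:b:w}) is $\str{script\_rp}$, $\str{script\_rp\_redir}$, or $\str{script\_fwd}$ (deterministic, producing equivalent commands) or the attacker script in a non-RP/non-FWD document --- note $\str{script\_idp}$ cannot occur since it is removed from $\scriptset'$ (the IdP is the attacker). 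The one genuinely new sub-case is when the extended $\mathsf{SEND}$ (Algorithm~\ref{alg:send}) is first called with a message to $\str{CHALLENGE}$: it is rewritten to $\mi{dr}_1$ in $\spressoprivwebsystem_1$ and to $\mi{dr}_2$ in $\spressoprivwebsystem_2$, the challenge flags flip consistently, and the emitted DNS request --- and later the encrypted HTTP request to $r_j$, whose nonce enters $H'$, matching clauses~(\ref{eqe:dns-req}), (\ref{eqe:http-req}) --- together with all subsequent messages of this flow stay between $b_j$ and $r_j$ and are tracked by conditions~(\ref{eqs:b:pendingDNS}), (\ref{eqs:b:pendingRequests}), (\ref{eqs:b:w:script_rp}). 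The $\str{CORRUPT}$ message is trivial, as honest parties are non-corruptible in this setting.

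Next, a soundness lemma: $\alpha$-equivalence implies $S_1(\fAP{attacker}) \approx S_2(\fAP{attacker})$. By condition~(\ref{eqs:att}) there are a context $\tau$ and proto-tags $a_1,\dots,a_l\in\theta$ with $S_1(\fAP{attacker}) = \tau[\vec{a}/\vec{x}][\mi{dr}_1/y]$ and $S_2(\fAP{attacker}) = \tau[\vec{a}/\vec{x}][\mi{dr}_2/y]$; since by condition~(\ref{eqs:att-not-k}) none of the keys $k$ occurring in a proto-tag $\encs{\an{y,n}}{k}\in\theta$ is derivable by the attacker, the only occurrences of $y$ (hence of $\mi{dr}_1$ resp.\ $\mi{dr}_2$) in the attacker's knowledge are inside these ciphertexts it cannot open. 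A standard static-equivalence argument --- no test $(M,N)$ separates two fresh-key ciphertexts that use the same key and agree structurally outside the protected positions --- then yields $S_1(\fAP{attacker}) \approx S_2(\fAP{attacker})$.

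Putting it together, induction on $\sigma$ using the base case and the invariant-preservation lemma shows the final configurations of $\rho_1$ and $\rho_2$ are $\alpha$-equivalent, whence by the soundness lemma $\rho_1(\fAP{attacker}) \approx \rho_2(\fAP{attacker})$; with the single-run-per-schedule property this gives indistinguishability of $\spressoprivwebsystem_1$ and $\spressoprivwebsystem_2$, i.e.\ IdP-privacy. The main obstacle is the invariant-preservation lemma: one must re-verify \emph{every} clause of Definitions~\ref{def:eq-of-states} and~\ref{def:eq-of-events} after \emph{every} possible transition, and the delicate point is that the three secrecy invariants --- tag keys in $K$ remain secret, login-session tokens in $L$ remain secret, and every challenged HTTP(S) message (nonce in $H$) only ever flows between $b_j$ and $r_j$ --- each depend on the other two when one argues through the browser's navigation and postMessage rules and the forwarder's tag-decryption step, so the induction must carry all of them simultaneously.
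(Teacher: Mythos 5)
Your proposal follows essentially the same route as the paper's proof: an induction over the schedule using the same three ingredients --- $\alpha$-equivalence of the initial configurations (Lemma~\ref{lemma:initial-config-private}), preservation of $\alpha$-equivalence under every induced processing step via an exhaustive case distinction on the receiving process (Lemma~\ref{lemma:step-config-private}), and the lemma that $\alpha$-equivalence implies static equivalence of the attacker's state --- with the same bookkeeping of $\theta$, $H$, and $L$ at the challenged \texttt{/startLogin} and \texttt{CHALLENGE}-rewriting steps. The structure, case split, and the observation that the three secrecy invariants must be carried simultaneously all match the paper's argument.
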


Let
$\spressoprivwebsystem = (\bidsystem, \scriptset, \mathsf{script},
E^0, \fAP{attacker})$
be an \spresso web system for privacy analysis.

To prove Theorem~\ref{thm:privacy-main}, we have to show that the
\spresso web systems $\spressoprivwebsystem_1$ and
$\spressoprivwebsystem_2$ are indistinguishable (according to
Definition~\ref{def:idp-privacy}), where $\spressoprivwebsystem_1$ and
$\spressoprivwebsystem_2$ are defined as follows: Let
$\spressoprivwebsystem_1 = (\bidsystem_1, \scriptset, \mathsf{script},
E^0, \fAP{attacker})$
and
$\spressoprivwebsystem_2 = (\bidsystem_2, \scriptset, \mathsf{script},
E^0, \fAP{attacker})$
with $b_1 := b(\mi{dr}_1) \in \bidsystem_1$ and
$b_2 := b(\mi{dr_2}) \in \bidsystem_2$ challenge browsers. Further, we
require
$\bidsystem \setminus \{b\} = \bidsystem_1 \setminus \{b_1\} =
\bidsystem_2 \setminus \{b_2\}$.
We denote the following processes in $\spressoprivwebsystem$ as in
Definition~\ref{def:spresso-ws-priv}:
\begin{enumerate}[leftmargin=3em]
\item[$\fAP{dns}$] denotes the honest DNS server,
\item[$\fAP{fwd}$] denotes the honest forwarder with domain $\str{fwddomain}$,
\item[$\fAP{r_1}$] denotes the honest relying party with domain $\mi{dr_1}$, and
\item[$\fAP{r_2}$] denotes the honest relying party with domain $\mi{dr_2}$.
\end{enumerate}

Following Definition~\ref{def:indistinguishability}, to show the
indistinguishability of $\spressoprivwebsystem_1$ and
$\spressoprivwebsystem_2$ we show that they are indistinguishable
under all schedules $\sigma$. For this, we first note that for all
$\sigma$, there is only one run induced by each $\sigma$ (as our web
system, when scheduled, is deterministic). We now proceed to
show that for all schedules $\sigma = (\zeta_1,\zeta_2,\dots)$, iff $\sigma$
induces a run $\sigma(\spressoprivwebsystem_1)$ there exists a run
$\sigma(\spressoprivwebsystem_2)$ such that
$\sigma(\spressoprivwebsystem_1) \approx
\sigma(\spressoprivwebsystem_2)$.

We now show that if two configurations are $\alpha$-equivalent, then
the view of the attacker is statically equivalent.

\begin{lemma}
  Let $(S_1,E_1,N_1)$ and $(S_2,E_2,N_2)$ be two $\alpha$-equivalent
  configurations. Then
  $S_1(\fAP{attacker}) \approx S_2(\fAP{attacker})$.
\end{lemma}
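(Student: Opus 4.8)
The plan is to peel off from $\alpha$-equivalence the single fact it records about the attacker and then run the classical Dolev--Yao argument that "a ciphertext under an undisclosed key hides its plaintext''. First I would use that $\alpha$-equivalence gives $\gamma$-equivalence of $S_1,S_2$ under some $(\theta,H)$, and extract item~\ref{eqs:att} of Definition~\ref{def:eq-of-states}: since $\fAP{attacker}$ is one of the attacker processes, $S_1(\fAP{attacker}) \prototagequiv{\theta} S_2(\fAP{attacker})$. Writing $\theta=\{a_1,\dots,a_l\}$ with $a_i=\encs{\an{y,n_i}}{k_i}$, this means there is a single skeleton $\tau\in\terms(\{x_1,\dots,x_l\})$ such that $t_1:=S_1(\fAP{attacker})=(\tau[a_1/x_1,\dots,a_l/x_l])[\mi{dr}_1/y]$ and $t_2:=S_2(\fAP{attacker})=(\tau[a_1/x_1,\dots,a_l/x_l])[\mi{dr}_2/y]$. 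Thus $t_1,t_2$ share the skeleton $\tau$, each occurrence of $x_i$ in $\tau$ being filled by the ciphertext $c_i^1:=\encs{\an{\mi{dr}_1,n_i}}{k_i}$ in $t_1$ and by $c_i^2:=\encs{\an{\mi{dr}_2,n_i}}{k_i}$ in $t_2$, and nothing else distinguishes them. The second ingredient is item~\ref{eqs:att-not-k}: because $\fAP{attacker}$ appears in the union there and $d_{\emptyset}$ is monotone, each key $k_i$ satisfies $k_i\notin d_{\emptyset}(\{t_1\})$ and $k_i\notin d_{\emptyset}(\{t_2\})$.

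\textbf{The core argument.} I would define $\mathsf{sw}$ to be the term operation replacing, uniformly throughout a term, every occurrence of $c_i^1$ by $c_i^2$ (for all $i$); since the proto-tags of $\theta$ are pairwise distinct as terms, $\mathsf{sw}$ is a well-defined bijection with inverse replacing $c_i^2$ by $c_i^1$, and one has $\mathsf{sw}(t_1)=t_2$. Now fix arbitrary tests $M,N\in\gterms_{\emptyset}(\{x\})$. Because $M$ contains no nonces it contains no $c_i^1$, so $\mathsf{sw}(M[t_1/x])=M[\mathsf{sw}(t_1)/x]=M[t_2/x]$. The key lemma I would prove is that reduction commutes with $\mathsf{sw}$ on terms of the form $M[t_j/x]$, hence $\mathsf{sw}(M[t_1/x]\nf)=M[t_2/x]\nf$. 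From this the conclusion is immediate: $M[t_1/x]\equiv N[t_1/x]$ iff $M[t_1/x]\nf=N[t_1/x]\nf$ iff (by injectivity of $\mathsf{sw}$) $M[t_2/x]\nf=N[t_2/x]\nf$ iff $M[t_2/x]\equiv N[t_2/x]$, i.e.\ $S_1(\fAP{attacker})\approx S_2(\fAP{attacker})$.

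\textbf{Why the lemma holds.} The proof of the commutation lemma is by induction on the reduction sequence using the five equations of Figure~\ref{fig:equational-theory}. All of pairing, projection, asymmetric encryption/decryption, signing and signature verification commute with $\mathsf{sw}$ trivially, since no asymmetric or signing key involved in $\theta$ is affected. The only rule that could "look past'' the difference between $c_i^1$ and $c_i^2$ is symmetric decryption $\decs{\encs{x}{y}}{y}=x$; but a reduction step firing on $c_i^j$ requires a subterm of $M[t_j/x]$ whose normal form equals the key $k_i$, and since $M$ uses no nonces, any such normal form lies in $d_{\emptyset}(\{t_j\})$ — contradicting $k_i\notin d_{\emptyset}(\{t_j\})$. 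Hence along the entire reduction the ciphertexts $c_i^j$ act as opaque atoms, the reductions of $M[t_1/x]$ and $M[t_2/x]$ proceed in lockstep, and $\mathsf{sw}$ maps one normal form to the other; uniform application of $\mathsf{sw}$ automatically handles the delicate case where $\tau$ has several occurrences of one $x_i$, so an equality test between two such copies holds in $t_1$ exactly when it holds in $t_2$.

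\textbf{Expected main obstacle.} The delicate point is exactly this lockstep lemma, and in particular justifying that one may take the skeleton $\tau$ to contain neither the nonces $n_i,k_i$ nor the ciphertexts $c_i^j$ except through the variables $x_i$ (otherwise $\mathsf{sw}(t_1)=t_2$ could fail and the statement would be false). I would obtain this from the freshness invariant maintained when the set $\theta$ is built along the run — the $n_i,k_i$ are tag nonces that occur in $t_1,t_2$ only inside the corresponding proto-tags — and then the uniform-replacement argument goes through. A secondary care point is handling failed decryptions and out-of-range projections (which reduce to $\notdef$) consistently on both sides, and checking that the induction really covers every function symbol of $\Sigma$. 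Once the lockstep lemma is in place the rest is routine; the whole argument is the Dolev--Yao template of \cite{AbadiFournet-POPL-2001} for static equivalence under secret-keyed encryption.
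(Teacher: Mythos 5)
Your proposal follows essentially the same route as the paper's proof: extract $S_1(\fAP{attacker}) \prototagequiv{\theta} S_2(\fAP{attacker})$ from Condition~\ref{eqs:att} and the underivability of the tag keys from Condition~\ref{eqs:att-not-k}, then conclude static equivalence. The paper compresses the final step into ``it is easy to see''; your swap-and-lockstep argument (and your explicit flagging of the freshness caveat about $\tau$ not containing the $c_i^j$ or their keys outside the proto-tag positions) is a correct and more detailed rendering of exactly that step.
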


\begin{proof}
  From the $\alpha$-equivalence of $(S_1,E_1,N_1)$ and $(S_2,E_2,N_2)$
  it follows that
  $S_1(\fAP{attacker}) \prototagequiv{\theta} S_2(\fAP{attacker})$.
  From Condition~\ref{eqs:att-not-k} for $\gamma$-equivalence it
  follows that
  $\{ k \mid \exists\,n: \encs{\an{y,n}}{k} \in \theta\} \cap
  d_\emptyset(\bigcup_{i \in \{1,2\},\ A\, \in\, \mathsf{Web}\, \cup
    \,\mathsf{Net}\}}S_i(A))$
  (i.e., the attacker does not know any keys for the tags contained in
  its view), and therefore it is easy to see that the views are
  statically equivalent.\qed
\end{proof}

We now show that
$\sigma(\spressoprivwebsystem_1) \approx
\sigma(\spressoprivwebsystem_2)$
by induction over the length of $\sigma$. We first, in Lemma~\ref{lemma:initial-config-private}, show that
$\alpha$-equivalence (and therefore, indistinguishability of the views
of $\fAP{attacker}$) holds for the initial configurations of
$\spressoprivwebsystem_1$ and $\spressoprivwebsystem_2$. We then, in Lemma~\ref{lemma:step-config-private}, show
that for each configuration induced by a processing step in $\zeta$,
$\alpha$-equivalence still holds true.

\begin{lemma}\label{lemma:initial-config-private}
  The initial configurations $(S_1^0,E^0,N^0)$ of
  $\spressoprivwebsystem_1$ and $(S_2^0,E^0,N^0)$ of
  $\spressoprivwebsystem_2$ are $\alpha$-equivalent.
\end{lemma}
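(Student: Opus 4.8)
The plan is to verify that the two initial configurations $(S_1^0, E^0, N^0)$ of $\spressoprivwebsystem_1$ and $(S_2^0, E^0, N^0)$ of $\spressoprivwebsystem_2$ satisfy the three requirements of $\alpha$-equivalence: $\gamma$-equivalence of the states under some pair $(\theta, H)$, $\beta$-equivalence of the event sequences under $(\theta, L, H)$, and equality of the nonce sequences. The natural choice is to take $\theta = \emptyset$ (no tags have been created in an initial state, as no login process has started) and $H = \emptyset$ (no HTTP nonces are "challenge-related" yet), whence $L = \bigcup_{a \in \theta} \mathsf{loginSessionTokens}(a, S_1^0, S_2^0) = \emptyset$ and $K = \emptyset$. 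With $\theta = \emptyset$, term-equivalence under $\theta$ degenerates: $t_1 \prototagequiv{\emptyset} t_2$ holds iff $t_1 = \tau[\mi{dr}_1/y]$ and $t_2 = \tau[\mi{dr}_2/y]$ for some $\tau \in \terms(\emptyset)$, i.e., for a $\tau$ with no proto-tag slots, which forces $t_1 = t_2$ unless the occurrence pattern of $\mi{dr}_1$ resp.\ $\mi{dr}_2$ differs --- so effectively, for states not mentioning $\mi{dr}_1$ or $\mi{dr}_2$, this is just equality. Since $N^0$ is the same infinite sequence in both systems by construction, the third requirement is immediate, so the work is entirely in checking the many clauses of $\gamma$- and $\beta$-equivalence.

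For $\gamma$-equivalence (Definition~\ref{def:eq-of-states}) I would go through the clauses in order. Clauses~\ref{eqs:fwd} and~\ref{eqs:dns} hold because $\fAP{fwd}$ and $\fAP{dns}$ are the same process in $\bidsystem_1$ and $\bidsystem_2$ (both lie in $\bidsystem \setminus \{b\}$) with identical initial states. Clauses~\ref{eqs:r1} and~\ref{eqs:r2}: by Definition~\ref{def:spresso-ws-priv}, $r_1$ and $r_2$ have identical initial states except for the domain-dependent parts; in fact the only difference relevant here is the $\mi{keyMapping}$/$\mi{sslkeys}$ and their own domain, but the initial $\str{pendingDNS}$, $\str{loginSessions}$, $\str{serviceTokens}$ are all $\an{}$ in both, so the states agree on everything outside those three excepted subterms (one must note that $r_1$ in $\spressoprivwebsystem_1$ and $r_1$ in $\spressoprivwebsystem_2$ are literally the same process). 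Clause~\ref{eqs:logsess}: both $\str{loginSessions}$ are empty, so with $\theta = \emptyset$ we get $D = D' = \emptyset$ and $\emptyset \prototagequiv{\emptyset} \emptyset$. Clauses~\ref{eqs:rp-invalid-dns} and~\ref{eqs:rp-no-pendingRequests}: empty $\str{pendingDNS}$ and $\str{pendingRequests}$ in the initial RP states. Clause~\ref{eqs:rp-wkCache}: this is exactly the stipulation in Definition~\ref{def:spresso-ws-priv} that $\mi{wkCache}$ is equal in $s_0^{r_1}$ and $s_0^{r_2}$ and contains a public key for every registered domain. Clauses~\ref{eqs:att-not-k}, \ref{eqs:att-not-l} hold vacuously since $K = \emptyset$ and $L = \emptyset$. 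Clause~\ref{eqs:att}: $\fAP{attacker}$ is the same process with the same initial state in both systems (again in $\bidsystem \setminus \{b\}$), so $S_1^0(A) = S_2^0(A)$ for every attacker $A$, which gives $\prototagequiv{\emptyset}$. The bulk of the work is clause~\ref{eqs:b}: here we use that $S_1^0(b_1) = S_2^0(b_2)$ by the requirement in Definition~\ref{def:idp-privacy} that $S(b_1) = S(b_2)$ (the challenge browsers share an initial state). The excepted subterms are $\str{challenge}$ (initially $\top$ in both, but we need the clause~\ref{eqs:b:challenge} case $S_1(b_1).\str{challenge} = S_2(b_2).\str{challenge}$; here both equal $\top$, not $\bot$ --- so I should double-check whether the definition's clause~\ref{eqs:b:challenge} actually admits $\top = \top$, and if not, observe that the statement of equivalence we really need is the one reachable, i.e.\ possibly the lemma intends the browser's initial $\str{challenge}$ to be handled by the "$= \bot$" disjunct after normalization, or the definition should be read with $\top$ in place of the "both $\bot$" case --- I will flag this as the one subtle point), and $\str{pendingDNS}$, $\str{pendingRequests}$, $\str{windows}$, all of which are initially empty sequences in a browser, so clauses~\ref{eqs:b:pendingDNS}, \ref{eqs:b:pendingRequests} have $j = 0$ (vacuous), \ref{eqs:b:no-k-in-pending-something} is vacuous ($K = \emptyset$), clause~\ref{eqs:b:w} has $J = \mathsf{Docs}^+(S_1^0(b_1)) = \emptyset$ (no windows, no documents) and is vacuous, and clause~\ref{eqs:b:misc} holds because the initial $\mi{cookies}$, $\mi{localStorage}$, $\mi{sessionStorage}$, $\mi{sts}$ of a browser are empty and in particular contain no entries for $\mi{dr}_1$ or $\mi{dr}_2$.

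For $\beta$-equivalence (Definition~\ref{def:eq-of-events}) of $E^0$ with itself: $E^0$ is the fixed infinite sequence of trigger events $\an{a, a, \str{TRIGGER}}$, interleaved by address, and it is the \emph{same} sequence $E^0$ in both systems (the address sets of $\bidsystem_1$ and $\bidsystem_2$ coincide, since $b_1$ and $b_2$ have the same addresses --- they share the initial state and in particular $I^{b_1} = \mapAddresstoAP(b) = I^{b_2}$). So for each $i$, $e_i^{(1)} = e_i^{(2)}$, a trigger event containing no HTTP(S) message and no nonces from $H$ ($= \emptyset$); hence clause~\ref{eqe:distinction}\ref{eqe:prototagequiv} holds. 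Clauses~\ref{eqe:pre:l}, \ref{eqe:pre:k} are vacuous since $L = K = \emptyset$. Clauses~\ref{eqe:pre:fwd-script}, \ref{eqe:pre:rp-scripts}, \ref{eqe:pre:no-invalid-dns-res}, \ref{eqe:pre:unencrypted-http} are all about HTTP(S) responses or DNS responses and are vacuous since trigger events are neither. Finally, equality $N_1 = N_2$ is immediate since both are $N^0$.

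The main obstacle --- really the only nontrivial point --- is the treatment of the $\str{challenge}$ subterm in clause~\ref{eqs:b:challenge}: the challenge browser's initial $\str{challenge}$ value is $\top$, which matches neither of the two disjuncts as literally written ("$\mi{dr}_1$ vs.\ $\mi{dr}_2$" or "both $\bot$"). I expect the resolution is that the definition of equivalence of states is meant to also cover the pristine value $\top$ on both sides (symmetric, no challenge consumed yet), so I would either add $\top$ as a third case in clause~\ref{eqs:b:challenge} or argue that since $\top = \top$ is covered by the general "$S_1(b_1).x \prototagequiv{\emptyset} S_2(b_2).x$" intention, the initial configuration trivially satisfies it; every other clause reduces, via $\theta = H = L = K = \emptyset$ and the emptiness of all browser substructures, to either vacuity or the verbatim equalities guaranteed by Definition~\ref{def:spresso-ws-priv} and Definition~\ref{def:idp-privacy}.
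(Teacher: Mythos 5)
Your proof takes essentially the same route as the paper's, which simply sets $\theta = H = L = \emptyset$ and observes that all atomic processes have identical initial states (so every clause of Definitions~\ref{def:eq-of-states} and~\ref{def:eq-of-events} is either a verbatim equality or vacuous over empty substructures); your version just spells out the clause-by-clause check that the paper compresses into four lines. The one point where you go beyond the paper is the $\str{challenge}$ subterm: you are right that the challenge browser initializes it to $\top$ while clause~\ref{eqs:b:challenge} as written only admits the values $\mi{dr}_1/\mi{dr}_2$ or $\bot$ --- this is a genuine (minor) mismatch between the two definitions that the paper's proof silently glosses over with ``the states are equal, therefore all conditions are fulfilled,'' and your proposed resolution (treat the pristine pre-challenge value as falling under the symmetric disjunct) is the intended reading.
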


\begin{proof}
  We now have to show that there exists a set of proto-tags $\theta$ and a set of nonces $H$
  such that $S_1^0$ and $S_2^0$ are $\gamma$-equivalent under
  $(\theta,H)$, $E_1^0 = E^0$ and $E_2^0 = E^0$ are $\beta$-equivalent
  under $(\theta,L,H)$ with $L := \bigcup_{a\in\theta} \mathsf{loginSessionTokens}(a,S_1,S_2)$, and $N_1^0 = N_2^0 = N^0$.

  Let $\theta = H = L = \emptyset$. Obviously, both latter conditions are
  true. For all parties $p \in \bidsystem_1 \setminus \{b_1\}$, it is
  clear that $S_1^0(p) = S_2^0(p)$. Also the states $S_1^0(b_1)$ and
  $S_2^0(b_2)$ are equal. Therefore, all conditions
  of Definition~\ref{def:eq-of-states} are fulfilled. Hence, the
  initial configurations are $\alpha$-equivalent. \qed
\end{proof}

\begin{lemma}\label{lemma:step-config-private}
  Let $(S_1,E_1,N_1)$ and $(S_2,E_2,N_2)$ be two $\alpha$-equivalent
  configurations of $\spressoprivwebsystem_1$ and
  $\spressoprivwebsystem_2$, respectively. Let
  $\zeta =
  \an{\mi{ci},\mi{cp},\tau_\text{process},\mi{cmd}_\text{switch},\mi{cmd}_\text{window},\tau_\text{script},\mi{url}}$
  be a web system command. Then, $\zeta$ induces a processing step in
  either both configurations or in none. In the latter case, let
  $(S_1',E_1',N_1')$ and $(S_2',E_2',N_2')$ be configurations induced
  by $\zeta$ such that
  \[
  (S_1,E_1,N_1) \xrightarrow{\zeta} (S_1',E_1',N_1') \quad \text{and}
  \quad (S_2,E_2,N_2) \xrightarrow{\zeta} (S_2',E_2',N_2') \ .
  \]
  Then, $(S_1',E_1',N_1')$ and $(S_2',E_2',N_2')$ are
  $\alpha$-equivalent.

\end{lemma}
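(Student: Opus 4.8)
\textbf{Proof plan for Lemma~\ref{lemma:step-config-private}.}
The plan is to prove this by an exhaustive case analysis over the process $p$ to which the command $\zeta$ delivers the event $e_{\mi{ci}}$, and over the kind of event being delivered. First I would note that since $S_1$ and $S_2$ are $\gamma$-equivalent, the sequences $E_1$ and $E_2$ have the same length and $\zeta$ selects the ``same'' event in both (corresponding under the equivalence), so $\zeta$ induces a processing step in both configurations or in neither; the ``neither'' case is vacuous by the statement. For the interesting case, I would fix $\theta$, $H$, and $L$ as witnesses of $\alpha$-equivalence of the input configurations and show that (possibly after enlarging $\theta$ with new proto-tags and $H$ with new HTTP nonces generated in the step) the same relations hold for the output configurations. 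The key structural observation driving all cases is that the only place where the two runs ``physically diverge'' is the challenge browser: whenever $b_1$/$b_2$ rewrites a request to $\str{CHALLENGE}$ into a request to $\mi{dr}_1$ resp.\ $\mi{dr}_2$, I add the corresponding proto-tag bookkeeping, and the $\httptagequiv{\theta}$ / $\prototagequiv{\theta}$ relations then propagate this single controlled difference through all subsequent messages, documents, and states.

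The case distinction I would carry out: (1) $p$ is the web attacker (or a corrupted party) --- then by $\beta$-equivalence the input events are $\prototagequiv{\theta}$ (or one of the special challenge-related forms), the attacker relation is deterministic given $\tau_\text{process}$, and since by Conditions~\ref{eqs:att-not-k} and~\ref{eqs:att-not-l} the attacker cannot derive any tag key in $K$ nor any login session token in $L$, the outputs remain term-equivalent under $\theta$ and no forbidden subterm leaks; here I must check that the attacker cannot produce an HTTP response to $r_1$/$r_2$ that would be accepted and break Condition~\ref{eqs:rp-wkCache} or the $\str{pendingDNS}$ invariants. (2) $p = \fAP{dns}$ --- trivial from Condition~\ref{eqs:dns}, distinguishing DNS requests for $\mi{dr}_1$/$\mi{dr}_2$ (Condition~\ref{eqe:dns-req}/\ref{eqe:dns-res}) from requests for unknown domains (Condition~\ref{eqe:invalid-dns-req}) and from all others. (3) $p = \fAP{fwd}$ --- from Condition~\ref{eqs:fwd} the FWD state is identical, it answers any HTTPS request with $\str{script\_fwd}$, so outputs are identical; I check Condition~\ref{eqe:pre:fwd-script} is preserved. (4) $p \in \{r_1,r_2\}$ --- this is where I must walk through each path of Algorithm~\ref{alg:rp-spresso}: serving the index page, \str{/startLogin} (creating a new proto-tag $\encs{\an{y,\mi{rpNonce}}}{\mi{tagKey}}$, adding $\mi{tagKey}$ to $K$, adding the new HTTP nonce to $H$, registering the new login session token in $L$, and updating $\str{pendingDNS}$ so Condition~\ref{eqs:rp-invalid-dns-res} holds), \str{/redir}, and \str{/login}; I use Condition~\ref{eqs:logsess} ($\eta$-equivalence of the login-session dictionaries) to argue that $r_1$ accepts a \str{/login} request exactly when $r_2$ accepts the corresponding one, and that the resulting service tokens — which are never revealed to the attacker — keep the configurations equivalent. (5) $p = b_1$ resp.\ $b_2$ --- the hardest case, requiring a sub-case analysis over $\mi{cmd}_\text{switch}$ and, when a script runs, over which document's script is triggered.

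For the browser case I would further split on the origin/script of the triggered document, mirroring the big disjunction in Condition~\ref{eqs:b:w}: if the script is $\str{script\_rp}$ or $\str{script\_rp\_redir}$ loaded from $\an{\mi{dr}_1,\https}$/$\an{\mi{dr}_2,\https}$, I step through Algorithm~\ref{alg:spresso-script-rp}/\ref{alg:spresso-script-rp-redir} state by state ($\str{start}\to\str{expectStartLoginResponse}\to\str{expectFWDReady}\to\str{expectEIA}\to\str{expectServiceToken}$) and check that each outgoing command (the start-login XHR, the \tHref\ to \str{/redir}, the tagKey postMessage to FWD, the \str{/login} XHR) produces messages that are $\httptagequiv{\theta}$ or $\prototagequiv{\theta}$ as required, and that the tag key $k$ is stored only in $\str{scriptstate}.\str{tagKey}$; if the script is $\str{script\_fwd}$ I use Algorithm~\ref{alg:spresso-script-fwd}, noting that decrypting the tag with the tagKey yields $\mi{dr}_j$ and hence the \str{eia} postMessage targets $\an{\mi{dr}_j,\https}$, consistent with Condition~\ref{eqe:http-req} and clause~(II) of $\httptagequiv{\theta}$; if the script is the attacker script running at a non-challenge origin, it has no access to keys in $K$ (Condition~\ref{eqs:b:w:att_script}) so everything stays $\prototagequiv{\theta}$. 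I also handle the browser's non-script actions: processing an HTTP(S) response (\textsf{PROCESSRESPONSE}, including the redirect sub-case, updating $\str{windows}$, $\str{pendingRequests}$), processing a DNS response (moving the entry from $\str{pendingDNS}$ to $\str{pendingRequests}$, picking the SSL key — here the challenge rewrite in $\mathsf{SEND}$ is the crucial point where $\str{CHALLENGE}$ becomes $\mi{dr}_1$/$\mi{dr}_2$ and a proto-tag/nonce is recorded), opening a new window from $\mi{url}$, and reloading. The main obstacle I anticipate is exactly this browser case: ensuring that every one of the many subterms of $\str{windows}$ that the equivalence allows to differ (locations, referrers, scriptstates, scriptinputs) is updated consistently, that the no-leak conditions (no $k\in K$ derivable from pending data or from script states/inputs at foreign origins, no $l\in L$ as a subterm of anything the attacker or a foreign-origin document sees) are maintained across the postMessage and XHR machinery, and that the \texttt{noreferrer}/Referer bookkeeping in the \str{/redir} step does not accidentally expose $\mi{dr}_1$ vs.\ $\mi{dr}_2$. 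Once Lemmas~\ref{lemma:initial-config-private} and~\ref{lemma:step-config-private} are established, Theorem~\ref{thm:privacy-main} follows by induction on $|\sigma|$ together with the preceding lemma that $\alpha$-equivalent configurations yield statically equivalent attacker states.
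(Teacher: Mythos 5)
Your plan matches the paper's proof essentially step for step: the same case split on the receiving process (attacker, DNS, FWD, the two RPs by request path, and the challenge browser with its sub-cases on message type, $\mi{cmd}_\text{switch}$, and the origin/script of the triggered document), the same strategy of enlarging $\theta$, $H$, and $L$ exactly at the challenge rewrite and the \texttt{/startLogin} tag creation, and the same use of the no-leak conditions on $K$ and $L$ to close the attacker and foreign-origin cases. The only minor bookkeeping difference is that the paper adds HTTP nonces to $H$ at the point where the browser rewrites $\str{CHALLENGE}$ (the RP responses then inherit membership by reusing the request nonce), rather than at the RP's \texttt{/startLogin} handler, but this does not change the argument.
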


\begin{proof}
  Let $\theta$ be a set of proto-tags and $H$ be a set of nonces for
  which $\alpha$-equivalence of $(S_1,E_1,N_1)$ and $(S_2,E_2,N_2)$
  holds and let
  $L := \bigcup_{a\in\theta} \mathsf{loginSessionTokens}(a,S_1,S_2)$,
  $K := \{ k \mid \exists\,n: \encs{\an{y,n}}{k} \in \theta\}$.

  To induce a processing step, the $\mi{ci}$-th message from $E_1$ or
  $E_2$, respectively, is selected. Following
  Definition~\ref{def:eq-of-events}, we denote these messages by
  $e_i^{(1)}$ or $e_i^{(2)}$, respectively. We now differentiate
  between the receivers of the messages.

  We first note that due to the $\alpha$-equivalence, $\zeta$ either
  induces a processing step in both configurations or in none. We have
  to analyze the conditions stated in
  Corollary~\ref{corr:processingstep-not-induced}: The number of
  waiting events is the same in both configurations, and therefore,
  $(\mi{ci} > |E_1|) \iff (\mi{ci} > |E_2|)$. Further, the set of
  processes is the same (except for the browsers, which are exchanged
  but have the same IP addresses). Also, we have no processes that
  share IP addresses within each system. Therefore, if $\mi{cp} \neq
  1$ then it refers to no process in both runs (and no processing step
  can be induced).  As we show below, if $e_i^{(1)}$ is delivered to
  $b_1$ then and only then $e_i^{(2)}$ is delivered to
  $b_2$. Additionally, the window structure in both browsers is the
  same, and therefore, $\mi{cmd}_\text{window}$ either refers to a
  window that exists in both configurations or in none. There are no
  other cases that induce no processing step in either system.

  We denote the induced processing steps by
  \begin{align*}
    (S_1,E_1,N_1) \xrightarrow[p_1 \rightarrow E^{(1)}_{\text{out}}]{\an{a_1,f_1,m_1} \rightarrow p_1} (S_1', E_1', N_1') &\ \text{and} \\
    (S_2,E_2,N_2) \xrightarrow[p_2 \rightarrow E^{(2)}_{\text{out}}]{\an{a_2,f_2,m_2} \rightarrow p_2} (S_2', E_2', N_2') &\ .
  \end{align*}

  \paragraph{\underline{Case $p_1 = \fAP{fwd}$:}}
  We know that one of the cases of Case~\ref{eqe:distinction} of
  Definition~\ref{def:eq-of-events} must apply for $e_i^{(1)}$ and
  $e_i^{(2)}$. Out of these cases only Case~\ref{eqe:prototagequiv}
  applies. Hence, $p_2 = \fAP{fwd}$.

  In the forwarder relation (Algorithm~\ref{alg:spresso-fwd}), either
  Lines~\ref{line:fwd-send-response}f. are executed in both processing
  steps or in none. It is easy to see that
  $E^{(1)}_\text{out} \prototagequiv{\theta} E^{(2)}_\text{out}$
  (containing at most one event). For this new event all cases of
  Definition~\ref{def:eq-of-events} except for Cases~\ref{eqe:pre:l}
  and~\ref{eqe:distinction} hold trivially true. 

  (*): As both events are static except for IP addresses, the HTTP
  nonce, and the HTTPS key, there is no $k$ contained in the input
  messages or in the state of $\fAP{fwd}$ (except potentially in tags,
  from where it cannot be extracted), and the output messages are sent
  to $f_1$ or $f_2$, respectively, they cannot contain any $l \in L$
  or $k \in K$. Hence, Case~\ref{eqe:pre:l} of
  Definition~\ref{def:eq-of-events} holds true.

  Both output events are constructed exactly the same out of their
  respective input events and Case~\ref{eqe:prototagequiv} applies for
  the output events.

  Therefore, $E_1'$ and $E_2'$ are $\beta$-equivalent under
  $(\theta,H,L)$.  As there are no changes to any state, we have that
  $S_1'$ and $S_2'$ are $\gamma$-equivalent under $(\theta,H)$. No new
  nonces are chosen, hence, $N_1 = N'_1 = N_2 = N'_2$.

  \paragraph{\underline{Case $p_1 = \fAP{dns}$:}}
  In this case, only Cases~\ref{eqe:prototagequiv}, \ref{eqe:dns-req}
  and~\ref{eqe:invalid-dns-req} of Definition~\ref{def:eq-of-events}
  can apply. Hence, $p_2 = \fAP{dns}$. We note that (*) applies
  analogously in all cases.

  In the first case, it is easy to see that $E^{(1)}_\text{out}
  \prototagequiv{\theta} E^{(2)}_\text{out}$.  In the second case, it
  is easy to see that the DNS server only outputs empty events in both
  processing steps.  In the third case, $E^{(1)}_\text{out}$ and
  $E^{(2)}_\text{out}$ are such that Case~\ref{eqe:dns-res} of
  Definition~\ref{def:eq-of-events} applies.

  Therefore, $E_1'$ and $E_2'$ are $\beta$-equivalent under
  $(\theta,H,L)$ in all three cases.  As there are no changes to any
  state in all cases, we have that $S_1'$ and $S_2'$ are
  $\gamma$-equivalent under $(\theta,H)$. No new nonces are chosen,
  hence, $N_1 = N'_1 = N_2 = N'_2$.

  \paragraph{\underline{Case $p_1 = \fAP{r_1}$:}} First, we consider cases that
  can never happen or are ignored in both processing steps. After
  this, we distinct several cases of HTTPS requests.

  If $e^{(1)}$ is a DNS response, we know that $e_i^{(1)} \prototagequiv{\theta}
  e_i^{(2)}$, which implies $p_2 = \fAP{r_1}$. Only DNS responses from
  $\fAP{dns}$ are processed by a relying party, other DNS responses
  are dropped without any state change. As the state of a relying
  party fulfills Condition~\ref{eqs:rp-invalid-dns} of
  Definition~\ref{def:eq-of-states} (RPs only query domains unknown to
  $\fAP{dns}$) and both $e_i^{(1)}$ and $e_i^{(2)}$ fulfill
  Condition~\ref{eqe:pre:no-invalid-dns-res} of
  Definition~\ref{def:eq-of-events} (there are no DNS responses from
  $\fAP{dns}$ about domains unknown to $\fAP{dns}$), we have a
  contradiction. Hence, $e_i^{(1)}$ cannot be a DNS response.

  If $e^{(1)}$ is an HTTP response, we know that $e_i^{(1)}
  \prototagequiv{\theta} e_i^{(2)}$, which implies $p_2 = \fAP{r_1}$. From
  Condition~\ref{eqs:rp-no-pendingRequests} of
  Definition~\ref{def:eq-of-states}, we know that relying parties
  always drop HTTP responses (without any state change).

  If $e_i^{(1)}$ is any other message that is not a (properly) encrypted
  HTTP request, we have that $e_i^{(1)} \prototagequiv{\theta} e_i^{(2)}$, which
  implies $p_2 = \fAP{r_1}$. The relying party drops such messages in
  both processing steps (without any state change).

  For the following, we note that a relying party never sends
  unencrypted HTTP responses.

  There are four possible types of HTTP requests that are accepted by
  $r_1$ in Algorithm~\ref{alg:rp-spresso}:
  \begin{itemize}
  \item $\mi{path} = \str{/}$ (index page),
    Line~\ref{line:serve-rp-index},
  \item $\mi{path} = \str{/startLogin}$ (start a login),
    Line~\ref{line:serve-rp-start-login},
  \item $\mi{path} = \str{/redir}$ (redirect to IdP),
    Line~\ref{line:serve-rp-redir}, and
  \item $\mi{path} = \str{/login}$ (login),
    Line~\ref{line:assemble-login-response}.
    \end{itemize}

  From the cases in Definition~\ref{def:eq-of-events}, only two can
  possibly apply here: Case~\ref{eqe:prototagequiv} and
  Case~\ref{eqe:http-req}. For both cases, we will now analyze each of
  the HTTP requests listed above separately.

  \noindent \emph{Definition~\ref{def:eq-of-events},
    Case~\ref{eqe:prototagequiv}:} $e_i^{(1)} \prototagequiv{\theta}
  e_i^{(2)}$. This case implies $p_2 = \fAP{r_1} = p_1$. As we see
  below, for the output events $E^{(1)}_\text{out}$ and
  $E^{(2)}_\text{out}$ (if any) only Case~\ref{eqe:prototagequiv} of
  Definition~\ref{def:eq-of-events} applies. This implies that the
  output events may not contain any HTTP nonce contained in $H$. As we
  know that the HTTP nonce of the incoming HTTP requests is not
  contained in $H$ and the output HTTP responses (if any) of the RP
  reuses the same HTTP nonce, the nonce of the HTTP responses cannot
  be in $H$.

  \begin{itemize}
  \item $\mi{path} = \str{/}$. In this case, the same output event is
    produced, i.e. $E^{(1)}_\text{out} = E^{(2)}_\text{out}$, and
    Condition~\ref{eqe:pre:rp-scripts} of
    Definition~\ref{def:eq-of-events} holds true. Also, (*)
    applies. The remaining conditions are trivially fulfilled and
    $E_1'$ and $E_2'$ are $\beta$-equivalent under $(\theta,H,L)$. As
    there are no changes to any state, we have that $S_1'$ and $S_2'$
    are $\gamma$-equivalent under $(\theta,H)$. No new nonces are
    chosen, hence, $N_1 = N'_1 = N_2 = N'_2$.
  \item $\mi{path} = \str{/startLogin}$. The domains of the email
    addresses in both message bodies are either equivalent and
    registered to the DNS server (and hence, $\str{wkCache}$ contains
    a public key for this domain), or they are not contained in
    $\str{wkCache}$ (in both, $S_1(r_1)$ and $S_2(r_1)$). If they are
    unknown (i.e., not contained in $\str{wkCache}$), they are not
    registered in the DNS server. Nonetheless, in this case a DNS
    request is sent to $\fAP{dns}$. Then, the terms
    $E^{(1)}_\text{out}$ and $E^{(2)}_\text{out}$ contain a request
    matching Case~\ref{eqe:prototagequiv} of
    Definition~\ref{def:eq-of-events}. As $E^{(1)}_\text{out}$ and
    $E^{(2)}_\text{out}$ are constructed such that besides IP
    addresses, a string, and a nonce, they only contain a term derived
    from the input events. In particular, they contain no $k \in K$ or
    $l \in L$ (**): As Condition~\ref{eqe:pre:l} of
    Definition~\ref{def:eq-of-events} applies for the input events,
    this condition also applies for the output events. Thus, $E_1'$
    and $E_2'$ are $\beta$-equivalent under $(\theta,H,L)$. The states
    $S_1'(r_1)$ is equal to $S_1(r_1)$ up to the subterm
    $\str{pendingDNS}$, and $S_2'(r_1)$ is equal to $S_2(r_1)$ up to
    the subterm $\str{pendingDNS}$. The subterm $\str{pendingDNS}$
    only contains a new entry for a domain unknown to the DNS server.
    Hence, Condition~\ref{eqs:rp-invalid-dns} of
    Definition~\ref{def:eq-of-states} holds. Thus, we have that $S_1'$
    and $S_2'$ are $\gamma$-equivalent under $(\theta,H)$. Exactly one
    nonce is chosen in both processing steps, and therefore
    $N_1' = N_2'$.

    If the domains of the email addresses are valid and registered in
    $\str{wkCache}$, then $\mathsf{SENDSTARTLOGINRESPONSE}$ is called.
    In both processing steps, a tag is constructed exactly the same.
    The same HTTP response (which does not contain a $k \in K$ or a
    $l \in L$) is put in both $E^{(1)}_\text{out}$ and
    $E^{(2)}_\text{out}$. The first element of the response's body is
    not a string and therefore Condition~\ref{eqe:pre:rp-scripts}
    holds true. The tag is only created on $r_1$ in both runs and
    hence, $\theta$ does not have to be altered. Analogously to (**)
    we have that $E_1'$ and $E_2'$ are $\beta$-equivalent under
    $(\theta,H,L)$. The subterm $\str{loginSessions}$ of the state of
    $r_1$ is extended exactly the same. Thus, we have that $S_1'$ and
    $S_2'$ are $\gamma$-equivalent under $(\theta,H)$. In both
    processing steps exactly four nonces are chosen, and we have that
    $N_1' = N_2'$.

  \item $\mi{path} = \str{/redir}$. First, we note that there is no
    $l \in L$ contained in either $m_1$ or $m_2$ (by the Defintion of
    $\beta$-equivalence). We further note that a relying party that
    receives an (encrypted) HTTP request for the path $\str{/redir}$
    either (I) stops in Line~\ref{line:rp-redir-stop} of
    Algorithm~\ref{alg:rp-spresso} with an empty output or (II) emits
    an HTTP response in Line~\ref{line:rp-redir-response}. The state
    of the relying party is not changed in either case.

    We now show that in both processing steps always the same cases
    apply. For $i \in \{1,2\}$ and $\mi{body}_i$ the body of the HTTPS
    request $m_i$, Case~(II) applies for $r_1$
    iff
    \[S_i(r_1).\str{loginSessions}[\mi{body}_i[\str{loginSessionToken}]]
    \not\equiv \an{}\]
    (see Lines~\ref{line:rp-redir-start}ff. of
    Algorithm~\ref{alg:rp-spresso}).

    From Condition~\ref{eqs:logsess} of
    Definition~\ref{def:eq-of-states}, we know that
    $S_2(r_1).\str{loginSessions}$ can be constructed from
    $S_1(r_1).\str{loginSessions}$ without removing the entry with the
    dictionary key $\mi{body}_1[\str{loginSessionToken}]$ (as this key
    is not in $L$). Thus, both dictionaries either contain the same
    entry for the dictionary key
    $\mi{body}_1[\str{loginSessionToken}$] or they both contain no
    such entry. Hence, Case~(II) applies in both processing steps or
    in none.

    In both cases, exactly the same outputs are emitted (without
    containing any $l\in L$ or $k \in K$) and no state is changed and no new nonces
    are chosen. In both cases, the first element of the response body
    (if any) is $\str{script\_rp\_redir}$. We therefore trivially have
    $\alpha$-equivalence of the new configurations.

  \item $\mi{path} = \str{/login}$. This case can be handled
    analogously to the previous case with two exceptions:

    (A) First, there are two additional checks, the first in
    Line~\ref{line:check-origin-header} of
    Algorithm~\ref{alg:rp-spresso} and the second in
    Line~\ref{line:rp-check-ia}. We have to show that both checks each
    either simultaneously succeed or fail in both cases.

    For the first check, it is easy to see that this follows from
    $m_1 \prototagequiv{\theta} m_2$.

    As we have that $m_1 \prototagequiv{\theta} m_2$, and in
    particular $\mi{eia}_1 := \mi{body}_1[\str{eia}]
    \prototagequiv{\theta} \mi{body}_2[\str{eia}] =: \mi{eia}_2$,
    both, $\mi{eia}_1$ and $\mi{eia}_2$ have the same structure. If
    this structure does not match the expected structure (see
    Line~\ref{line:check-service-token-request-contents}f.), the
    checks in both processing steps fail.

    \gs{``Assuming email does not contain a tag with $\exists a \in \theta$
    such that $a[\mi{dr_1}/y]$ is this tag, i.e. the email addresses
    are equal.'' --- what if not? - the equivalence does not hold for
      email addresses that contain a ``challenged'' tag.} If $r_1$
    accepts the identity assertion, then we have that the tag, the
    email address and the forwarder domain must be equal in $m_1$ and
    $m_2$
    as
    \begin{align*}
    &S_1(r_1).\str{loginSessions}[\mi{body}_1[\str{loginSessionToken}]]\\
    =\,\,
    &S_2(r_1).\str{loginSessions}[\mi{body}_2[\str{loginSessionToken}]]
    \ .
    \end{align*}

    Hence, $r_1$ either accepts in both processing steps or in none.
 
    (B) If $r_1$ accepts, i.e., it does not stop with an empty
    message, then $r_2$ accepts. A nonce is chosen exactly the same in both processing
    steps. Hence, we have that $N_1' = N_2'$.
  \end{itemize}

  \noindent \emph{Definition~\ref{def:eq-of-events},
    Case~\ref{eqe:http-req}:} $e_i^{(1)}$ is an HTTP(S) request 
  from $b_1$ to $r_1$ and $e_i^{(2)}$ is an HTTP(S) request from
  $b_2$ to $r_2$. This case implies
  $p_2 = \fAP{r_2}$.

  We note that Condition~\ref{eqe:pre:rp-scripts} of
  Definition~\ref{def:eq-of-events} holds for the same reasons as in
  the previous case. As the response is always addressed to the IP
  address of $b_1$ or $b_2$, respectively,
  Condition~\ref{eqe:pre:rp-scripts} of
  Definition~\ref{def:eq-of-events} is fulfilled. 

  As we see below, for the output events $E^{(1)}_\text{out}$ and
  $E^{(2)}_\text{out}$ (if any) only Case~\ref{eqe:http-res} of
  Definition~\ref{def:eq-of-events} applies. This implies that the
  output events must contain an HTTP nonce contained in $H$. As we
  know that the HTTP nonce of the incoming HTTP requests is contained
  in $H$ and the output HTTP responses (if any) of the RP reuses the
  same HTTP nonce, the nonce of the HTTP responses is in $H$.

  \begin{itemize}
  \item $\mi{path} = \str{/}$. In this case, the output events
    produced (containing no $l\in L$ or $k\in K$ result in $E_1'$ and $E_2'$ being $\beta$-equivalent
    under $(\theta,H,L)$ according to
    Definition~\ref{def:eq-of-events}, Case~\ref{eqe:http-res}. As
    there are no changes to any state, we have that $S_1'$ and $S_2'$
    are $\gamma$-equivalent under $(\theta,H)$. No new nonces are
    chosen, hence, $N_1 = N'_1 = N_2 = N'_2$.
  \item $\mi{path} = \str{/startLogin}$. As above, both email
    addresses in the input events either equivalent and their domain
    is known to the relying parties, or both email address domains are
    unknown. The latter case is analogue to above.

    Otherwise, $\str{wkCache}$, then $\mathsf{SENDSTARTLOGINRESPONSE}$
    is called.  In both processing steps, a tag is constructed the
    same up to the RP domain $\mi{dr}_1$ or $\mi{dr}_2$, respectively.

    In both processing steps, an HTTP response is created. We denote
    the HTTP response generated by $r_1$ as $m_1'$ and the one
    generated by $r_2$ as $m_2'$. We then have that
    \begin{align*}
      m_1' = \encs{\an{\cHttpResp,n,200,\an{},g_1}}{k} \\
      m_2' = \encs{\an{\cHttpResp,n,200,\an{},g_2}}{k}
    \end{align*}
    with
    \begin{align*}
      g_1 = \an{\an{\str{tagKey},\nu_2},\an{\str{loginSessionToken},\nu_4},\an{\str{FWDDomain},S_1(r_1).\str{FWDDomain}}} \\
      g_2 =
      \an{\an{\str{tagKey},\nu_2},\an{\str{loginSessionToken},\nu_4},\an{\str{FWDDomain},S_2(r_2).\str{FWDDomain}}}
    \end{align*}

    Obviously, $m_1'$ equals $m_2'$. For
    $N_1 = N_2 = (n_1, n_2, \dots)$, we set
    $\theta' = \theta \cup \{ \encs{\an{y,n_1}}{n_2} \}$,
    $N_1' = N_2' = (n_5, \dots)$ (as exactly four nonces are chosen in
    both processing steps), and $L' = L \cup \{n_4\}$. The receiver of
    both messages is the browser $b_1$ or $b_2$, respectively.
    Obviously, it holds that
    $L' = \bigcup_{a\in\theta'}
    \mathsf{loginSessionTokens}(a,S_1',S_2')$
    and there exists an $l' \in L'$ such that
    $g_1[\str{loginSessionToken}] \equiv l'$. As
    Conditions~\ref{eqe:http-res} and~\ref{eqe:pre:k} of
    Definition~\ref{def:eq-of-events} hold, $E_1'$ and $E_2'$ are
    $\beta$-equivalent under $(\theta',H,L')$. The subterm
    $\str{loginSessions}$ of $S_1(r_1)$ is extended exactly the same
    as the subterm $\str{loginSessions}$ of $S_2(r_2)$. \gs{evtl.
      genauer} Thus, we have that $S_1'$ and $S_2'$ are
    $\gamma$-equivalent under $(\theta',H)$. (As mentioned above, in
    both processing steps exactly four nonces are chosen, and we have
    that $N_1' = N_2'$.)

  \item $\mi{path} = \str{/redir}$. By the definition of
    $\beta$-equivalence, the login session token $l$ is the same. By
    the definition of $\gamma$-equivalence, we have that
    Algorithm~\ref{alg:rp-spresso} either (I) stops in both processing
    steps in Line~\ref{line:rp-redir-stop} with an empty output or
    (II) (if $l \in L$) emits an HTTP response in both processing
    steps in Line~\ref{line:rp-redir-response}.

    From Condition~\ref{eqs:logsess} of
    Definition~\ref{def:eq-of-states} we know that for $\mi{ls}_1 :=
    S_1(r_1).\str{loginSessions}[l]$ and $\mi{ls}_2 :=
    S_2(r_2).\str{loginSessions}[l]$ we have that $\mi{ls}_1
    \prototagequiv{\theta} \mi{ls}_2$.

    We denote the HTTP response generated by $r_1$ as $m_1'$ and the
    one generated by $r_2$ as $m_2'$. We then have that
    \begin{align*}
      m_1' = \encs{\an{\cHttpResp,n,200,\an{},g_1}}{k} \\
      m_2' = \encs{\an{\cHttpResp,n,200,\an{},g_2}}{k}
    \end{align*}
    with
    \begin{align*}
      g_1 = \an{\str{script\_rp\_redir}, \an{\cUrl, \https, \mi{domain}_1, \str{/.well\mhyphen{}known/spresso\mhyphen{}login}, \mi{params}_1}} \\
      g_2 = \an{\str{script\_rp\_redir}, \an{\cUrl, \https, \mi{domain}_2, \str{/.well\mhyphen{}known/spresso\mhyphen{}login}, \mi{params}_2}}
    \end{align*}
    and $\mi{domain}_1$, $\mi{domain}_2$, $\mi{params}_1$,
    $\mi{params}_2$ derived exactly the same from $\mi{ls}_1$ and
    $\mi{ls}_2$, respectively (and neither $\mi{params}_1$ nor
    $\mi{params}_2$ contains a key $\str{loginSessionToken}$). No keys $k \in K$ are
    contained in the output. Thus, the output events fulfill
    Condition~\ref{eqe:http-res} of Definition~\ref{def:eq-of-events}.

      No state is changed and no new nonces are chosen. We therefore
      have $\alpha$-equivalence of the new configurations.

  \item $\mi{path} = \str{/login}$. \gs{über diesen fall und den analogen oben nochmal drueberschauen:}

    This case can be handled analogously to the previous case with two
    exceptions:

    (A) First, there are two additional checks, the first in
    Line~\ref{line:check-origin-header} of
    Algorithm~\ref{alg:rp-spresso} checks the origin header and the
    second in Line~\ref{line:rp-check-ia} checks the identity
    assertion. 

    As we know that $m_1 \httptagequiv{\theta} m_2$, we have that if
    the first check fails in $r_1$ then and only then it fails in
    $r_2$. The same holds true for the second check.

    \gs{``Assuming email does not contain a tag with $\exists a \in \theta$
    such that $a[\mi{dr_1}/y]$ is this tag, i.e. the email addresses
    are equal.'' --- what if not? - the equivalence does not hold for
      email addresses that contain a ``challenged'' tag.} If $r_1$
    accepts the identity assertion, then we have that the email
    address must be equal in $m_1$ and $m_2$
    as
    \begin{align*}
    &S_1(r_1).\str{loginSessions}[\mi{body}_1[\str{loginSessionToken}]]\\
    =
    &S_2(r_1).\str{loginSessions}[\mi{body}_2[\str{loginSessionToken}]]
    \ .
    \end{align*}
and we have that the identity assertion in $g_1$ is valid
    for $r_1$, i.e., signed correctly and contains a tag for
    $\mi{dr}_1$, and thus, the identity assertion in $g_2$ is valid
    for $r_2$.  Hence, $r_1$ and $r_2$ either accept in both
    processing steps or in none.\gs{$\nf$ schoener formulieren}

    (B) If $r_1$ accepts, i.e., it does not stop with an empty
    message, we know that $r_2$ accepts. A nonce is chosen exactly the
    same in both processing steps. Hence, we have that $N_1' = N_2'$.
  \end{itemize}

  \paragraph{\underline{Case $p_1 = \fAP{r_2}$:}} This case is
  analogue to the case $p_1 = \fAP{r_1}$ above. Note that the
  Case~\ref{eqe:http-req} of Definition~\ref{def:eq-of-events} cannot
  occur by definition.

  \paragraph{\underline{Case $p_1 = \fAP{b_1}$:}} $\implies p_2 = \fAP{b_2}$ \gs{$\nf$ erklaeren, warum diese implikation gilt}

  We now do a case distinction over the types of messages a browser
  can receive.

  \begin{description}
  \item[DNS response] For the input events either
    Condition~\ref{eqe:prototagequiv} of
    Definition~\ref{def:eq-of-events} or Condition~\ref{eqe:dns-res}
    apply. Therefore, the DNS request/response nonces in both events
    are equivalent up to RP domains under a set of proto-tags
    $\theta$. From Condition~\ref{eqs:b:pendingDNS} of
    Definition~\ref{def:eq-of-states}, we know that for a given nonce,
    there is either an entry in the dictionary $\mi{pendingDNS}$ in
    both browsers or in none. There are no entries under keys that are
    not nonces. Hence, both browsers either continue processing the
    incoming DNS response or stop with no state change and no output
    events in Line~\ref{line:browser-dns-response-stop} of
    Algorithm~\ref{alg:browsermain}. Further, we note that the
    resolved address contained in the DNS response has to be an IP
    address.

    From Condition~\ref{eqs:b:pendingDNS} of
    Definition~\ref{def:eq-of-states}, we know that the protocol in
    both stored HTTP requests is the same. Therefore, the browsers
    either both choose a nonce (for HTTPS request) or none.

    There can now be two cases: (I) The IP addresses in both DNS
    responses are the same or (II) the IP address in $m_1$ is an IP
    address of $r_1$ and the IP address in $m_2$ is an IP address of
    $r_2$. In both cases, the pending requests of the respective
    browsers are amended in such a way that they fulfill
    Condition~\ref{eqs:b:pendingRequests} (as they fulfilled
    Condition~\ref{eqs:b:pendingDNS}, which is essentially the same
    for HTTP(s) requests).

    For $E ^{(1)}_\text{out}$ and $E^{(2)}_\text{out}$, we have that
    in Case~(I)
    $E ^{(1)}_\text{out} \prototagequiv{\theta} E^{(2)}_\text{out}$
    and hence Condition~\ref{eqe:prototagequiv} of
    Definition~\ref{def:eq-of-events} is fulfilled. In Case~(II), the
    output messages fulfill Condition~\ref{eqe:http-req}.

    From Condition~\ref{eqe:pre:l} of
    Definition~\ref{def:eq-of-events}, we know that no $l\in L$ is
    contained in the DNS responses. 
    Further, we know from Condition~\ref{eqe:dns-res} of
    Definition~\ref{def:eq-of-events} that if the IP addresses in the
    DNS responses differ, then they are responses for $\mi{dr}_1$ and
    $\mi{dr}_2$, respectively. From Condition~\ref{eqs:b:pendingDNS}
    of Definition~\ref{def:eq-of-states}, we know that only requests
    (prepared) for $\mi{dr}_1$ and $\mi{dr}_2$, respectively, may
    contain a subterm $l \in L$. Hence, Condition~\ref{eqe:pre:l} of
    Definition~\ref{def:eq-of-states} holds true. 

    We also have that no $k \in K$ is contained in the response (with
    Condition~\ref{eqe:pre:k} of Definition~\ref{def:eq-of-states}).
    There is also no $k \in K$ contained in the browser's pending HTTP
    requests, and therefore, there is none in the output events.

    We have that $S_1'$ and $S_2'$ are $\gamma$-equivalent under
    $(\theta,H)$, $E_1'$ and $E_2'$ are $\beta$-equivalent under
    $(\theta,H,L)$, $N_1' = N_2'$, and thus, the new configurations
    are $\alpha$-equivalent.

  \item[HTTP response] In this case, it is clear that
    the HTTP(s) response nonce, which has to match the nonce in the
    browser's $\str{pendingRequests}$, is either the same in both
    messages $m_1$ and $m_2$ or it contains a tag. If it contains a
    tag (with Condition~\ref{eqs:b:pendingRequests} of
    Definition~\ref{def:eq-of-states}) or if it contains a nonce that
    is not in $\str{pendingRequests}$ (which contains the same nonces
    for both browsers), both browsers stop and do not output anything
    or change their state.

    We can now distinguish between two cases: In both browsers, \ref{browser-http-response-normal}
    the $\mi{reference}$ that is stored along with the HTTP nonce is a
    window reference (in this case, the request was a ``normal''
    HTTP(S) request), or \ref{browser-http-response-xhr} this reference is a pairing of a
    document nonce and an XHR reference chosen by the script that sent
    the request, which is an XHR. From
    Condition~\ref{eqs:b:pendingRequests} of
    Definition~\ref{def:eq-of-states} it is easy to see that no other
    cases are possible (in particular, the $\mi{reference}$ in both
    browsers is the same).

    \begin{enumerate}[label={(\Roman*)}]
    \item\label{browser-http-response-normal} In Case~(I), we can distinguish between the following two
      cases:
      \begin{enumerate}
      \item The HTTP nonce in $m_1$ is in $H$: In this case, only
        Case~\ref{eqe:http-res} of Definition~\ref{def:eq-of-events}
        can apply. We therefore have that there is no Location,
        Set-Cookie or Strict-Transport-Security header in the
        response, and that the responses $m_1$ and $m_2$ are equal up
        to proto-tags in $\theta$. From
        Case~\ref{eqs:b:pendingRequests} of
        Definition~\ref{def:eq-of-states} we have that in both
        browsers $b_1$ and $b_2$ the encryption keys stored in
        $\str{pendingRequests}$ are the same, that the expected sender
        in $e_i^{(1)}$ is $r_1$ and in $e_i^{(2)}$ is $r_2$.

        With this, we observe that both browsers either accept and
        successfully decrypt the messages and call the function
        $\mathsf{PROCESSRESPONSE}$, or both browsers stop with not
        state change and no output event (in which case the
        $\alpha$-equivalence is given trivially). In particular we
        note that the expected sender in both cases matches precisely
        the sender the message has (compare Case~\ref{eqe:http-res} of
        Definition~\ref{def:eq-of-events}).

        In $\mathsf{PROCESSRESPONSE}$, we see that no changes in the
        browsers' cookies are performed (as no cookies are in the
        response), the $\str{sts}$ subterm is not changed, and no
        redirection is performed (as no Location header is present).

        Now, new documents are created in each browser. These have the
        form
        \[ \an{\nu_7, \mi{location}, \mi{referrer}, \mi{script},
          \mi{scriptstate}, \an{}, \an{}, \True} \] with
        \[ \mi{location} = \an{\cUrl, \mi{protocol}, \mi{host},
          \mi{path}, \mi{parameters}}\ .\]

        Here, $\mi{script}$, $\mi{scriptstate}$ are the same and
        $\mi{protocol}$, $\mi{path}$, $\mi{parameters}$ are taken from
        the requests, which means that these subterms are equal or
        term-equivalent up to proto-tags $\theta$ according to
        Case~\ref{eqs:b:pendingRequests} of
        Definition~\ref{def:eq-of-states}. The host and the referrer
        are the same in both states up to exchange of domains, which
        can be $\mi{dr}_1$ in $b_1$ and $\mi{dr}_2$ in $b_2$.

        We note that if $k \in K$, then the request will not be of the
        correct form to be parsed into a document in the browser, and
        both browsers stop with an empty output and no state change.

        The browser now attaches these newly created documents to its
        window tree, and we have to check that the
        Condition~\ref{eqs:b:w} of Definition~\ref{def:eq-of-states}
        is satisfied.

        As we have that both incoming messages were encrypted messages
        (see Case~\ref{eqe:pre:unencrypted-http} of
        Definition~\ref{def:eq-of-events}) and both messages come from
        $r_1$ and $r_2$, respectively, and therefore $\mi{script}$ is
        either $\str{script\_rp}$ or $\str{script\_rp\_redir}$ (see
        Case~\ref{eqe:pre:rp-scripts} of
        Definition~\ref{def:eq-of-events}) we have to check
        Conditions~\ref{eqs:b:w:script_rp}
        and~\ref{eqs:b:w:script_rp_redir} of
        Definition~\ref{def:eq-of-states} in particular.

        The scriptstate is initially equal and may contain a subterm $l \in
        L$ (as we know from HTTP nonce in $m_1$ being in $H$ that the
        host of this document is $\mi{dr}_1$ in $b_1$ and $\mi{dr}_2$
        in $b_2$), and the script inputs are empty. The document's
        referer is constructed from the referer header of the request,
        which is equal in both cases or has the host $\mi{dr}_1$ in
        $b_1$ and $\mi{dr}_2$ in $b_2$.

        To sum up, $\gamma$-equivalence under $(\theta, H)$ is
        preserved. $\alpha$-equivalence is preserved as no output
        event is generated and the exact same number of nonces are
        chosen.

      \item The HTTP nonce in $m_1$ is not in $H$: In this case we
        have that $e_i^{(1)} \prototagequiv{\theta} e_i^{(2)}$
        (Case~\ref{eqe:prototagequiv} of
        Definition~\ref{def:eq-of-events}), and that the HTTP nonces,
        senders, encryption keys (if any) and original requests in the
        pending requests of both browsers are either equal or
        equivalent up to proto-tags $\theta$. There can be no
        $k \in K$ as a subterm (except in tags) of the input.

        With this, we observe that both browsers either accept and
        successfully decrypt the messages and call the function
        $\mathsf{PROCESSRESPONSE}$, or both browsers stop with no
        state change and no output event (in which case the
        $\alpha$-equivalence is given trivially). In particular we
        note that the expected sender in both cases matches precisely
        the sender of the message (as it is equal).

        If there is a Set-Cookie header in one of the responses, a new
        entry in the cookies of each browsers is created (which
        obviously is term-equivalent up to $\theta$, and therefore is
        in compliance with the requirements for $\gamma$-equivalence).
        The same holds true for any Strict-Transport-Security headers.

        Now, if there is a Location header in $m_1$ (and therefore
        also in $m_2$), a new request is generated and stored under
        the pending DNS requests, and a DNS request is sent out. The
        new HTTP(S) requests contains the method, body, and Origin
        header of the original request (which were equivalent up to
        proto-tags $\theta$), where the Origin header is amended by
        the host and protocol of the original request.

        Also, we know from
        $e_i^{(1)} \prototagequiv{\theta} e_i^{(2)}$ that neither
        event may contain a subterm $l\in L$ or $k \in K$. Hence, the
        transferred (initial) scriptstate (or a request generated by a
        Location header, see below) cannot contain a subterm $l \in L$
        or $k \in K$.

        Now, assuming that the domain in the Location header was not
        $\str{CHALLENGE}$, then the new request is term-equivalent
        under $\theta$ between both browsers. A new DNS request is
        generated (which conforms to Condition~\ref{eqe:prototagequiv}
        of Definition~\ref{def:eq-of-states}). It is sent out and the
        HTTP request is stored in the pending DNS requests of each
        browser. It is clear that in this case, the conditions for
        $\gamma$-equivalence under $(\theta, H)$ (in particular,
        Condition~\ref{eqs:b:pendingDNS}) and $\beta$-equivalence
        under $(\theta, H, L)$ are satisfied. The same number of
        nonces is chosen. Altogether, $\alpha$-equivalence is given.

        If, however, the domain is $\str{CHALLENGE}$ (and the browser
        has not started a request to $\str{CHALLENGE}$ before; in this
        case the browser would behave as above), then the domain is
        $\mi{dr}_1$ in $b_1$ and $\mi{dr}_2$ in $b_2$. In particular,
        in the resulting requests, the Host header is exchanged in
        this way. For alpha equivalence to hold for the new
        configuration, we have $H' = H \cup \{n\}$, where $n$ is the
        nonce chosen for the HTTP(S) request. A new DNS request is
        generated (which in this case conforms to
        Condition~\ref{eqe:dns-req} of
        Definition~\ref{def:eq-of-states}). Therefore, we have
        $\gamma$-equivalence under $(\theta, H')$ and
        $\beta$-equivalence under $(\theta, H', L)$. The same number
        of nonces is chosen, and we indeed have $\alpha$-equivalence.

        If there is no Location header in $m_1$ (and therefore none in
        $m_2$), a new document is constructed just as in the case when
        the nonce in $m_1$ is in $H$.

        The scriptstate is initially equal, and the script inputs are
        empty. The document's referer is constructed from the referer
        header of the request, which is equal in both cases (up to
        proto-tags in $\theta$).

        To sum up, $\gamma$-equivalence under $(\theta, H)$ is
        preserved in this case as well. $\alpha$-equivalence is
        preserved as no output event is generated and the exact same
        number of nonces are chosen.
      \end{enumerate}
    \item\label{browser-http-response-xhr}
      In Case~(II), i.e., the response is a response to an XHR, we
      have that $\mi{reference}$ is a tupel, say,
      $\mi{reference} = \an{\mi{docnonce}, \mi{xhrref}}$, and we again
      distinguish between the two cases as above:
      \begin{enumerate}
      \item The HTTP nonce in $m_1$ is in $H$: In this case, only
        Case~\ref{eqe:http-res} of Definition~\ref{def:eq-of-events}
        can apply. We therefore have that there is no Location,
        Set-Cookie or Strict-Transport-Security header in the
        response, and that the responses $m_1$ and $m_2$ are equal up
        to proto-tags in $\theta$. From
        Case~\ref{eqs:b:pendingRequests} of
        Definition~\ref{def:eq-of-states} we have that in both
        browsers $b_1$ and $b_2$ the encryption keys stored in
        $\str{pendingRequests}$ are the same and that the expected sender
        in $e_i^{(1)}$ is $r_1$ and in $e_i^{(2)}$ is $r_2$.

        With this, we observe that both browsers either accept and
        successfully decrypt the messages and call the function
        $\mathsf{PROCESSRESPONSE}$, or both browsers stop with not
        state change and no output event (in which case the
        $\alpha$-equivalence is given trivially). In particular we
        note that the expected sender in both cases matches precisely
        the sender of the message (compare Case~\ref{eqe:http-res} of
        Definition~\ref{def:eq-of-events}).

        In $\mathsf{PROCESSRESPONSE}$, we see that no changes in the
        browsers' cookies are performed (as no cookies are in the
        response), the $\str{sts}$ subterm is not changed, and no
        redirection is performed (as no Location header is present).

        A new input is constructed for the document that is identified
        by $\mi{docnonce}$. We note that such a document exists either
        in both browsers or in none (in which, again, both browsers
        stop with no output or state change). As the input events may
        contain a subterm $l \in L$ (as we know from HTTP nonce in
        $m_1$ being in $H$ that the host of this document is
        $\mi{dr}_1$ in $b_1$ and $\mi{dr}_2$ in $b_2$), the
        constructed scriptinput may also contain a subterm $l \in L$.
        The same holds true for keys $k \in K$.

        For $j \in \{1,2\}$, we have that the $\str{scriptinput}$ term
        for the document in $b_j$ is $\an{\tXMLHTTPRequest,
          \mi{g_j}.\str{body}, \mi{xhrref}}$, where $g_j$ is the HTTP
        body of $m_j$.  With $g_1 \prototagequiv{\theta} g_2$ and
        $\mi{xhrref} \in \nonces \cup \{\bot\}$, it is easy to see
        that the resulting $\str{scriptinput}$ term of the document is
        term-equivalent under proto-tags $\theta$ (as it was before).
        This satisfies $\gamma$-equivalence on the new browser state.

        No output event is generated, and no nonces are chosen.
        Therefore we have $\alpha$-equivalence on the new
        configuration.

      \item The HTTP nonce in $m_1$ is not in $H$: In this case we
        have that $e_i^{(1)} \prototagequiv{\theta} e_i^{(2)}$
        (Case~\ref{eqe:prototagequiv} of
        Definition~\ref{def:eq-of-events}), and that the HTTP nonces,
        senders, encryption keys (if any) and original requests in the
        pending requests of both browsers are either equal or
        equivalent up to proto-tags $\theta$.

        With this, we observe that both browsers either accept and
        successfully decrypt the messages and call the function
        $\mathsf{PROCESSRESPONSE}$, or both browsers stop with not
        state change and no output event (in which case the
        $\alpha$-equivalence is given trivially). In particular we
        note that the expected sender in both cases matches precisely
        the sender the message has (as it is equal).

        If there is a Set-Cookie header in one of the responses, a new
        entry in the cookies of each browsers is created (which
        obviously is term-equivalent up to $\theta$, and therefore is
        in compliance with the requirements for $\gamma$-equivalence).
        The same holds true for any Strict-Transport-Security headers.

        Now, if there is a Location header in $m_1$ (and therefore
        also in $m_2$), both browsers stop with not state change and
        no output event (in which case the $\alpha$-equivalence is
        given trivially), as XHR cannot be redirected in the browser.

        If there is no Location header in $m_1$ (and therefore none in
        $m_2$), a new input is constructed for the document that is
        identified by $\mi{docnonce}$. We note that such a document
        exists either in both browsers or in none. For $j \in
        \{1,2\}$, we have that the $\str{scriptinput}$ for the
        document in $b_j$ is $\an{\tXMLHTTPRequest,
          \mi{g_j}.\str{body}, \mi{xhrref}}$, where $g_j$ is the HTTP
        body of $m_j$. With $e_i^{(1)} \prototagequiv{\theta}
        e_i^{(2)}$ (which may not contain a subterm $l \in L$ or $k \in K$), it is
        easy to see that the resulting $\str{scriptinput}$ term of the
        document is term-equivalent under proto-tags $\theta$ (as it
        was before). This satisfies $\gamma$-equivalence on the new
        browser state.

        No output event is generated, and no nonces are chosen.
        Therefore we have $\alpha$-equivalence on the new
        configuration.
      \end{enumerate}
    \end{enumerate}
  \item[TRIGGER] We now distinguish between the possible values for
    $\mi{cmd}_\text{switch}$.

    \begin{description}
    \item[1 (trigger script):] In this case, the script in the window
      indexed by $\mi{cmd}_\text{window}$ is triggered. Let $j$ be a
      pointer to that window.

      We first note that such a window exists in $b_1$ iff it exists
      in $b_2$ and that
      $S_1(b_1).j.\str{script} \equiv S_2(b_2).j.\str{script}$. We now
      distinguish between the following cases, which cover all
      possible states of the windows/documents:

      \begin{enumerate}

      \item
        $S_1(b_1).j.\str{origin} \in \{\an{\mi{dr}_1,
          \https},\an{\mi{dr}_2, \https}\}$
        and $S_1(b_1).j.\str{script} \equiv \str{script\_rp}$.
        
        Similar to the following scripts, the main distinction in this
        script is between the script's internal states (named
        $\str{q}$). With the term-equivalence under proto-tags
        $\theta$ we have that either
        $S_1(b_1).j.\str{scriptstate}.\str{q} =
        S_2(b_2).j.\str{scriptstate}.\str{q}$ or the script's state
        contains a tag and is therefore in an illegal state (in which
        case the script will stop without producing output or changing
        its state).

        We can therefore now distinguish between the possible values
        of
        $S_1(b_1).j.\str{scriptstate}.\str{q} =
        S_2(b_2).j.\str{scriptstate}.\str{q}$:
        \begin{description}
        \item[start:] In this case, the script chooses one nonce in
          both processing steps and creates an \xhr addressed to its
          own origin, which is in both cases either (a) equal and
          $\an{\mi{dr}_1,\https}$ or $\an{\mi{dr}_2,\https}$ or it is
          (b) $\an{\mi{dr}_1,\https}$ in $b_1$ and
          $\an{\mi{dr}_2,\https}$ in $b_2$. The path is the (static)
          string $\str{/startLogin}$. The script saves the freshly
          chosen nonce (referencing the \xhr) and a (static) value for
          $\str{q}$ in its scriptstate. We note that if a $k \in K$ is
          contained in the script's state, it is never sent out in
          this state.

          In Case~(a), we have that the command is term-equivalent
          under proto-tags $\theta$ and hence, the browser emits a DNS
          request which is term-equivalent, and appends the \xhr in
          $\str{pendingDNS}$. Hence, we have $\gamma$-equivalence
          under $(\theta,H)$ for the new states, $\beta$-equivalence
          under $(\theta,H,L)$ for the new events, and
          $\alpha$-equivalence for the new configuration.

          In Case~(b), we have that the prepared HTTP request is
          $\delta$-equivalent under $\theta$, and is added to
          $\str{pendingDNS}$ in the browser's state, we set $H' := H
          \cup \{n\}$ with $n$ being the nonce that the browser
          chooses for $\lambda_1$. The browser emits a DNS request that
          fulfills Condition~\ref{eqe:dns-req} of
          Definition~\ref{def:eq-of-events}. Therefore, we have
          $\gamma$-equivalence under $(\theta,H')$ for the new states.
          As the request added to $\str{pendingDNS}$ in the browser's
          state fulfills Condition~\ref{eqs:b:pendingDNS}, we have
          $\beta$-equivalence under $(\theta,H',L)$ for the new
          events, and $\alpha$-equivalence for the new configuration.

        \item[expectStartLoginResponse:] In this case, the script
          retrieves the result of an \xhr from $\mi{scriptinputs}$ that matches
          the reference contained in $\mi{scriptstate}$. From
          Condition~\ref{eqs:b:w:script_rp} of
          Definition~\ref{def:eq-of-states} we know that all results
          from \xhr{}s in $\mi{scriptinput}$ are term-equivalent up to
          $\theta$ and that $\mi{scriptstate}$ is term-equivalent up
          to $\theta$. Hence, in both browsers, both scripts stop with
          an empty command or both continue as they successfully
          retrieved such an \xhr.

          Now, a URL is constructed (exactly the same) for an (HTTPS)
          origin that is the origin of the document. We have to
          distinguish between to cases: Either the origin is (i) equal in
          $b_1$ and $b_2$, or (ii) the origin is $\an{\mi{dr}_1,\https}$ in $b_1$
          and $\an{\mi{dr}_2,\https}$ in $b_2$. In the first case (i), no subterm $l
          \in L$ is contained in $\mi{scriptinput}$ and hence, no such
          subterm is contained in the constructed URL. In the second
          case (ii), however, we have that such a subterm may be contained
          in $\mi{scriptinput}$. But as the URL commands the browser
          to prepare a request to $\mi{dr}_1$ and $\mi{dr}_2$,
          respectively, the request may be stored in
          $\str{pendingDNS}$ of the browser's state.

          To store the prepared HTTP request in $\str{pendingDNS}$,
          the browser chooses a nonce $n$. We construct the set $H'$
          as follows: In (i) $H' := H$ and in (ii) $H' := H \cup \{ n
          \}$.  Thus, Condition~\ref{eqs:b:pendingDNS} of
          Definition~\ref{def:eq-of-states} holds true under
          $(\theta,H')$.

          In Case (i) we also have that no $k \in K$ is written into
          $\str{scriptstate}.\str{tagKey}$. Otherwise, a $k \in K$ may
          be written there. In no case, a $k \in K$ is contained in
          the output event or generated HTTP request (as
          $\str{scriptstate}.\str{tagKey}$ is not used to create such
          event or request).

          The output events of both browsers are either a DNS request
          that is equal in (i) or a DNS request that matches
          Condition~\ref{eqe:dns-req} of
          Definition~\ref{def:eq-of-events}.

          We now have that $S_1'$ and $S_2'$ are $\gamma$-equivalent
          under $(\theta,H')$, $E_1'$ and $E_2'$ are
          $\beta$-equivalent under $(\theta,H',L)$, and as exactly the
          same number of nonces is chosen, we have that the new
          configuration is $\alpha$-equivalent.

        \item[expectFWDReady:] In this case, the script retrieves the
          result of a \pm from $\mi{scriptinputs}$. As we know that
          $S_1(b_1).j.\str{scriptstate} \prototagequiv{\theta}
          S_2(b_2).j.\str{scriptstate}$ and that for all matching \pms
          that they also have to be term-equivalent up to $\theta$ and
          that the window structure is equal in both browsers, we have
          that either the same \pm is retrieved from
          $\mi{scriptinputs}$ or none in both browsers.

          The script now constructs a \pm that is sent to exactly the
          same window in both browsers and that requires that the
          receiver origin has to be $\an{\str{fwddomain},\https}$
          \gs{technically, this is not correct, as the forwarder
            domain is retrieved dynamically and our equivalences have
            to be extended such that it actually is
            $\str{fwddomain}$.} The postMessage is only sent to this
          origin, we have that $\gamma$-equivalence cannot be violated
          even if a $k \in K$ is contained in the postMessage (as
          there are no constraints concerning $K$ in the script inputs
          of this origin).

          We now have that $S_1'$ and $S_2'$ are $\gamma$-equivalent
          under $(\theta,H)$, $E_1'$ and $E_2'$ are
          $\beta$-equivalent under $(\theta,H,L)$, and as exactly the
          same number of nonces is chosen, we have that the new
          configuration is $\alpha$-equivalent.

        \item[expectEIA:] 
          In this case, the script retrieves the
          result of a \pm from $\mi{scriptinputs}$. As we know that
          $S_1(b_1).j.\str{scriptstate} \prototagequiv{\theta}
          S_2(b_2).j.\str{scriptstate}$ and that for all matching \pms
          that they also have to be term-equivalent up to $\theta$ and
          that the window structure is equal in both browsers, we have
          that either the same \pm is retrieved from
          $\mi{scriptinputs}$ or none in both browsers.

          The script chooses one nonce in
          both processing steps and creates an \xhr addressed to its
          own origin, which is in both cases either (a) equal and
          $\an{\mi{dr}_1,\https}$ or $\an{mi{dr}_2,\https}$ or it is
          (b) $\an{\mi{dr}_1,\https}$ in $b_1$ and
          $\an{mi{dr}_2,\https}$ in $b_2$. The path is the (static)
          string $\str{/login}$.  The script saves the freshly
          chosen nonce (referencing the \xhr) and a (static) value for
          $\str{q}$ in its scriptstate.

          In Case~(a), we have that the command is term-equivalent
          under proto-tags $\theta$ and hence, the browser emits a DNS
          request which is term-equivalent, and appends the \xhr in
          $\str{pendingDNS}$. Hence, we have $\gamma$-equivalence
          under $(\theta,H)$ for the new states, $\beta$-equivalence
          under $(\theta,H,L)$ for the new events, and
          $\alpha$-equivalence for the new configuration.

          For Case~(b), we note that for $j \in \{1,2\}$, the body
          $g_j$ of the prepared HTTP request may contain an
          (encrypted) identity assertion such that
          \[g_j[\str{eia}] \sim \encs{\sig{
              \an{\encs{\an{\mi{dr}_j,*}}{*}, *, \str{fwddomain}}
            }{*}}{*}\,.\]
          As the receiver of this message is always determined to be
          $\mi{dr}_j$ (in $b_j$) and the Origin header is set
          accordingly, we have that the prepared HTTP request is
          $\delta$-equivalent under $\theta$. The (prepared) request
          is added to $\str{pendingDNS}$ in the browser's state, we
          set $H' := H \cup \{n\}$ with $n$ being the nonce that the
          browser chooses for $\lambda_1$. The browser emits a DNS
          request that fulfills Condition~\ref{eqe:dns-req} of
          Definition~\ref{def:eq-of-events}. In no case is a
          $k \in K$, which can only be stored in
          $\str{scriptstate}.\str{tagKey}$ used to construct either
          output events or state changes. Therefore, we have
          $\gamma$-equivalence under $(\theta,H')$ for the new states.
          As the request added to $\str{pendingDNS}$ in the browser's
          state fulfills Condition~\ref{eqs:b:pendingDNS}, we have
          $\beta$-equivalence under $(\theta,H',L)$ for the new
          events, and $\alpha$-equivalence for the new configuration.
        \end{description}

      \item
        $S_1(b_1).j.\str{origin} \in \{\an{\mi{dr}_1,
          \https},\an{\mi{dr}_2, \https}\}$
        and $S_1(b_1).j.\str{script} \not\equiv \str{script\_rp}$. It
        immediately follows that
        $S_1(b_1).j.\str{script} \equiv \str{script\_rp\_redir}$ in
        this case. This script always outputs the same command to the
        browser: it commands to navigate its window to a URL saved as
        the script's $\mi{scriptstate}$, which is term-equivalent
        under proto-tags between the two browsers. As the HTTP
        request that is generated (and then stored in
        $\str{pendingDNS}$) contains neither an Origin nor a Referer
        header, they are term-equivalent under $\theta$ if the
        domain of this HTTP request is not $\str{CHALLENGE}$\footnote{This also applies
          when the browser challenged before, i.e., $\str{challenge}$
          in the browser's state is $\bot$.} and $\delta$-equivalent
        otherwise.\gs{hier muessen wir wahrscheinlich anders unterscheiden}

        The resulting states of both browsers are therefore
        $\gamma$-equivalent under $\theta$.

        In both cases (challenged or not), we have that
        $E ^{(1)}_\text{out} \prototagequiv{\theta}
        E^{(2)}_\text{out}$
        and hence Condition~\ref{eqe:prototagequiv} of
        Definition~\ref{def:eq-of-events} is fulfilled, or the output
        messages fulfill Condition~\ref{eqe:http-req}.

        Clearly, no $k \in K$ is contained in the script's state, in the
        generated DNS request, or the HTTP request.

        In both processing steps, exactly the same number of nonces is
        chosen. We therefore have $\alpha$-equivalence.

      \item $S_1(b_1).j.\str{origin} = \an{ \str{fwddomain} ,
          \https}$.

        It immediately follows that $S_1(b_1).j.\str{script} \equiv
        \str{script\_fwd}$ in this case.

        As above, we have that either
        $S_1(b_1).j.\str{scriptstate}.\str{q} =
        S_2(b_2).j.\str{scriptstate}.\str{q}$ or the script's state
        contains a tag and is therefore in an illegal state (in which
        case the script will stop without producing output or changing
        its state).

        With the equivalence of the window structures we have that the
        $\mi{target}$ variable in the algorithm of $\str{script\_fwd}$
        in both runs points to a window containing the same script in
        both runs.

        We can now distinguish between the possible values of
        $S_1(b_1).j.\str{scriptstate}.\str{q} =
        S_2(b_2).j.\str{scriptstate}.\str{q}$:
        
        \begin{description}
        \item[start] In this case, a postMessage with exactly the same
          contents is sent to the same window. We therefore trivially
          have $\gamma$-equivalence under $(\theta,H)$ on the states
          in this case. No output events are generated, and no nonces
          are used.  Therefore, $\alpha$-equivalence holds on the new
          states.
        \item[expectTagKey] In this case, for any change in the state
          to occur, a \pm containing some term under the (dictionary)
          key $\str{tagKey}$ sent from exactly the same window has to
          be in $\mi{scriptinputs}$. From
          Condition~\ref{eqs:b:w:script_fwd} of
          Definition~\ref{def:eq-of-states} we know that in
          $\mi{scriptinputs}$ either such a \pm exists in both
          browsers or in none.
                    
          As the $\mi{tag}$ contained in the \pms is term-equivalent
          under proto-tags $\mi{theta}$, we have that the RP domain
          inside the tag is either the same in both processing steps
          or $\mi{dr}_1$ in $b_1$ and $\mi{dr}_2$ in $b_2$.
          Additionally, $\mi{eia}$ (if contained in the URL parameters
          in the location of the script) is term-equivalent under
          proto-tags $\mi{theta}$. It follows that the resulting
          postMessage, which contains $\mi{eia}$, is either delivered
          to the receiver (which is either equal in both browsers or
          $\mi{dr}_1$ in $b_1$ and $\mi{dr}_2$ in $b_2$).
          Additionally, no $k \in K$ can be contained in the
          $\mi{eia}$, as it is taken from the document's location. In
          any case, the resulting browser states are
          $\gamma$-equivalent under $(\theta,H)$.

          As no output events are produced, we have that $E_1'$ and
          $E_2'$ are $\beta$-equivalent under $(\theta,H,L)$. As
          exactly the same number of nonces are chosen (in fact, no
          nonces are chosen), we have that the resulting configuration
          is $\alpha$-equivalent.

          \df{need to apply filter syntax here! wir wissen: nur RP,
          sein passendes script\_rp und letztlich forwarder kennen
          passenden key zum tag. in message vorher wurde key gesendet,
          d.h. in beiden faellen der richtige oder in keinem fall.
          d.h. beide empfangsorigins passen oder keiner. d.h. es
          kommen beide an oder in keinem fall.}

        \end{description}

      \item
        $S_1(b_1).j.\str{origin} \not\in
        \{\an{\mi{dr}_1,\https},\an{\mi{dr}_2,\https},\an{\str{fwddomain},\https}\}$.

        Here, we assumte that the script in this case is the the
        attacker script $R^\text{att}$, as it subsumes all other
        scripts. 

        We first observe, that its ``view'', i.e., the input terms it
        gets from the browser, is term-equivalent up to proto-tags
        $\theta$ between (the scripts running in) $S_1(b_1)$ and
        $S_2(b_2)$. From the equivalence definition of states
        (Definition~\ref{def:eq-of-states}) we can see that:
        \begin{itemize}
        \item The window tree has the same structure in both
          processing steps. All window terms are equal (up to their
          $\str{documents}$ subterm). All same-origin documents
          contain only subterms that are term-equivalent up to $\theta$ (again,
          up to their $\str{subwindows}$ subterms). All
          non-same-origin documents become limited documents and
          therefore are equal (up to the subwindows, limited documents
          only contain the subwindows and the document nonce).
        \item The subterms $\str{cookies}$, $\str{localStorage}$,
          $\str{sessionStorage}$, $\str{scriptstate}$, and
          $\str{scriptinputs}$ are term-equivalent up to $\theta$.
        \item The subterms $\str{ids}$ and $\str{secrets}$ are equal.
        \item There is not $k \in K$ as a subterm (except as keys for
          tags) in this view. We therefore have that no such term can
          be contained in the output command of the script, or in the
          new scriptstate.
        \end{itemize}

        As the input of the script as a whole is term-equivalent up to
        $\theta$, does not contain any placeholders in
        $V_\text{script}$, and does not contain a key for any tag in
        $\theta$, we have that the output of the script, i.e.,
        $\mi{scriptstate}'$, $\mi{cookies}'$, $\mi{localStorage}'$,
        $\mi{sessionStorage}'$, $\mi{command}'$, must be
        term-equivalent up to proto-tags $\theta$ (in particular, the
        same number of nonces is replaced in both output terms in both
        processing steps). Note that the first element of the command
        output must be equal between the two browsers (as it must be
        string) or otherwise the browsers will ignore the command in
        both processing steps.

        Analogously, we see that the input does not contain any
        subterm $l \in L$. \gs{sollte eigentlich als begruendung fuer den rest reichen, dass er nirgendwo ein solches $l$ reinpacken kann.}

        We can now distinguish the possible commands the script can
        output (again, all parameters for these commands must be
        term-equivalent under $\theta$):

        \begin{enumerate}
        \item Empty or invalid command: In this case, the browser
          outputs no message and its state is not changed.
          $\alpha$-equivalence is therefore trivially given.
        \item\label{ratt-sends-href-command}
          $\an{\tHref, \mi{url}, \mi{hrefwindow}, \mi{noreferrer}}$:
          Here, the browser calls $\mathsf{GETNAVIGABLEWINDOW}$ to
          determine the window in which the document will be loaded.
          Due to the synchronous window structure between the two
          browsers, the result will be the same in both processing
          steps (which may include creating a new window with a new
          nonce).

          Now, a new HTTP(S) request is assembled from the URL. A
          Referer header is added to the request from the document's
          current $\str{location}$ (which is term-equivalent under
          $\theta$) and given to the $\mathsf{SEND}$ function. There,
          if the $\str{host}$ part of the URL is $\str{CHALLENGE}$, it
          will be replaced by $\mi{dr}_1$ in $b_1$ and by $\mi{dr}_2$
          in $b_2$. (In this case, the $\alpha$-equivalence in the
          following holds for $H' := H \cup \{n\}$, where $n$ is the
          nonce of the generated HTTP request. Otherwise, it holds for
          $H' := H$.). Afterwards, for domains that are in the
          $\str{sts}$ subterm of the browser's state, the request will
          be rewritten to HTTPS. Any cookies for the domain in the
          requests are added. Note that both latter steps never apply
          to requests to $\mi{dr}_1$ or $\mi{dr}_2$ as per definition,
          there are no entries for these domains in $\mi{sts}$ and
          $\mi{cookies}$.\df{check that we don't send a cookie in the
            model, adapt paper if needed}. The requests, which are
          $\delta$-equivalent under $\theta$ are added to the pending
          DNS requests and fulfill Condition~\ref{eqs:b:pendingDNS} of
          Definition~\ref{def:eq-of-states}. A DNS request is created
          in accordance with Condition~\ref{eqe:dns-req} or
          Condition~\ref{eqe:prototagequiv} of
          Definition~\ref{def:eq-of-events}. The same number of nonce
          is chosen in both processing steps, and therefore
          $\alpha$-equivalence holds.
        \item $\an{\tIframe, \mi{url}, \mi{window}}$
          This case is completely parallel to Case~\ref{ratt-sends-href-command}. 
        \item\label{ratt-sends-form-command}
          $\an{\tForm, \mi{url}, \mi{method}, \mi{data},
            \mi{hrefwindow}}$
          This case is parallel to Case~\ref{ratt-sends-href-command},
          except that an Origin header is added. Its properties are
          the same as those of the Referer header in
          Case~\ref{ratt-sends-href-command}.
        \item\label{ratt-sends-setscript-command} $\an{\tSetScript,
            \mi{window}, \mi{script}}$ In this case, the same document
          is manipulated in both processing steps in the same
          way. Note that only same-origin documents, i.e., attacker
          documents, can be manipulated. No output event is generated,
          and no nonces are chosen. $\alpha$-equivalence is given
          trivially.
        \item $\an{\tSetScriptState, \mi{window}, \mi{scriptstate}}$
          This case is parallel to
          Case~\ref{ratt-sends-setscript-command}.
        \item
          $\an{\tXMLHTTPRequest, \mi{url}, \mi{method}, \mi{data},
            \mi{xhrreference}}$
          This case is parallel to Case~\ref{ratt-sends-href-command}
          with the addition of the Origin header (see
          Case~\ref{ratt-sends-form-command}) and the addition of a
          reference parameter, which is transferred into
          $\str{pendingDNS}$ inside the browser ($\mi{xhrreference}$).
          Therefore, for $\gamma$-equivalence, it is important to note
          that this reference can only be a nonce (and therefore is
          equal in both processing steps). Otherwise, the browser
          stops in both processing steps.
        \item $\an{\tBack, \mi{window}}$,
          $\an{\tForward, \mi{window}}$, and
          $\an{\tClose, \mi{window}}$ If the script outputs one of
          these commands, in both processing steps, the browsers will
          be manipulated in exactly the same way. No output events are
          generated, and no nonces are chosen.
        \item $\an{\tPostMessage, \mi{window}, \mi{message},
            \mi{origin}}$ In this case, a term containing
          $\mi{message}$ (term-equivalent under $\theta$) is added to
          a document's $\str{scriptinput}$ term. If the $\mi{origin}$
          is $\bot$, the same term will be added to the same document
          in both processing steps. Otherwise, the term may only be
          added to one document (if, for example, the origin is
          $\an{\mi{dr}_1, \https}$ and the target documents in both
          browsers have the domain $\mi{dr}_1$ and $\mi{dr}_2$,
          respectively). In this case, however, the equivalence
          defined on the scriptinputs is preserved. This would only be
          possible for $\str{script\_rp}$ and only if the sender
          origin was $\an{\str{fwddomain}, \https}$.
        \end{enumerate}

      \end{enumerate}

    \item[2 (navigate to URL):] \gs{do we need to say that such an url cannot contain a subterm $l \in L$?} In this case, a new window is opened
      in the browser and a document is loaded from $\mi{url}$.

      The states of both browsers are changed in the same way except
      if the domain of the URL is $\str{CHALLENGE}$. In both cases, a
      new (at this point empty) window is created and appended the
      $\str{windows}$ subterm of the browsers. This subterm is
      therefore changed in exactly the same way.

      A new HTTP request is created and appended to
      $\str{pendingDNS}$. The generated requests in both processing
      steps can only differ in the host part iff the domain is
      $\str{CHALLENGE}$. In this case, in $b_1$ the domain is replaced
      by $\mi{dr}_1$ and in $b_2$ by $\mi{dr}_2$ and the
      $\alpha$-equivalence in the following holds for $H' := H \{n\}$,
      where $n$ is the nonce of the generated HTTP request. In both
      cases, the Condition~\ref{eqs:b:pendingDNS} of
      Definition~\ref{def:eq-of-states} is satisfied.

      The request cannot contain any $l \in L$ or $k \in K$.

      The generated DNS requests are equivalent under
      Condition~\ref{eqe:dns-req} or Condition~\ref{eqe:prototagequiv}
      of Definition~\ref{def:eq-of-events}.

      In both processing steps, three nonces are chosen.

      Therefore, we have $\alpha$-equivalence for $(S_1',E_1',N_1')$
      and $(S_2',E_2',N_2')$.
    \item[3 (reload document):] Here, an existing document is
      retrieved from its original location again. From the definition
      of $\gamma$-equivalence under $(\theta,H)$ we can see that
      whatever document is reloaded, its location is either (I)
      term-equivalent under $\theta$, or (II) it is term-equivalent
      under $\theta$ except for the domain, which is $\mi{dr}_1$ in
      $b_1$ and $\mi{dr}_2$ in $b_2$. 

      We note that in either case, the requests are constructed from
      the location and referrer properties of the document that is to
      be reloaded, and therefore, cannot contain any $k\in K$.

      In Case~(I), we note that the domain cannot be
      $\str{CHALLENGE}$. If the document is reloaded, the same DNS
      request is issued in both browsers (therefore,
      $\beta$-equivalence under $(\theta, H, L)$ is given), and an
      entry is added to the pending DNS messages such that we have
      $\gamma$-equivalence under $(\theta, H)$. The same number of
      nonces is chosen in both runs, and we have
      $\alpha$-equivalence.\gs{We have to adopt our equivalences such
        that no subterm $l \in L$ may be contained in these cases and
        it may be contained in the case below.}

      Case~(II) is similar, but we have $H' := H \cup \{n\}$, where
      $n$ is the nonce of the HTTP request that is added to the
      pending DNS entries. Then we have $\gamma$-equivalence under
      $(\theta, H')$. Again, the same number of nonces is chosen and
      we have $\alpha$-equivalence. \df{For fwddomain, the referer
        prototagequiv referer condition is missing. Technically, we
        should add it and test for it.}
    \end{description}

  \item[Other] Any other message is discarded by the browsers without
    any change to state or output events.
  \end{description}

  \paragraph{\underline{Case $p_1$ is some attacker:}}
  
  Here, only Case~\ref{eqe:prototagequiv} from
  Definition~\ref{def:eq-of-events} can apply to the input events,
  i.e., the input events are term-equivalent under proto-tags
  $\theta$. This implies that the message was delivered to the same
  attacker process in both processing steps. Let $A$ be that attacker
  process. With Case~\ref{eqs:att} of
  Definition~\ref{def:eq-of-states} we have that
  $S_1(A) \prototagequiv\theta S_2(A)$ and with
  Case~\ref{eqs:att-not-k} and Case~\ref{eqe:pre:k} of
  Definition~\ref{def:eq-of-events} it follows immediately that the
  attacker cannot decrypt any of the tags in $\theta$ in its
  knowledge. Further, in the attackers state there are no variables
  (from $V_\text{process}$).\df{$\nf$ Add this to equivalence? But
    should be very clear anyway.}

  With the output term being a fixed term (with variables)
  $\tau_{\text{process}} \in \terms(\{x\} \cup V_\text{process})$ and
  $x$ being $S_1(A)$ or $S_2(A)$, respectively, and there is no
  subterm $l \in L$ contained in either $S_1(A)$ or $S_2(A)$
  (Condition~\ref{eqs:att-not-l} of
  Definition~\ref{def:eq-of-states}), it is easy to see that the
  output events are $\beta$-equivalent under $\theta$, i.e.,
  $E ^{(1)}_\text{out} \prototagequiv\theta E^{(2)}_\text{out}$, there
  are not $k \in K$ contained in the output events (except as
  encryption keys for tags) and the used nonces are the same, i.e.,
  $N_1' = N_2'$. The new state of the attacker in both processing
  steps consists of the input events, the output events, and the
  former state of the event, and, as such, is $\beta$-equivalent under
  proto-tags $\theta$. Therefore we have $\alpha$-equivalence on the
  new configurations.
  
\qed
\end{proof}

\noindent
This proves Theorem~\ref{thm:privacy-main}.
\QED


\begin{thebibliography}{10}

\bibitem{AbadiFournet-POPL-2001}
M.~Abadi and C.~Fournet.
\newblock {Mobile Values, New Names, and Secure Communication}.
\newblock In {\em Proceedings of the 28th ACM Symposium on Principles of
  Programming Languages (POPL 2001)}, pages 104--115. ACM Press, 2001.

\bibitem{AkhawBarthLamMitchellSong-CSF-2010}
D.~Akhawe, A.~Barth, P.~E. Lam, J.~Mitchell, and D.~Song.
\newblock {Towards a Formal Foundation of Web Security}.
\newblock In {\em Proceedings of the 23rd IEEE Computer Security Foundations
  Symposium, CSF 2010}, pages 290--304. IEEE Computer Society, 2010.

\bibitem{ArmandoEtAl-FMSE-2008}
A.~Armando, R.~Carbone, L.~Compagna, J.~Cu{\'e}llar, and M.~L. Tobarra.
\newblock {Formal Analysis of SAML 2.0 Web Browser Single Sign-on: Breaking the
  SAML-based Single Sign-on for Google Apps}.
\newblock In V.~Shmatikov, editor, {\em Proceedings of the 6th ACM Workshop on
  Formal Methods in Security Engineering, FMSE 2008}, pages 1--10. ACM, 2008.

\bibitem{baiLeietal-NDSS-2013-authscan}
G.~Bai, J.~Lei, G.~Meng, S.~S. Venkatraman, P.~Saxena, J.~Sun, Y.~Liu, and
  J.~S. Dong.
\newblock {AUTHSCAN: Automatic Extraction of Web Authentication Protocols from
  Implementations}.
\newblock In {\em Proceedings of the 20th Annual Network and Distributed System
  Security Symposium (NDSS’13)}. The Internet Society, 2013.

\bibitem{BansalBhargavanetal-POST-2013-WebSpi}
C.~Bansal, K.~Bhargavan, A.~Delignat-Lavaud, and S.~Maffeis.
\newblock {Keys to the Cloud: Formal Analysis and Concrete Attacks on Encrypted
  Web Storage}.
\newblock In D.~A. Basin and J.~C. Mitchell, editors, {\em Principles of
  Security and Trust - Second International Conference, POST 2013}, volume 7796
  of {\em Lecture Notes in Computer Science}, pages 126--146. Springer, 2013.

\bibitem{BansalBhargavanMaffeis-CSF-2012}
C.~Bansal, K.~Bhargavan, and S.~Maffeis.
\newblock {Discovering Concrete Attacks on Website Authorization by Formal
  Analysis}.
\newblock In S.~Chong, editor, {\em 25th IEEE Computer Security Foundations
  Symposium, CSF 2012}, pages 247--262. IEEE Computer Society, 2012.

\bibitem{Blanchet-CSFW-2001}
B.~Blanchet.
\newblock {An Efficient Cryptographic Protocol Verifier Based on Prolog Rules}.
\newblock In {\em Proceedings of the 14th IEEE Computer Security Foundations
  Workshop (CSFW-14)}, pages 82--96. IEEE Computer Society, 2001.

\bibitem{ChariJutlaRoy-IACR-2011}
S.~Chari, C.~S. Jutla, and A.~Roy.
\newblock {Universally Composable Security Analysis of OAuth v2.0}.
\newblock {\em IACR Cryptology ePrint Archive}, 2011:526, 2011.

\bibitem{ChevalComonLundhDelaune-CCS-2011}
V.~Cheval, H.~Comon{-}Lundh, and S.~Delaune.
\newblock {Trace equivalence decision: negative tests and non-determinism}.
\newblock In Y.~Chen, G.~Danezis, and V.~Shmatikov, editors, {\em Proceedings
  of the 18th {ACM} Conference on Computer and Communications Security, {CCS}
  2011}, pages 321--330. {ACM}, 2011.

\bibitem{FettKuestersSchmitz-SP-2014}
D.~Fett, R.~K{\"u}sters, and G.~Schmitz.
\newblock {An Expressive Model for the Web Infrastructure: Definition and
  Application to the BrowserID SSO System}.
\newblock In {\em 35th IEEE Symposium on Security and Privacy (S{\&}P 2014)},
  pages 673--688. IEEE Computer Society, 2014.

\bibitem{FettKuestersSchmitz-TR-BrowserID-Primary-2015}
D.~Fett, R.~K{\"u}sters, and G.~Schmitz.
\newblock {Analyzing the BrowserID SSO System with Primary Identity Providers
  Using an Expressive Model of the Web}.
\newblock Technical Report arXiv:1411.7210, arXiv, 2014.
\newblock \url{http://arxiv.org/abs/1411.7210}.

\bibitem{FettKuestersSchmitz-ESORICS-BrowserID-Primary-2015}
D.~Fett, R.~K{\"u}sters, and G.~Schmitz.
\newblock {Analyzing the BrowserID SSO System with Primary Identity Providers
  Using an Expressive Model of the Web}.
\newblock In {\em Computer Security - ESORICS 2015, 20th European Symposium on
  Research in Computer Security, Vienna, Austria, September 23-25, 2015},
  Lecture Notes in Computer Science. Springer, 2015.
\newblock To appear. Full version available at
  \url{http://arxiv.org/abs/1411.7210}.

\bibitem{openid-2-spec}
B.~Fitzpatrick, D.~Recordon, et~al.
\newblock {OpenID Authentication 2.0}.
\newblock Dec. 5, 2007.
  \url{http://openid.net/specs/openid-authentication-2_0.html}.

\bibitem{rfc6749-oauth2}
D.~Hardt.
\newblock {RFC6749 - The OAuth 2.0 Authorization Framework}.
\newblock Oct. 2012. \url{http://tools.ietf.org/html/rfc6749}.

\bibitem{html5}
{HTML5, W3C Recommendation}.
\newblock Oct. 28, 2014.

\bibitem{Jackson-TACAS-2002-Alloy}
D.~Jackson.
\newblock {Alloy: A New Technology for Software Modelling}.
\newblock In J.-P. Katoen and P.~Stevens, editors, {\em Tools and Algorithms
  for the Construction and Analysis of Systems, 8th International Conference,
  TACAS 2002}, volume 2280 of {\em Lecture Notes in Computer Science}, page~20.
  Springer, 2002.

\bibitem{kerschbaum-SP-2007-XSRF-prevention}
F.~Kerschbaum.
\newblock {Simple Cross-Site Attack Prevention}.
\newblock In {\em Third International Conference on Security and Privacy in
  Communication Networks and the Workshops, SecureComm 2007}, pages 464--472.
  IEEE Computer Society, 2007.

\bibitem{Kumar-RAID-2014}
A.~Kumar.
\newblock {A Lightweight Formal Approach for Analyzing Security of Web
  Protocols}.
\newblock In {\em Research in Attacks, Intrusions and Defenses - 17th
  International Symposium, {RAID} 2014, Gothenburg, Sweden, September 17-19,
  2014. Proceedings}, volume 8688 of {\em Lecture Notes in Computer Science},
  pages 192--211. Springer, 2014.

\bibitem{mozilla/persona}
{Mozilla Identity Team}.
\newblock {Persona}.
\newblock \url{https://login.persona.org}.

\bibitem{mozilla/persona/beta2-announce}
T.~Nitot.
\newblock {Persona: more privacy, better security while making developers and
  users happy!}
\newblock Beyond the Code Blog. Apr. 9, 2013.
  \url{https://blog.mozilla.org/beyond-the-code/2013/04/09/persona-beta2/}.

\bibitem{SomorovskyMayerSchwenkKampmannJensen-USENIX-2012}
J.~Somorovsky, A.~Mayer, J.~Schwenk, M.~Kampmann, and M.~Jensen.
\newblock {On Breaking SAML: Be Whoever You Want to Be}.
\newblock In T.~Kohno, editor, {\em Proceedings of the 21th {USENIX} Security
  Symposium, Bellevue, WA, USA, August 8-10, 2012}, pages 397--412. {USENIX}
  Association, 2012.

\bibitem{SovisKohlarSchwenk-Sicherheit-2010-OpenID-signatures}
P.~Sovis, F.~Kohlar, and J.~Schwenk.
\newblock {Security Analysis of OpenID}.
\newblock In {\em Sicherheit}, volume 170 of {\em LNI}, pages 329--340. GI,
  2010.

\bibitem{spresso-sourcecode-2015}
{SPRESSO Demo Site and Source Code}, 2015.
\newblock
  \url{https://spresso.me}.

\bibitem{SantsaiBeznosov-CCS-2012-OAuth}
S.-T. Sun and K.~Beznosov.
\newblock {The Devil is in the (Implementation) Details: An Empirical Analysis
  of OAuth SSO Systems}.
\newblock In T.~Yu, G.~Danezis, and V.~D. Gligor, editors, {\em ACM Conference
  on Computer and Communications Security, CCS'12}, pages 378--390. ACM, 2012.

\bibitem{SantsaiHaekyBeznosov-CS-2012}
S.-T. Sun, K.~Hawkey, and K.~Beznosov.
\newblock {Systematically Breaking and Fixing OpenID Security: Formal Analysis,
  Semi-Automated Empirical Evaluation, and Practical Countermeasures}.
\newblock {\em Computers {\&} Security}, 31(4):465--483, 2012.

\bibitem{WangChenWang-SP-2012-SSO}
R.~Wang, S.~Chen, and X.~Wang.
\newblock {Signing Me onto Your Accounts through Facebook and Google: A
  Traffic-Guided Security Study of Commercially Deployed Single-Sign-On Web
  Services}.
\newblock In {\em IEEE Symposium on Security and Privacy (S{\&}P 2012), 21-23
  May 2012, San Francisco, California, USA}, pages 365--379. IEEE Computer
  Society, 2012.

\bibitem{Wangetal-USENIX-Explicating-SDKs-2013}
R.~Wang, Y.~Zhou, S.~Chen, S.~Qadeer, D.~Evans, and Y.~Gurevich.
\newblock {Explicating SDKs: Uncovering Assumptions Underlying Secure
  Authentication and Authorization}.
\newblock In {\em Proceedings of the 22th {USENIX} Security Symposium,
  Washington, DC, USA, August 14-16, 2013}, pages 399--314. {USENIX}
  Association, 2013.

\bibitem{ZhouEvans-USENIX-SSOScan-2014}
Y.~Zhou and D.~Evans.
\newblock {SSOScan: Automated Testing of Web Applications for Single Sign-On
  Vulnerabilities}.
\newblock In {\em Proceedings of the 23rd {USENIX} Security Symposium, San
  Diego, CA, USA, August 20-22, 2014.}, pages 495--510. {USENIX} Association,
  2014.

\end{thebibliography}
\end{document}
